\newtheorem{theorem}{Theorem}%[section]
\newtheorem{definition}{Definition}[section]
\newtheorem{lemma}{Lemma}[section]
\newtheorem{proposition}{Proposition}[section]
\numberwithin{equation}{section}
\newcommand{\Int}{\int\limits_{-\infty}^{\infty}}
\newcommand{\R}{\mathbb{R}}
\newcommand{\eps}{\varepsilon}
\begin{document}

\title {\bf Research Announcement: Finite--time Blow Up and Long--wave Unstable Thin Film Equations}

\author{{\normalsize\bf Marina Chugunova, M.C. Pugh, Roman M. Taranets}
\smallskip}

\def\lhead{R.M. Taranets}

\date{\today}

\maketitle

\setcounter{section}{0}

\begin{abstract}
We study short--time existence, long--time existence, finite speed
of propagation, and finite--time blow--up of nonnegative solutions
for long-wave unstable thin film equations $h_t = -a_0(h^n
h_{xxx})_x - a_1(h^m h_x)_x$ with $n>0$, $a_0 > 0$, and $a_1 >0$.
The existence and finite speed of propagation results extend those
of [Comm Pure Appl Math 51:625--661, 1998].  For $0<n<2$ we prove
the existence of a nonnegative, compactly--supported, strong
solution on the line that blows up in finite time.  The
construction requires that the initial data be nonnegative,
compactly supported in $\R^1$, be in $H^1(\R^1)$, and have
negative energy.  The blow-up is proven for a large range of
$(n,m)$ exponents and extends the results of [Indiana Univ Math J
49:1323--1366, 2000].
\end{abstract}

\textbf{2000 MSC:} {35K65,  35K35, 35Q35, 35B05, 35B45, 35G25,
35D05, 35D10, 76A20}

\textbf{keywords:} {fourth-order degenerate parabolic equations,
thin liquid films, long--wave unstable, finite--time blow--up,
finite speed of propagation}

\section{Introduction} \label{A}

Numerous articles have been published since the early sixties
concerning the problem of developing finite-time singularities by
solutions of nonlinear parabolic equations (see the survey papers of
Levine \cite{Levine} and of Bandle and Brunner \cite{BunBru}). Such
problems arise in various applied fields such as combustion theory,
the theory of phase separation in binary alloys,
population dynamics and incompressible fluid flow.

% \mary{
Whether or not there is a finite--time singularity, such as $\|
u(\cdot,t) \|_\infty \to \infty$ as $t \to T^* < \infty$, is
strongly affected by the nonlinearity in the PDE.  For example,
consider the
semilinear heat equation on the line:
\begin{equation}
\label{A.heat} u_t = u_{xx} + u^p,
\end{equation}
where $u$ is real--valued.  If $p \leqslant 1$ then a solution of an
initial value problem exists for all time. If $1 < p \leqslant 3$, then any
non-trivial solution blows up in finite time.  If $p >3$ then some
initial data yield solutions that exist for all time and other initial
data result in solutions that have finite--time singularities.  The
manner in which solutions blow up is well understood computationally
and analytically.  The blow up is of a focussing type: there are
isolated points in space around which the graph of the solution
narrows and becomes taller as $t \uparrow T^*$, converging to delta
functions centered at the blow--up points.  As $t \uparrow T^*$, the
behaviour of the solution near the blow--up point(s) becomes more and
more self--similar.  Proving this convergence to a self--similar
solution uses the maximum principle, which doesn't hold for
fourth--order equations like the one we study in this article.
%}
% situation is uncertain, i.e some solutions develop finite-time
% singularities and some do not. Structure of blow up solutions of this
% equation is also well known. Solutions form "delta-function" shape,
% centered at isolated blow-up points. It worth to mention that maximum
% principle helps a lot in proving self-similar blow-up results for
% initial value problems of second order nonlinear parabolic equations.
% Unfortunately, fourth-order analogues of these equations do not
% possess a maximum principle.

We study the longwave-unstable generalized thin film equation,
\begin{equation}
\label{A.gen.thin-film}
% u_t = - a_0\,(u^n\,u_{xxx})_x -
% a_1\,(u^m\,u_x)_x
h_t = - a_0\,(|h|^n\,h_{xxx})_x -
a_1\,(|h|^m\,h_x)_x,
\end{equation}
%\mary{
where $a_0 > 0$, $a_1 > 0$, and $h$ is real valued.
Perturbing a constant steady state slightly,
$h_0(x) = \bar{h} + \epsilon h_1(x,0) =
\bar{h} + \epsilon \cos(\xi x + \phi)$, and linearizing
the equation about $\bar{h}$, the small perturbation $h_1(x,t)$
will (approximately)
satisfy
$h_t = -
a_0 |\bar{h}|^n h_{xxxx} - a_1 |\bar{h}|^m h_{xx}$.  Hence the
constant steady state is linearly unstable to long wave perturbations:
\begin{equation} \label{unstable_band}
\xi^2 < |\bar{h}|^{m-n} a_1/a_0 \quad \Longrightarrow \quad
h_1(x,t) \sim e^{-a_0 \xi^2 |\bar{h}|^n \left( \xi^2 - \tfrac{a_1}{a_0} |\bar{h}|^{m-n} \right) t} \cos(\xi x + \phi)
\quad \mbox{grows}.
\end{equation}

%
%Linearizing this
%equation about a constant steady state, $\bar{h}$, yields $h_t = -
%a_0 |\bar{h}|^n h_{xxxx} - a_1 |\bar{h}|^m h_{xx}$.  Hence the
%constant steady state is linearly unstable to long wave perturbations:
%\begin{equation} \label{unstable_band}
%\xi^2 < |\bar{h}|^{m-n} a_1/a_0 \quad \Longrightarrow \quad
%\mbox{the perturbation
%$\cos(\xi x + \phi)$ grows.}
%\end{equation}

% $h(x,t)$ gives the hight of the evolving free-surface.
% The
% exponent $n$ plays stabilizing role due to fourth-order forward
% diffusion term and the exponent $m$ plays destabilizing role due to
% backward second-order diffusion term.
% This class of equations
% originates from many physical/industrial applications involving
% fluid interface.
Such equations arise in the modelling of fluids and materials.  For
example, equation (\ref{A.gen.thin-film}) with $n=m=1$ describes a
thin jet in a Hele-Shaw cell \cite{Constantin_et_al} where $h$
represents the thickness of the jet and $x$ is the axial direction; if
$n = 3$ and $m = -1$ equation (\ref{A.gen.thin-film}) describes Van
der Waals driven rupture of thin films \cite{wit1} where $h$
represents the thickness of the film; if $n= m = 3$ the equation models
fluid droplets hanging from a ceiling \cite{Enrhard} with $h$
representing the thickness of the film, and finally if $n = 0$ and $m =
1$ the equation is a modified Kuramoto-Sivashinsky equation and
describes the solidification of a hyper-cooled melt \cite{Bernoff}
where $h$ desribes the deviation from flatness of a near planar
interface.  We note that in the first three cases the solution $h$
must be nonnegative for the model to make physical sense.
%}

% Questions on blow up regimes of (\ref{A.gen.thin-film}) were raised
% by
%\mary{
Hocherman and Rosenau \cite{HoRo} considered whether or not
equation (\ref{A.gen.thin-film}) could have solutions that blow up
in finite time.  They conjectured that if $n>m$ then solutions might
blow up in finite time but if $n<m$ they would exist for all time.
Indeed,
this conjecture is natural if one considers the linear stability
of a constant steady state $\bar{h}$: if $n > m$ the unstable
band (\ref{unstable_band}) grows as $\bar{h} \to \infty$ suggesting
that $n=m$ is critical.
%}

%\mary{
Hocherman and Rosenau considered general, real--valued solutions.
However, if $n>0$ equation (\ref{A.gen.thin-film}) may have solutions
that are nonnegative for all time.  Bertozzi and Pugh \cite{BP1}
proposed that\footnote{In fact, the article considers
(\ref{A.gen.thin-film}) with general coefficients: $f(h)$ and $g(h)$
instead of $|h|^n$ and $|h|^m$ respectively. In the following, for
simplicity, their results are discussed for the power--law case.}  if
the boundary conditions are such that the mass, $\int h(x,t) \; dx$,
is conserved then mass conservation, combined with the nonnegativity
results in a different balance: $m = n+2$ instead of $m=n$.
For such cases they introduced the regimes
$$
\begin{cases}
m < n + 2 \Longrightarrow \mbox{subcritical regime} \\
m = n + 2 \Longrightarrow \mbox{critical regime} \\
m > n + 2 \Longrightarrow \mbox{supercritical regime}
\end{cases}
$$

In \cite{BP1}, Bertozzi and Pugh considered equation
(\ref{A.gen.thin-film}) on a finite interval with periodic boundary
conditions.  For a subset of the subcritical ($m < n+2$) regime they
proved some global-in-time results.  Specifically, they proved that
given positive initial data, $h_0 > 0$, there is a nonnegative weak
solution of (\ref{A.gen.thin-film}) that exists for all time if  $0 < n \leqslant m < n+2$.
% Roman claims that this wasn't proven...
By restricting $n$ further to $0 < n < 3$
they can consider nonnegative initial data, $h_0 \geqslant 0$.  They prove
that there a nonnegative weak solution that exists for all
time and also prove the local entropy bound needed for the finite
speed of propagation proof for $0<n<2$.
% \sout{but also it has finite speed of propagation.}
% Roman claims that FSP wasn't proven...
For the critical
($m=n+2$) regime, they prove that the above results will hold if the
mass $\int h_0 \; dx$ is sufficiently small.  Also, they provided
numerical simulations suggesting that other initial data can result in
solutions that blow up in finite time and conjectured that this is
also true for the supercritical ($m>n+2$) regime.

%}
%. They proposed the existence of critical exponents $n=m$ that would
% separate those equations that have blow up, from those that do
% not. That seems to be natural to expect due to the competition between
% the regularizing power of the fourth order term and destabilizing
% strength of the second order term. The rigorous analysis of the blow
% up problem for (\ref{A.gen.thin-film}) was done in two papers of
% Bertozzi and Pugh \cite{BP1,BP2}. They proved that blow-up of
% BF-strong solutions is impossible for the case $ m < n+2$, disproving
% the original conjecture of Hocherman and Rosenau.

%\mary{
In \cite{BP2}, they considered equation
(\ref{A.gen.thin-film}) on the line and found some analytical results
for the critical and supercritical ($m \geqslant n+2$) regimes in the
special case of $n=1$.
%{\bf Roman, you wrote ``only!!!'' after this sentence.  Did you want me
%to add something to the text?  Or were you just gloating about how
%limiting $n=1$ now feels? :-)}
% Roman wrote "only!!!" after this sentence. ?!?
They introduced a large class of ``negative
energy'' initial data and proved that given initial data $h_0$ with
compact support and negative energy there is a nonnegative weak
solution $h$ that blows up in finite time: there
is a time $T^* < \infty$ such that the weak solution
$h(\cdot,t)$
exists
on $[0,T^*)$ and
$$
\limsup_{t \to T^*}
\|h(\cdot,t)\|_{L^\infty} = \infty
\qquad \mbox{and} \qquad
\limsup_{t \to T^*}
\|h(\cdot,t)\|_{H^1} = \infty.
$$
%
% In the second paper they studied the special case $n=1$ for
% compactly supported initial data on the line. They proved the
% existence of a solution that blows up in finite time for $m \geqslant 3 =
% n+2$ and gave an example of such solution for a certain compactly
% supported nonnegative initial data with negative energy.
%
The blow--up time $T^*$
% existence was shown to
depends only on $m$ and $H^1$-norm of the initial data.  We note that
uniqueness of nonnegative weak solutions of (\ref{A.gen.thin-film}) is
an open problem.  Indeed, there are simple counterexamples to
uniqueness for the simplest equation $h_t = - (h^n h_{xxx})_x$ (see, e.g. \cite{B2})
although it
is hypothesized that solutions are unique if one considers
the question within a sufficiently restrictive class of weak
solutions.  For this reason, one cannot exclude the possibility that
the initial data $h_0$ might also be achieved by a different weak
solution, one that exists for all time.
% }
%
%\mary{

Their proof relied on a second moment argument, found
formally by Bernoff \cite{Bern98}: if $h$ is a smooth
compactly-supported solution of (\ref{A.gen.thin-film}) on $[0,T^*)$
then the second moment of $h$ satisfies
\begin{equation} \label{second_moment_n1}
\int\limits_{-\infty}^{\infty} x^2 h(x,t) \; dx
\leqslant \int\limits_{-\infty}^{\infty} x^2 h_0(x) \; dx + 6 \mathcal{E}(0) \, t
\end{equation}
holds for all $t \in [0,T^*)$.  Here, $\mathcal{E}(0)$ is the energy
of the initial data:
\begin{equation} \label{energy_is}
\mathcal{E}(0) :=  \Int \left\{ \tfrac{a_0}{2} {h_0}_x^2(x)
- \tfrac{a_1}{m(m+1)} h_0^{m+1}(x) \right\} \; dx.
\end{equation}
As a result, there could never be a global-in-time nonnegative smooth
solution with negative-energy initial data: for such a solution the
left--hand side of (\ref{second_moment_n1}) would always be
nonnegative but the right--hand side would become negative in finite
time.
(This argument is strictly formal because, to date, no-one has
constructed nonnegative, compactly-supported, {\it smooth} solutions on the
line.)  The blow-up result
is found by first proving the short-time existence
of a nonnegative, compactly-supported, weak solution: it
exists on $[0,T_0]$ where the larger $\|h_0\|_{H^1}$ is, the smaller
$T_0$ will be.  Also, the constructed solution satisfies the second
moment inequality (\ref{second_moment_n1}) at time $t = T_0$.
By ``time--stepping'' the short-time existence result, they
construct a solution on $[0,T^*)$ such that
the second moment inequality (\ref{second_moment_n1}) holds
at a sequence of times $T_i$ with $T_i \to T^*$.  It then follows
immediately
that $T^*$ must be finite and therefore the $\limsup_{t \to T^*} \| h(\cdot,t)
\|_{H^1}$ must be infinite.
%}

% The proof was done through the contradiction with the smart
% use of the second-moment inequality. Consequently, for $n=1$, $m =
% n+2 = 3$ is the critical exponent separating equations with possible
% finite time blow up from problems where solutions are always
% bounded.

\subsection*{Outline of results}

The main results of this paper are: short-time existence of nonnegative strong
solutions on $[-a,a]$ given nonnegative initial data, finite speed of propagation for these solutions
if their initial data had compact support within $(-a,a)$, and finite-time blow-up for solutions of
the Cauchy problem that have initial data with negative energy.

First, we consider equation (\ref{A.gen.thin-film}) on a bounded
interval $[-a,a]$ with periodic boundary conditions.  Given
nonnegative initial data that has finite ``entropy'', in Theorem
\ref{C:Th1} we prove the short-time existence of a nonnegative weak
solution if $n > 0$ and $m \geqslant n/2$.  The solution is a ``generalized
weak solution'' as described in Section \ref{B} and the entropy is
introduced in Section \ref{C}.  Additional
regularity is proven in Theorem \ref{C:Th2}: there is a second type of
entropy such that if this ``$\alpha$-entropy'' is also finite for the initial data
then there is a ``strong solution'' which satisfies Theorem
\ref{C:Th1} and also has some additional regularity.  We note that the
work \cite{BP1} described above was primarily
concerned with long-time\footnote{
Throughout this article, we use phrases like ``long-time'',
``global-in-time'' and ``exist for all time'' as shorthand for the
types of large-time results that have been proven in the thin film
literature to date: given a time $T < \infty$ there is a solution
$h(\cdot,t)$ for $t \in [0,T]$.  Specifically, $T$ can be taken
arbitrarily large.
}
existence results: for this reason the
authors only
addressed the existence theory for the subcritical ($m < n+2$) case.
However, given finite-entropy initial data their methods easily imply
a short-time result for general $n > 0$ as long as $m \geqslant n$.  Our
advance is prove the results for the larger range of $m \geqslant n/2$.
%
% We also note that their results are presented as two theorems: one for
% $0 < n$ and $h_0$ positive and the other for $0 < n < 3$ and $h_0$
% nonnegative.  These two theorems could equally well have been
% presented in a single, slightly broader theorem for $0 < n$ and $h_0$
% having ``finite entropy''.
%
The left plot in Figure \ref{thm1_region} presents the $(n,m)$
parameter range for which our short-time existence results hold.  The
darker region represents the parameters for which the methods of
\cite{BP1} would have yielded the results.  The lighter region
represents the extended area where our methods also yield results.
\begin{figure}
\begin{center}
% \vspace{-.5in}
\includegraphics[width=3.5cm] {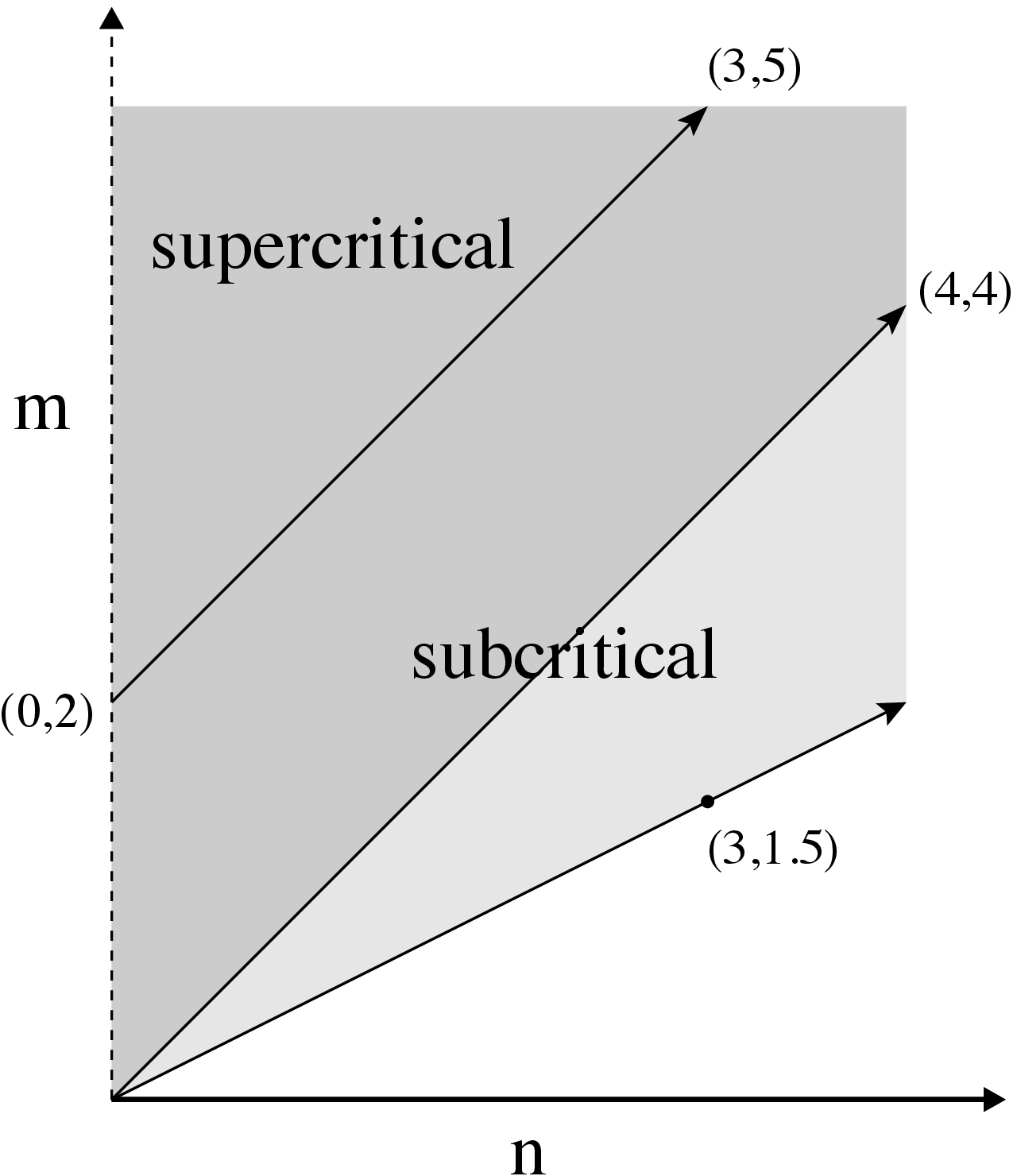}
\hspace{.5in}
\includegraphics[width=3.5cm] {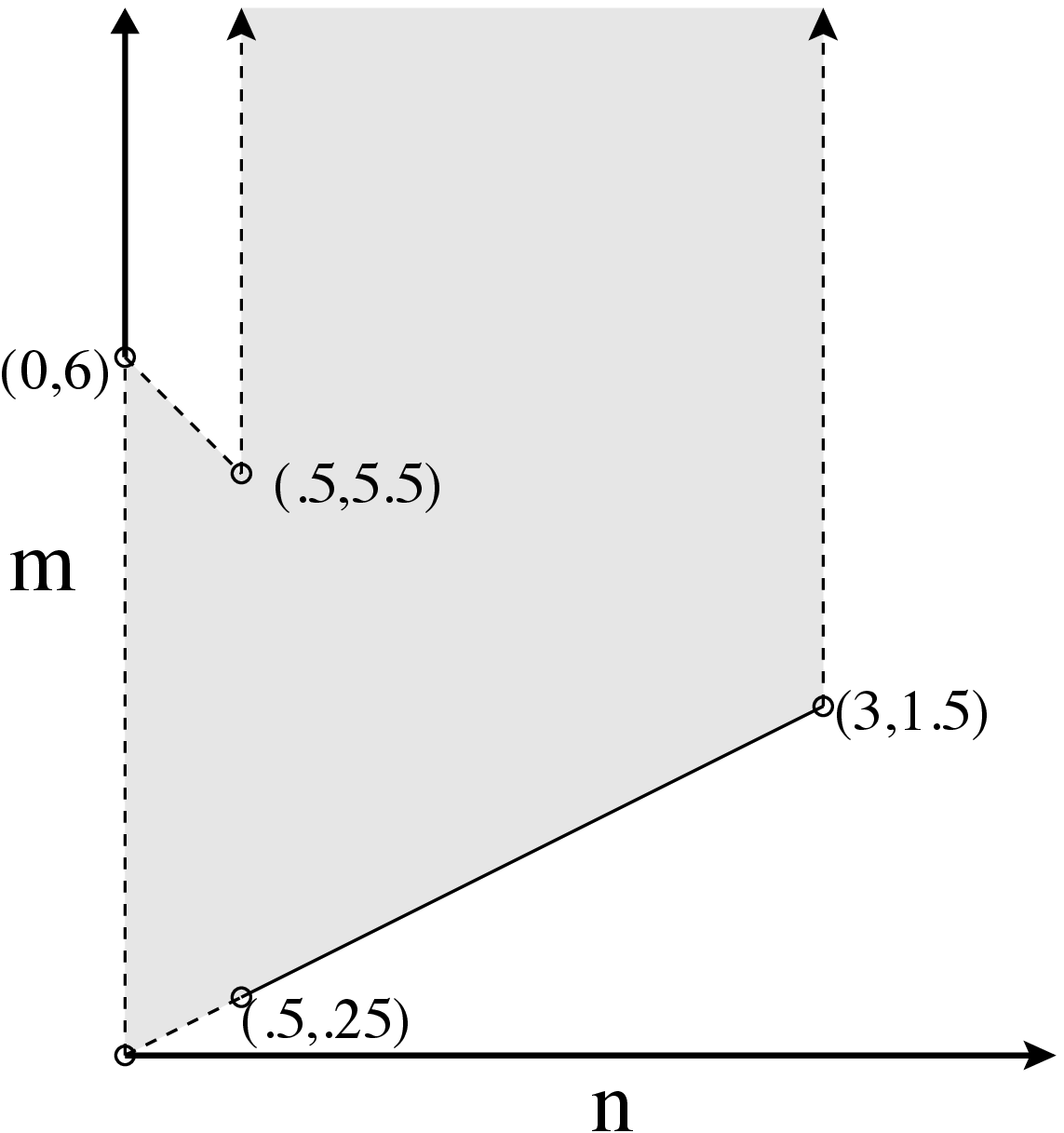}
\hspace{.5in}
\includegraphics[width=4cm] {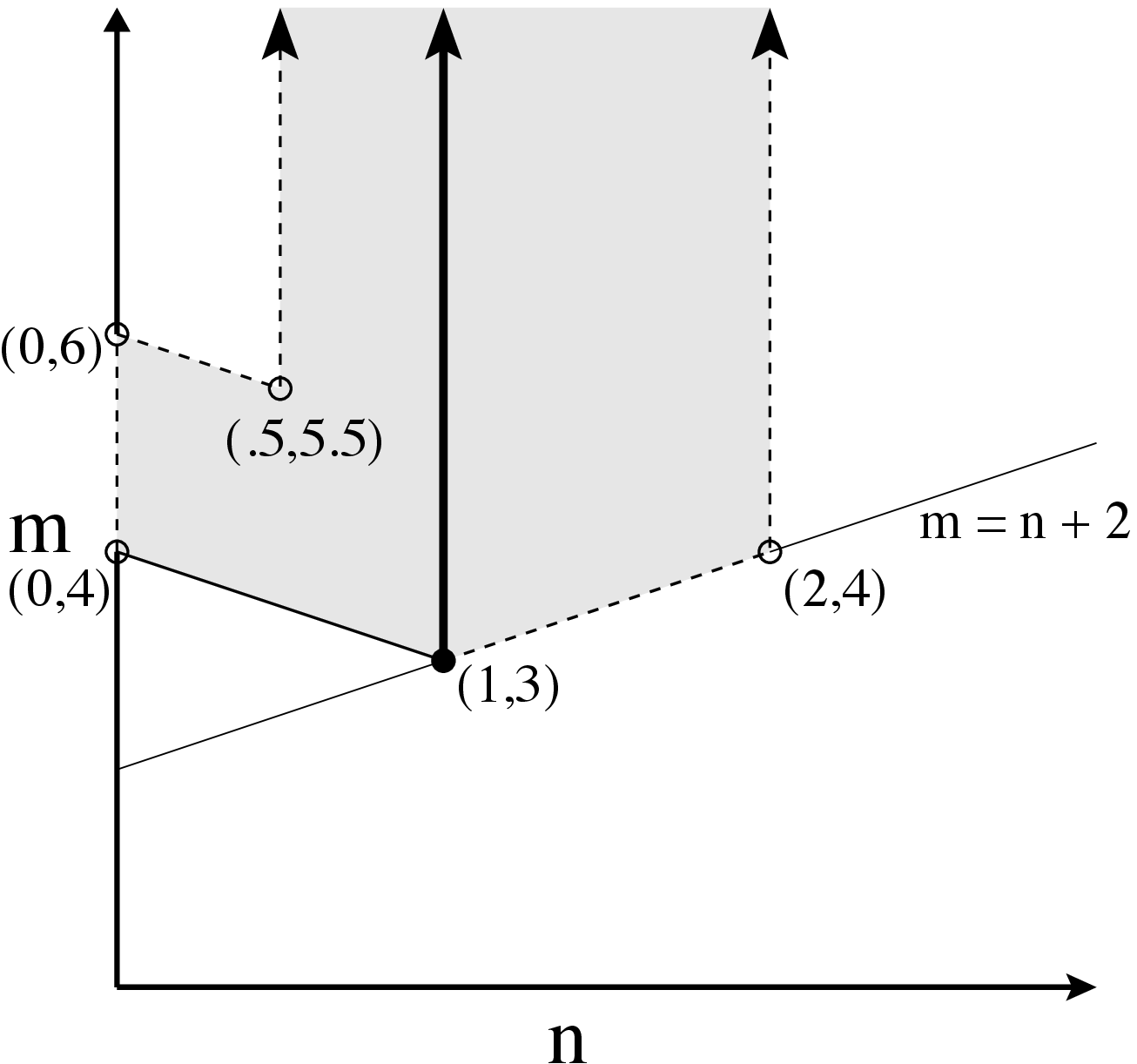}
\end{center}
\vspace{-.2in}
%\setcaptionwidth{6in}
\caption{\label{thm1_region}
% Respectively, the horizontal and vertical axes represent
% the exponents $n$ and $m$ in equation (\ref{A.gen.thin-film}).
Left: the
darker
region represents the exponents $(n,m)$ for which the methods of \cite{BP1} yield
short-time
existence results like Theorems \ref{C:Th1} and \ref{C:Th2}.  The
larger, lighter region represents the additional exponents for which we've proved
our theorems.  Middle: For $(n,m)$ in the shaded region, we prove
that the strong solutions of Theorem \ref{C:Th2} can have
finite speed of propagation (Theorem \ref{C:Th4}).
Right: For $(n,m)$ in the shaded region, we construct solutions on the line
that blow up in finite time (Theorem \ref{C:Th3}).  The heavy solid line denotes the $(n,m)$ for which
this had already been done by \cite{BP2}.
% {\it When I have a chance, I'll try to redo these figures so they're all on the same scale and
% can therefore be more fairly compared to one another.}
}
\end{figure}

In Theorem \ref{C:Th4} we prove that if the initial data $h_0$ has
compact support then the strong solution of
Theorem \ref{C:Th2} will have finite speed of propagation.
Specifically, if the support of the initial data satisfies $\mbox{supp}(h_0) \subseteq [-r_0,r_0] \subset
(-a,a)$ then there is a nondecreasing function $\Gamma(t)$ and
a time $T_{speed}$ such
that
$\mbox{supp}(h(\cdot,t)) \subseteq
[-r_0-\Gamma(t),r_0+\Gamma(t)] \subset(-a,a)$
for every time $t \in [0,T_{speed}]$.  For $0<n<2$, we further prove that
there is a constant $C$ such that
$\Gamma(t) \leqslant C t^{1/(n+4)}$.  This power law behaviour is
the same as has been found for $h_t = - (|h|^n h_{xxx})_x$
for $0<n<2$ \cite{B6} and $2 \leqslant n < 3$ \cite{HulShish};
it corresponds to the rate of expansion of the self--similar
source--type solution.
%If $a_1 \neq 0$, equation  (\ref{A.gen.thin-film})
%has a self-simllar
%solution only if $m=n+2$.  However it is not surprising that the upper
%bound for the $a_1=0$ case works for the $a_1 > 0$ case because the
%second--order term is anti-diffusive and is expected to slow down the
%spreading of the support.
%{\bf Roman, you suggested citing the blow-up self-similar solutions that
%Dejan and I found for $m=n+2$.  I'd think that the Beretta article would
%be a better one to cite here because she finds spreading source-type
%(advancing contact line) solutions for $m=n+2$.  But in the critical case the power
%law is precisely $t^{1/(n+4)}$ because that's the regime where the
%lower-order term matches the fourth-order term.  Which we already knew
%to have  $t^{1/(n+4)}$ on its own\dots  Maybe I should remove my
%speculative comments about what I'd expect the antidiffusive second-order
%term to do?}
%% Roman suggested citing Dejan & me here.  In the critical case, the
%% self-similar rate is precisely w/ exponent 1/(n+4) so I see no point in
%% citing either us or the Beretta article (which is the one I would've been
%% inclined to cite in any case...

The middle plot in Figure \ref{thm1_region} presents the $(n,m)$
parameter range for which we were able to prove finite speed of
propagation.  While we were successful in proving finite speed of
propagation for the entire expected range of $m \geqslant n/2$ if $1/2 < n
< 3$, if $0 < n \leqslant 1/2$ we could prove it for only a subset of $m
\geqslant n/2$.  This technical obstruction is discussed further in Section
\ref{G}.  The $n$ values are restricted to $n < 3$ because for $n \geq
3$ if initial data has compact support in $(-a,a)$ then it will have
infinite entropy and will not be admissible initial data for Theorems
\ref{C:Th1} and \ref{C:Th2}.

%The proof of Theorem \ref{C:Th4} is in the spirit of
%the results of Giacomelli and Shishkov \cite{G2}.  They study  $h_t
%= - (|h|^n h_{xxx})_x + (j(h))_x$ for $0 < n < 3$ under certain
%constraints on the flux function $j(y)$.
To prove Theorem \ref{C:Th4} we start by proving a local entropy
estimate similar to that of \cite{B6} for $0<n<2$ and a local
energy estimate similar to that of \cite{BSS, HulShish} for $1/2 < n <
3$.  Using these and well-chosen ``localization'' (or ``test'') functions, we find
systems of functional inequalities.  In \cite{G2}, Giacomelli and
Shishkov proved an extension of Stampacchia's lemma to systems of
inequalities, we use this to then finish the proof.

%The first step considers $0<n<2$ and relies on a
%localized entropy estimate similar that of \cite{B6}.  The second step
%considers $1/2 < n < 3$ and relies on a localized energy estimate
%similar to that of \cite{BSS, HulShish}.  Once these localized estimates are in
%hand, the results are proven using an extension of Stampacchia's lemma
%to systems of inequalities \cite{G2}.
A Stampacchia-like lemma for a {\it single} inequality was used by
\cite{HulShish} to prove finite speed of propagation for $h_t = -
(|h|^n h_{xxx})_x$ and \cite{dPGG_2001} proved an extension of
Stampacchia's lemma (also for a single inequality) to study waiting time
phenomena.  Similar approaches were subsequently used
to study finite speed of propagation and waiting time phenomena for
related equations, see
\cite{Ansini2004,dPGG_2003,GG_2002,GG_2004,TarShish_2006,GG_2006,Tar_2002,Tar_2003,Tar_2006}.
Further, there are finite speed of propagation and waiting time results
\cite{TarShish_2006,Shish_2007}
that
use the
extension of Stampacchia's lemma to systems of \cite{G2} as well as
other types of extensions to systems \cite{Tar_2006}.

%Articles that use the extension to systems: \cite{TarShish_2006,Shish_2007}.
%Article that uses something like the extension to systems \cite{Tar_2006}.

Having proven finite speed of propagation
in Theorem \ref{C:Th4}, we use this to prove a short-time existence
result for the Cauchy problem.  Specifically, for the range of exponents
$(n,m)$ of Theorem \ref{C:Th4} given nonnegative initial data with
bounded support in $\R^1$ we construct
a nonnegative, compactly supported strong solution on $\R^1 \times [0,T_0]$
that satisfies the bounds and regularity of Theorems \ref{C:Th1} and
\ref{C:Th2} (with those bounds taken over $\R^1$ rather than $\Omega$).
The larger the $H^1$ norm of the initial data, the shorter the time
$T_0$.
In Lemma \ref{F:lem0} we
prove that for a subset of these exponents (see Figure \ref{thm1_region}),
the entropy of the solution satisfies a second-moment
%this solution satisfies
%a ``second-moment-like''
inequality at time $T_0$
$$
e^{-\widetilde{B}(T_0)} \int\limits_{-\infty}^{\infty} {x^2 \widetilde{G}_0 (h(x,T_0))\,dx} \leqslant
 \int\limits_{-\infty}^{\infty} {x^2 \widetilde{G}_{0}
(h_0)\,dx} +
\int \limits_{0}^{T_0} {e^{-\widetilde{B}(t)}
\left(
k_1 \mathcal{E}_{0}(0)  + k_2
\int\limits_{-\infty}^{\infty}  {x^2 h^2_{xx} \,dx} \right) \,
dt}
$$
where $k_1 = 2(4 -n)$, $k_2 =3a_0(n-1)/2$,
$\mathcal{E}_0(0)$
is the energy of the initial data (\ref{energy_is}),
$$
\widetilde{G}_0(z) = \tfrac{1}{ 2 -n }z^{ 2 -n},
\qquad \mbox{and} \qquad
\widetilde{B}(t)=
\tfrac{a_1^2 |( 1 - n)( 2- n)|}{2a_0(m -n+1)^2} \int\limits_0^t
{\|h(.,\tau)\|_{L^{\infty}(\Omega)}^{2m -n}\,d\tau}.
$$
Note that if $n=1$ then the second-moment inequality for the entropy
% our second-moment-like inequality
reduces to the
second moment inequality (\ref{second_moment_n1})
used by \cite{BP2}.

``Time-stepping'' this existence result, we construct a nonnegative, compactly supported,
strong solution on $\R^1 \times [0,T^*)$ such that
our second-moment inequality for the entropy
% our second-moment-like inequality
holds for a sequence of times $T_i \to T^*$.  The left-hand side of the inequality is
nonnegative.  If the initial data has negative energy, then the second term on the
right-hand side has an integrand which has the possibility of becoming negative.
In Theorem \ref{C:Th3} we prove that for
a range of $(n,m)$ values (see Figure \ref{thm1_region}) that if $T^* = \infty$ then
the constructed solution would yield a right-hand side that becomes negative in
finite time,
an impossibility.  Hence $T^*$ must be finite and therefore
the $\limsup_{t \to T^*} \| h(\cdot,t) \|_{H^1}$ must be infinite.

Ideally, we would have proven finite speed of propagation for all $(n,m)$ with
$0 < n < 3$ and all $n/2 \leqslant m$ and would have proven the finite time blow up result for
all $(n,m)$ with $0<n<3$ and all $m \geqslant n+2$.   We believe the obstructions
are technical ones.

We close by noting that our blow--up result does not give qualitative information
such as proving that there's a focussing singularity,
as suggested by numerical simulations \cite{BP1}.  However, there has been
some detailed study of the critical regime $m=n+2$.  There, a critical
mass $M_c$ has been identified \cite{WBB_2004},
there are self-similar, compactly-supported, source-type solutions with masses
in $[0,M_c)$ \cite{Beretta_1997}, and there are self-similar, compactly supported, solutions with masses
in $(M_c,M_1)$  that blow up in finite time \cite{SP_2005}.  Further, these self-similar blow-up solutions have been shown to
be linearly stable \cite{S_2009}.

\section{Generalized weak solution} \label{B}

% {\it Put a comment in somewhere that this section and the existence section
% are very similar to the SIMA article\dots}

We study the existence of a nonnegative weak  solution, $h(x,t)$, of the
initial--boundary  value problem
\begin{numcases}
{(\textup{P})}
h_t  +
% {a_0 h^n h_{xxx} + a_1 h^m h_x } \right)_x = 0 \text{ in }Q_T,
% \hfill \label{B:1}\\
\left( f(h) \left( a_0 h_{xxx} + a_1 D''(h) h_x \right) \right)_x = 0 \text{ in }Q_T, \hfill \label{B:1}\\
\tfrac{\partial^{i} h}{\partial x^i}(-a,t) = \tfrac{\partial^{i}
h}{\partial x^i}(a,t) \text{ for }
t > 0,\,i=\overline{0,3}, \hfill \label{B:2}\\
\qquad  \qquad h(0,x) = h_0 (x) \geqslant 0, \hfill \label{B:3}
% \\
% \qquad  \qquad h(t,.) \text{ is } |\Omega|-\text{periodic}, \hfill
% \label{B:4}
\end{numcases}
%
%(P)
%\left\{
%\begin{gather}
%\hspace{+3cm}\phantom {\begin{gathered}
%% h_t  + \left(
%% { h^n h_{xxx} + h^m h_x } \right)_x = 0 \text{ in }Q_T,\hfill \\
%h_t  + \left( f(h) \left(
% a_0 h_{xxx} + a_1 D''(h) h_x \right) \right)_x = 0 \text{ in }Q_T,\hfill \\
%\tfrac{\partial^{i} h}{\partial x^i}(-a,t) = \tfrac{\partial^{i}
%h}{\partial x^i}(a,t) \text{ for }
%t > 0,\,i=\overline{0,3}, \hfill \\
%\qquad  \qquad h(0,x) = h_0 (x) \geqslant 0, \hfill\\
%\qquad  \qquad h(t,.) \text{ is } |\Omega|-\text{periodic}, \hfill
%\end{gather}}
%%\right.
%$$
where $a_0 > 0$,  $a_1 \geqslant 0$, $f(h) = |h|^n$, $D''(h) = |h|^{m-n}$,
$n > 0$, $m > 0$,
$\Omega = (-a,a)$, and $Q_T = \Omega \times (0,T)$.
We consider a weak solution like that considered in \cite{B2,B6}:
%
%We define a generalized weak solution in Bernis sense (see, e.\,g.
%\cite{B8}).
%
\begin{definition}[generalized weak solution] \label{B:defweak}
Let $n > 0$, $m > 0$, $a_0 > 0$, and $a_1 \geqslant 0$. A function
$h$ is a generalized weak solution of the problem $(\textup{P})$ if
\begin{align}
& \label{weak1}
h \in C^{1/2,1/8}_{x,t}(\overline{Q}_T) \cap L^\infty (0,T; H^1(\Omega )),\\
& \label{weak-d} h_t \in L^2(0,T; (H^1(\Omega))'),\\
& \label{weak2} h \in C^{4,1}_{x,t}(\mathcal{P}), \,\,\,
\sqrt{f(h)} \, \left( a_0 h_{xxx} + a_1 D''(h) h_x \right) \in
L^2(\mathcal{P}), \,\,
\end{align}
where $\mathcal{P} = \overline{Q}_T \setminus ( \{h=0\} \cup \{t=0\})$
and $h$ satisfies (\ref{B:1}) in the following sense:
\begin{equation}\label{integral_form}
\int\limits_0^T \langle h_t(\cdot,t), \phi \rangle \; dt -
\iint\limits_{\mathcal{P}} {f(h) ( a_0 h_{xxx} + a_1 D''(h)
h_x)\phi_x\,dx dt } \ = 0
\end{equation}
for all $\phi \in C^1(Q_T)$ with $\phi(-a,\cdot) = \phi(a,\cdot)$;
\begin{align}
&  \label{ID1} h(\cdot,t) \to h(\cdot,0) = h_0
\mbox{  pointwise \& strongly in $L^2(\Omega)$ as $t \to 0$}, \\
& \label{BC1} h(-a,t)=h(a,t) \; \forall t \in [0,T] \; \mbox{and}
\; \tfrac{\partial^{i}
h}{\partial x^i}(-a,t) = \tfrac{\partial^{i}h}{\partial x^i}(a,t)  \\
& \notag \mbox{for} \; i = \overline{1,3} \; \mbox{at all points
of the lateral boundary where $\{h \neq 0 \}$.}
\end{align}
\end{definition}

Because the second term of (\ref{integral_form}) has an integral
over $\mathcal{P}$ rather than over $Q_T$, the generalized weak
solution is ``weaker'' than a standard weak solution. Also note that
the first term of (\ref{integral_form}) uses $h_t \in L^2(0,T;
(H^1(\Omega))' )$; this is different from the definition of weak
solution first introduced by Bernis and Friedman \cite{B8}; there,
the first term was the integral of $h \phi_t$ integrated over $Q_T$.
We do not require a test function $\phi$ to be zero at both ends
$t=0$ and $t = T$ that is crucial for our construction of a
continuation of the weak solution.

%We first prove the existence of a generalized weak solution and then
%prove that it can be constructed with additional regularity
%properties. In Section \ref{G} we prove additional control for the
%$H^1$ norm which then allows us to prove long-time existence.

\section{Main results}\label{C}

Our main results are:
the short-time existence of a nonnegative generalized weak solution (Theorem \ref{C:Th1}),
the short-time existence of a nonnegative strong solution (Theorem \ref{C:Th2}),
finite-speed of propagation
and
finite-time blow-up
for some of these strong solutions (Theorems \ref{C:Th4} and \ref{C:Th3} respectively).

The short-time existence of generalized weak solutions relies on an integral quantity:
$\int G_0(h(x,t)) \; dx$.  This
``entropy'' was introduced by Bernis and Friedman \cite{B8}, where
\begin{equation}\label{C:G_0-def}
G_{0}(z): =
\begin{cases}
\tfrac{ z^{2-n }}{(2-n)(1-n)}  & \text{if } n \in (0, 1) \cup (2,\infty)  \\
\tfrac{ z^{2-n }}{(2-n)(1-n)} + z + b & \text{if } n \in( 1, 2 ) \\
z \ln z - z +1  & \text{ if } n = 1 \\
- \ln z  + z/e & \text{ if } n = 2.
\end{cases}
\Longrightarrow
\quad
G_0''(z) = \tfrac{1}{z^n}
\end{equation}
with $b$ chosen to ensure that $G_0(z) \geqslant 0$ for all $z \geqslant 0$.
%{\bf I added linear functions to make $G_0 \geqslant 0$.  (This is the cost of giving an
%explicit formula for $G_0$ rather than just saying "Choose $G_0$ nonnegative so that it has
%the following second derivative.)  Is this okay?  Should I have used different
%ones?  Andrea and I would choose a linear function so that $G_0(\overline{h}) = 0$, for example.}

\begin{theorem}[Existence]\label{C:Th1}
Let $n > 0,\ m \geqslant \tfrac{n}{2} $, $a_0 > 0$, and $a_1
\geqslant 0$ in equation (\ref{A.gen.thin-film}). Assume that the nonnegative initial data $h_0 \in
H^1(\Omega)$ has finite entropy
\begin{equation}\label{C:inval}
\int\limits_{\Omega} {G_0(h_0(x)) \,dx} < \infty,
\end{equation}
where $G_0$ is given by (\ref{C:G_0-def})
and either 1) $h_0(-a) = h_0(a) = 0$ or  2) $h_0(-a) = h_0(a) \neq
0$ and $ \tfrac{\partial^{i} h_0}{\partial x^i}(-a) =
\tfrac{\partial^{i}h_0}{\partial x^i}(a) \text{ holds for }i=1,2,3$.
Then for some time $T_{loc}>0$ there exists a nonnegative
generalized weak solution $h$
on $Q_{T_{loc}}$
%, that is well defined on a parabolic
%cylinder $Q_{T_{loc}}$,
in the sense of the definition
\ref{B:defweak}. Furthermore,
\begin{equation} \label{Linf_H2}
h \in L^2(0,T_{loc};H^2(\Omega)).
\end{equation}
If
\begin{equation} \label{D_is}
D_0(z) =
\begin{cases}
 \tfrac{z^{m -
n +2}}{(m - n + 1)(m- n + 2)} & \mbox{if} \; m-n \notin \{ -2, -1 \}\\
- \log(z) & \mbox{if} \; m-n = -2 \\
z \log(z) & \mbox{if} \; m-n = -1
\end{cases},
\quad \Longrightarrow \quad
D_0''(z) = z^{m-n},
\end{equation}
\begin{equation} \label{Energy}
\mathcal{E}_0(T) := \int\limits_{\Omega} { \{\tfrac{a_0}{2}
h_{x}^2(x,T) - a_1 D_0(h(x,T)) \}\,dx},
\qquad
B_1(T):= \tfrac{a_1^2}{a_0}\int \limits_0^T { \|h(.,t)\|^{2m
-n}_{L^{\infty}(\Omega)}\,dt},
\end{equation}
%$$
%B_1(T):= \tfrac{a_1^2}{a_0}\int \limits_0^T { \|h(.,t)\|^{2m
%-n}_{L^{\infty}(\Omega)}\,dt},
%$$
and
$$
B_{2}(T) := \tfrac{a_1^4}{2a_0^3(2m -n + 1)^2} \int\limits_0^T \|
h(\cdot,t) \|_\infty^{4m-n}\; dt + \tfrac{a_1^2}{2a_0(m -n +1)^2}
\int\limits_0^T \| h(\cdot,t) \|_\infty^{2m-n} \; dt,
$$
then the weak solution satisfies
\begin{equation}\label{C:d2'}
%\mathcal{E}_0(T_{loc}) =
% \int\limits_{\Omega} { \{\tfrac{a_0}{2}
%h_{x}^2(x,T_{loc}) - a_1 D_0(h(x,T_{loc})) \}\,dx}
%\leqslant \mathcal{E}_0(0),
%
\mathcal{E}_0(T_{loc}) + \iint\limits_{\{h >0 \}} {h^n (a_0 h_{xxx} + a_1 h^{m - n}h_x)^2 \,dx dt}
\leqslant \mathcal{E}_0(0),
\end{equation}
\begin{equation}\label{C:a2-new}
\int\limits_{\Omega} {h_{x}^2(x,T_{loc}) \,dx} \leqslant
e^{B_1(T_{loc} )}\int \limits_{\Omega} {h_{0x}^2(x) \,dx},
\end{equation}
and
\begin{equation}\label{C:a2-new2}
\int\limits_{\Omega} {\{h_{x}^2(x,T_{loc}) + G_0(h(x,T_{loc})\}
\,dx} \leqslant e^{B_2(T_{loc} )}\int \limits_{\Omega} {\{
h_{0x}^2(x) + G_0(h_0(x)) \,dx}.
\end{equation}
The time of existence, $T_{loc}$, is determined by $a_0$, $a_1$,
$| \Omega |$, $\int h_0$, $\| h_{0x} \|_2$, and $\int G_0(h_0)$.
\end{theorem}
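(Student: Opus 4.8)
The plan is to follow the regularization scheme of Bernis--Friedman \cite{B8} (adapted as in \cite{B6,BP1}), but to push the a priori estimates to the larger exponent range $m\geqslant n/2$. First I would regularize: replace $f(h)=|h|^n$ by a smooth mobility $f_\eps$ that is bounded below by a positive constant on $\R$, converges to $|h|^n$ locally uniformly, and has $1/f_\eps\geqslant 1/|h|^n$, so that the associated entropy density $G_{0\eps}$ (defined by $G_{0\eps}''=1/f_\eps$) dominates $G_0$ and satisfies $G_{0\eps}(z)\to+\infty$ as $z\to0$; likewise replace $D''(h)=|h|^{m-n}$ by $D_\eps''(h)=(h^2+\eps)^{(m-n)/2}$ and the initial data by smooth positive $h_{0\eps}$ with $h_{0\eps}\to h_0$ in $H^1(\Omega)$ and $\int_\Omega G_{0\eps}(h_{0\eps})\to\int_\Omega G_0(h_0)$. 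For fixed $\eps$ the regularized problem is uniformly fourth--order parabolic, so a Galerkin approximation in the Fourier basis on $\Omega$ (which respects the periodic boundary conditions) together with standard energy bounds yields a smooth solution $h_\eps$; the entropy bound below keeps $h_\eps>0$, so $h_\eps$ survives on all of $[0,T_{loc}]$.

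The core of the proof is a family of $\eps$--uniform a priori estimates, each obtained by testing the regularized equation against an appropriate quantity and integrating by parts (using periodicity). The energy estimate gives
\begin{equation*}
\mathcal{E}_{0\eps}(T)+\iint_{Q_T} f_\eps(h_\eps)\big(a_0 h_{\eps xxx}+a_1 D_\eps''(h_\eps)h_{\eps x}\big)^2\,dx\,dt=\mathcal{E}_{0\eps}(0),
\end{equation*}
which passes to \eqref{C:d2'} in the limit by weak lower semicontinuity; I emphasize that, because $\mathcal{E}_{0}$ contains the negative term $-a_1\int D_0(h)$, this alone does \emph{not} bound $\|h_\eps\|_{H^1}$. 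Testing against $h_{\eps xx}$ and using Young's inequality to absorb a multiple of $a_0\int f_\eps(h_\eps)h_{\eps xxx}^2$ gives the differential inequality $\tfrac{d}{dt}\int h_{\eps x}^2\leqslant \tfrac{a_1^2}{a_0}\|h_\eps\|_{L^\infty}^{2m-n}\int h_{\eps x}^2$ (here $2m-n\geqslant 0$ is used), hence \eqref{C:a2-new}. Testing against $G_{0\eps}'(h_\eps)$ gives the entropy identity $\tfrac{d}{dt}\int G_{0\eps}(h_\eps)+a_0\int h_{\eps xx}^2=a_1\int D_\eps''(h_\eps)h_{\eps x}^2$. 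Adding the last two inequalities, bounding $\|h_\eps\|_{L^\infty}$ by $C(|\Omega|)(\|h_{\eps x}\|_2+\int h_\eps)$ (mass is conserved) and controlling the entropy right--hand side as below, one arrives at a single closed inequality $\tfrac{d}{dt}\mathcal{F}_\eps\leqslant c_\eps(t)\,\mathcal{F}_\eps$ for $\mathcal{F}_\eps:=\int(h_{\eps x}^2+G_{0\eps}(h_\eps))$ with $\int_0^t c_\eps=B_2(t)$; a short--time continuity (bootstrap) argument --- closing the loop between the bound on $\mathcal{F}_\eps$ and the bound on $\|h_\eps\|_{L^\infty}$ it supplies --- then holds on an interval $[0,T_{loc}]$ whose length is governed, via an ODE comparison, by exactly $a_0,a_1,|\Omega|,\int h_0,\|h_{0x}\|_2$, and $\mathcal{F}_\eps(0)\to\|h_{0x}\|_2^2+\int G_0(h_0)$. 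This gives \eqref{C:a2-new2}, the entropy identity then also yields the $L^2(0,T_{loc};H^2)$ bound \eqref{Linf_H2}, and the equation gives $h_{\eps t}$ bounded in $L^2(0,T_{loc};(H^1(\Omega))')$.

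I expect the main obstacle to be controlling the destabilizing right--hand side $a_1\int D_\eps''(h_\eps)h_{\eps x}^2\,dx$ of the entropy estimate when $m<n$, where $|h|^{m-n}$ is singular on $\{h=0\}$; this is precisely where the range $m\geqslant n$ reachable by the methods of \cite{BP1} is improved to $m\geqslant n/2$. The plan is: integrate by parts to write, in the generic case, $\int |h_\eps|^{m-n}h_{\eps x}^2\,dx=-\int D_0'(h_\eps)\,h_{\eps xx}\,dx$, apply Cauchy--Schwarz and Young to absorb $\tfrac{a_0}{2}\int h_{\eps xx}^2$ into the left side, and bound the remaining $\int |h_\eps|^{2(m-n+1)}\,dx$ by interpolation against the conserved mass and $\|h_\eps\|_{L^\infty}$ --- equivalently, rewrite $|h_\eps|^{m-n}h_{\eps x}^2$ as a multiple of $\big((|h_\eps|^{(m-n+2)/2})_x\big)^2$ and use a Gagliardo--Nirenberg inequality. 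The borderline exponents $m-n\in\{-1,-2\}$ are handled with the logarithmic forms of $D_0$ in \eqref{D_is}, and $m\geqslant n/2$ is exactly the condition under which the interpolation exponents are admissible and a small multiple of $\int h_{\eps xx}^2$ suffices to absorb the singular term.

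Finally I would pass to the limit $\eps\to0$. The $\eps$--uniform bounds give weak--$*$ convergence in $L^\infty(0,T_{loc};H^1(\Omega))$, weak convergence in $L^2(0,T_{loc};H^2(\Omega))$, and, by the Aubin--Lions lemma with the $(H^1(\Omega))'$ bound on $h_{\eps t}$, strong convergence of $h_\eps$ in $C(\overline{Q}_{T_{loc}})$ and in $L^2(0,T_{loc};H^1)$; the uniform $H^1$--in--$x$ and $(H^1)'$--in--$t$ bounds also yield $h\in C^{1/2,1/8}_{x,t}(\overline{Q}_{T_{loc}})$. Nonnegativity of $h$ follows from the uniform entropy bound: $h_\eps<0$ on a set of positive measure would force $\int G_{0\eps}(h_\eps)$ to exceed its uniform bound, since $G_{0\eps}$ has been built to blow up at $0$. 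On the open set $\mathcal{P}=\overline{Q}_{T_{loc}}\setminus(\{h=0\}\cup\{t=0\})$ the limit equation is uniformly parabolic, so parabolic Schauder theory gives $h\in C^{4,1}_{x,t}(\mathcal{P})$ and justifies passing to the limit in the nonlinear flux there; the energy dissipation \eqref{C:d2'} gives $\sqrt{f(h)}(a_0 h_{xxx}+a_1 D''(h)h_x)\in L^2(\mathcal{P})$, so this flux, extended by zero off $\mathcal{P}$, is the one appearing in the weak formulation \eqref{integral_form}. The periodic boundary conditions \eqref{BC1} and the initial condition \eqref{ID1} pass to the limit by the strong $C(\overline{Q}_{T_{loc}})$ convergence, completing the construction of the generalized weak solution on $Q_{T_{loc}}$.
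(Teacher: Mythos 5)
Your overall architecture --- regularize, derive the energy identity, an $H^1$ estimate, and an entropy estimate, combine the latter two into a single ``blended'' functional $\mathcal{F}_\eps=\int(h_{\eps x}^2+G_{0\eps}(h_\eps))$ to close a short-time Gr\"onwall argument, then pass to the limit --- is precisely the paper's, and you correctly identify the blended bound as the new ingredient that pushes the admissible range from $m\geqslant n$ to $m\geqslant n/2$. Your treatment of the destabilizing term $a_1\int D''_\eps(h_\eps)h_{\eps x}^2$ by integration by parts, Young's inequality, and Gagliardo--Nirenberg interpolation against the conserved mass also matches the paper's Lemma \ref{MainAE}.

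There is, however, a genuine gap in the regularization step, and it is load-bearing. You ask for a smooth mobility $f_\eps$ that is simultaneously (i) bounded below by a positive constant on all of $\R$ (so the approximate problem is uniformly parabolic and an approximate solution exists on a definite time interval) and (ii) satisfies $1/f_\eps\geqslant 1/|h|^n$, i.e.\ $f_\eps\leqslant |h|^n$, so that $G_{0\eps}''=1/f_\eps$ makes $G_{0\eps}(z)\to+\infty$ as $z\to0$ (which is what you use to conclude nonnegativity of the limit). These two requirements contradict one another at $h=0$: (ii) forces $f_\eps(0)\leqslant |0|^n=0$ while (i) forces $f_\eps(0)\geqslant c_\eps>0$. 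More structurally, if $f_\eps\geqslant c_\eps>0$ everywhere, then $G_{0\eps}''\leqslant 1/c_\eps$ is bounded, so $G_{0\eps}$ is $C^{1,1}$ on compact sets and cannot blow up at $0$; the entropy mechanism for positivity is lost. No single-parameter choice of $f_\eps$ can do both jobs. The paper's resolution is the two-parameter mobility $f_{\delta\eps}(z)=\tfrac{|z|^{4+n}}{|z|^4+\eps|z|^n}+\delta$: the additive $\delta>0$ supplies uniform parabolicity and classical existence (via E\u{i}delman/Schauder rather than Galerkin, but that choice is inessential); one takes $\delta\to0$ first with $\eps>0$ held fixed, and it is only the $\delta=0$ mobility $f_\eps\sim |z|^4/\eps$ near $0$, whose entropy $G_\eps$ does blow up at $0$, that forces the intermediate solution $h_\eps$ to be strictly positive and classical; the final limit $\eps\to0$ then produces the nonnegative generalized weak solution. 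You need this (or an equivalent two-stage) decoupling. Once the regularization is replaced by the $\delta,\eps$ scheme, the remainder of your outline --- the blended Gr\"onwall closure, the absorption of the singular entropy term under $m\geqslant n/2$, the Aubin--Lions compactness, and the passage to the limit in the flux on $\{h>0\}$ --- lines up with the paper's proof of Theorem \ref{C:Th1}.
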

%{\bf In the proof of (\ref{C:d2'}), I think we needed $n \geqslant 2$ to get the second term on the left-hand side.  Am I wrong?
%Do we need this second term anywhere or should we just write (\ref{C:d2'}) without that term?  Also, it's my expectation
%that when we submit this thing for publication, we'll remove the bounds (\ref{C:a2-new}) and (\ref{C:a2-new2}) because we don't use them and
%we'll be trying to trim things down.}

%\begin{figure}
%\begin{center}
%% \vspace{-.5in}
%\includegraphics[width=4cm] {thm1.pdf}
%\includegraphics[width=4cm] {thm3.pdf}
%\includegraphics[width=4cm] {thm4.pdf}
%\end{center}
%\vspace{-.2in}
%\caption{\label{regions}
%Left: theorem 1, centre: theorem 3, right: theorem 4.
%}
%\end{figure}
There is nothing special about the time $T_{loc}$ in the bounds
(\ref{C:d2'}), (\ref{C:a2-new}), and (\ref{C:a2-new2}); given a countable collection
of times in $[0,T_{loc}]$, one can construct a weak solution for
which these bounds will hold at those times.  Also, we note that
the analogue of Theorem 4.2 in \cite{B8} also holds: there exists
a nonnegative weak solution with the integral formulation
%\begin{align} \label{alt_int}
%& \int\limits_0^T \langle h_t(\cdot,t), \phi \rangle \; dt
%+ a_0 \iint\limits_{Q_T} (n h^{n-1} h_x h_{xx} \phi_x + h^n h_{xx} \phi_{xx}) \; dx dt \\
%& \hspace{1.5in} - a_1 \iint\limits_{Q_T} { h^m h_x \phi_x \; dx
%dt} = 0. \notag
%\end{align}
%
\begin{equation} \label{alt_int}
\int\limits_0^T \langle h_t(\cdot,t), \phi \rangle \; dt
+ a_0 \iint\limits_{Q_T} (n h^{n-1} h_x h_{xx} \phi_x + h^n h_{xx} \phi_{xx}) \; dx dt
 - a_1 \iint\limits_{Q_T} { h^m h_x \phi_x \; dx
dt} = 0.
\end{equation}

We note that the existence theory for the long-wave stable case, $a_1 < 0$, has already been
considered by \cite{BP_PM,DPGS_2001}.

Theorem \ref{C:Th2} states that if the initial data also has finite ``$\alpha$-entropy'' then a solution
can be constructed which satisfies Theorem \ref{C:Th1} and has additional regularity.  This solution
is called a ``strong'' solution; see \cite{BertPugh1996,B2,BP_PM}.  The $\alpha$-entropy, defined below,
was discovered by Leo Kadanoff \cite{BertozziBrenner}.  Let
\begin{equation}\label{C:Galpha}
G_{0}^{(\alpha)} (z) =
\begin{cases}
\tfrac{z^{2-n+\alpha}}{(2-n+\alpha)(1-n+\alpha)} & \mbox{if } \alpha \in (-\infty,n-2) \cup (n-1,\infty) \\
\tfrac{z^{2-n+\alpha}}{(2-n+\alpha)(1-n+\alpha)} + z + c & \mbox{if }  \alpha \in (n-2,n-1) \\
z \ln z - z +1  & \mbox{if }   \alpha = n-1 \\
- \ln z  + z/e & \mbox{if }  \alpha = n-2,
\end{cases}
% \Longrightarrow
G_0^{(\alpha)''}(z) = \tfrac{z^\alpha}{z^n}
%
%\left\{ \begin{gathered}
%\tfrac{z^{\alpha -n+ 2 }}{(\alpha - n + 2)(\alpha - n + 1)} \text{
%if } \alpha - n \neq \{ - 1, - 2 \}, \\
%z \ln z - z  \text{ if } \alpha - n = - 1 , \\
%- \ln z \text{ if } \alpha - n = - 2.
%\end{gathered} \right.; \ (G^{(\alpha)}_{0} (z))'' =
%\tfrac{z^{\alpha}}{f(z)}.
\end{equation}
where $c$ is chosen to ensure that $G_0^{(\alpha)}(z) \geqslant 0$ for all $z \geqslant 0$.
%{\bf Roman, I know it's ridiculous but I kept the $z^{\alpha}$ in the numerator because
%that's where it'll be when we're defining $G_\eps$.  Will you forgive me?}

\begin{theorem}[Regularity]\label{C:Th2}
Assume the initial data $h_0$ satisfies the hypotheses of Theorem \ref{C:Th1} and also
has finite $\alpha$-entropy for some $\alpha \in (-1/2,1)$, $\alpha \neq 0$,
\begin{equation} \label{finite_alpha_ent}
\int\limits_{\Omega} {G_0^{(\alpha)}(h_0(x)) \,dx} <\infty.
\end{equation}
Then for some time time $T_{loc}^{(\alpha)} \in (0,T_{loc}]$
there exists a nonnegative generalized weak solution
on $Q_{T_{loc}^{(\alpha)}}$ that
satisfies Theorem \ref{C:Th1} and has the additional regularity
\begin{equation}\label{Linf_H2-2}
h^{\tfrac{\alpha + 2}{2}} \in L^{2}(0, T_{loc}^{(\alpha)};
H^2(\Omega)) \text{ and } h^{\tfrac{\alpha + 2}{4}} \in L^{2}(0,
T_{loc}^{(\alpha)}; W^1_4(\Omega)).
\end{equation}
\end{theorem}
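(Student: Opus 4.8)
The plan is to rerun the approximation scheme that produces the solution of Theorem~\ref{C:Th1} and to adjoin one new a priori bound --- the dissipation of the $\alpha$-entropy $\int_\Omega G_0^{(\alpha)}(h)\,dx$ --- which survives the regularization limit and is exactly (\ref{Linf_H2-2}). Recall that the solution of Theorem~\ref{C:Th1} is obtained as a limit $h=\lim h_{\varepsilon\delta}$ of classical, strictly positive solutions of uniformly parabolic problems in which the mobility $|h|^n$ is replaced by $|h|^n+\varepsilon$ and the datum is regularized to be smooth and bounded below by $\delta$; the bounds (\ref{C:d2'}), (\ref{C:a2-new}), (\ref{C:a2-new2}) (hence (\ref{Linf_H2})) come from uniform-in-$(\varepsilon,\delta)$ estimates on $h_{\varepsilon\delta}$. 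Our only task is to add to this list a uniform bound on $\int_0^{T}\!\!\int_\Omega\!\big(h_{\varepsilon\delta}^{\alpha}h_{\varepsilon\delta,xx}^2+h_{\varepsilon\delta}^{\alpha-2}h_{\varepsilon\delta,x}^4\big)\,dx\,dt$ on a possibly shorter time interval, and then to pass to the limit.

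The new estimate is obtained by testing the regularized equation with $(G_0^{(\alpha)})'(h_{\varepsilon\delta})$. Using $(G_0^{(\alpha)})''(z)=z^{\alpha-n}$ from (\ref{C:Galpha}), integrating by parts twice against the periodic boundary conditions, and invoking the elementary identity $\int_\Omega h^{\alpha-1}h_x^2h_{xx}\,dx=-\tfrac{\alpha-1}{3}\int_\Omega h^{\alpha-2}h_x^4\,dx$, one is led (up to $\varepsilon,\delta$ errors that vanish in the limit) to
\begin{equation*}
\frac{d}{dt}\int_\Omega G_0^{(\alpha)}(h)\,dx+a_0\!\int_\Omega h^{\alpha}h_{xx}^2\,dx+\tfrac{a_0\,\alpha(1-\alpha)}{3}\!\int_\Omega h^{\alpha-2}h_x^4\,dx=a_1\!\int_\Omega h^{\,m-n+\alpha}h_x^2\,dx.
\end{equation*}
The algebraic point that makes the left-hand side useful is that, for $w:=h^{(\alpha+2)/2}$, one has $\int_\Omega w_{xx}^2\,dx=\tfrac{(\alpha+2)^2}{4}\int_\Omega\!\big(\tfrac{\alpha(4-\alpha)}{12}h^{\alpha-2}h_x^4+h^{\alpha}h_{xx}^2\big)\,dx$, so the dissipation reconstructs $\|(h^{(\alpha+2)/2})_{xx}\|_{L^2}^2$ together with a multiple of $\int_\Omega h^{\alpha-2}h_x^4\,dx=(\tfrac{4}{\alpha+2})^4\|(h^{(\alpha+2)/4})_x\|_{L^4}^4$. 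For $\alpha\in(0,1)$ both dissipation terms have the good sign and control these two quantities directly; for $\alpha\in(-1/2,0)$ the quartic term is sign-indefinite and must be reabsorbed, using the identity above together with a Gagliardo--Nirenberg interpolation --- it is precisely here that the restriction $\alpha>-1/2$ enters. The $a_1$-term on the right is the long-wave-instability source: one estimates it by Gagliardo--Nirenberg and the uniform $L^\infty(\Omega)$ bound on $h$ coming from (\ref{C:a2-new}) (this is where $m\geqslant n/2$ is used, to keep the exponent of $h$ in hand), absorbing part of it into the dissipation and treating the rest with a Gronwall argument for $\int_\Omega G_0^{(\alpha)}(h)\,dx$. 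Integrating in time yields the desired uniform bound; the interval is shortened to some $T_{loc}^{(\alpha)}\in(0,T_{loc}]$ because the Gronwall factor and the requirement that the $\alpha$-entropy stay finite both depend on $\|h_{0x}\|_2$ and on $\int_\Omega G_0^{(\alpha)}(h_0)\,dx$.

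It remains to convert the uniform bound into (\ref{Linf_H2-2}) and pass to the limit. The identity for $w_{xx}$, together with $\int_\Omega w_x^2\,dx=\tfrac{(\alpha+2)^2}{4}\int_\Omega h^{\alpha}h_x^2\,dx\leqslant\tfrac{(\alpha+2)^2}{4}\big(\int_\Omega h^{\alpha-2}h_x^4\big)^{1/2}\big(\int_\Omega h^{\alpha+2}\big)^{1/2}$ and the $L^\infty$ bound on $h$, shows that $h_{\varepsilon\delta}^{(\alpha+2)/2}$ is bounded in $L^2(0,T_{loc}^{(\alpha)};H^2(\Omega))$ and $h_{\varepsilon\delta}^{(\alpha+2)/4}$ in $L^2(0,T_{loc}^{(\alpha)};W^1_4(\Omega))$, uniformly in $(\varepsilon,\delta)$. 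One then passes to the limit exactly as in Theorem~\ref{C:Th1}: Aubin--Lions compactness gives strong convergence of $h_{\varepsilon\delta}$ in $C(\overline{Q}_T)$, so $h_{\varepsilon\delta}^{(\alpha+2)/2}\to h^{(\alpha+2)/2}$ and the weak limits in the spaces above are identified, while finiteness of (\ref{finite_alpha_ent}) plus a careful choice of the regularized data keeps the right-hand side of the integrated estimate bounded. Since the limit is produced by the same approximating sequence, it coincides with the solution of Theorem~\ref{C:Th1}, which therefore enjoys the extra regularity (\ref{Linf_H2-2}).

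The main obstacle is the regime $\alpha\in(-1/2,0)$: there the $\alpha$-entropy is not automatically non-increasing even for the pure fourth-order flow, and one must find the sharp interpolation inequality that lets the sign-indefinite $\int_\Omega h^{\alpha-2}h_x^4$ contribution be dominated by $\|(h^{(\alpha+2)/2})_{xx}\|_{L^2}^2$ plus lower-order terms controlled by $\|h\|_{L^\infty}$ and $\|h_x\|_{L^2}$ --- the value $\alpha=-1/2$ being exactly where the available interpolation gain runs out, and $\alpha=0$ being excluded since it merely reproduces the entropy $G_0$ of (\ref{C:G_0-def}) already used in Theorem~\ref{C:Th1}. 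A secondary technical point is to regularize $h_0$ so that both its $H^1$ norm and its $\alpha$-entropy converge, so that $T_{loc}^{(\alpha)}$ and the constants do not degenerate as $(\varepsilon,\delta)\to0$.
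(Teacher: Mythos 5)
Your proposal is correct and follows essentially the same route as the paper: test the regularized equation with $(G_0^{(\alpha)})'$, split into the cases $\alpha\in(0,1)$ (both dissipation terms have good sign) and $\alpha\in(-1/2,0)$ (reabsorb the sign-indefinite quartic term via the second-derivative dissipation), estimate the $a_1$-source, run a nonlinear Gr\"onwall argument on a possibly shorter interval, and pass to the limit along the same approximating subsequence. One small correction: the step you describe as needing ``a sharp Gagliardo--Nirenberg interpolation'' for $\alpha\in(-1/2,0)$ is in fact more elementary --- the identity $\int_\Omega h^{\alpha-1}h_x^2h_{xx}\,dx=\tfrac{1-\alpha}{3}\int_\Omega h^{\alpha-2}h_x^4\,dx$ plus Cauchy--Schwarz immediately gives
\begin{equation*}
\int\limits_\Omega h^{\alpha-2}h_x^4\,dx \leqslant \tfrac{9}{(1-\alpha)^2}\int\limits_\Omega h^{\alpha}h_{xx}^2\,dx,
\end{equation*}
and inserting this into the entropy inequality replaces the dissipation coefficient by $a_0\tfrac{1+2\alpha}{1-\alpha}$, which is positive exactly for $\alpha>-1/2$; no interpolation between norms is needed.
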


\begin{theorem}[Finite Speed of Propagation]\label{C:Th4}
Consider the range of exponents $0 < n \leqslant \tfrac{1}{2}$ and
$n/2 <  m < 6 - n $ or $\tfrac{1}{2} < n < 3$ and
$m \geqslant n/2 $ (see Figure \ref{thm1_region}).
Assume the initial data satisfies the hypotheses of Theorem \ref{C:Th2}
and also has compact support
in $(-a,a)$:
 $\mbox{supp}(h_0) \subseteq [-r_0,r_0] \subset
(-a,a)$.
Then there exists
a time $0 < T_{speed} \leqslant T_{loc}^{(\alpha)}$ and
a nondecreasing function $\Gamma(t) \in C([0,T_{speed}])$
such that the
strong solution from
Theorem~\ref{C:Th2}, has finite speed propagation, i.e.
$$
\mbox{supp}(h(\cdot,t)) \subseteq [-r_0-\Gamma(t),r_0+\Gamma(t)] \subset
(-a,a)
$$
for all $t \in [0,T_{speed}]$.
Furthermore, if $0<n<2$ there exists a constant $C$ which depends
on $n$, $m$, $\alpha$, and $\int h_0$ such that $\Gamma(t) \leqslant C t^{1/(n+4)}$.
\end{theorem}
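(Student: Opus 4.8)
The plan is to combine two integral estimates — a local entropy estimate (valid for $0<n<2$) and a local energy estimate (valid for $1/2<n<3$) — with a Stampacchia-type iteration argument applied to a suitable system of functional inequalities. First I would fix a point $x_0$ outside the support of $h_0$, say $x_0 > r_0$, and for $s > 0$ introduce a smooth cutoff (localization) function $\zeta_s(x)$ supported in a shrinking neighborhood of $[x_0,\infty)$, depending on a parameter $s$ that measures how far into the "empty" region we look, together with a time-dependent modulation so that the test function slides with the expected propagation front. Testing the weak formulation \eqref{integral_form} against $\zeta_s^{2q} \cdot (\text{powers of } h)$ and integrating by parts — using the additional regularity \eqref{Linf_H2-2} of the strong solution from Theorem \ref{C:Th2} to justify the manipulations on $\mathcal P$ — yields a differential inequality for the local "mass" $b(s,t) := \int \zeta_s^{2q} h\,dx$ and for a localized entropy/energy functional, say $\mathcal{A}(s,t)$. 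The degeneracy $f(h)=|h|^n$ is what makes the front finite: the flux terms carry a factor of $h^n$ which is small near the edge of the support, and this is quantified by interpolation inequalities (Gagliardo–Nirenberg type) converting $L^\infty$ or $L^2$ norms of $h$ on the small strip into controlled powers of $\mathcal A(s,t)$ times negative powers of the strip width.

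The second step is to assemble these into a coupled system of the form
$$
\mathcal A(s,t) + \text{(dissipation)} \leqslant C \int_0^t \big( (s'-s)^{-\beta_1} \mathcal A(s',\tau)^{\gamma_1} + (s'-s)^{-\beta_2} b(s',\tau)^{\gamma_2} \big)\,d\tau,
$$
with analogous inequalities interrelating $\mathcal A$ and $b$, where the exponents $\beta_i,\gamma_i$ are explicit functions of $n$ and $m$. The admissible $(n,m)$ range in the statement — $0<n\leqslant \tfrac12$ with $n/2<m<6-n$, or $\tfrac12<n<3$ with $m\geqslant n/2$ — is precisely the region where these exponents satisfy the structural hypotheses of the extension of Stampacchia's lemma to systems due to Giacomelli and Shishkov \cite{G2}; the restriction $m<6-n$ in the low-$n$ case is the technical obstruction alluded to in the introduction, coming from the second (nonlinear, non-energy) term failing to have the right sign/scaling there. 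Applying \cite{G2} to this system gives the existence of a finite $s^*(t)$ beyond which $\mathcal A(s,t)=b(s,t)=0$; since $b(s,t)=0$ forces $h(\cdot,t)=0$ on the corresponding strip, this yields $\mathrm{supp}(h(\cdot,t))\subseteq[-r_0-\Gamma(t),r_0+\Gamma(t)]$ with $\Gamma(t)=s^*(t)-r_0$ nondecreasing and continuous, for $t$ up to some $T_{speed}\leqslant T_{loc}^{(\alpha)}$ determined by requiring the front to stay inside $(-a,a)$.

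The quantitative rate $\Gamma(t)\leqslant C t^{1/(n+4)}$ for $0<n<2$ comes from tracking the scaling in the local entropy estimate alone: there the relevant balance is between $t$ (time), $(s'-s)$ (spatial width), and the conserved mass $\int h_0$, and a dimensional analysis of the self-similar source-type solution of $h_t=-(|h|^n h_{xxx})_x$ gives the exponent $1/(n+4)$; making this rigorous amounts to choosing the free parameter in the Stampacchia iteration optimally and checking that the mass $\int h_0\,dx$ enters with the right power $C$. The main obstacle I expect is not any single estimate but the bookkeeping: one must choose the powers of $h$ in the test function, the cutoff exponent $q$, and the interpolation inequalities so that \emph{all} the resulting exponents simultaneously fall in the range required by \cite{G2}, and this is genuinely delicate at the boundary of the $(n,m)$ region — in particular reconciling the entropy estimate (which wants $n<2$) with the energy estimate (which wants $n>1/2$) so that their overlap covers the claimed region, and handling the sign-indefinite term $a_1 h^{m-n}h_x$ which, unlike in the stable case, works against the argument and must be absorbed using the $L^\infty$ bounds implicit in $T_{loc}$ being chosen small.
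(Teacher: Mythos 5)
Your proposal is correct and follows essentially the same route as the paper: a local $\alpha$-entropy estimate (Lemma~\ref{G:lem1}) in the range $0<n<2$ and a local energy estimate (Lemma~\ref{G:lem2}) in the range $\tfrac{1}{2}<n<3$, combined via Gagliardo--Nirenberg interpolation into a system of functional inequalities to which the Giacomelli--Shishkov extension of Stampacchia's lemma (Lemma~\ref{A.5}) is applied, with the rate $\Gamma(t)\leqslant Ct^{1/(n+4)}$ obtained by tracking the $T$-dependence in the entropy-based estimates. The only cosmetic difference is organizational: the paper applies the Stampacchia lemma to space--time functionals $G_i(s)=\iint_{Q_T(s)}h^{\xi_i}\,dx\,dt$ (with several distinct powers $\xi_i$) rather than to a pointwise-in-time local mass $b(s,t)$, and the time modulation of the cutoff is a simple factor $e^{-t/T}$ rather than a sliding front, but neither changes the substance of the argument.
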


% {\bf Roman, did I get it right that $C$ depends on the $H^1$ norm of $h_0$?}

\begin{theorem}[Finite Time Blow Up]\label{C:Th3}
Consider the range of exponents
$0 < n \leqslant
\tfrac{1}{2}$ and $4 - n \leqslant m  < 6 - n$
or $\tfrac{1}{2} < n \leqslant 1$ and $m \geqslant 4 - n$
or $1 < n < 2$ and $m \geqslant n+2$ (see Figure \ref{thm1_region}).
Assume the nonnegative initial data $h_0$ has compact support,
negative energy (\ref{Energy})
$$
\mathcal{E}_0(0) = \int\limits_{-\infty}^{\infty} { \{\tfrac{a_0}{2}
h_{0x}^2 - a_1 D_0(h_0) \}\,dx} < 0,
$$
and satisfies the hypotheses of Theorem \ref{C:Th2} with
$\Omega = \R^1$.
Then there exists a finite time $T^* < \infty$ and a nonnegative,
compactly supported, strong solution $h$ on $\R^1 \times [0,T^*)$
such that
\begin{equation}\label{C:blow-up}
\mathop {\limsup} \limits_{t \to T^*} \|h(.,t)\|_{H^{1}(\R^1)} =
\mathop {\limsup} \limits_{t \to T^*}
\|h(.,t)\|_{L^{\infty}(\R^1)} = +\infty.
\end{equation}
The solution satisfies the bounds of Theorems \ref{C:Th1}
and \ref{C:Th2} with $\Omega=\R^1$.
\end{theorem}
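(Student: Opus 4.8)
The plan is to construct the blow-up solution by bootstrapping the bounded-domain theory to the Cauchy problem, iterating it in time, and then using the entropy-weighted second-moment inequality of Lemma~\ref{F:lem0} as an obstruction to global existence; for $n=1$ this is precisely the argument of \cite{BP2}. First I would build a short-time Cauchy solution: given $h_0$ with $\supp h_0\subseteq[-r_0,r_0]$, fix $a>r_0$, apply Theorems~\ref{C:Th1}--\ref{C:Th2} on $\Omega=(-a,a)$, and note that Theorem~\ref{C:Th4} confines the support to $[-r_0-\Gamma(t),r_0+\Gamma(t)]$ for $t\le T_{speed}$, with $\Gamma(t)\le Ct^{1/(n+4)}$ since $0<n<2$. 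All of the bounds (\ref{Linf_H2}), (\ref{C:d2'})--(\ref{C:a2-new2}), (\ref{Linf_H2-2}) depend only on $a_0,a_1,\int h_0,\|h_{0x}\|_2,\int G_0(h_0),\int G_0^{(\alpha)}(h_0)$ and not on $a$, and once $a>r_0+\Gamma(T_{speed})$ the solution lives in a fixed compact set; letting $a\to\infty$ and passing to a limit (the $a$-uniform bounds supply the compactness) I would obtain a nonnegative, compactly supported strong solution on $\R^1\times[0,T_0]$, $T_0:=\min(T_{loc}^{(\alpha)},T_{speed})$, satisfying the bounds of Theorems~\ref{C:Th1}--\ref{C:Th2} with $\Omega=\R^1$ and, by Lemma~\ref{F:lem0}, the entropy second-moment inequality at $t=T_0$; moreover $T_0$ shrinks as $\|h_0\|_{H^1}$ grows.

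Since the solution at $T_0$ is again nonnegative, compactly supported, in $H^1(\R^1)$ with finite entropy and $\alpha$-entropy (by (\ref{C:a2-new2}) and (\ref{Linf_H2-2})), I would restart from $h(\cdot,T_0)$ and iterate, producing times $0=t_0<t_1<\cdots$ and a strong solution on $[0,T^*)$, $T^*:=\sup_i t_i\in(0,\infty]$. Because $\mathcal{E}_0$, $\widetilde{B}$, and $\int x^2\widetilde{G}_0(h)\,dx$ are all defined by integrating forward from $t=0$, splicing the per-step versions of Lemma~\ref{F:lem0} yields, at every $T_i\uparrow T^*$,
$$
e^{-\widetilde{B}(T_i)}\int x^2\widetilde{G}_0(h(x,T_i))\,dx\ \le\ \int x^2\widetilde{G}_0(h_0)\,dx+\int_0^{T_i}e^{-\widetilde{B}(t)}\left(k_1\mathcal{E}_0(0)+k_2\int x^2 h_{xx}^2\,dx\right)dt ,
$$
with $k_1=2(4-n)$, $k_2=3a_0(n-1)/2$, and $\widetilde{G}_0(z)=z^{2-n}/(2-n)\ge0$.

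The key step is then to deduce $T^*<\infty$. The left side above is $\ge0$, and $k_1\mathcal{E}_0(0)<0$ because $n<4$ and $\mathcal{E}_0(0)<0$. For $n=1$, $\widetilde{B}\equiv0$, $k_2=0$, $\widetilde{G}_0(z)=z$, so this is exactly (\ref{second_moment_n1}) and $T^*\le\big(\int x^2 h_0\,dx\big)/\big(6|\mathcal{E}_0(0)|\big)$. For $n<1$, $k_2\le0$, so the integrand is $\le k_1\mathcal{E}_0(0)e^{-\widetilde{B}(t)}<0$. For $1<n<2$ (hence $m\ge n+2$), $k_2>0$ and I would absorb $\int_0^{T_i}e^{-\widetilde{B}(t)}\int x^2 h_{xx}^2\,dx\,dt$ using the $H^2$-bound (\ref{Linf_H2}), the energy identity (\ref{C:d2'}), mass conservation and the support bound $\supp h(\cdot,t)\subseteq[-r_0-Ct^{1/(n+4)},r_0+Ct^{1/(n+4)}]$, arguing that along a global solution the steady negative input $k_1\mathcal{E}_0(0)$ eventually drives the right side below any fixed level — contradicting the left side being $\ge0$. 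Thus $T^*<\infty$. Maximality of the time-stepping then forces $\limsup_{t\to T^*}\|h(\cdot,t)\|_{H^1}=\infty$, and since $T^*<\infty$ keeps $\supp h(\cdot,t)$ bounded, mass conservation with $\tfrac{a_0}{2}\|h_x\|_2^2\le\mathcal{E}_0(0)+a_1\int D_0(h)\le\mathcal{E}_0(0)+C\|h\|_\infty^{m-n+1}$ and Gagliardo--Nirenberg upgrade this to $\limsup_{t\to T^*}\|h(\cdot,t)\|_\infty=\infty$, which is (\ref{C:blow-up}).

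The hard part will be the $n\ne1$ portion of the step "$T^*<\infty$". For $n>1$ the term $k_2\int x^2 h_{xx}^2$ has the unfavorable sign, and even for $n<1$ the weight $e^{-\widetilde{B}}$ decays — negative energy gives $\int D_0(h)\ge|\mathcal{E}_0(0)|/a_1$, hence $\|h(\cdot,t)\|_\infty\ge\delta>0$ and $\widetilde{B}$ increasing — so one cannot simply send the right side to $-\infty$. Making Lemma~\ref{F:lem0} bite comes down to weighing the steady negative input against the decay of $e^{-\widetilde{B}}$ and the $h_{xx}$-term, and that balance is exactly what determines the admissible $(n,m)$ shown in the right plot of Figure~\ref{thm1_region}; I expect this to be the most delicate part of the argument.
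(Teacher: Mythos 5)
Your overall skeleton matches the paper: extend Theorems~\ref{C:Th1}--\ref{C:Th4} to a compactly supported strong Cauchy solution, time-step to build $h$ on $[0,T^*)$, feed the iterated second-moment inequality of Lemma~\ref{F:lem0} into a contradiction argument, split by sign of $k_2=3a_0(n-1)/2$, and finally upgrade $H^1$ blow-up to $L^\infty$ blow-up via a bound like (\ref{C:a2-new2}) (your energy route $\tfrac{a_0}{2}\|h_x\|_2^2\leqslant\mathcal{E}_0(0)+C\|h\|_\infty^{m-n+1}$ is a fine variant). The $n=1$ case is complete. But the proposal has a genuine gap exactly where you flag it: you never establish that $g(T_i):=\int_0^{T_i}e^{-\widetilde{B}(t)}\,dt\to\infty$ as $T_i\to\infty$ (needed for $0<n<1$), nor that $f(T_i):=\int_0^{T_i}e^{-\widetilde{B}(t)}\int x^2h_{xx}^2\,dx\,dt$ grows strictly more slowly than $g(T_i)$ (needed for $1<n<2$). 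As you observe, negative energy forces $\|h(\cdot,t)\|_\infty$ to stay bounded away from zero, so $\widetilde{B}$ is strictly increasing and $e^{-\widetilde{B}}$ could a priori decay fast enough to make $g$ converge; ``the steady negative input eventually drives the right side below zero'' is therefore an assertion, not a proof.

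The paper closes this gap with a quantitative backward-in-time estimate extracted from the time-step formula (\ref{Tloc_eps_is})/(\ref{H1_control}): because the time of existence starting from data of size $v$ scales like $v^{-(\gamma_1-1)}$ with $\gamma_1=2m-n$, a solution that survives until $T_i$ must obey $\|h(\cdot,t)\|_{L^\infty}\leqslant K_1(T_i-t)^{-1/(2(2m-n-1))}$ for $t<T_i$. Substituting this into $\widetilde{B}$ yields $\widetilde{B}(s)\leqslant K_2\bigl(T_i^{\beta}-(T_i-s)^{\beta}\bigr)$ with $\beta=\tfrac{2m-n-2}{2(2m-n-1)}$, whence $g(T_i)\gtrsim T_i^{(2m-n)/(2(2m-n-1))}\to\infty$, and, combining the entropy bound (\ref{D:aa2}) with the same $L^\infty$ control plus the support confinement $|x|\leqslant C(h_0)$ from (\ref{G:est14}), $f(T_i)\lesssim 1+T_i^{(m-2)/(2m-n-1)}$, which is a strictly smaller power precisely when $n<4$. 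It is this backward $L^\infty$ estimate, not a qualitative sign argument, that makes Lemma~\ref{F:lem0} bite and pins down the admissible $(n,m)$ region; without it the right-hand side of your iterated inequality cannot be shown to become negative.
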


\section{Proof of Existence of Generalized Solutions} \label{D}

Our proof of existence of generalized weak solutions defined in
the section \ref{B} follows the main concept of the proof from
\cite{B8}.

\subsection{Regularized Problem}\label{RegularizedProblem}

Given $\delta, \varepsilon > 0$, a regularized parabolic problem,
similar to that of Bernis and Friedman \cite{B8}, is considered:
\begin{numcases}
{(\textup{P}_{\delta,\varepsilon})}
 h_t  + \left({ f_{\delta\varepsilon}(h) (a_0 h_{xxx} + a_1 D''_{\varepsilon}(h)
 h_x)}\right)_x = 0, \qquad \hfill \label{D:1r'}\\
\tfrac{\partial^{i} h}{\partial x^i}(-a,t) = \tfrac{\partial^{i}
h}{\partial x^i}(a,t) \text{ for }
t > 0,\,i= \overline{0,3} , \hfill \label{D:2r'}\\
\qquad  \qquad h(x,0) = h_{0,\varepsilon}(x) \hfill
\label{D:3r'}
% \end{gather}
\end{numcases}
where
\begin{equation}\label{D:reg1}
f_{\delta \varepsilon} (z): = f_{\varepsilon} (z) + \delta =
%\tfrac{|z|^{s + n}}{|z|^s + \varepsilon |z|^n}+ \delta,\
\tfrac{|z|^{4 + n}}{|z|^4 + \varepsilon |z|^n}+ \delta,\
D''_{\varepsilon}(z): = \tfrac{|z|^{m - n}}{1 + \varepsilon |z|^{m
- n}}, \eps > 0, \delta > 0.
%\ \forall\, z \in \mathbb{R}^1,\
% ,\ s \geqslant 4.
\end{equation}
We note that $f_{\delta \eps}  \in C^{1+\gamma}(\R^1)$ and
$f_{\delta \eps} D''_{\varepsilon} \in C^{1+\gamma}(\R^1)$
for some $\gamma \in (0,1)$.
% {\bf Roman, I set the regularizing power $s$ to $4$\dots}
The $\delta>0$ in (\ref{D:reg1}) makes the problem (\ref{D:1r'})
regular (i.e. uniformly parabolic). The parameter $\varepsilon$ is
an approximating parameter which has the effect of increasing the
degeneracy from $f(h) \sim |h|^n$ to $f_{\varepsilon}(h) \sim
h^4$.
% For $\delta \geqslant 0$ and $\eps = 0$,
For $\eps > 0$,
the nonnegative initial data, $h_0$, is approximated via
\begin{equation}\label{D:inreg}
\begin{gathered}
h_{0} + \varepsilon^\theta \leq
% h_{0,\delta \varepsilon}  \in
h_{0,\varepsilon}  \in
C^{4+\gamma}(\overline{\Omega}) \text{ for some } 0 < \theta <
% \tfrac{2}{2s -3},\\
\tfrac{2}{5},\\
%\tfrac{\partial^{i} h_{0,\delta \varepsilon}}{\partial x^i}(-a) =
%\tfrac{\partial^{i}h_{0,\delta \varepsilon}}{\partial x^i}(a)
%\text{ for } i=\overline{0,3}, \\
%h_{0,\delta \varepsilon} \to h_{0}  \text{ strongly in }
%H^1(\Omega) \text{ as } \delta, \varepsilon \to 0.
\tfrac{\partial^{i} h_{0,\varepsilon}}{\partial x^i}(-a) =
\tfrac{\partial^{i}h_{0,\varepsilon}}{\partial x^i}(a)
\text{ for } i=\overline{0,3}, \\
h_{0,\varepsilon} \to h_{0}  \text{ strongly in }
H^1(\Omega) \text{ as } \varepsilon \to 0.
\end{gathered}
\end{equation}
% {\bf I've made it so that $\eps$ both lifts and smooths.  This way the
% $\delta=0, \eps>0$ still has smooth initial data.}
% The $\varepsilon$ term in (\ref{D:inreg}) ``lifts'' the initial data, making it positive even if
%$\delta = 0$. The $\delta$ is involved in smoothing the initial
%data from $H^1(\Omega)$ to $C^{4+\gamma}(\Omega)$.
The effect of $\varepsilon>0$ in (\ref{D:inreg}) is to both ``lift'' the initial data, making it positive,
and to smooth the initial data from $H^1(\Omega)$ to $C^{4+\gamma}(\Omega)$.

By E\u{i}delman \cite[Theorem 6.3, p.302]{Ed}, the regularized
problem has a unique classical solution $h_{\delta \varepsilon}
\in C_{x,t}^{4+\gamma,1+\gamma/4}( \Omega \times [0, \tau_{\delta
\varepsilon}])$ for some time $\tau_{\delta \varepsilon}
> 0$.
For any fixed value of  $\delta$ and $\varepsilon$, by
E\u{i}delman \cite [Theorem 9.3, p.316]{Ed} if one can prove a
uniform in time an a priori bound $|h_{\delta \varepsilon}(x,t)|
\leqslant A_{\delta \varepsilon}<\infty$ for some longer time interval
$[0,T_{loc,\delta \varepsilon}] \quad (T_{loc,\delta \varepsilon}
> \tau_{\delta \varepsilon}$) and for all $x \in \Omega$ then
Schauder-type interior estimates \cite [Corollary 2, p.213] {Ed}
imply that the solution $h_{\delta \varepsilon}$ can be continued
in time to be in $C_{x,t}^{4+\gamma,1+\gamma/4}( \Omega \times
[0,T_{loc,\delta \varepsilon}])$.

Although the solution $h_{\delta \varepsilon}$ is initially
positive, there is no guarantee that it will remain nonnegative. The
goal is to take $\delta \to 0$, $\varepsilon \to 0$ in such a way
that a) $T_{loc,\delta \varepsilon} \to T_{loc} > 0$, b) the
solutions $h_{\delta \varepsilon}$ converge to a (nonnegative)
limit, $h$, which is a generalized weak solution, and c) $h$
inherits certain a priori bounds.  This is done by proving various a
priori estimates for $h_{\delta \varepsilon}$ that are uniform in
$\delta$ and $\varepsilon$ and hold on a time interval $[0,T_{loc}]$
that is independent of $\delta$ and $\varepsilon$. As a result, $\{
h_{\delta \varepsilon} \}$ will be a uniformly bounded and
equicontinuous (in the $C_{x,t}^{1/2,1/8}$ norm) family of functions
in $\bar{\Omega} \times [0, T_{loc}]$. Taking $\delta \to 0$ will
result in a family of functions $\{ h_{\varepsilon} \}$ that are
classical, positive, unique solutions to the regularized problem
with $\delta = 0$. Taking a subsequence of $\varepsilon$ going
to zero will then result in
the desired generalized weak solution $h$. This last  step is where
the possibility of nonunique weak solutions arise; see \cite{B2}
for simple examples of how such a construction applied to $h_t = -
(|h|^n h_{xxx})_x$ can result in there being two different solutions
arising
from the same initial data.

\subsection{A priori estimates}

We start by proving  a priori estimates for classical
solutions of the regularized problem (\ref{D:1r'})--(\ref{D:inreg}).
Appendix \ref{A_priori_proofs} contains the proofs of the lemmas
in this section.

We introduce function $G_{\delta \eps}$ chosen such that
\begin{equation}\label{D:reg2}
G''_{\delta \varepsilon} (z) = \tfrac{1}{f_{\delta \varepsilon}
(z)} \quad \mbox{and} \quad G_{\delta \varepsilon}(z) \geqslant 0 \quad \forall z \in \R^1.
\end{equation}
This is analogous to the ``entropy'' function (\ref{C:G_0-def}) introduced by
Bernis and Friedman \cite{B8}.

\begin{lemma}\label{MainAE}
Let $ n > 0$ and $m \geqslant n/2 $. There exist $\delta_0 > 0$,
$\varepsilon_0 > 0$, and time $T_{loc}>0$ such that if $\delta \in
[0,\delta_0)$, $\varepsilon \in (0,\varepsilon_0)$ and
$h_{\delta \varepsilon}$ is a classical solution of the problem
(\ref{D:1r'})--(\ref{D:inreg}) with initial data $h_{0,\delta
\varepsilon}$ that
is built from a nonnegative function $h_0$ satisfying
the hypotheses of Theorem \ref{C:Th1} then for any $T \in
[0, T_{loc}]$ the following inequalities hold true:
\begin{align}\label{D:a13''}
& \int\limits_{\Omega} { \{h_{\delta \varepsilon, x}^2(x,T) +
\tfrac{a_1^2}{2a_0^2}(1 + \delta) G_{\delta \varepsilon}(h_{\delta \varepsilon}(x,T))\} \,dx} \\
& \hspace{1in}+ \notag a_0 \iint\limits_{Q_T} { f_{\delta
\varepsilon}(h_{\delta \varepsilon}) h^2_{\delta \varepsilon, xxx}
\,dx dt} \leqslant K_1 < \infty,
\end{align}
\begin{equation} \label{BF_entropy}
\int\limits_\Omega G_{\delta \varepsilon}(h_{\delta
\varepsilon}(x,T)) \; dx + \tfrac{a_0}{2} \iint\limits_{Q_T}
h_{\delta \varepsilon, xx}^2 \; dx dt \leqslant K_2 < \infty.
\end{equation}
The energy $\mathcal{E}_{\delta \varepsilon} (t)$ (see
(\ref{Energy})) satisfies:
\begin{equation}\label{D:d2}
\mathcal{E}_{\delta \varepsilon}(T) + \iint\limits_{Q_T}
{f_{\delta \varepsilon}(h_{\delta \varepsilon}) (a_0 h_{\delta
\varepsilon,xxx} + a_1 D''_{\varepsilon}(h_{\delta
\varepsilon})h_{\delta \varepsilon, x})^2} \; dx dt =
\mathcal{E}_{\delta \varepsilon}(0).
\end{equation}
The time $T_{loc}$ and the constants $K_1$, and $K_2$ are
independent of $\delta$ and $\varepsilon$.
\end{lemma}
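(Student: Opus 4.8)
The plan is to derive three a priori estimates for the classical solution $h_{\delta\varepsilon}$ of the regularized problem by testing the PDE against suitably chosen functions and controlling the ``bad'' lower-order term arising from the destabilizing flux $a_1 D''_\varepsilon(h)h_x$. I would carry this out in three steps, corresponding to the three displayed conclusions, and all three share the same underlying mechanism: an energy-type identity plus a Gronwall argument.

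\emph{Step 1 (energy identity, eq.~(\ref{D:d2})).} Multiply (\ref{D:1r'}) by $a_0 h_{xxx}+a_1 D''_\varepsilon(h)h_x$ (the natural ``chemical potential'' for the regularized problem) and integrate over $Q_T$. Using the periodic boundary conditions (\ref{D:2r'}) and the definition of $D_0$-type primitives adapted to $D''_\varepsilon$, integration by parts converts the time-derivative term into $\tfrac{d}{dt}\mathcal{E}_{\delta\varepsilon}(t)$ while the flux term produces exactly the nonnegative dissipation integral in (\ref{D:d2}). Since the solution is classical and strictly positive on $[0,\tau_{\delta\varepsilon}]$, there are no boundary contributions from $\{h=0\}$, so this is a genuine identity, not an inequality. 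This step is essentially bookkeeping.

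\emph{Step 2 (entropy estimate, eq.~(\ref{BF_entropy})).} Multiply (\ref{D:1r'}) by $G'_{\delta\varepsilon}(h)$, where $G_{\delta\varepsilon}$ is the entropy defined by (\ref{D:reg2}). Integration by parts, using $G''_{\delta\varepsilon}=1/f_{\delta\varepsilon}$, turns the fourth-order term into $+a_0\iint h_{xx}^2\,dx\,dt$ plus lower-order pieces, and turns the second-order (destabilizing) term into $-a_1\iint D''_\varepsilon(h)h_x\,(G''_{\delta\varepsilon}(h)h_x)_x$-type expressions. The key point is that $f_{\delta\varepsilon}(z)D''_\varepsilon(z)$ behaves like $|z|^{2m-n}$ up to the regularization, which is controlled because $m\geqslant n/2$ guarantees the relevant exponent $2m-n\geqslant 0$; one then absorbs a fraction of the $\iint h_{xx}^2$ term using Cauchy--Schwarz and an $L^\infty$ or $H^1$ bound on $h$ coming from the previous line, closing the estimate via Gronwall in time on a short interval $[0,T_{loc}]$ chosen independent of $\delta,\varepsilon$. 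The constant $K_2$ is then fixed by $\int G_0(h_0)$ and $\|h_{0x}\|_2$ via (\ref{D:inreg}).

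\emph{Step 3 (combined $H^1$-entropy estimate, eq.~(\ref{D:a13''})).} Take the linear combination $\|h_x(\cdot,T)\|_2^2+\tfrac{a_1^2}{2a_0^2}(1+\delta)\int G_{\delta\varepsilon}(h)$ and differentiate in time, using Step~1's energy identity for the $h_x^2$ part and Step~2's computation for the entropy part. The delicate cancellation is that the destabilizing term $-a_1\iint h^{m}h_x$-type contribution in the $\|h_x\|_2^2$ evolution, estimated by Cauchy--Schwarz against the dissipation $a_0\iint f_{\delta\varepsilon}h_{xxx}^2$, leaves a residual of the form $\tfrac{a_1^2}{a_0}\iint f_{\delta\varepsilon}(h)(D''_\varepsilon(h))^2 h_x^2$; rewriting $f_{\delta\varepsilon}(D''_\varepsilon)^2\sim f_{\delta\varepsilon}^{-1}(f_{\delta\varepsilon}D''_\varepsilon)^2$ and using $G''_{\delta\varepsilon}=1/f_{\delta\varepsilon}$ lets this be reinterpreted and absorbed into the entropy-dissipation term with the coefficient $\tfrac{a_1^2}{2a_0^2}(1+\delta)$, which is precisely why that combination is chosen. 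What remains is a differential inequality $\tfrac{d}{dt}E(t)\leqslant C\,P(\|h\|_\infty)\,E(t)$ with $P$ a polynomial in $\|h\|_\infty$; since $\|h\|_\infty\leqslant C(1+\|h_x\|_2)\leqslant C(1+E^{1/2})$ on a periodic domain with conserved (or controlled) mass, one gets a closed nonlinear ODE inequality whose solution stays bounded on a time $T_{loc}>0$ depending only on the listed data, giving $K_1$.

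\emph{Main obstacle.} The crux is Step~3: arranging the algebra so that the indefinite-sign contribution of the $a_1$-flux to the $H^1$ norm is genuinely absorbed — rather than merely dominated by something that later blows up — which forces the specific coefficient $\tfrac{a_1^2}{2a_0^2}(1+\delta)$ in front of the entropy and requires the condition $m\geqslant n/2$ so that the exponent $2m-n$ is nonnegative and the regularized nonlinearity $f_{\delta\varepsilon}(D''_\varepsilon)^2/f_{\delta\varepsilon}$ does not degenerate near $h=0$. Making the resulting time $T_{loc}$ and constants $K_1,K_2$ uniform in $\delta\in[0,\delta_0)$ and $\varepsilon\in(0,\varepsilon_0)$ — in particular checking that the $\varepsilon$-dependent pieces in $f_{\delta\varepsilon}$ and $D''_\varepsilon$ only help (they make $f$ more degenerate and $D''$ bounded) — is the remaining technical care, and is deferred to Appendix \ref{A_priori_proofs}.
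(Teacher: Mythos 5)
Your overall architecture matches the paper's: an energy identity, an entropy estimate, and a blended $H^1$--entropy estimate closed by a nonlinear Gronwall argument. But there are two concrete problems, and the second one is a genuine gap.

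\emph{Step 1.} To obtain (\ref{D:d2}) one must test with the chemical potential $-\bigl(a_0 h_{xx} + a_1 D'_\varepsilon(h)\bigr)$, not with $a_0 h_{xxx} + a_1 D''_\varepsilon(h)h_x$ (which is the negative $x$-derivative of the chemical potential). Testing $h_t$ with $a_0 h_{xxx}+a_1 D''_\varepsilon(h)h_x$ does not produce an exact time derivative of $\mathcal{E}_{\delta\varepsilon}$: for the fourth-order piece one gets $\iint h_t h_{xxx}\,dx\,dt = -\iint h_{tx}h_{xx}\,dx\,dt = -\tfrac12\int |h_{xx}|^2\big|_0^T$, which is the $H^2$ seminorm, not $\|h_x\|_2^2$. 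This is presumably a slip, but it signals a more serious confusion in Step~3.

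\emph{Step 3.} You propose to obtain the $\|h_x(\cdot,T)\|_2^2$ evolution from the energy identity (Step~1). That cannot work for the full range $m\geqslant n/2$ in the lemma. The energy identity gives $\tfrac{a_0}{2}\|h_x(T)\|_2^2 = \mathcal{E}_{\delta\varepsilon}(0) + a_1\int D_\varepsilon(h(T)) - \iint f_{\delta\varepsilon}(\text{flux})^2$, and for $a_1>0$ the term $a_1\int D_\varepsilon(h(T))\sim a_1\int h^{m-n+2}$ has the bad sign and, when $m\geqslant n+2$ (which is included in the hypotheses), cannot be absorbed by $\|h_x\|_2^2$ through any interpolation inequality -- that is exactly why the energy is not coercive in the critical/supercritical regime. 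What one actually does (and what the paper does) is test separately with $-h_{xx}$ to obtain a clean $\|h_x\|_2^2$ identity:
\begin{equation*}
\tfrac12\int_\Omega h_x^2(x,T)\,dx + a_0\iint_{Q_T} f_{\delta\varepsilon}(h)\,h_{xxx}^2 \,dx\,dt = \tfrac12\int_\Omega h_{0,\delta\varepsilon,x}^2\,dx - a_1\iint_{Q_T} f_{\delta\varepsilon}(h)D''_\varepsilon(h)\,h_x h_{xxx}\,dx\,dt,
\end{equation*}
then apply Cauchy to the cross term, producing the residual $\tfrac{a_1^2}{2a_0}\iint f_{\delta\varepsilon}(D''_\varepsilon)^2 h_x^2$. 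The absorption mechanism you then describe -- ``rewriting $f_{\delta\varepsilon}(D''_\varepsilon)^2$ via $G''_{\delta\varepsilon}=1/f_{\delta\varepsilon}$ and absorbing into the entropy dissipation'' -- is not how it goes: the residual is bounded above by $\iint h^{2m-n}h_x^2$ (plus a $\delta/\varepsilon^2$ term), then Cauchy's inequality and Gagliardo--Nirenberg/Ladyzhenskaya interpolation convert $\iint h^{2m-n}h_x^2$ into a small multiple of $\iint h_{xx}^2$ plus polynomial-in-$\|h_x\|_2^2$ terms, and it is \emph{this} $\iint h_{xx}^2$ that is absorbed by the $\tfrac{a_0}{2}\iint h_{xx}^2$ dissipation coming from the entropy estimate. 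That absorption fixes the coefficient $\tfrac{a_1^2}{2a_0^2}$ in the blend; $m\geqslant n/2$ enters to make the exponent $2(2m-n)$ of the interpolated $L^p$ norm nonnegative, not at the point you indicate in Step~2. Without the separate $-h_{xx}$ test and the interpolation step, your argument would only close in the subcritical regime $m<n+2$ and would not prove the lemma as stated.
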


The existence of $\delta_0$, $\varepsilon_0$, $T_{loc}$, $K_1$,
$K_2$, and $C_0$ is constructive; how to find them and what
quantities determine them is shown in Section
\ref{A_priori_proofs}.

Lemma \ref{MainAE} yields uniform-in-$\delta$-and-$\varepsilon$
bounds for $\int h_{\delta \varepsilon,x}^2$, $\int G_{\delta
\varepsilon}(h_{\delta \varepsilon})$,
$\iint f_{\delta \varepsilon}(h_{\delta \varepsilon}) h_{\delta
\varepsilon,xxx}^2$, and
% \newline
$\iint h_{\delta
\varepsilon,xx}^2$ which will be key in constructing a nonnegative
generalized weak solution.  However, these bounds are found in a
different manner than in earlier work for the equation $h_t = -
(|h|^n h_{xxx})_x$, for example.  Although the inequality
(\ref{BF_entropy}) is unchanged, the inequality (\ref{D:a13''})
has an extra term involving $G_{\delta \varepsilon}$.  In the
proof, this term was introduced to control additional,
lower--order terms. This idea of a ``blended'' $\| h_x
\|_2$--entropy bound was first introduced by Shishkov and Taranets
especially for long-wave stable thin film equations with
convection \cite{T4}.

% {\bf In Lemmas \ref{lemLL}, \ref{MainAE2}, and \ref{preC:Th1}
% we have taken $\delta = 0$ but we talk about the initial data $h_{0,\eps}$
% satisfying (\ref{D:inreg}) which is defined for $\delta > 0$.  Precisely what do
% we mean for $h_{0,\eps}$ to satisfy (\ref{D:inreg}) in those lemmas?}

\begin{lemma}\label{lemLL}
Assume $\varepsilon_0$ and $T_{loc}$ are from Lemma \ref{MainAE},
$\delta = 0$, and $\varepsilon \in (0,\varepsilon_0)$.  If
$h_{\varepsilon}$ is a positive, classical solution of the problem
(\ref{D:1r'})--(\ref{D:inreg}) with initial data
$h_{0,\varepsilon}$ satisfying Lemma \ref{MainAE} then
\begin{equation}\label{D:a13-new}
\int\limits_{\Omega} {h_{\varepsilon,x}^2(x,T) \,dx} \leqslant
e^{B_{1,\varepsilon}(T)} \int\limits_\Omega
{h_{0\varepsilon,x}^2\,dx},
\end{equation}
\begin{equation}\label{D:a13-new2}
\int\limits_{\Omega} {\{ h_{\varepsilon,x}^2(x,T) +
G_{\varepsilon}(h_{\varepsilon}(x,T)) \}\,dx} \leqslant
e^{B_{2,\varepsilon}(T)} \int\limits_\Omega {\{
h_{0\varepsilon,x}^2 + G_{\varepsilon} (h_{0,\varepsilon}) \}\,dx}
\end{equation}
hold true for all $T \in [0,T_{loc}]$. Here
$$
B_{1,\varepsilon}(T) := \tfrac{a_1^2}{a_0} \int\limits_0^T \|
h_\varepsilon(\cdot,t) \|_\infty^{2m-n} \; dt,
$$
$$
B_{2,\varepsilon}(T) := \tfrac{a_1^4}{2a_0^3(2m -n + 1)^2}
\int\limits_0^T \| h_\varepsilon(\cdot,t) \|_\infty^{4m-n}\; dt +
\tfrac{a_1^2}{2a_0(m -n +1)^2} \int\limits_0^T \|
h_\varepsilon(\cdot,t) \|_\infty^{2m-n} \; dt.
$$
\end{lemma}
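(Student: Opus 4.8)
\medskip
\noindent\textbf{Proof proposal.}
The plan is to obtain both bounds from differential energy and entropy identities for the strictly positive classical solution $h_\varepsilon$ of (\ref{D:1r'})--(\ref{D:inreg}) with $\delta=0$, followed by Gronwall's inequality. Positivity of $h_\varepsilon$ and the periodic boundary conditions (\ref{D:2r'}) make every integration by parts below legitimate, and we use throughout the pointwise bounds $0\leqslant f_\varepsilon(z)\leqslant|z|^{n}$ and $0\leqslant D''_\varepsilon(z)\leqslant|z|^{m-n}$, which are immediate from (\ref{D:reg1}). For (\ref{D:a13-new}), multiply (\ref{D:1r'}) (with $\delta=0$, so the mobility is $f_\varepsilon$) by $-h_{\varepsilon,xx}$, integrate over $\Omega$, and integrate by parts to get
\[
\tfrac12\tfrac{d}{dt}\int\limits_{\Omega}h_{\varepsilon,x}^{2}\,dx+a_0\int\limits_{\Omega}f_\varepsilon(h_\varepsilon)h_{\varepsilon,xxx}^{2}\,dx=-a_1\int\limits_{\Omega}f_\varepsilon(h_\varepsilon)D''_\varepsilon(h_\varepsilon)h_{\varepsilon,x}h_{\varepsilon,xxx}\,dx .
\]
Splitting the right-hand integrand as $\bigl(\sqrt{f_\varepsilon(h_\varepsilon)}\,h_{\varepsilon,xxx}\bigr)\bigl(\sqrt{f_\varepsilon(h_\varepsilon)}\,D''_\varepsilon(h_\varepsilon)h_{\varepsilon,x}\bigr)$ and using Young's inequality to absorb half the dissipation integral leaves $\tfrac{d}{dt}\int_\Omega h_{\varepsilon,x}^{2}\,dx\leqslant\tfrac{a_1^2}{a_0}\int_\Omega f_\varepsilon(h_\varepsilon)\bigl(D''_\varepsilon(h_\varepsilon)\bigr)^{2}h_{\varepsilon,x}^{2}\,dx\leqslant\tfrac{a_1^2}{a_0}\|h_\varepsilon(\cdot,t)\|_{L^\infty(\Omega)}^{2m-n}\int_\Omega h_{\varepsilon,x}^{2}\,dx$, since $f_\varepsilon(z)\bigl(D''_\varepsilon(z)\bigr)^{2}\leqslant|z|^{2m-n}$ and $2m-n\geqslant0$ (because $m\geqslant n/2$). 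Gronwall's inequality on $[0,T]$ then yields (\ref{D:a13-new}) with the stated $B_{1,\varepsilon}$.

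For (\ref{D:a13-new2}) one adds the entropy identity obtained by multiplying (\ref{D:1r'}) by $G'_\varepsilon(h_\varepsilon)$, where $G_\varepsilon$ is the function of (\ref{D:reg2}) with $\delta=0$, so $G''_\varepsilon=1/f_\varepsilon$ and $G_\varepsilon\geqslant0$: since $G''_\varepsilon(h_\varepsilon)f_\varepsilon(h_\varepsilon)=1$, the degenerate factor cancels after integration by parts and
\[
\tfrac{d}{dt}\int\limits_{\Omega}G_\varepsilon(h_\varepsilon)\,dx+a_0\int\limits_{\Omega}h_{\varepsilon,xx}^{2}\,dx=a_1\int\limits_{\Omega}D''_\varepsilon(h_\varepsilon)h_{\varepsilon,x}^{2}\,dx .
\]
Adding this to the first identity (kept as an equality), the left side becomes $\tfrac{d}{dt}\int_\Omega\bigl(\tfrac12 h_{\varepsilon,x}^{2}+G_\varepsilon(h_\varepsilon)\bigr)\,dx$ together with the two dissipation integrals $a_0\int f_\varepsilon(h_\varepsilon)h_{\varepsilon,xxx}^{2}$ and $a_0\int h_{\varepsilon,xx}^{2}$, and the right side is $-a_1\int f_\varepsilon(h_\varepsilon)D''_\varepsilon(h_\varepsilon)h_{\varepsilon,x}h_{\varepsilon,xxx}+a_1\int D''_\varepsilon(h_\varepsilon)h_{\varepsilon,x}^{2}$. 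I would bound the first cross term by Young against $a_0\int f_\varepsilon(h_\varepsilon)h_{\varepsilon,xxx}^2$, producing $\tfrac{a_1^2}{2a_0}\int f_\varepsilon(h_\varepsilon)\bigl(D''_\varepsilon(h_\varepsilon)\bigr)^2h_{\varepsilon,x}^2$, and then rewrite $\tfrac{a_1^2}{2a_0}f_\varepsilon(h_\varepsilon)\bigl(D''_\varepsilon(h_\varepsilon)\bigr)^2h_{\varepsilon,x}=\partial_x\bigl(\tfrac{a_1^2}{2a_0}S_\varepsilon(h_\varepsilon)\bigr)$ with $S_\varepsilon(z):=\int_0^z f_\varepsilon(s)\bigl(D''_\varepsilon(s)\bigr)^2\,ds$, and $a_1D''_\varepsilon(h_\varepsilon)h_{\varepsilon,x}=\partial_x\bigl(a_1D'_\varepsilon(h_\varepsilon)\bigr)$ with $D'_\varepsilon(z):=\int_0^z D''_\varepsilon(s)\,ds$; then integrate by parts to turn both remaining terms into multiples of $\int(\cdot)\,h_{\varepsilon,xx}\,dx$, and absorb those against $a_0\int h_{\varepsilon,xx}^2$ by Young's inequality. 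Using $0\leqslant D'_\varepsilon(z)\leqslant|z|^{m-n+1}/|m-n+1|$ (which requires $m-n+1\neq0$) and $0\leqslant S_\varepsilon(z)\leqslant|z|^{2m-n+1}/(2m-n+1)$ (note $2m-n+1>0$), together with elementary comparisons between the relevant powers of $z$ and $G_\varepsilon(z)$ built into (\ref{C:G_0-def}), one arrives at a Gronwall inequality $\tfrac{d}{dt}\int_\Omega(h_{\varepsilon,x}^{2}+G_\varepsilon(h_\varepsilon))\,dx\leqslant g_\varepsilon(t)\int_\Omega(h_{\varepsilon,x}^{2}+G_\varepsilon(h_\varepsilon))\,dx$ with $g_\varepsilon(t)=\tfrac{a_1^4}{2a_0^3(2m-n+1)^2}\|h_\varepsilon(\cdot,t)\|_\infty^{4m-n}+\tfrac{a_1^2}{2a_0(m-n+1)^2}\|h_\varepsilon(\cdot,t)\|_\infty^{2m-n}$ (again $4m-n\geqslant0$ since $m\geqslant n/2$), whose time integral is exactly $B_{2,\varepsilon}$; integrating gives (\ref{D:a13-new2}).

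The routine parts are the integrations by parts and the uses of Young's inequality; the delicate point is the bookkeeping in the second estimate, where the two cross terms must be split so that the two available dissipation integrals are just enough to absorb them and the $L^\infty$-norms come out with precisely the exponents $2m-n$, $4m-n$ and the constants displayed in $B_{2,\varepsilon}$. This is where the hypotheses $m\geqslant n/2$ and $m-n+1\neq0$ (compare the special cases in (\ref{D_is})) are used. One should also check that $G_\varepsilon(h_\varepsilon(\cdot,t))\in L^1(\Omega)$ for $t\in[0,T_{loc}]$ so that the entropy identity is legitimate; this holds because $h_\varepsilon$ is strictly positive and $G_\varepsilon\geqslant0$. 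Up to this bookkeeping, the computation is the same as the one underlying Lemma \ref{MainAE}: the only difference is that here, with $\delta=0$ and $h_\varepsilon>0$, the factors $\|h_\varepsilon(\cdot,t)\|_\infty$ are kept explicit and the argument is closed by Gronwall's inequality rather than by shrinking $T_{loc}$.
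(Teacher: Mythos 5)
Your proposal follows the paper's route almost exactly: test (\ref{D:1r'}) against $-h_{\varepsilon,xx}$ for (\ref{D:a13-new}) and against $G'_{\varepsilon}(h_{\varepsilon})$ for the entropy part of (\ref{D:a13-new2}), absorb the cross terms via Young's inequality into the available dissipation integrals, rewrite the remaining lower-order terms as $-\int(\text{primitive})\,h_{\varepsilon,xx}\,dx$, pull out the $\|h_\varepsilon\|_\infty$ factors, and close with Gronwall. The only cosmetic difference is the order of operations in the second estimate: you integrate by parts first via the $\varepsilon$-dependent primitives $S_\varepsilon$ and $D'_\varepsilon$ and only then invoke the pointwise bounds, whereas the paper (in passing from (\ref{D:a5}), (\ref{D:aa1}) to (\ref{D:a5-nnn}), (\ref{D:aa1-0})) first replaces $f_\varepsilon$ and $D''_\varepsilon$ by the pure power laws $|z|^{n}$, $|z|^{m-n}$ and then applies (\ref{D:simp}); the latter is marginally cleaner since it makes the step $\int h^{a}h_{x}^{2}=-\tfrac{1}{a+1}\int h^{a+1}h_{xx}$ manifestly valid for the strictly positive $h_\varepsilon$ without any worry about integrability of the primitive near $z=0$, but both routes produce the same powers of $\|h_\varepsilon\|_\infty$ and the same constants in $B_{1,\varepsilon}$, $B_{2,\varepsilon}$.
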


The final a priori bound uses the following functions, parametrized by
$\alpha$, chosen such that $G_\eps^{(\alpha)} \geqslant 0$ and
$G_\eps^{(\alpha)\prime \prime}(z) = z^\alpha/f_\eps(z)$:
\begin{equation}\label{E:reg4}
G_{\varepsilon}^{(\alpha)} (z) =
\begin{cases}
\tfrac{z^{2-n+\alpha}}{(2-n+\alpha)(1-n+\alpha)}
+ \eps \tfrac{z^{\alpha-2}}{(\alpha-3)(\alpha-2)}
& \mbox{if } \alpha \in (-\infty,n-2) \cup (n-1,\infty) \\
\tfrac{z^{2-n+\alpha}}{(2-n+\alpha)(1-n+\alpha)}
+ \eps \tfrac{z^{\alpha-2}}{(\alpha-3)(\alpha-2)}
+ z + c & \mbox{if }  \alpha \in (n-2,n-1) \\
z \ln z - z +1
+ \eps \tfrac{z^{\alpha-2}}{(\alpha-3)(\alpha-2)}
& \mbox{if }   \alpha = n-1 \\
- \ln z  + z/e
+ \eps \tfrac{z^{\alpha-2}}{(\alpha-3)(\alpha-2)}
& \mbox{if }  \alpha = n-2,
\end{cases}
%
%G_{\varepsilon}^{(\alpha)} (z): = \left\{ \begin{gathered}
%\tfrac{z^{\alpha -n+ 2 }}{(\alpha - n + 2)(\alpha - n + 1)} +
%\tfrac{\varepsilon z^{\alpha -s+ 2}}{(\alpha - s + 2)(\alpha - s
%+1)} \text{
%if } \alpha - n \neq \{ - 1, - 2 \}, \\
%z \ln z - z + \tfrac{\varepsilon z^{\alpha -s+ 2}}{(\alpha - s +
%2)(\alpha - s +1)} \text{ if }
%\alpha - n = - 1 , \\
%- \ln z + \tfrac{\varepsilon z^{\alpha -s+ 2}}{(\alpha - s +
%2)(\alpha - s +1)} \text{ if } \alpha - n = - 2.
%\end{gathered} \right.; \
%
%G^{(\alpha)\prime\prime}_{\varepsilon} (z) =
%\tfrac{z^{\alpha}}{f_{\varepsilon} (z)}.
\end{equation}

\begin{lemma}\label{MainAE2}
Assume $\varepsilon_0$ and $T_{loc}$ are from Lemma \ref{MainAE},
$\delta = 0$, and $\varepsilon \in (0,\varepsilon_0)$.  Assume
$h_\varepsilon$ is a positive, classical solution of the problem
(\ref{D:1r'})--(\ref{D:inreg}) with initial data
$h_{0,\varepsilon}$ that
is built from a nonnegative function $h_0$ satisfying
the hypotheses of Theorem \ref{C:Th2}
then there exists $K_3$, $\varepsilon_0^{(\alpha)}$ and
$T_{loc}^{(\alpha)}$ with $0 < \varepsilon_0^{(\alpha)} \leqslant
\varepsilon_0$ and $0 < T_{loc}^{(\alpha)} \leqslant T_{loc}$ such
that
%\begin{align}
%\label{E:b12''} & \int\limits_{\Omega} {
%\{h_{\varepsilon,x}^2(x,T) +
%G_{\varepsilon}^{(\alpha)}(h_\varepsilon(x,T))\}
%\,dx} \\
%& \hspace{1in} \notag + \iint\limits_{Q_T} \left[ \beta
%h_\varepsilon^\alpha h_{\varepsilon,xx}^2 + \gamma
%h_\varepsilon^{\alpha-2} h_{\varepsilon,x}^4 \right]\;dx\,dt
%\leqslant K_3 < \infty
%\end{align}
\begin{equation}
\label{E:b12''} \int\limits_{\Omega} {
\{h_{\varepsilon,x}^2(x,T) +
G_{\varepsilon}^{(\alpha)}(h_\varepsilon(x,T))\}
\,dx} + \iint\limits_{Q_T} \left[ \beta
h_\varepsilon^\alpha h_{\varepsilon,xx}^2 + \gamma
h_\varepsilon^{\alpha-2} h_{\varepsilon,x}^4 \right]\;dx\,dt
\leqslant K_3 < \infty
\end{equation}
holds for all $T \in [0 , T_{loc}^{(\alpha)} ]$.
% and some constant
$K_3$ is determined by $\alpha$, $\varepsilon_0$, $a_0$,
$a_1$, $\Omega$ and $h_0$. Here,
$$
\beta =
\begin{cases}
a_0 & \mbox{if } 0 < \alpha < 1, \\
a_0 \tfrac{1+2\alpha}{4(1-\alpha)} & \mbox{if } -1/2 < \alpha < 0
\end{cases}
$$
and
$$
\gamma =
\begin{cases}
a_0 \tfrac{\alpha(1-\alpha)}{6}& \mbox{if } 0 < \alpha < 1, \\
a_0 \tfrac{(1+2\alpha)(1-\alpha)}{36} & \mbox{if } -1/2 < \alpha < 0.
\end{cases}
$$
Furthermore,
% for any $0 < \eps < \eps_0^{(\alpha)}$
\begin{equation}\label{E:b13}
h_\varepsilon^{\tfrac{\alpha+2}{2}} \in B_{R_1}(0) \subset
L^{2}(0, T_{loc};
H^2(\Omega)) \quad \mbox{and} \quad
h_\varepsilon^{\tfrac{\alpha+2}{4}}  \in B_{R_2}(0) \subset
\in L^{2}(0, T_{loc};
W^1_4(\Omega))
% \ -\tfrac{1}{2} < \alpha < 1.
\end{equation}
where the radii $R_1$ and $R_2$ are independent of $\eps$.
\end{lemma}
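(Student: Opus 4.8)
\emph{Proof strategy.} The plan is to differentiate in time the blended functional
$F_{\varepsilon}(t):=\int_{\Omega}\bigl\{h_{\varepsilon,x}^{2}(x,t)+G_{\varepsilon}^{(\alpha)}(h_{\varepsilon}(x,t))\bigr\}\,dx$
along the positive classical solution $h_{\varepsilon}$, and to close a (possibly superlinear) Gronwall inequality for it in which the two nonnegative quantities $\iint_{Q_T}h_{\varepsilon}^{\alpha}h_{\varepsilon,xx}^{2}\,dx\,dt$ and $\iint_{Q_T}h_{\varepsilon}^{\alpha-2}h_{\varepsilon,x}^{4}\,dx\,dt$ end up on the dissipative side. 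The evolution of $\int h_{\varepsilon,x}^{2}$ is treated exactly as in Lemma~\ref{MainAE}, contributing the energy dissipation $\iint f_{\varepsilon}(h_{\varepsilon})h_{\varepsilon,xxx}^{2}$ plus terms already controlled there. For the $\alpha$-entropy, using equation~(\ref{D:1r'}) with $\delta=0$, the periodic boundary conditions, the positivity of $h_{\varepsilon}$, and the defining identity $G_{\varepsilon}^{(\alpha)\prime\prime}(z)f_{\varepsilon}(z)=z^{\alpha}$, several integrations by parts (in particular $\int h_{\varepsilon}^{\alpha-1}h_{\varepsilon,x}^{2}h_{\varepsilon,xx}\,dx=-\tfrac{\alpha-1}{3}\int h_{\varepsilon}^{\alpha-2}h_{\varepsilon,x}^{4}\,dx$) give the identity
\[
\tfrac{d}{dt}\int_{\Omega}G_{\varepsilon}^{(\alpha)}(h_{\varepsilon})\,dx
+a_{0}\!\int_{\Omega}h_{\varepsilon}^{\alpha}h_{\varepsilon,xx}^{2}\,dx
-\tfrac{a_{0}\alpha(\alpha-1)}{3}\!\int_{\Omega}h_{\varepsilon}^{\alpha-2}h_{\varepsilon,x}^{4}\,dx
=a_{1}\!\int_{\Omega}h_{\varepsilon}^{\alpha}D_{\varepsilon}''(h_{\varepsilon})\,h_{\varepsilon,x}^{2}\,dx .
\]

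The decisive algebraic step is to bound the left-hand dissipation below by $\beta\int h_{\varepsilon}^{\alpha}h_{\varepsilon,xx}^{2}+\gamma\int h_{\varepsilon}^{\alpha-2}h_{\varepsilon,x}^{4}$, with $\beta,\gamma>0$ as in the statement, while still retaining an extra positive multiple of $\int h_{\varepsilon}^{\alpha-2}h_{\varepsilon,x}^{4}$ in reserve for the convective term. For $0<\alpha<1$ this is immediate, since $-\alpha(\alpha-1)=\alpha(1-\alpha)>0$ makes the quartic coefficient $\tfrac{a_{0}\alpha(1-\alpha)}{3}$ already positive and one merely splits it in two ($\beta=a_{0}$, $\gamma=\tfrac{a_{0}\alpha(1-\alpha)}{6}$). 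For $-\tfrac12<\alpha<0$ the coefficient $-\tfrac{a_{0}\alpha(\alpha-1)}{3}$ is negative, so a positive quartic term must be manufactured out of the curvature dissipation; the tool is the pointwise-in-$t$ inequality $\int_{\Omega}h_{\varepsilon}^{\alpha}h_{\varepsilon,xx}^{2}\,dx\geq\tfrac{(\alpha-1)^{2}}{9}\int_{\Omega}h_{\varepsilon}^{\alpha-2}h_{\varepsilon,x}^{4}\,dx$, obtained by expanding $0\le\int h_{\varepsilon}^{\alpha}\bigl(h_{\varepsilon,xx}+\tfrac{\alpha-1}{3}h_{\varepsilon}^{-1}h_{\varepsilon,x}^{2}\bigr)^{2}\,dx$ with the same integration by parts. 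Writing $a_{0}I_{1}=t\,a_{0}I_{1}+(1-t)a_{0}I_{1}$ (with $I_{1}=\int_\Omega h_{\varepsilon}^{\alpha}h_{\varepsilon,xx}^{2}$, $I_{3}=\int_\Omega h_{\varepsilon}^{\alpha-2}h_{\varepsilon,x}^{4}$) and estimating the second summand from below by $(1-t)a_{0}\tfrac{(\alpha-1)^{2}}{9}I_{3}$ leaves a residual quartic coefficient that at $t=0^{+}$ equals $\tfrac{a_{0}(1-\alpha)(2\alpha+1)}{9}$; this is positive \emph{precisely} when $\alpha>-\tfrac12$, which is where the hypothesis $\alpha\in(-\tfrac12,1)\setminus\{0\}$ is used, and the choice $t=\tfrac{1+2\alpha}{4(1-\alpha)}$ reproduces the $\beta,\gamma$ of the statement.

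For the convective term one uses $0\le D_{\varepsilon}''(z)\le z^{m-n}$, Cauchy--Schwarz in the form $a_{1}\int h_{\varepsilon}^{\alpha+m-n}h_{\varepsilon,x}^{2}\le a_{1}\bigl(\int h_{\varepsilon}^{\alpha-2}h_{\varepsilon,x}^{4}\bigr)^{1/2}\bigl(\int h_{\varepsilon}^{\alpha+2m-2n+2}\bigr)^{1/2}$, and Young's inequality to absorb the first factor into the reserved quartic dissipation. The remaining factor $\int h_{\varepsilon}^{\alpha+2m-2n+2}\,dx$ is estimated --- via Hölder when the exponent is below $1$, or via $\int h^{p}\le\|h\|_{\infty}^{p-1}\int h$ otherwise, together with the supplementary a priori bounds of Lemma~\ref{MainAE} --- by $\int h_{0}$ and a power of $\|h_{\varepsilon}(\cdot,t)\|_{\infty}$, and the Sobolev embedding plus the $H^{1}$ bound of Lemma~\ref{MainAE} give $\|h_{\varepsilon}(\cdot,t)\|_{\infty}\le C\bigl(\int h_{0}+\|h_{\varepsilon,x}(\cdot,t)\|_{2}\bigr)\le C\bigl(1+F_{\varepsilon}(t)^{1/2}\bigr)$. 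This produces $\tfrac{d}{dt}F_{\varepsilon}+\beta\!\iint h_{\varepsilon}^{\alpha}h_{\varepsilon,xx}^{2}+\gamma\!\iint h_{\varepsilon}^{\alpha-2}h_{\varepsilon,x}^{4}\le\Phi\!\left(F_{\varepsilon}(t)\right)+(\text{quantities already bounded by Lemma~\ref{MainAE}})$ with $\Phi$ a power function whose exponent exceeds $1$ in the parameter ranges of interest (e.g.\ $m\ge n+2$); a nonlinear Gronwall argument then yields finiteness only up to a time $T_{loc}^{(\alpha)}\le T_{loc}$ determined by $F_{\varepsilon}(0)=\int_{\Omega}\{h_{0,\varepsilon,x}^{2}+G_{\varepsilon}^{(\alpha)}(h_{0,\varepsilon})\}\,dx$. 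Since $h_{0,\varepsilon}\to h_{0}$ in $H^{1}(\Omega)$ and $h_{0,\varepsilon}\ge h_{0}+\varepsilon^{\theta}$ with $\theta<\tfrac25<\tfrac{1}{2-\alpha}$, the $\varepsilon$-part $\varepsilon\,h_{0,\varepsilon}^{\alpha-2}/[(\alpha-3)(\alpha-2)]$ of $G_{\varepsilon}^{(\alpha)}(h_{0,\varepsilon})$ is $\le C\varepsilon^{1+\theta(\alpha-2)}\to0$ and the finite-$\alpha$-entropy hypothesis on $h_{0}$ controls the rest, so $F_{\varepsilon}(0)$ is bounded uniformly for $\varepsilon<\varepsilon_{0}^{(\alpha)}\le\varepsilon_{0}$; this gives the $\varepsilon$-independent constant $K_{3}$ in~(\ref{E:b12''}).

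Finally~(\ref{E:b13}) is read off from~(\ref{E:b12''}): expanding $\bigl(h_{\varepsilon}^{(\alpha+2)/2}\bigr)_{xx}$ writes $\iint\bigl|\bigl(h_{\varepsilon}^{(\alpha+2)/2}\bigr)_{xx}\bigr|^{2}$ as $\tfrac{(\alpha+2)^{2}}{4}$ times a fixed linear combination of $\iint h_{\varepsilon}^{\alpha}h_{\varepsilon,xx}^{2}$ and $\iint h_{\varepsilon}^{\alpha-2}h_{\varepsilon,x}^{4}$, while $\iint\bigl|\bigl(h_{\varepsilon}^{(\alpha+2)/4}\bigr)_{x}\bigr|^{4}=\bigl(\tfrac{\alpha+2}{4}\bigr)^{4}\iint h_{\varepsilon}^{\alpha-2}h_{\varepsilon,x}^{4}$ directly; the remaining lower-order norms are controlled by $\|h_{\varepsilon}\|_{\infty}$, $\int h_{0}$ and a Gagliardo--Nirenberg inequality, all with radii independent of $\varepsilon$. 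I expect the main obstacle to be exactly the algebraic coercivity step for $\alpha<0$: extracting a genuinely positive $\beta I_{1}+\gamma I_{3}$ from $a_{0}I_{1}-\tfrac{a_{0}\alpha(\alpha-1)}{3}I_{3}$ requires the sharp constant in the Cauchy--Schwarz inequality above, and it is this sharpness that pins down both the threshold $\alpha>-1/2$ and the explicit values of $\beta$ and $\gamma$; a secondary difficulty is arranging the convective estimate so that its residual is no worse than a fixed power of $F_{\varepsilon}$.
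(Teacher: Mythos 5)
Your proposal is correct and follows essentially the same route as the paper: the same entropy identity from multiplying by $G^{(\alpha)\prime}_\varepsilon(h_\varepsilon)$ and integrating by parts, the same Cauchy--Schwarz inequality $\int h^\alpha h_{xx}^2\,dx \geq \tfrac{(1-\alpha)^2}{9}\int h^{\alpha-2}h_x^4\,dx$ (the paper's (\ref{E:b4})) to manufacture a positive quartic term when $\alpha<0$ --- which is exactly where the $\alpha>-\tfrac12$ threshold enters --- and a nonlinear Gr\"onwall closure for the blended functional $\int\{h_x^2+G^{(\alpha)}_\varepsilon(h)\}\,dx$. The only quibbles are bookkeeping: with your interpolation parameter $t=\tfrac{1+2\alpha}{4(1-\alpha)}$ the quartic coefficient comes out to $a_0\tfrac{(1-\alpha)(1+2\alpha)}{12}$, three times the stated $\gamma$, so to land on the paper's constant you must explicitly spend two-thirds of it on the convection absorption (you gesture at ``reserving'' a portion but do not carry the arithmetic through), and the paper actually absorbs the $\alpha<0$ convection term into $\int h^\alpha h_{xx}^2$ rather than into the quartic, via the integration-by-parts (\ref{calc_ident2}); neither choice affects the substance of the argument.
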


The $\alpha$--entropy, $\int G_0^{(\alpha)}(h) \; dx$, was first
introduced for $\alpha = - 1/2$ in \cite{BertozziBrenner} and an a
priori bound like that of Lemma \ref{MainAE2} and regularity
results like those of Theorem \ref{C:Th2} were found
simultaneously and independently in \cite{B2} and
\cite{BertPugh1996}.

\subsection{Proof of existence and regularity of solutions}

Bound (\ref{D:a13''}) yields uniform $L^\infty$ control for
classical solutions $h_{\delta \varepsilon}$, allowing the time of
existence $T_{loc,\delta \varepsilon}$ to be taken as $T_{loc}$
for all $\delta \in (0,\delta_0)$ and $\varepsilon \in
(0,\varepsilon_0)$. The existence theory starts by constructing a
classical solution $h_{\delta \varepsilon}$ on $[0,T_{loc}]$ that
satisfy the hypotheses of Lemma \ref{MainAE} if $\delta \in
(0,\delta_0)$ and $\varepsilon \in (0,\varepsilon_0)$.
The a priori bounds of Lemma \ref{MainAE} then allow
one to take the
regularizing parameter, $\delta$, to zero and prove
that there is a limit $h_\varepsilon$ and that $h_\varepsilon$ is
a generalized weak solution. One then proves additional regularity
for $h_\varepsilon$; specifically that it is strictly positive,
classical, and unique. It then follows that the a priori bounds
given by Lemmas \ref{MainAE}, \ref{lemLL}, and \ref{MainAE2} apply to
$h_\varepsilon$. This allows one to take the approximating
parameter, $\varepsilon$, to zero and construct the desired
generalized weak solution of Theorems \ref{C:Th1} and \ref{C:Th2}:

\begin{lemma}\label{preC:Th1}
Assume that the initial data $h_{0,\varepsilon}$ satisfies
(\ref{D:inreg}) and is built from a nonnegative function $h_0$
that satisfies the hypotheses of Theorem \ref{C:Th1}. Fix $\delta
= 0$ and $\varepsilon \in (0,\varepsilon_0)$ where $\varepsilon_0$
is from Lemma \ref{MainAE}. Then there exists a unique, positive,
classical solution $h_\varepsilon$ on $[0,T_{loc}]$ of problem
($\mbox{P}_{0, \varepsilon}$), see (\ref{D:1r'})--(\ref{D:inreg}),
with initial data $h_{0,\varepsilon}$ where $T_{loc}$ is the time
from Lemma \ref{MainAE}.
\end{lemma}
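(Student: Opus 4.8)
The plan is to pass to the limit $\delta \to 0$ in the regularized problem $(\textup{P}_{\delta,\varepsilon})$, following the scheme of Bernis and Friedman \cite{B8} (and its refinements in \cite{B2,BertPugh1996}). Fix $\varepsilon \in (0,\varepsilon_0)$. For each $\delta \in (0,\delta_0)$, E\u{i}delman's theory \cite{Ed} gives a unique classical solution $h_{\delta\varepsilon} \in C^{4+\gamma,1+\gamma/4}_{x,t}$ on a short interval $[0,\tau_{\delta\varepsilon}]$. To continue it up to the time $T_{loc}$ furnished by Lemma \ref{MainAE} I would first produce a $\delta$-independent $L^\infty$ bound: the divergence form of (\ref{D:1r'}) with periodic boundary conditions conserves $\int_\Omega h_{\delta\varepsilon}(x,t)\,dx = \int_\Omega h_{0,\varepsilon}\,dx$, while (\ref{D:a13''}) bounds $\|h_{\delta\varepsilon,x}(\cdot,t)\|_{L^2(\Omega)}^2$ by $K_1$; subtracting the conserved mean and using $H^1(\Omega)\hookrightarrow L^\infty(\Omega)$ in one dimension gives
$$
\|h_{\delta\varepsilon}(\cdot,t)\|_{L^\infty(\Omega)} \leqslant \tfrac{1}{2a}\int_\Omega h_{0,\varepsilon}\,dx + \sqrt{2a\,K_1} =: A,
$$
a bound independent of $\delta$ (and, since $h_{0,\varepsilon}\to h_0$, of $\varepsilon$). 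By the continuation criterion of E\u{i}delman recalled in Section \ref{RegularizedProblem}, $h_{\delta\varepsilon}$ then extends to a classical solution on $\Omega\times[0,T_{loc}]$ satisfying all the bounds of Lemma \ref{MainAE} there.

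Next I would extract a subsequential limit. The bounds (\ref{D:a13''}) and (\ref{BF_entropy}) control $\{h_{\delta\varepsilon}\}$ uniformly in $L^\infty(0,T_{loc};H^1(\Omega))\cap L^2(0,T_{loc};H^2(\Omega))$, and the weak formulation bounds $\{h_{\delta\varepsilon,t}\}$ uniformly in $L^2(0,T_{loc};(H^1(\Omega))')$; interpolating these yields equicontinuity of the family in $C^{1/2,1/8}_{x,t}(\overline{Q}_{T_{loc}})$. By Arzel\`a--Ascoli and weak compactness there is a sequence $\delta_k\downarrow 0$ with $h_{\delta_k\varepsilon}\to h_\varepsilon$ uniformly on $\overline{Q}_{T_{loc}}$, weakly-$*$ in $L^\infty(0,T_{loc};H^1)$ and weakly in $L^2(0,T_{loc};H^2)$, with $h_{\delta_k\varepsilon,t}\rightharpoonup h_{\varepsilon,t}$. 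Since $f_{\delta\varepsilon}=f_\varepsilon+\delta\to f_\varepsilon$ and $D''_\varepsilon$ are continuous and bounded on $[-A,A]$ and the convergence of $h_{\delta_k\varepsilon}$ is uniform, I can pass to the limit in the weak formulation of (\ref{D:1r'}), obtaining that $h_\varepsilon$ is a weak solution of $(\textup{P}_{0,\varepsilon})$ with initial data $h_{0,\varepsilon}$; it also inherits (\ref{D:a13''}) and (\ref{BF_entropy}) by weak lower semicontinuity.

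The main obstacle — and the heart of the argument — is showing that $h_\varepsilon$ is \emph{strictly positive}. With $\delta=0$ one has $G''_\varepsilon(z) = 1/f_\varepsilon(z) = |z|^{-n}+\varepsilon|z|^{-4}$, so $G_\varepsilon(z)\sim\tfrac{\varepsilon}{6}\,z^{-2}$ as $z\to 0^+$ and in particular $\int_0^\rho G_\varepsilon(c\sqrt{s})\,ds=+\infty$ for all $c,\rho>0$. Passing (\ref{BF_entropy}) to the limit via Fatou's lemma, applied to the lower-semicontinuous extension of $G_\varepsilon$ by $+\infty$ at the origin, gives $\int_\Omega G_\varepsilon(h_\varepsilon(x,t))\,dx\leqslant K_2$ for a.e.\ $t\in[0,T_{loc}]$. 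If $h_\varepsilon(x_0,t_0)=0$ at such a time, the $C^{1/2}_x$-regularity of $h_\varepsilon(\cdot,t_0)$ forces $h_\varepsilon(x,t_0)\leqslant c\,|x-x_0|^{1/2}$ near $x_0$, whence $\int_{|x-x_0|<\rho}G_\varepsilon(h_\varepsilon(x,t_0))\,dx\geqslant\int_{|x-x_0|<\rho}G_\varepsilon(c\,|x-x_0|^{1/2})\,dx=+\infty$, a contradiction. Using the continuity of $h_\varepsilon$ on $\overline{Q}_{T_{loc}}$ together with the strict positivity $h_{0,\varepsilon}\geqslant h_0+\varepsilon^\theta>0$ of the data then upgrades this to $h_\varepsilon>0$ on all of $\overline{Q}_{T_{loc}}$.

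Finally, once $h_\varepsilon$ is continuous and bounded below by a positive constant on the compact set $\overline{Q}_{T_{loc}}$, the equation $(\textup{P}_{0,\varepsilon})$ becomes uniformly parabolic along $h_\varepsilon$ with coefficients inheriting the (a priori H\"older) regularity of $h_\varepsilon$; Schauder interior estimates and the bootstrap argument recalled in Section \ref{RegularizedProblem} then give $h_\varepsilon\in C^{4+\gamma,1+\gamma/4}_{x,t}(\Omega\times[0,T_{loc}])$, i.e.\ a classical solution, which in particular makes Lemma \ref{MainAE} with $\delta=0$ directly applicable. Uniqueness follows from the uniform parabolicity: two classical solutions with the same data coincide by a standard energy estimate (or by the uniqueness clause in E\u{i}delman's theorem on short intervals, iterated across $[0,T_{loc}]$).
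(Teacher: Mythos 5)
Your overall plan is the same as the paper's: extend the classical solutions $h_{\delta\varepsilon}$ to $[0,T_{loc}]$ via the $L^\infty$ bound from Lemma~\ref{MainAE}, extract a limit $h_\varepsilon$ by Arzel\`a--Ascoli and weak compactness, use the entropy bound (\ref{BF_entropy}) together with $f_\varepsilon(z)\sim z^4/\varepsilon$ near $z=0$ to show strict positivity, then bootstrap to classical regularity and get uniqueness from uniform parabolicity. The paper is terser about positivity and simply cites \cite[Theorem~4.1]{B8}; you try to write out the argument, and that is where a genuine gap appears.

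The gap is in the upgrading step from ``$h_\varepsilon(x,t)>0$ for a.e.\ $t$'' to ``$h_\varepsilon>0$ on all of $\overline{Q}_{T_{loc}}$.'' You assert this follows from continuity of $h_\varepsilon$ together with positivity of $h_{0,\varepsilon}$, but it does not: a continuous nonnegative function that is positive at $t=0$ and strictly positive for a.e.\ $t$ can still vanish at an isolated time $t_0>0$, and neither hypothesis rules this out. The correct fix is one of two things. Either observe that the pointwise Fatou/monotone-convergence argument actually delivers $\int_\Omega G_\varepsilon(h_\varepsilon(x,T))\,dx\leqslant K_2$ for \emph{all} $T\in[0,T_{loc}]$ (the $\delta>0$ bound (\ref{BF_entropy}) holds for every $T$, and $h_{\delta_k\varepsilon}\to h_\varepsilon$ uniformly, so the limit passage can be done at each fixed $T$), in which case positivity for every $t$ follows directly from the $C^{1/2}_x$ argument you gave; or, if one only has the entropy bound for a.e.\ $T$, supplement the space-H\"older argument with the time-H\"older continuity $|h_\varepsilon(x_0,t)-h_\varepsilon(x_0,t_0)|\leqslant L|t-t_0|^{1/8}$: if $h_\varepsilon(x_0,t_0)=0$ then for a.e.\ $t$ near $t_0$, $K_2\geqslant\int_{|x-x_0|<\rho}G_\varepsilon\bigl(L|x-x_0|^{1/2}+L|t-t_0|^{1/8}\bigr)\,dx$, and letting $t\to t_0$ along the full-measure set yields divergence by monotone convergence. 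This is precisely the content of Bernis--Friedman's Theorem~4.1 (applied with degeneracy exponent $4$), which the paper invokes to avoid spelling this out.

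Two smaller loose ends, worth a sentence each if you were to write this up fully: (i) the bound on $h_{\delta\varepsilon,t}$ in $L^2(0,T_{loc};(H^1)')$ needs the $L^2(Q_T)$ bound on the flux $f_{\delta\varepsilon}(h)(a_0h_{xxx}+a_1D_\varepsilon''(h)h_x)$, which comes from the energy identity (\ref{D:d2}) combined with the $L^\infty$ control; you use it implicitly but never mention (\ref{D:d2}). (ii) Passing to the limit in the nonlinear flux term requires more than uniform convergence of $h_{\delta_k\varepsilon}$ and boundedness of the coefficients, since $h_{xxx}$ is only controlled in a weighted $L^2$ sense; the paper handles this with the decomposition into $\{h_\varepsilon<\sigma\}$ and its complement, and the $\delta$-term $\delta\iint h_{\delta\varepsilon,xxx}^2$ is shown to vanish separately. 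Neither is a wrong idea in your proposal, just steps you would need to expand.
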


The proof uses a number of arguments like those presented by
Bernis \& Friedman \cite{B8} and we refer to that article as much
as possible.

\begin{proof}%[Sketch of Proof]
Fix $\varepsilon \in (0,\varepsilon_0)$ and assume $\delta \in
(0,\delta_0)$. Because $G_{\delta \varepsilon}(z) \geqslant 0$,
the bound (\ref{D:a13''}) yields a
uniform-in-$\delta$-and-$\varepsilon$ upper bound on $|h_{\delta
\varepsilon}(x,T)|$ for $(x,T) \in \overline{\Omega} \times
[0,T_{loc}]$. As discussed in Subsection \ref{RegularizedProblem},
this allows the classical solution $h_{\delta \varepsilon}$ to be
extended from $[0,\tau_{\delta \varepsilon}]$ to $[0,T_{loc}]$.

By Section 2 of \cite{B8}, the a priori bound (\ref{D:a13''}) on
$\| h_x(\cdot,T_{loc}) \|_2$ implies that $h_{\delta \varepsilon} \in
C^{1/2,1/8}_{x,t}(\overline{Q_{T_{loc}}})$ and that $\{ h_{\delta
\varepsilon} \}$ is a uniformly bounded, equicontinuous family in
$\overline{Q_{T_{loc}}}$. By the Arzela-Ascoli theorem, there is a
subsequence $\{ \delta_k \}$, so that $h_{\delta_k \varepsilon}$
converges uniformly to a limit $h_\varepsilon \in
C^{1/2,1/8}_{x,t}(\overline{Q_{T_{loc}}})$.

We now argue that $h_\varepsilon$ is a generalized weak solution,
using methods similar to those of \cite[Theorem 3.1]{B8}.

By construction, $h_\varepsilon$ is in
$C^{1/2,1/8}_{x,t}(\overline{Q_{T_{loc}}})$, satisfying the first
part of (\ref{weak1}).  The strong convergence
$h_\varepsilon(\cdot , t) \to h_\varepsilon(\cdot,0)$ in
$L^2(\Omega)$ follows immediately. The uniform convergence of
$h_{\delta_k \varepsilon}$ to $h_\varepsilon$ implies the
pointwise convergence $h(\cdot,t) \to h(\cdot,0) = h_0$, and so
$h_\varepsilon$ satisfies (\ref{ID1}).

Because $h_{\delta \varepsilon}$ is a classical solution,
\begin{equation}\label{integral_form_classical}
\iint \limits_{Q_T} { h_{\delta \varepsilon,t} \phi \; dx dt}  -
\iint\limits_{Q_{T_{loc}}} { f_{\delta \varepsilon}(h_{\delta
\varepsilon}) ( a_0 h_{\delta \varepsilon, xxx} + a_1
D''_{\varepsilon}(h_{\delta \varepsilon})h_{\delta \varepsilon,
x})\phi_x \; dx dt }  = 0.
\end{equation}

The bound (\ref{D:a13''}) yields a uniform bound on
$$
\delta \iint\limits_{Q_{T_{loc}}} h_{\delta \varepsilon,xxx}^2 \;
dx dt
$$
for $\delta \in (0, \delta_0)$. It follows that
$$
\delta_k \iint\limits_{Q_{T_{loc}}}
 ( a_0 h_{\delta \varepsilon, xxx} + a_1 D''_{\varepsilon}(h_{\delta \varepsilon}) h_{\delta \varepsilon, x} )\phi_x\,dx dt   \to 0
 \qquad \mbox{as $\delta_k \to 0$.}
$$
Introducing the notation
\begin{equation} \label{flux_like}
H_{\delta \varepsilon} := f_{\delta \varepsilon}(h_{\delta
\varepsilon}) ( a_0 h_{\delta \varepsilon, xxx} + a_1
D''_{\varepsilon}(h_{\delta \varepsilon}) h_{\delta \varepsilon,
x})
\end{equation}
the integral formulation (\ref{integral_form_classical}) can be
written as
\begin{equation} \label{abstract_integral_form}
\iint \limits_{Q_T} { h_{\delta \varepsilon,t} \phi dx dt} =
\iint\limits_{Q_{T_{loc}}} H_{\delta \varepsilon}(x,t) \phi_x(x,t)
\; dx dt.
\end{equation}
By the $L^\infty$ control of $h_{\delta \varepsilon}$ and the
energy bound (\ref{D:d2}), $H_{\delta \varepsilon}$ is uniformly
bounded in $L^2(Q_{T_{loc}})$.  Taking a further subsequence of
$\{ \delta_k \}$ yields $H_{\delta_k \varepsilon}$ converging
weakly to a function $H_\varepsilon$ in $L^2(Q_{T_{loc}})$. The
regularity theory for uniformly parabolic equations implies that
$h_{\delta \varepsilon,t}$, $h_{\delta \varepsilon, x}$,
$h_{\delta \varepsilon, xx}$, $h_{\delta \varepsilon, xxx}$, and
$h_{\delta \varepsilon,xxxx}$ converge uniformly to
$h_{\varepsilon,t}$, \dots, $h_{\varepsilon, xxxx}$ on any compact
subset of $\{ h_\varepsilon > 0 \}$, implying (\ref{BC1}) and the
first part of (\ref{weak2}). Note that because the initial data
$h_{0,\delta \varepsilon}$ is in $C^4$ the regularity extends all the way
to $t=0$ which is excluded in the definition of $\mathcal{P}$ in
(\ref{weak2}).

The energy $\mathcal{E}_{\delta \varepsilon}(T_{loc})$ is not
necessarily positive. However, the a priori bound (\ref{D:a13''}),
combined with the $L^\infty$ control on $h_{\delta \varepsilon}$,
ensures that $\mathcal{E}_{\delta \varepsilon}(T_{loc})$ has a
uniform lower bound. As a result, the bound (\ref{D:d2}) yields a
uniform bound on
$$
\iint\limits_{Q_{T_{loc}}} f_{\delta \varepsilon} \left( h_{\delta
\varepsilon}) ( a_0 h_{\delta \varepsilon, xxx} + a_1
D''_{\varepsilon}(h_{\delta \varepsilon}) h_{\delta \varepsilon,
x} \right)^2 \; dx dt.
$$
Using this, one can argue that for any $\sigma >0$
$$
\iint\limits_{ \{h_\varepsilon < \sigma \} } \left|  H_{\delta_k
\varepsilon} \phi_x \right| \; dx dt \leqslant C \sigma^{n/2}
$$
for some $C$ independent of $\delta$, $\varepsilon$, and $\sigma$.
Taking $\delta_k \to 0$ and using that $\sigma$ is arbitrary, we
conclude
$$
H_{\delta_k \varepsilon} \to H_\varepsilon = f_{\varepsilon}
\left( h_{\varepsilon}) ( a_0 h_{\varepsilon, xxx} + a_1
D''_{\varepsilon}(h_{\delta \varepsilon})h_{\varepsilon, x}
\right) \,
 \chi_{ \{ h_\varepsilon > 0 \} }.
$$
As a result, taking $\delta_k \to 0$ in
(\ref{abstract_integral_form}) implies $h_\varepsilon$ satisfies
(\ref{integral_form}) and the second part of (\ref{weak2}).

The bound (\ref{D:a13''}) yields
%a uniform bound on $\iint
%f_{\delta \varepsilon}(h_{\delta \varepsilon}) h_{\delta
%\varepsilon,xxx}^2$ which can be used in a similar manner as above
%to argue that the second part of (\ref{weak2}) holds.  The bound
%(\ref{D:a13''}) also yields
a uniform bound on $\int h_{\delta
\varepsilon,x}^2(x,T) \; dx$ for every $T \in [0,T_{loc}]$.  As a
result, $\{ h_{\delta_k \varepsilon} \}$ is uniformly bounded in
$$
 L^\infty(0,T_{loc}; H^1(\Omega)).
$$
Therefore, another refinement of the sequence $\{ \delta_k \}$
yields $\{ h_{\delta_k \varepsilon} \}$ weakly convergent in this
space. As a result, $h_\varepsilon \in  L^\infty(0,T_{loc};
H^1(\Omega))$ and the second part of  (\ref{weak1}) holds.

Having proven then $h_\varepsilon$ is a generalized weak solution,
we now prove that $h_\varepsilon$ is a strictly positive,
classical, unique solution.  This uses the entropy $\int G_{\delta
\varepsilon}(h_{\delta \varepsilon})$ and the a priori bound
(\ref{BF_entropy}). This bound  is, up to the coefficient $a_0$,
identical to the a priori bound (4.17) in \cite{B8}. By
construction, the initial data $h_{0,\varepsilon}$ is positive
(see (\ref{D:inreg})), hence $\int
G_{\varepsilon}(h_{0,\varepsilon}) \; dx < \infty$. Also, by
construction $f_\varepsilon(z) \sim z^4$ for $z \ll 1$. This
implies that the generalized weak solution $h_\varepsilon$ is
strictly positive \cite[Theorem 4.1]{B8}.  Because the initial
data $h_{0,\varepsilon}$ is in $C^4(\bar{\Omega})$, it follows
that $h_\varepsilon$ is a classical solution in
$C^{4,1}_{x,t}(\overline{Q_{T_{loc}}})$. This implies that
$h_{\varepsilon}(\cdot,t) \to h_{\varepsilon}(\cdot,0)$
strongly\footnote{ Unlike the definition of weak solution given in
\cite{B8}, Definition \ref{B:defweak} does not include that the
solution converges to the initial data strongly in $H^1(\Omega)$.
} in $C^1(\bar{\Omega})$.  The proof of  Theorem 4.1 in \cite{B8}
then implies that $h_\varepsilon$ is unique.
\end{proof}
% {\bf Why is $h_{0,\eps} \in C^{4+\gamma}$?  I thought it was $\delta$
% that was doing the smoothing of the initial data\dots}

\begin{proof}[Proof of Theorem \ref{C:Th1}]
As in the proof of Lemma \ref{preC:Th1}, following \cite{B8},
there is a subsequence $\{ \varepsilon_k \}$ such that
$h_{\varepsilon_k}$ converges uniformly to a function $h \in
C^{1/2,1/8}_{x,t}$ which is a generalized weak solution in the
sense of Definition \ref{B:defweak} with $f(h) = |h|^n$.

The initial data is assumed to have finite entropy: $\int
G_0(h_0) < \infty$ where $G_0$ is given by (\ref{C:G_0-def}).
This, combined with $f(h) = |h|^n$, implies
that the generalized weak solution $h$ is nonnegative and, if
$n\in[2,4)$ the set
of points $\{ h = 0 \}$ in $Q_{T_{loc}}$ has zero measure and
$h$ is positive, smooth, and unique
if $n\geqslant 4$
\cite[Theorem 4.1]{B8}.

To prove (\ref{C:d2'}), start by taking $T=T_{loc}$ in the a
priori bound (\ref{D:d2}). As $\varepsilon_k \to 0$, the
right-hand side of (\ref{D:d2}) is unchanged. Now, consider the
$\varepsilon_k \to 0$ limit of
$$
\mathcal{E}_{\varepsilon_k}(T_{loc}) =
 \int\limits_{\Omega} {(\tfrac{a_0}{2}
h_{\varepsilon_k,x}^2(x,T_{loc}) - a_1
D_{\varepsilon_k}(h_{\varepsilon_k}(x,T_{loc})) ) \,dx},
$$
where $D_\eps$ is defined in (\ref{D:reg1}).
% {\bf Do we need that $D_\eps \geqslant 0$?}
By the uniform convergence of $h_{\varepsilon_k}$ to $h$, the
second term in the energy converges strongly as
$\varepsilon_k \to 0$. Hence the bound (\ref{D:d2}) yields a uniform
bound on $\{ \int_\Omega h_{\varepsilon_k,x}^2(x,T_{loc}) \; dx
\}$. Taking a further refinement of $\{ \varepsilon_k \}$, yields
$h_{\varepsilon_k,x}(\cdot,T_{loc})$ converging weakly in
$L^2(\Omega)$. In a Hilbert space, the norm of the weak limit is
less than or equal to the $\liminf$ of the norms of the functions
in the sequence, hence $ \mathcal{E}_0( T_{loc} ) \leqslant
\liminf_{\varepsilon_k \to 0}
\mathcal{E}_{\varepsilon_k}(T_{loc}). $ A uniform bound on $\iint
f_\varepsilon(h_\varepsilon) \left(a_0 h_{\varepsilon,xxx} + \dots
\right)^2 \; dx$ also follows from (\ref{D:d2}).  Hence
$\sqrt{f_{\varepsilon_k}(h_{\varepsilon_k})} \left(a_0
h_{\varepsilon_k,xxx} + \dots \right)$ converges weakly in
$L^2(Q_{T_{loc}})$, after taking a further subsequence.
We write the weak limit as two integrals: one over $\{ h = 0 \}$
and one over $\{ h > 0 \}$.
We can determine the weak limit on  $\{ h > 0 \}$:
as in the
proof of Lemma \ref{preC:Th1}, the regularity theory for uniformly
parabolic equations allows one to argue that the weak limit is $
h^{n/2} \left(a_0 h_{xxx} + \dots \right)$ on $\{ h > 0 \}$.
%It
%suffices to determine the weak limit up to a set of measure zero.
%Because $h \geqslant 0$ and $\{ h = 0 \}$ has measure zero, it
%suffices to determine the weak limit on $\{ h > 0 \}$.
%As in the
%proof of Lemma \ref{preC:Th1}, the regularity theory for uniformly
%parabolic equations allows one to argue that the weak limit is $
%h^{n/2} \left(a_0 h_{xxx} + \dots \right)$ on $\{ h > 0 \}$.
Using
that 1) the norm of the weak limit is less than or equal to the
$\liminf$ of the norms of the functions in the sequence and that
2) the $\liminf$ of a sum is greater than or equal to the sum of
the $\liminf$s and dropping the nonnegative term
arising from the integral over $\{ h = 0 \}$
results in the desired bound (\ref{C:d2'}).

It follows from (\ref{BF_entropy}) that $h_{\varepsilon_k,xx}$
converges weakly to some $v$ in $L^2(Q_{T_{loc}})$, combining with
strong convergence in $L^2(0,T; H^1(\Omega ))$ of
$h_{\varepsilon_k}$ to $h$ by Lemma \ref{A.1} and with the
definition of weak derivative, we obtain that $v = h_{xx}$ and $h
\in L^2(0,T_{loc};  H^2(\Omega))$ that implies (\ref{Linf_H2}).
Hence $h_{\varepsilon,t} \to h_t \text{ weakly in } L^{2}(0, T;
(H^1(\Omega))')$ that implies (\ref{weak-d}). By Lemma \ref{A.2}
we also have $h \in C([0,T_{loc}],L^2(\Omega))$.
\end{proof}

\begin{proof}[Proof of Theorem \ref{C:Th2}]
Fix $\alpha \in (-1/2,1)$. The initial data $h_0$ is assumed to
have finite entropy $\int G_0^{(\alpha)}(h_0(x)) \; dx < \infty$,
hence Lemma \ref{MainAE2} holds for the approximate solutions $\{
h_{\varepsilon_k} \}$ where this sequence of approximate solutions
is assumed to be the one at the end of the proof of Theorem
\ref{C:Th1}. By (\ref{E:b13}),
$$
\left\{ h_{\varepsilon_k}^{\tfrac{\alpha+2}{2}} \right\} \quad
\mbox{is uniformly bounded in $\varepsilon_k$ in $L^{2}(0,
T_{loc}; H^2(\Omega))$}
$$
and
$$
\left\{ h_{\varepsilon_k}^{\tfrac{\alpha+2}{4}} \right\} \quad
\mbox{is uniformly bounded in $\varepsilon_k$ in $L^{2}(0,
T_{loc}; W^1_4(\Omega))$}.
$$
Taking a further subsequence in $\{ \varepsilon_k \}$, it follows
from the proof of \cite [Lemma 2.5, p.330]{DGG} that these sequences
converge weakly in $L^{2}(0, T_{loc}; H^2(\Omega))$ and $L^{2}(0,
T_{loc}; W^1_4(\Omega))$, to $ h^{\tfrac{\alpha+2}{2}}$ and $
h^{\tfrac{\alpha+2}{4}}$ respectively.
\end{proof}

\section{The subcritical regime: long--time existence of solutions} \label{F}

% {\bf Put in some blah-blah about how long--time existence in the
% subcritical regime was proven for a smaller regime of $(n,m)$
% values in \cite{BP1}: specifically, for $0 < n/2 \leqslant m+2$\dots}

\begin{lemma}\label{Marinas_bound}
Let $h \in H^1(\Omega)$ be a nonnegative function and let
$M = \int
\limits_{\Omega} {h(x)\,dx}$. Then
\begin{equation}\label{D:nint}
\| h \|_{L^p(\Omega)}^p \leqslant k_1 M^{\tfrac{p+ 2}{3}}
\biggl(\int\limits_{\Omega} {h^2_{x} \,dx}\biggr)^{\tfrac{p -
1}{3}} + k_2 M ^{p}, \ \ p \geqslant 1,
\end{equation}
where $k_1 = 2^{\tfrac{4-p}{3}}3^{\tfrac{2(p-1)}{3}}
(1-\epsilon)^{-1}$, $k_2 = |\Omega|^{1 - p}(1
-(1-\epsilon)^{\tfrac{1}{p-1}})^{1 - p}$, and $\epsilon \in (0,1)$.
\end{lemma}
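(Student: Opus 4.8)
The plan is to reduce the $L^p$ bound to a combination of the elementary inequality $\|h\|_\infty^2 \leqslant 2\|h\|_{L^1}\|h_x\|_{L^2}$-type interpolation on an interval together with a splitting that isolates the "mean part'' of $h$ from its oscillation. First I would observe that, since $h \geqslant 0$ and $M = \int_\Omega h$, for any $x_0 \in \Omega$ one has $h(x) = \bar h + \int_{x_0}^x h_x(s)\,ds$ where $\bar h = M/|\Omega|$; choosing $x_0$ to be a point where $h(x_0) = \bar h$ (which exists by the intermediate value theorem, since $\bar h$ is the average) gives the pointwise estimate $h(x) \leqslant \bar h + \int_\Omega |h_x|\,ds \leqslant \bar h + |\Omega|^{1/2}\|h_x\|_{L^2}$. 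This already yields an $L^\infty$ bound, but the powers of $M$ would not match; the point of the stated exponents $M^{(p+2)/3}$ and $(\int h_x^2)^{(p-1)/3}$ is that one is meant to use a scaling-sharp Gagliardo--Nirenberg-type estimate rather than this crude one.

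So the better route: write $\|h\|_{L^p}^p = \int_\Omega h^{p-1} \cdot h \,dx \leqslant \|h\|_\infty^{p-1} M$, and then estimate $\|h\|_\infty$ by a sharp interpolation. For the oscillation $\tilde h := h - \bar h$ one has, by the fundamental theorem of calculus on $\Omega$ and the fact that $\tilde h$ has a zero, $\|\tilde h\|_\infty \leqslant \int_\Omega |h_x| \leqslant |\Omega|^{1/2}\|h_x\|_{L^2}$; combined with $\|h\|_\infty \leqslant \bar h + \|\tilde h\|_\infty$. But to get the exponent $1/3$ on $\int h_x^2$ one should instead use the dimensionally-correct bound $\|\tilde h\|_\infty^3 \lesssim \|\tilde h\|_{L^1}\,\|h_x\|_{L^2}^2$ — i.e. write $\tilde h(x)^2 - \tilde h(y)^2 = \int_y^x 2\tilde h h_x$, integrate in $y$ over a subinterval where $\tilde h$ is small, and bound to get $\|\tilde h\|_\infty^2 \lesssim \|\tilde h\|_{L^1}^{?}\cdots$; tracking the $\epsilon$-parameter (which controls how the subinterval on which $|\tilde h|$ is comparably small is chosen) produces exactly the constants $k_1, k_2$ in the statement. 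Then $\|\tilde h\|_{L^1} \leqslant \|h\|_{L^1} + \bar h|\Omega| = 2M$, giving $\|h\|_\infty^{p-1} \lesssim (M^{1/3}\|h_x\|_{L^2}^{2/3})^{p-1} + \bar h^{p-1}$, and multiplying by $M$ yields the two terms $M^{(p+2)/3}(\int h_x^2)^{(p-1)/3}$ and $M^p|\Omega|^{1-p}$ after using $\bar h = M/|\Omega|$. The factors $2^{(4-p)/3}3^{2(p-1)/3}$ come from the constant $6$ (or $2$ and $3$) appearing in the $\|\tilde h\|_\infty$ bound raised to the $(p-1)/3$ power, together with the $2M$ from $\|\tilde h\|_{L^1}$, and the $(1-\epsilon)^{-1}$ from splitting $h = \bar h + \tilde h$ and absorbing cross terms via Young's inequality with weight $\epsilon$.

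The main obstacle I expect is bookkeeping the exact constants: getting $(1-(1-\epsilon)^{1/(p-1)})^{1-p}$ in $k_2$ requires the split to be done not additively but multiplicatively — one writes $h = \bar h + \tilde h \leqslant \bar h/(1-\epsilon)^{1/(p-1)} \cdot \text{(something)}$ or, more precisely, one uses the inequality $(a+b)^{p-1} \leqslant (1-\epsilon)^{-1} a^{p-1} + c_\epsilon b^{p-1}$ with the sharp $c_\epsilon = (1-(1-\epsilon)^{1/(p-1)})^{1-p}$ (this is the standard "reverse Young'' split, valid for $p-1 \geqslant 1$, obtained by optimizing $\lambda \mapsto (\lambda a + (1-\lambda)b)^{p-1}$-type estimates, or directly from $(a+b)^r = ((1-\epsilon)^{1/r}\cdot\frac{a}{(1-\epsilon)^{1/r}} + (1-(1-\epsilon)^{1/r})\cdot\frac{b}{1-(1-\epsilon)^{1/r}})^r$ and convexity). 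Applying this with $a = $ (the Gagliardo--Nirenberg term), $b = \bar h$, and $r = p-1$, then multiplying by $M$ and simplifying powers of $|\Omega|$, gives precisely the claimed $k_1$ and $k_2$. Everything else is a routine one-dimensional interpolation argument.
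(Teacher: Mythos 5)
Your plan shares the two key ideas of the paper's proof (split $h$ into its mean $\bar h = M/|\Omega|$ and oscillation $v = h - \bar h$; use a one-dimensional Gagliardo--Nirenberg interpolation on $v$; then recombine with a ``reverse Young'' split whose weight is $\epsilon$), but you apply them at the wrong level and the constants will not come out as stated. The paper does not go through $\|h\|_\infty$ at all: it applies the interpolation inequality directly to $\|v\|_p^p$, namely $\|v\|_p^p \leqslant (3/2)^{2(p-1)/3}\,\|v_x\|_2^{\,2(p-1)/3}\,\|v\|_1^{(p+2)/3}$ (equation (\ref{Lady_ineq}) with $r=1$), and it applies the reverse inequality $|a-b|^p \geqslant (1-\epsilon)a^p - c_0(\epsilon,p)\,b^p$ \emph{pointwise} with $a = h(x)$, $b = M/|\Omega|$, so that integrating and rearranging produces $(1-\epsilon)\|h\|_p^p \leqslant \|v\|_p^p + c_0\,M^p/|\Omega|^{p-1}$. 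Since this split is taken at exponent $p$, the correct optimized constant is $c_0 = (1-\epsilon)\,\bigl(1-(1-\epsilon)^{1/(p-1)}\bigr)^{1-p}$, which after dividing by $(1-\epsilon)$ gives exactly the claimed $k_2$; and combining with $\|v\|_1 \leqslant 2M$ gives exactly the claimed $k_1$.

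Your route instead bounds $\|h\|_p^p \leqslant \|h\|_\infty^{p-1}M$, interpolates $\|v\|_\infty^3 \lesssim \|v\|_1\|v_x\|_2^2$, and then wants to split $(\bar h + \|v\|_\infty)^{p-1}$. Because that split is at exponent $p-1$, not $p$, the sharp coefficient on the $\bar h^{p-1}$ term is $\bigl(1-(1-\epsilon)^{1/(p-2)}\bigr)^{2-p}$, \emph{not} the $\bigl(1-(1-\epsilon)^{1/(p-1)}\bigr)^{1-p}$ you wrote (check: for $r\geq 1$, $(x+y)^r \leqslant (1-\lambda)^{1-r}x^r + \lambda^{1-r}y^r$, so fixing $(1-\lambda)^{r-1}=1-\epsilon$ forces $\lambda = 1-(1-\epsilon)^{1/(r-1)}$ with $r=p-1$). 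This is a genuine gap, not just bookkeeping: your route requires $p>2$ for the split at exponent $p-1$ to make sense, produces a different $k_2$, and also produces a different $k_1$ (for example, at $p=4$ the paper's $k_1 = 9/(1-\epsilon)$ versus your $8/(1-\epsilon)$). To recover the stated constants you must abandon the detour through $\|h\|_\infty$ and instead apply the reverse split at exponent $p$ pointwise on $|h(x)-M/|\Omega||^p$, then invoke the $L^p$--$L^1$--$\dot H^1$ interpolation for the zero-mean function $v$, as the paper does.
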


Note that by taking $h$ to be a constant function, one finds that
the constant $k_2 M^p$ in (\ref{D:nint}) is sharp when $\epsilon
\to 1$.

\begin{proof}%[Sketch of Proof]
Let $v = h - M/|\Omega|$.  By (\ref{Lady_ineq}),
$$
\| v \|_{L^p(\Omega)}^p \leqslant
(\tfrac{3}{2})^{\tfrac{2(p-1)}{3}} \biggl(\int\limits_{\Omega}
{v^2_{x} \,dx}\biggr)^{\tfrac{p-1}{3}} \biggl(\int\limits_{\Omega}
{|v| \,dx}\biggr)^{\tfrac{p+2}{3}}.
$$
Hence, due to the inequality
$$
|a - b|^p \geqslant (1 - \epsilon) a^p - c_0(\epsilon,p) b^p
$$
for any $a \geqslant 0,\ b \geqslant 0,\ p
> 1, \ \epsilon \in (0,1)$, $ c_0(\epsilon,p) \geqslant \tfrac{1 - \epsilon}{(1
-(1-\epsilon)^{\tfrac{1}{p-1}})^{p-1}}$,
\begin{multline*}
\| h \|_{L^p(\Omega)}^p \leqslant
\tfrac{1}{1-\epsilon}(\tfrac{3}{2})^{\tfrac{2(p-1)}{3}}
\biggl(\int\limits_{\Omega} {h^2_{x} \,dx}\biggr)^{\tfrac{p-1}{3}}
\biggl(\int\limits_{\Omega} {\left|h -  \tfrac{M}{|\Omega|}\right|
\,dx}\biggr)^{\tfrac{p+2}{3}} + \\
\tfrac{1}{(1 -(1-\epsilon)^{\tfrac{1}{p-1}})^{p-1}}
\tfrac{M^p}{|\Omega|^{p-1}}\leqslant
\tfrac{1}{1-\epsilon}(\tfrac{3}{2})^{\tfrac{2(p-1)}{3}}
\biggl(\int\limits_{\Omega} {h^2_{x} \,dx}\biggr)^{\tfrac{p-1}{3}}
(2M)^{\tfrac{p+2}{3}} + \\
\tfrac{1}{(1
-(1-\epsilon)^{\tfrac{1}{p-1}})^{p-1}}\tfrac{M^p}{|\Omega|^{p-1}}.
\end{multline*}

\end{proof}

% {\bf I'm still not happy with the above proof --- the reference for
% the lower bound on $|a-b|^p$ isn't in the English translation of
% Lady{\v{z}}enskaja.  Certainly we know
% $2^{1-p} |a|^p - |b|^p \leqslant |a-b|^p$ and could use that.  Do
% we really need the $\epsilon$ dependence above?  Do we use
% it anywhere?}

\begin{lemma}\label{F:ThSub}
Let $ 0 < n/2 \leqslant m < n + 2$. Let $h$ be the generalized
solution of Theorem~\ref{C:Th1}. Then
\begin{equation}\label{F:globest}
\tfrac{a_0}{4} \|  h(.,T_{loc}) \|^2_{H^1(\Omega)}  \leqslant
\mathcal{E}_0(0) + c_1 M^{\tfrac{m - n + 4}{2- m + n}} + c_2 M^{m -
n + 2} + c_3 M^2,
\end{equation}
where $\mathcal{E}_0(0)$ is defined in (\ref{Energy}), and $M =
\int h_0$. Moreover, if $m = n +2$ and $0 < M < M_c$ then
\begin{equation}\label{F:globest'}
\tfrac{a_0}{4} \|  h(.,T_{loc}) \|^2_{H^1(\Omega)} \leqslant
\tfrac{2}{3(1 - c_4 M^2) } \mathcal{E}_0(0) +  \tfrac{c_5}{1 - c_4
M^2 } M^{4}  + c_6 M^2.
\end{equation}
\end{lemma}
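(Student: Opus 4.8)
The plan is to combine the energy dissipation inequality (\ref{C:d2'}), conservation of mass, and the interpolation inequality of Lemma~\ref{Marinas_bound}. Taking the constant test function $\phi\equiv 1$ in (\ref{integral_form}) yields $\int\limits_{\Omega} h(x,T_{loc})\,dx=\int\limits_{\Omega} h_0\,dx=M$, and since $h\geqslant 0$ this equals $\|h(\cdot,T_{loc})\|_{L^1(\Omega)}$. Dropping the nonnegative dissipation term in (\ref{C:d2'}) gives $\mathcal{E}_0(T_{loc})\leqslant\mathcal{E}_0(0)$, and unpacking (\ref{Energy}) with $D_0(z)=z^{m-n+2}/((m-n+1)(m-n+2))$ this reads
$$
\tfrac{a_0}{2}\int\limits_{\Omega} h_x^2(x,T_{loc})\,dx\ \leqslant\ \mathcal{E}_0(0)+\tfrac{a_1}{(m-n+1)(m-n+2)}\,\|h(\cdot,T_{loc})\|_{L^p(\Omega)}^p,\qquad p:=m-n+2 .
$$
The hypothesis $m<n+2$ forces $p<4$, hence $(p-1)/3<1$, while $m\geqslant n/2$ keeps $p\geqslant 1$ in the regimes of interest, so Lemma~\ref{Marinas_bound} applies. (If the coefficient $a_1/((m-n+1)(m-n+2))$ happens to be nonpositive, the bound on $\int h_x^2$ is already immediate; assume it positive.)

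Next I would insert $\|h\|_{L^p}^p\leqslant k_1 M^{(p+2)/3}(\int\limits_{\Omega} h_x^2\,dx)^{(p-1)/3}+k_2 M^p$ and apply Young's inequality to the first summand with conjugate exponents $\tfrac{3}{p-1}$ and $\tfrac{3}{4-p}$: for every $\eta>0$,
$$
k_1 M^{(p+2)/3}\Big(\int\limits_{\Omega} h_x^2\,dx\Big)^{(p-1)/3}\ \leqslant\ \eta\int\limits_{\Omega} h_x^2\,dx+C(\eta)\,M^{\frac{p+2}{4-p}} ,
$$
where a direct computation gives $\tfrac{p+2}{4-p}=\tfrac{m-n+4}{2-m+n}$, precisely the exponent in (\ref{F:globest}). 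Choosing $\eta$ small enough that the resulting $\eta\int h_x^2$ term can be absorbed into the left-hand side yields $\tfrac{3a_0}{8}\int\limits_{\Omega} h_x^2(x,T_{loc})\,dx\leqslant\mathcal{E}_0(0)+c_1 M^{(m-n+4)/(2-m+n)}+c_2 M^{m-n+2}$. To pass from this gradient bound to the full $H^1$ norm I would invoke Lemma~\ref{Marinas_bound} once more with $p=2$ --- whose exponent $\tfrac13$ on $\int h_x^2$ is automatically $<1$ --- together with Young again, obtaining $\|h\|_{L^2(\Omega)}^2\leqslant\eta'\int\limits_{\Omega} h_x^2\,dx+c\,M^2$; adding this and relabelling constants produces (\ref{F:globest}).

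For the critical exponent $m=n+2$ one has $p=4$, $D_0(z)=z^4/12$, and Lemma~\ref{Marinas_bound} now reads $\|h\|_{L^4}^4\leqslant k_1 M^2\int\limits_{\Omega} h_x^2\,dx+k_2 M^4$, in which the gradient term enters \emph{linearly} --- there is no small parameter and Young is useless. Instead I would substitute directly into $\tfrac{a_0}{2}\int h_x^2\leqslant\mathcal{E}_0(0)+\tfrac{a_1}{12}\|h\|_{L^4}^4$ and rearrange to
$$
\tfrac{a_0}{2}(1-c_4 M^2)\int\limits_{\Omega} h_x^2(x,T_{loc})\,dx\ \leqslant\ \mathcal{E}_0(0)+\tfrac{a_1 k_2}{12}M^4,\qquad c_4:=\tfrac{a_1 k_1}{6a_0},
$$
then define $M_c$ by $c_4 M_c^2=1$, i.e. $M_c:=\sqrt{6a_0/(a_1 k_1)}$. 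For $0<M<M_c$ the coefficient on the left is strictly positive (and, applying Lemma~\ref{Marinas_bound} to $h_0$ itself, one sees $\mathcal{E}_0(0)\geqslant -\tfrac{a_1 k_2}{12}M^4$, so the right-hand side is nonnegative and the estimate is consistent). Dividing, then adding $\|h\|_{L^2}^2\leqslant\eta'\int h_x^2+c\,M^2$ as before and collecting constants, yields (\ref{F:globest'}).

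The Young absorptions and the exponent arithmetic are routine; the one genuinely delicate point is the critical regime, where the interpolation exponent is exactly borderline so that no $\eta$ is available, and one must track the precise constant $k_1$ of Lemma~\ref{Marinas_bound} to identify the critical mass $M_c$ below which the gradient term carries a strictly subunit coefficient. Recognising that the restriction $M<M_c$ is forced, and pinning down $c_4$, is where I expect the main work to lie; the rest follows the standard energy--interpolation--Young template.
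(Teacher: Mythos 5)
Your proof follows the same route as the paper's: drop the dissipation term in (\ref{C:d2'}) to get $\mathcal{E}_0(T_{loc})\leqslant\mathcal{E}_0(0)$, invoke mass conservation and Lemma~\ref{Marinas_bound} with $p=m-n+2$, absorb the gradient contribution via Young with conjugate exponents $\tfrac{3}{p-1},\tfrac{3}{4-p}$ (giving exactly the exponent $\tfrac{m-n+4}{2-m+n}$), add the $L^2$ piece via the $p=2$ interpolation, and in the critical case $m=n+2$ recognise that Young degenerates, so one solves the linear inequality directly and identifies $M_c=\bigl(\tfrac{6a_0}{a_1k_1}\bigr)^{1/2}$. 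This is the paper's argument step for step, up to routine bookkeeping of the numerical constants.
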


\begin{proof}[Proof of Lemma~\ref{F:ThSub}]
We present the proof for the $m-n \neq -1,-2$ case
in (\ref{D_is}), leaving the $m-n=-1,-2$ cases to the reader.
The first step is to find a priori bound (\ref{F:globest}) that is
the analogue of Proposition 2.2 in \cite{BP1}. From (\ref{C:d2'})
we deduce
\begin{equation}\label{F:ee1}
\tfrac{a_0}{2} \int\limits_{\Omega} {h_{x}^2 \,dx} \leqslant
\mathcal{E}_0(0) + \tfrac{a_1}{(m - n + 1)(m- n +
2)}\int\limits_{\Omega} {h^{m - n +2} \,dx}.
\end{equation}
Due to (\ref{D:nint}), we have
\begin{equation}\label{F:ee2}
\int\limits_{\Omega} {h^{m - n +2} \,dx} \leqslant k_1 M^{\tfrac{m
- n + 4}{3}} \Bigl( \int\limits_{\Omega} {h_{x}^2 \,dx}
\Bigr)^{\tfrac{m - n + 1}{3}} + k_2 M^{m - n + 2}.
\end{equation}
Thus, from (\ref{F:ee1}), in view of (\ref{F:ee2}), we find that
\begin{equation}\label{F:ee3}
\tfrac{a_0}{2}\int\limits_{\Omega} {h_{x}^2 \,dx} - \tfrac{a_1 k_1
M^{\tfrac{m - n + 4}{3}}}{(m - n + 1)(m- n + 2)}  \Bigl(
\int\limits_{\Omega} {h_{x}^2 \,dx} \Bigr)^{\tfrac{m - n + 1}{3}}
\leqslant \mathcal{E}_0(0) + \tfrac{a_1 k_2 M^{m - n + 2}}{(m - n
+ 1)(m- n + 2)}.
\end{equation}
Moreover, due to (\ref{D:nint}), in view of the Young inequality
(\ref{Young}), we have
\begin{equation}\label{F:ee2''}
\int\limits_{\Omega} {h^2 \,dx} \leqslant 6^{2/3} M^{\tfrac{4}{3}}
\Bigl( \int\limits_{\Omega} {h_{x}^2 \,dx} \Bigr)^{\tfrac{1}{3}} +
\tfrac{M^{2}}{|\Omega|} \leqslant \tfrac{1}{4}
\int\limits_{\Omega} {h_{x}^2 \,dx} + (\tfrac{8\sqrt{3}}{3} +
|\Omega|^{-1}) M^2.
\end{equation}
In the subcritical ($m<n+2$) case, $\tfrac{m - n + 1}{3} < 1$ and using
(\ref{F:ee3}) and (\ref{F:ee2''}) we deduce (\ref{F:globest}) with
\begin{equation}\label{F:ee4-c}
\begin{gathered}
c_1 = \bigl(\tfrac{a_1 k_1 }{(m - n + 1)(m- n +
2)}\bigr)^{\tfrac{3}{2- m +n}} \bigl(\tfrac{8(m -
n+1)}{3a_0}\bigr)^{\tfrac{m - n + 1}{2- m +n}}\tfrac{2 - m +
n}{3},\\
c_2 = \tfrac{a_1 k_2}{(m - n + 1)(m- n + 2)}, \ c_3 =
\tfrac{a_0}{2}(\tfrac{8\sqrt{3}}{3} + |\Omega|^{-1}).
\end{gathered}
\end{equation}
In the critical ($m=n+2$) case, $\tfrac{m - n + 1}{3} = 1$.
If $M < M_c= \bigl(\tfrac{6a_0}{a_1 k_1}\bigr)^{\tfrac{1}{2}}$ then using
(\ref{F:ee3}) we arrive at
\begin{equation}\label{F:ee4-2}
\int\limits_{\Omega} {h_{x}^2 \,dx} \leqslant \tfrac{12}{6a_0 -
a_1 k_1 M^2 } \bigl( \mathcal{E}_0(0) + \tfrac{a_1 k_2}{12}M^{4}
\bigr)
\end{equation}
%for $ 0 < M < M_c = \bigl(\tfrac{6a_0}{a_1 k_1}
%\bigr)^{\tfrac{1}{2}}$.
Using (\ref{F:ee2''}), from (\ref{F:ee4-2})
we obtain (\ref{F:globest'}) with
\begin{equation}\label{F:ee4-c'}
c_4 = \tfrac{a_1 k_1}{6a_0} , c_5 = \tfrac{a_1 k_2}{18}, \ c_6 =
\tfrac{a_0}{3}(\tfrac{8\sqrt{3}}{3} + |\Omega|^{-1}).
\end{equation}
\end{proof}

Under certain conditions, a bound closely related to
(\ref{F:globest}) implies that if the solution of Theorem
\ref{C:Th1} is initially constant then it will remain constant for
all time:
\begin{theorem}\label{constancy}
Assume $m = n$, the coefficient $a_1 \geqslant 0 $ in (\ref{B:1}), and
$| \Omega |^2 < a_0/|a_1|$. If the initial data is constant, $h_0
\equiv C > 0$, then the solution of Theorem \ref{C:Th1} satisfies
$h(x,t) = C$ for all $x \in \bar{\Omega}$ and all $t > 0$.
\end{theorem}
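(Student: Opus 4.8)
The plan is to run an energy argument: combine the dissipation inequality (\ref{C:d2'}) with conservation of mass and a crude Poincar\'e inequality, all of which are available to the solution furnished by Theorem~\ref{C:Th1}. First one specializes the structure functions to $m=n$. Since $m-n=0\notin\{-2,-1\}$, (\ref{D_is}) gives $D_0(z)=z^2/2$, so the energy (\ref{Energy}) becomes $\mathcal{E}_0(T)=\int_\Omega\{\tfrac{a_0}{2}h_x^2-\tfrac{a_1}{2}h^2\}\,dx$; for the constant initial datum this reads $\mathcal{E}_0(0)=-\tfrac{a_1}{2}C^2|\Omega|$. Next, testing the integral identity (\ref{integral_form}) with $\phi\equiv 1$ (legitimate because of the periodic boundary conditions, and making the flux integral vanish) shows that mass is conserved, $\int_\Omega h(\cdot,t)\,dx=\int_\Omega h_0\,dx=C|\Omega|$, so $v(\cdot,t):=h(\cdot,t)-C$ has zero mean for every $t$. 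Because $v(\cdot,t)\in H^1(\Omega)\hookrightarrow C(\overline\Omega)$ has zero mean it vanishes at some point; writing $v$ as the primitive of $v_x$ based at that point and applying Cauchy--Schwarz yields the elementary bound $\|v(\cdot,t)\|_{L^2(\Omega)}^2\le|\Omega|^2\|v_x(\cdot,t)\|_{L^2(\Omega)}^2=|\Omega|^2\|h_x(\cdot,t)\|_{L^2(\Omega)}^2$, whose constant $|\Omega|^2$ is precisely matched to the hypothesis $|\Omega|^2<a_0/|a_1|$.

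Then one discards the nonnegative flux term in (\ref{C:d2'}) to obtain $\mathcal{E}_0(T_{loc})\le\mathcal{E}_0(0)$. Expanding the left-hand side with $h=C+v$ (the linear term $\int_\Omega v$ drops out) gives, at $t=T_{loc}$, the inequality $\tfrac{a_0}{2}\|h_x\|_2^2-\tfrac{a_1}{2}\bigl(C^2|\Omega|+\|v\|_2^2\bigr)\le-\tfrac{a_1}{2}C^2|\Omega|$, i.e. $\tfrac{a_0}{2}\|h_x(\cdot,T_{loc})\|_2^2\le\tfrac{a_1}{2}\|v(\cdot,T_{loc})\|_2^2$. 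Inserting the Poincar\'e bound gives $\bigl(\tfrac{a_0}{2}-\tfrac{a_1|\Omega|^2}{2}\bigr)\|h_x(\cdot,T_{loc})\|_2^2\le0$; since the coefficient is strictly positive (the case $a_1=0$ being immediate, as then $\tfrac{a_0}{2}\|h_x\|_2^2\le 0$ directly), we get $h_x(\cdot,T_{loc})\equiv0$, and mass conservation then forces $h(\cdot,T_{loc})\equiv C$.

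It remains to promote constancy at the single time $T_{loc}$ to constancy for all $t>0$, and this is the only delicate point. By the remark following Theorem~\ref{C:Th1} one may choose the ``countable collection of times'' there to be $[0,T_{loc}]\cap\mathbb{Q}$ and construct the solution so that (\ref{C:d2'}) holds at each of these times; the argument of the previous paragraph then yields $h(\cdot,\tau)\equiv C$ for every rational $\tau\in[0,T_{loc}]$, whence $h(\cdot,t)\equiv C$ for all $t\in[0,T_{loc}]$ by the continuity $h\in C([0,T_{loc}];L^2(\Omega))$ established at the end of the proof of Theorem~\ref{C:Th1}. Finally one iterates: $h(\cdot,T_{loc})\equiv C>0$ again satisfies the hypotheses of Theorem~\ref{C:Th1}, and the existence time it produces equals $T_{loc}$ because that time is determined only by $a_0,a_1,|\Omega|,\int h_0,\|h_{0x}\|_2$ and $\int G_0(h_0)$, which are unchanged; repeating on $[T_{loc},2T_{loc}]$, and so on, gives $h(x,t)\equiv C$ for all $t>0$. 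The energy/Poincar\'e core is completely elementary once the structure functions are specialized and mass conservation is in hand; the part requiring care is the passage from constancy at a single (or discrete) time to constancy for all time, via the remark, the time-continuity of $h$, and the time-stepping iteration.
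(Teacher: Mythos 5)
Your proof is correct, and the core energy--Poincar\'e computation is exactly the one the paper uses; the difference is where you apply it. You work entirely at the level of the limiting weak solution: you deduce mass conservation by testing~(\ref{integral_form}) with $\phi\equiv 1$, you specialize $D_0$ and the energy for $m=n$, you drop the dissipation in~(\ref{C:d2'}) to get $\mathcal{E}_0(T_{loc})\le\mathcal{E}_0(0)$, and then the Poincar\'e step forces $h_x(\cdot,T_{loc})\equiv 0$. Because~(\ref{C:d2'}) is only asserted at the one time $T_{loc}$, you then have to invoke the remark after Theorem~\ref{C:Th1} (construct the solution so the dissipation inequality holds on a countable dense set of times), upgrade to all $t\in[0,T_{loc}]$ by the $C([0,T_{loc}];L^2(\Omega))$ continuity, and iterate via time-stepping. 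The paper instead steps back to the positive classical approximants $h_\varepsilon$: it chooses $h_{0,\varepsilon}\equiv C+\varepsilon^\theta$, uses the approximate energy \emph{identity}~(\ref{D:d2}), which holds for every $T\in[0,T_{loc}]$, plus the same Poincar\'e inequality, to conclude $h_{\varepsilon,x}(\cdot,T)\equiv 0$ and hence $h_\varepsilon\equiv C_\varepsilon$ pointwise in time on all of $Q_{T_{loc}}$, and then passes to the limit $\varepsilon_k\to 0$. What the paper's route buys is avoiding the dense-times/continuity gymnastics altogether, since the pointwise-in-time dissipation identity is available at the $\varepsilon$-level and constancy of $h_\varepsilon$ passes trivially through the uniform limit; what your route buys is that it never re-enters the approximation scheme and only uses properties stated about the limit solution (the remark, the bounds, and the time-continuity). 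Both arguments must ultimately appeal to some iteration in time to get ``for all $t>0$'' rather than just $t\le T_{loc}$; you make this explicit, while the paper leaves it implicit (one can read it off from Theorem~\ref{global}, applicable here since $m=n<n+2$). The only mild caveat is that your proof, like the paper's, establishes constancy of a suitably constructed solution rather than of every weak solution arising from the scheme -- but that is exactly the intended meaning of ``the solution of Theorem~\ref{C:Th1}''.
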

%The hypotheses of Theorem \ref{constancy} correspond to: the
%equation is long--wave unstable ($a_1 > 0$), and the domain is not
%``too large''.
For the long--wave unstable case ($a_1>0$) the hypotheses correspond
to the domain not being ``too large''.  We note that it is not yet known
whether or not solutions from Theorem \ref{C:Th1} are unique and so
Theorem \ref{constancy} does have content: it ensures that the
approximation method isn't producing unexpected (nonconstant) solutions
from constant initial data.

\begin{proof}%[Sketch of Proof]
Consider the approximate solution $h_\varepsilon$. The definition
of $\mathcal{E}_\varepsilon(T)$ combined with the uniform-in-time
bound (\ref{D:d2}) implies
\begin{equation} \label{gen_bound2}
\tfrac{a_0}{2} \int\limits_\Omega h_{\varepsilon,x}^2(x,T) \; dx
\leqslant \mathcal{E}_\varepsilon(0) + \tfrac{|a_1|}{2}
\int\limits_\Omega h_\varepsilon^2(x,T) \; dx.
\end{equation}
Letting
$M_\varepsilon = \int h_{0,\varepsilon}\,dx$ and
applying
Poincar\'e's inequality (\ref{Poincare}) to $v_\varepsilon =
h_\varepsilon - M_\varepsilon/|\Omega|$ and using $\int
h_\varepsilon^2 \, dx= \int v_\varepsilon^2 \, dx +
M_\varepsilon^2/|\Omega|$ yields
$$
\left( \tfrac{a_0}{2} - \tfrac{|a_1| \, |\Omega|^2}{2} \right)
\int\limits_\Omega h_{\varepsilon,x}^2(x,t) \; dx \leq
\mathcal{E}_\varepsilon(0) + \tfrac{|a_1| M_\varepsilon^2}{2
|\Omega|} .
$$
If $h_{0,\varepsilon} \equiv C_\varepsilon = C +
\varepsilon^\theta$ this becomes
$$
\left( \tfrac{a_0}{2} - \tfrac{|a_1| |\Omega|^2}{2} \right)
\int\limits_\Omega h_{\varepsilon,x}^2(x,T) \; dx \leqslant (|a_1|
- a_1) \tfrac{C_{\varepsilon}^2 |\Omega|}{2}.
$$
If $a_1 \geqslant 0$ and $|\Omega|^2 < a_0/|a_1|$ then $\int
h_{\varepsilon,x}^2(x,T) \; dx = 0$ for all $T \in
[0,T_{\varepsilon,loc}]$ and that this, combined with the
continuity in space and time of $h_\varepsilon$, implies that
$h_\varepsilon \equiv C_\varepsilon$ on $Q_{T_{\varepsilon,loc}}$.

Taking the sequence $\{ \varepsilon_k \}$ that yields convergence
to the solution $h$ of Theorem \ref{C:Th1}, $h \equiv C$ on
$Q_{T_{loc}}$.
\end{proof}

This $H^1$ control in time of the generalized solution
given by Lemma \ref{F:ThSub}
is now used
to extend the short--time existence result of Theorem \ref{C:Th1}
to a long--time existence result:
\begin{theorem} \label{global}
Let $ 0 < n/2 \leqslant m < n + 2$. Let $T_g$ be an arbitrary
positive finite number. The generalized weak solution $h$ of
Theorem \ref{C:Th1} can be continued in time from $[0,T_{loc}]$ to
$[0,T_g]$ in such a way that $h$ is also a generalized weak
solution and satisfies all the bounds of Theorem \ref{C:Th1} (with
$T_{loc}$ replaced by $T_g$).
\end{theorem}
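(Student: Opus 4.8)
The plan is a ``time-stepping'' continuation argument, in the spirit of the blow-up construction of \cite{BP2} but now exploiting the subcritical a priori bound of Lemma~\ref{F:ThSub}. Starting from the solution on $[0,T_{loc}]$ produced by Theorem~\ref{C:Th1}, I would re-apply the existence theorem at $t=T_{loc}$ with initial data $h(\cdot,T_{loc})$, then again at the resulting endpoint, and so on, producing a finite chain of short-time solutions whose concatenation covers $[0,T_g]$. The construction closes once one shows that the data governing the length of each existence interval --- $a_0$, $a_1$, $|\Omega|$, $M=\int_\Omega h_0$, $\|h_x(\cdot,t_k)\|_{L^2(\Omega)}$, and $\int_\Omega G_0(h(\cdot,t_k))\,dx$ --- stay bounded \emph{uniformly in the step index} $k$ by constants depending only on the original data and on the (fixed) target time $T_g$; then each step advances time by at least some $\tau>0$ that does not depend on $k$, and at most $\lceil T_g/\tau\rceil$ steps suffice.

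The uniform control proceeds in three stages. First, mass is conserved: taking $\phi\equiv 1$ in (\ref{integral_form}) (equivalently, passing to the limit from the divergence-form regularized problem with periodic boundary conditions) gives $\int_\Omega h(\cdot,t)\,dx = M$ for all $t$. Second, the energy is non-increasing: (\ref{C:d2'}), applied on each piece and hence across junctions, yields $\mathcal{E}_0(t)\leqslant\mathcal{E}_0(0)$ for every $t$; since the proof of Lemma~\ref{F:ThSub} uses only this energy bound, a Gagliardo--Nirenberg inequality and $M$, it in fact applies at every time, giving $\tfrac{a_0}{4}\|h(\cdot,t)\|_{H^1(\Omega)}^2\leqslant \mathcal{E}_0(0)+c_1M^{(m-n+4)/(2-m+n)}+c_2M^{m-n+2}+c_3M^2=:\tfrac{a_0}{4}R^2$, where the hypothesis $m<n+2$ is exactly what makes the power of $\|h_x\|_2$ appearing in (\ref{F:ee3}) subunitary and absorbable. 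Third, I would bound the entropy through its dissipation identity: for the solution $\tfrac{d}{dt}\int_\Omega G_0(h)\,dx = -a_0\int_\Omega h_{xx}^2\,dx + a_1\int_\Omega h^{m-n}h_x^2\,dx$, and the last term is estimated by writing $h^{m-n}h_x^2=\tfrac{4}{(m-n+2)^2}\big[(h^{(m-n+2)/2})_x\big]^2$ (note $m\geqslant n/2$ and $n<4$ force $m-n+2>0$) and using the bounds on $\|h(\cdot,t)\|_\infty$ (from $\|h\|_{H^1}\leqslant R$ and Sobolev embedding in one dimension) and on $\iint h_{xx}^2\,dx\,dt$; this gives $\int_\Omega G_0(h(\cdot,t))\,dx\leqslant\int_\Omega G_0(h_0)\,dx+C_\ast t\leqslant\int_\Omega G_0(h_0)\,dx+C_\ast T_g=:E_g$ for all $t\in[0,T_g]$, with $C_\ast$ depending only on $a_0,a_1,|\Omega|,M,R$.

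With $M$, $R$, $E_g$ fixed, Theorem~\ref{C:Th1} furnishes at each restart an existence time bounded below by some $\tau=\tau(a_0,a_1,|\Omega|,M,R,E_g)>0$, independent of $k$. One checks the restart data is admissible: $h(\cdot,t_k)$ is nonnegative, in $H^1(\Omega)$, and has finite entropy by (\ref{C:a2-new2}) (refined to $E_g$); by periodicity $h(-a,t_k)=h(a,t_k)$, so either the common value is $0$, which is case 1 of Theorem~\ref{C:Th1}, or it is nonzero, in which case $h$ is classical near $x=\pm a$ and the periodic matching of derivatives (\ref{BC1}) supplies case 2. Concatenating the pieces produces a function on $[0,T_g]$ which is a generalized weak solution: it is continuous in time at the junctions (the terminal slice of one piece equals the initial slice of the next), and the integral identity (\ref{integral_form}) glues because Definition~\ref{B:defweak} allows test functions that need not vanish at $t=0$ or $t=T$ --- the feature highlighted in Section~\ref{B}. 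Finally the bounds (\ref{C:d2'})--(\ref{C:a2-new2}), valid on each piece, then hold at a countable set of times in $[0,T_g]$, which by the remark after Theorem~\ref{C:Th1} is the correct reading of ``satisfies all the bounds of Theorem~\ref{C:Th1} with $T_{loc}$ replaced by $T_g$''.

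I expect the main obstacle to be the uniform-in-$k$ control of $\int_\Omega G_0(h(\cdot,t))\,dx$, particularly when $n\in[2,4)$, where $G_0(z)\to+\infty$ as $z\to0^+$: one must prevent the continued solution from developing a growing zero set that would drive the entropy --- and hence $\tau$ --- to degenerate. The dissipation identity above is what excludes this, but making it rigorous requires carrying the argument out on the regularized solutions $h_{\delta\varepsilon}$ and controlling the lower-order term $a_1\int h^{m-n}h_x^2$ there; this is precisely where the ``blended'' $\|h_x\|_2$--entropy estimate of Lemma~\ref{MainAE} and the restriction $m\geqslant n/2$ do their work. The remaining ingredients --- mass conservation, energy monotonicity, the Gagliardo--Nirenberg bound of Lemma~\ref{F:ThSub}, admissibility of the restart data, and the gluing --- are routine given the results already established.
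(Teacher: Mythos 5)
Your proposal is correct and takes essentially the same approach as the paper: time-stepping Theorem \ref{C:Th1}, controlled by mass conservation, energy monotonicity (\ref{C:d2'}), the uniform $H^1$ bound of Lemma \ref{F:ThSub} (which is where $m<n+2$ enters), and a linear-in-time entropy bound (\ref{D:4a}) established at the level of the regularized solutions via (\ref{D:aa2}) and then passed to the limit. The only presentational difference is that the paper frames the argument by contradiction --- assuming $\sum T_{n,loc}=T^*<\infty$, so $T_{n,loc}\to 0$, then showing via (\ref{F:globest}) and the iterated entropy bound that $\int G_0(h(\cdot,T_n))$ cannot diverge --- whereas you argue directly that the restart data stay uniformly bounded on $[0,T_g]$ and hence the step length is bounded below; the two formulations are equivalent.
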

Similarly, the short--time existence of strong solutions (see
Theorem \ref{C:Th2}) can be extended to long--time existence.

\begin{proof}%[Sketch of Proof]
To construct a weak solution up to time $T_g$, one applies the local
existence theory iteratively, taking the solution at the final time of
the current time interval as initial data for the next time interval.

Introduce the times
\begin{equation} \label{D:time_ints}
0 = T_0 < T_1 < T_2 < \dots < T_N < \dots \quad \mbox{where} \quad
T_N := \sum_{n=0}^{N-1} T_{n,loc}
\end{equation}
and $T_{n,loc}$ is the interval of existence (\ref{Tloc_is}) for a
solution with initial data $h(\cdot,T_n)$:
\begin{equation} \label{D:Tloc_is}
T_{n,loc} := \tfrac{9}{20 c_{11}(\gamma_1 - 1)} \min\left\{ 1,
\left( \int\limits_\Omega h_x^2(x,T_n) + \tfrac{2c_2}{a_0}
G_0(h(x,T_n)) \; dx\right)^{-(\gamma_1 - 1)} \right\}
\end{equation}
where $\gamma_1 = \max\{3,2m-n\}$ and $c_2$ and $c_{11}$ are given
in the proof of Lemma \ref{MainAE}.

The proof proceeds by contradiction.  Assume there exists initial
data $h_0$, satisfying the hypotheses of Theorem \ref{C:Th1}, that
results in a weak solution that cannot be extended arbitrarily in
time:
$$
\sum_{k=0}^{\infty} T_{n,loc} = T^* < \infty \quad \Longrightarrow
\quad \lim_{n \to \infty} T_{n,loc} = 0.
$$
From the definition (\ref{D:Tloc_is}) of $T_{n,loc}$, this implies
\begin{equation} \label{D:diverges}
\lim_{n \to \infty}
 \int\limits_\Omega (h_x^2(x,T_n)
+ \tfrac{2c_2}{a_0} G_0(h(x,T_n))) \; dx = \infty.
\end{equation}
%
% Assume $T \in [T_{n-1},T_n]$.
By (\ref{F:globest}),
$$
\tfrac{a_0}{4} \int\limits_\Omega h_x^2(x,T_n) \; dx \leq
\mathcal{E}_0(T_{n-1}) +  K,
$$
where $K = c_1 M^{\tfrac{m - n + 4}{2- m + n}} + c_2 M^{m - n + 2}
+ c_3 M^2$. By (\ref{C:d2'}),
$$
\mathcal{E}_0(T_{n-1}) \leqslant \mathcal{E}_0(T_{n-2})
\leqslant \dots \mathcal{E}_0(0).
$$
Combining these,
\begin{equation} \label{D:here}
\tfrac{a_0}{4} \int\limits_\Omega h_x^2(x,T_n) \; dx \leq
\mathcal{E}_0(0) + K.
\end{equation}
%Continuing in this way,
%\begin{equation}
%\tfrac{a_0}{4} \int\limits_\Omega h_x^2(x,T_n) \; dx
%\label{D:here} \leqslant \mathcal{E}_0(0)   + K.
%\end{equation}

By assumption, $T_n \to T^* < \infty$ as $n \to \infty$ hence
$\int h_x^2(x,T_n)$ remains bounded. Assumption (\ref{D:diverges})
then implies that $\int G_0(h(x,T_n))  \to \infty$ as $n \to
\infty$.

To continue the argument, we step back to the approximate
solutions $h_\varepsilon$.  Let $T_{n,\varepsilon}$ be the
analogue of $T_n$ and $T_{n,loc,\varepsilon}$, defined by
(\ref{Tloc_eps_is}), be the analogue of $T_{n,loc}$. By
(\ref{D:aa2}),
\begin{align} \label{D:2}
& \int\limits_\Omega
G_\varepsilon(h_\varepsilon(x,T_{n,\varepsilon})) \; dx
\leqslant \int\limits_\Omega G_\varepsilon(h_\varepsilon(x,T_{n-1,\varepsilon})) \; dx \\
& \hspace{2in} \notag + c_{10}
\int\limits_{T_{n-1,\varepsilon}}^{T_{n,\varepsilon}} \max \left\{
1, \Bigl( \int\limits_\Omega h_{\varepsilon,x}^2(x,T) \; dx
\Bigr)^{\gamma_2} \right\} \; dT
\end{align}
with $\gamma_2 = 3$.
% {\bf Does (\ref{D:aa2}) require that $m \geqslant n-1$?}
Using the bound (\ref{D:d2}), one can prove the analogue of Lemma
\ref{F:ThSub} for the approximate solution $h_\varepsilon$.
However the bound (\ref{F:globest}) would be replaced by a bound
on $\| h_\varepsilon(\cdot,T) \|_{H^1}$ which holds for all $T \in
[0,T_{\varepsilon,loc}]$.  This bound would then be used to prove
a bound like (\ref{D:here}) to prove boundedness of $\int
h_{\varepsilon,x}^2(x,T)$ for all $T \in [0,T_{n,\varepsilon}]$.
Using this bound,

\begin{align}
\notag & \int\limits_{T_{n-1,\varepsilon}}^{T_{n,\varepsilon}}
\Bigl( \int\limits_\Omega h_{\varepsilon,x}^2(x,T) \; dx
\Bigr)^{\gamma_2} dT \leq
\bigl(\tfrac{4}{a_0} \bigr)^{\gamma_2} \int\limits_{T_{n-1,\varepsilon}}^{T_{n,\varepsilon}}
\left( \mathcal{E}_\varepsilon(0) + K \right)^{\gamma_2} \,dT \\
& \hspace{1in} \label{D:3} = \bigl(\tfrac{4}{a_0}
\bigr)^{\gamma_2} \left( \mathcal{E}_\varepsilon(0) + K
\right)^{\gamma_2}\, T_{n-1,loc,\varepsilon}.
\end{align}
If the initial data is such that $4/a_0 \;
(\mathcal{E}_\varepsilon(0) + K) < 1$ then before using
(\ref{D:3}) in (\ref{D:2}) we replace $K$ by a larger value so
that $4/a_0 \; (\mathcal{E}_\varepsilon(0) + K)>1$. Using
(\ref{D:3}) in (\ref{D:2}), it follows that
\begin{equation} \label{D:4}
 \int\limits_\Omega G_\varepsilon(h_\varepsilon(x,T_{n,\varepsilon})) \; dx
 \leqslant \int\limits_\Omega
G_\varepsilon(h_\varepsilon(x,T_{n-1,\varepsilon})) \; dx + \beta
\, T_{n-1,loc,\varepsilon}
\end{equation}
for $\beta = c_{10}\bigl(\tfrac{4}{a_0} \bigr)^{\gamma_2} \left(
\mathcal{E}_\varepsilon(0) + K \right)^{\gamma_2}$.
Here $\beta$ depends on $|\Omega|$, the coefficients of the PDE,  and
on the initial data $h_{0,\varepsilon}$.
% {\bf Mary, check that that's really the dependence\dots}

One now takes the sequence $\{ \varepsilon_k \}$ that was used to
construct the weak solution of Theorem \ref{C:Th1} on the interval
$[T_{n-1},T_n]$.  Taking $\varepsilon_k \to 0$, (\ref{D:4}) yields
\begin{equation}
\label{D:4a} \int\limits_\Omega G_0(h(x,T_{n}))\; dx \leq
\int\limits_\Omega G_0(h(x,T_{n-1}))\,dx + \beta \, T_{n-1,loc}.
\end{equation}
Applying (\ref{D:4a}) iteratively,
\begin{align*}
&  \int\limits_\Omega G_0(h(x,T_n)) \; dx \leqslant  \int\limits_\Omega
G_0(h_0(x)) \; dx
+ \beta \sum_{k=0}^{n-1} T_{k,loc} \\
& \hspace{1in} =  \int\limits_\Omega G_0(h_0(x)) \; dx + \beta \,
T_{n} .
\end{align*}
This upper bound proves that $\int G_0(h(x,T_n))$ cannot diverge
to infinity as $n \to \infty$, finishing the proof.
\end{proof}

\section{Strong positivity}

\begin{proposition}\label{F:LemPos}
%Let $a_ 0 > 0$, and $a_i$ be arbitrary constants. Assume
%(\ref{C:inval}).
Let
% $0 < n < 4,\
$m \geqslant n/2$. Assume the initial data $h_0$
satisfies $h_0(x) > 0 $ for all $x \in \omega \subseteq \Omega$
where $\omega$ is an open interval. Then
%the strong solution $h$
%from Theorem~\ref{C:Th2} satisfies:

\begin{enumerate}
\item
%1)
if $n > 3/2$ and $\alpha \in (-1/2, \min\{1,n-2\})$
then the strong solution from Theorem \ref{C:Th2} satisfies
$h(x,T) > 0$ for almost every $x \in \omega$,
for all $T \in [0,T_{loc}^{(\alpha)}]$;
\item
%2)
if $n > 2$ and $\alpha \in (-1/2, \min\{1, 3n/4 - 2\})$
then the strong solution from Theorem \ref{C:Th2} satisfies
$h(x,T) > 0$ for all $x \in \omega$, for almost
every $T \in [0,T_{loc}^{(\alpha)}]$;
\item
%3)
if $n \geqslant 7/2$ and $m \geqslant n-1/2$
then the strong solution from Theorem \ref{C:Th2}
satisfies $h(x,T) > 0$ for all $(x, T) \in
\overline{Q}_{T_{loc}^{(\alpha)}}$.
\end{enumerate}
\end{proposition}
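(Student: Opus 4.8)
The plan is to derive, for the strictly positive classical approximations $h_\varepsilon$ of Lemma~\ref{preC:Th1}, a \emph{localized} version of the $\alpha$-entropy estimate of Lemma~\ref{MainAE2}, to pass to the limit $\varepsilon\to0$ via Fatou's lemma, and then to deduce positivity from the integrability of a negative power of $h$. A purely global estimate is unavailable since $h_0$ may vanish on $\Omega\setminus\omega$, in which case $\int_\Omega G_0^{(\alpha)}(h_0)\,dx=+\infty$ for the negative values of $\alpha$ that carry positivity information. In each of the three cases I would fix $\alpha$ in the stated range so that $2-n+\alpha<0$ — hence $G_0^{(\alpha)}(z)\ge c\,z^{-(n-2-\alpha)}$ for small $z$ and $G_0^{(\alpha)}(0)=+\infty$ — together with a cut-off $\xi\in C_c^\infty(\omega)$, $0\le\xi\le1$, equal to $1$ on a prescribed compact subinterval $\omega'$ with $\omega'\Subset\omega$. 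Since $h_0\in H^1(\Omega)\hookrightarrow C(\overline\Omega)$ is strictly positive on the compact set $\mathrm{supp}\,\xi$, it is bounded below there, so $\int_\Omega\xi\,G_\varepsilon^{(\alpha)}(h_{0,\varepsilon})\,dx$ stays bounded as $\varepsilon\to0$.

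The main step I would carry out is this localized a priori estimate. Multiplying the regularized equation~(\ref{D:1r'}) by $\xi\,(G_\varepsilon^{(\alpha)})'(h_\varepsilon)$ (recall $(G_\varepsilon^{(\alpha)})''(z)=z^{\alpha}/f_\varepsilon(z)$) and integrating over $Q_T$, two integrations by parts — with no boundary terms, as $\xi$ is compactly supported — reproduce on $\{\xi\equiv1\}$ the computation behind Lemma~\ref{MainAE2}: the good dissipation $\iint_{Q_T}\xi\big(\beta\,h_\varepsilon^{\alpha}h_{\varepsilon,xx}^{2}+\gamma\,h_\varepsilon^{\alpha-2}h_{\varepsilon,x}^{4}\big)$ on the left, against which one must estimate the convective source $a_1\iint_{Q_T}\xi\,h_\varepsilon^{m-n+\alpha}h_{\varepsilon,x}^{2}$ from the $a_1D''_\varepsilon$ term, plus remainder terms in which at least one derivative falls on $\xi$. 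The remainders carry only bounded factors $\xi',\xi''$ times (at worst) the weights $h_\varepsilon^{\alpha},h_\varepsilon^{\alpha-2}$ and otherwise positive powers of $h_\varepsilon$; they, and the convective source, are absorbed by interpolation and Young's inequality using the uniform bounds already in hand from Lemmas~\ref{MainAE}, \ref{lemLL}, \ref{MainAE2} (for $\sup_t\|h_\varepsilon\|_\infty$, $\sup_t\!\int h_{\varepsilon,x}^2$, $\iint h_{\varepsilon,xx}^2$, $\iint f_\varepsilon(h_\varepsilon)h_{\varepsilon,xxx}^2$, and the weighted dissipation itself). This balancing is exactly where the interplay of $n$, $m$, $\alpha$ enters, and it is what forces the three parameter windows — the blanket $m\ge n/2$, the bound $\alpha\le 3n/4-2$ used in case~2, and $m\ge n-1/2$ in case~3, which lets the estimate close for the most negative admissible $\alpha$. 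The outcome is a constant $C$, independent of $\varepsilon$, with
\begin{equation*}
\int_\Omega\xi(x)\,G_\varepsilon^{(\alpha)}\!\big(h_\varepsilon(x,T)\big)\,dx+\iint_{Q_T}\xi\big(\beta\,h_\varepsilon^{\alpha}h_{\varepsilon,xx}^{2}+\gamma\,h_\varepsilon^{\alpha-2}h_{\varepsilon,x}^{4}\big)\,dx\,dt\le C
\end{equation*}
for all $T$ in the (possibly shortened) interval of existence; in particular, from $\partial_x^2\big(h_\varepsilon^{(\alpha+2)/2}\big)=\tfrac{\alpha+2}{2}\big(h_\varepsilon^{\alpha/2}h_{\varepsilon,xx}+\tfrac{\alpha}{2}h_\varepsilon^{\alpha/2-1}h_{\varepsilon,x}^{2}\big)$, the family $\xi\,h_\varepsilon^{(\alpha+2)/2}$ is bounded in $L^2(0,T_{loc}^{(\alpha)};H^2(\Omega))$ uniformly in $\varepsilon$. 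I expect this localized estimate — taming the cut-off remainders and, above all, the destabilizing convective source uniformly in $\varepsilon$ — to be the main obstacle; the exponent arithmetic there is what produces the three cases.

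Along the subsequence $\varepsilon_k\to0$ defining the strong solution $h$, since $G_\varepsilon^{(\alpha)}\ge0$ and $G_\varepsilon^{(\alpha)}\to G_0^{(\alpha)}$ pointwise, Fatou's lemma yields $\int_\Omega\xi\,G_0^{(\alpha)}(h(x,T))\,dx\le C$ for every $T$, hence $\int_{\omega'}h^{-(n-2-\alpha)}(x,T)\,dx<\infty$ for every $T$, and also $\xi\,h^{(\alpha+2)/2}\in L^2(0,T_{loc}^{(\alpha)};H^2(\Omega))$. The three conclusions then follow. In case~1, integrability of $h^{-(n-2-\alpha)}(\cdot,T)$ on $\omega'$ forces $h(\cdot,T)>0$ a.e.\ on $\omega'$ for every $T$; as $\omega'\Subset\omega$ is arbitrary, $h(\cdot,T)>0$ a.e.\ on $\omega$ for all $T$. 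In case~2, for a.e.\ $T$ one has $\xi\,h^{(\alpha+2)/2}(\cdot,T)\in H^2(\Omega)\hookrightarrow C^{1,1/2}(\overline\Omega)$; a zero $x_0$ of $h(\cdot,T)$ interior to $\{\xi\equiv1\}$ is then a zero of the nonnegative $C^1$ function $h^{(\alpha+2)/2}(\cdot,T)$, hence also of its derivative, so $h(x,T)\le C\,|x-x_0|^{3/(\alpha+2)}$ near $x_0$ — incompatible with $\int h^{-(n-2-\alpha)}(\cdot,T)<\infty$ exactly when $3(n-2-\alpha)\ge\alpha+2$, i.e.\ $\alpha\le 3n/4-2$ — so $h(\cdot,T)>0$ everywhere on $\omega$ for a.e.\ $T$. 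In case~3, with $n\ge 7/2$ one can choose $\alpha$ close enough to $-1/2$ that $n-2-\alpha\ge 2$; then even the bare Hölder regularity of $h$ (exponent $\tfrac12$ in $x$, $\tfrac18$ in $t$) gives $h(x,T)\le C\,|x-x_0|^{1/2}$ at a zero, so $\int_{\omega'}h^{-(n-2-\alpha)}(\cdot,T)=+\infty$, a contradiction, for \emph{every} $T$; hence $h>0$ on all of $\overline{Q}_{T_{loc}^{(\alpha)}}$ (for $n\ge 4$ this is already contained in \cite[Theorem 4.1]{B8}, and the borderline $n=7/2$ is recovered by combining the $C^{1,1/2}$ improvement of case~2, valid for a.e.\ $T$, with the $\tfrac18$-Hölder continuity of $h$ in $t$).
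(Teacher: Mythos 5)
Your treatment of cases~1 and~2 is essentially the paper's: you derive (what the paper calls) a local $\alpha$-entropy estimate by multiplying the regularized equation by $\xi\,(G_\varepsilon^{(\alpha)})'(h_\varepsilon)$, pass to the limit by Fatou, and then read off a.e.-in-$x$ positivity from divergence of $\int \xi\,h^{2-n+\alpha}$ (case~1) and everywhere-in-$x$ positivity for a.e.\ $T$ from the $C^{1,1/2}$ regularity of $h^{(\alpha+2)/2}(\cdot,T)$ combined with the same integrability (case~2); the exponent arithmetic $3(n-2-\alpha)/(\alpha+2)\geq 1 \Leftrightarrow \alpha\leq 3n/4-2$ is exactly the paper's. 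The paper packages the localized estimate as Lemma~\ref{F:local_BF} (via Lemma~\ref{G:lem1}), but the content is the same.

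Where you diverge, and where there is a real gap, is case~3. The paper does \emph{not} attempt a localized estimate with some $\alpha$ in $(-1/2,1)$ near $-1/2$; it goes to $\alpha=-1/2$ exactly, where the dissipation coefficient in (\ref{E:b5}) degenerates, and instead controls the convective term directly: $\iint h_\eps^{m-n-1/2}h_{\eps,x}^2 = -\tfrac{2}{2m-2n+1}\iint h_\eps^{m-n+1/2}h_{\eps,xx}$, then Cauchy--Schwarz against the uniform bound on $\iint h_{\eps,xx}^2$ from (\ref{BF_entropy}), using $m\geq n-1/2$ precisely to keep the exponent $2m-2n+1$ nonnegative so $\|h_\eps\|_\infty$ controls the other factor. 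This yields $\int_\Omega h^{3/2-n}(\cdot,T)\,dx<\infty$ for \emph{every} $T$, and then $C^{1/2}$-in-$x$ regularity of $h$ together with $n\geq 7/2$ gives divergence of $\int|x-x_0|^{(3-2n)/4}\,dx$ at any zero, hence positivity on all of $\overline{Q}$. Your route instead requires $\alpha\leq n-4$ with $\alpha>-1/2$, which is empty at $n=7/2$; the stated hypothesis $n\geq 7/2$ is not an accident, it is the borderline. Your suggested patch for $n=7/2$ (combine a.e.-in-$T$ positivity from case~2 with the $1/8$-Hölder continuity in $t$) does not close: a.e.\ positivity plus Hölder continuity in $t$ does not forbid a zero at an exceptional time. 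You also don't explain how $m\geq n-1/2$ enters your argument — it is in the hypotheses for a reason, namely the integration-by-parts step above — and your localized estimate, as written, only delivers positivity on $\omega$, whereas case~3 claims it on all of $\overline{\Omega}$; the paper's global $\alpha=-1/2$ bound is what supplies the global conclusion (the finite-entropy hypothesis on $h_0$ already forces $h_0>0$ a.e.\ in $\Omega$ when $n\geq 7/2$, so the global estimate is available). In short: cases~1 and~2 are fine, but case~3 needs the paper's separate $\alpha=-1/2$ argument, not a perturbation of the $\alpha>-1/2$ machinery.
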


The proof of Proposition~\ref{F:LemPos} depends on a local version
of the a priori bound (\ref{E:b12''}) of Lemma \ref{MainAE2}:

\begin{lemma} \label{F:local_BF}  Let $\omega \subseteq \Omega$ be an open interval
and $\zeta \in C^2(\bar{\Omega})$ such that $\zeta > 0$ on
$\omega$, $\text{supp}\,\zeta = \overline{\omega}$, and
$(\zeta^4)^{\prime} = 0$ on $\partial\Omega$. If $\omega =
\Omega$, choose $\zeta$ such that $ \zeta(-a) = \zeta(a)
> 0$. Let $\xi := \zeta^4$.

If the initial data $h_0$ and the time $T_{loc}^{(\alpha)}$ are as
in Theorem \ref{C:Th2} then for all $T \in [0,T_{loc}^{(\alpha)}]$
the strong solution $h$ from Theorem \ref{C:Th2} satisfies
\begin{equation} \label{local_BF_finite}
\int\limits_\Omega {\xi(x) \; G_0^{(\alpha)}(h(x,T))} \; dx <
\infty
\end{equation}
% for any $\alpha \in [-1/2, 1)$.
\end{lemma}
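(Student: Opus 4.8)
The plan is to localize the proof of the global $\alpha$--entropy bound \eqref{E:b12''} of Lemma~\ref{MainAE2} by inserting the weight $\xi=\zeta^4$, derive a differential inequality for the weighted entropy of the approximate solutions, integrate in time, and pass to the limit via Fatou's lemma. First I would work with the positive classical approximate solutions $h_\varepsilon$ of Lemma~\ref{preC:Th1} ($\delta=0$, $\varepsilon\in(0,\varepsilon_0)$), which solve (\ref{D:1r'}). Multiplying that equation by $\xi\,G_\varepsilon^{(\alpha)\prime}(h_\varepsilon)$, integrating over $\Omega$, and integrating by parts --- using the boundary conditions on $h_\varepsilon$ together with $(\zeta^4)'=0$ on $\partial\Omega$ (or the periodicity of $\xi$ and of $\zeta$ when $\omega=\Omega$), so that no boundary terms survive --- and using $G_\varepsilon^{(\alpha)\prime\prime}(z)=z^\alpha/f_\varepsilon(z)$ exactly as in the global computation, one arrives at an identity of the form
$$
\frac{d}{dt}\int_\Omega \xi\,G_\varepsilon^{(\alpha)}(h_\varepsilon)\,dx
+\beta\int_\Omega \xi\,h_\varepsilon^{\alpha}h_{\varepsilon,xx}^2\,dx
+\gamma\int_\Omega \xi\,h_\varepsilon^{\alpha-2}h_{\varepsilon,x}^4\,dx
= R_\varepsilon(t),
$$
where $\beta,\gamma>0$ are the constants of Lemma~\ref{MainAE2} (the two dissipative terms arise from the same completion of squares that mixes the $a_0h_{\varepsilon,xxx}$ and $a_1D_\varepsilon''(h_\varepsilon)h_{\varepsilon,x}$ contributions, now carried out pointwise inside the integral against $\xi$), and $R_\varepsilon(t)$ collects every term carrying at least one derivative of $\xi$. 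Since $\zeta\in C^2(\bar\Omega)$, each factor $\xi_x,\xi_{xx},\dots$ is bounded on $\bar\Omega$ and, being a lower power of $\zeta$ times a bounded function, never forces a degenerate weight into the estimates.

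The heart of the argument is estimating $R_\varepsilon(t)$. Every term in $R_\varepsilon$ is genuinely lower order: after at most one further integration by parts --- to remove the single third derivative of $h_\varepsilon$ that is paired with $\xi_x$ rather than $\xi$ --- each term is a product of bounded functions of $\xi$ and its derivatives with a monomial in $h_\varepsilon$ and with $h_{\varepsilon,x},h_{\varepsilon,xx}$ of total dissipative order at most four. I would split each such term by Young's inequality into a piece absorbed into $\tfrac{\beta}{2}\int_\Omega\xi\,h_\varepsilon^{\alpha}h_{\varepsilon,xx}^2$ or $\tfrac{\gamma}{2}\int_\Omega\xi\,h_\varepsilon^{\alpha-2}h_{\varepsilon,x}^4$, plus a remainder estimated by $C\bigl(1+\int_\Omega h_{\varepsilon,x}^2\,dx\bigr)^{\gamma_1}$ or $C\bigl(1+\|h_\varepsilon(\cdot,t)\|_\infty^{\,p}\bigr)\int_\Omega h_{\varepsilon,xx}^2\,dx$, and so on. The excess powers of $h_\varepsilon$ that appear --- from $G_\varepsilon^{(\alpha)\prime}f_\varepsilon$, from $D_\varepsilon''(h_\varepsilon)=h_\varepsilon^{m-n}/(1+\varepsilon h_\varepsilon^{m-n})$, and from the $h_\varepsilon^{\alpha},h_\varepsilon^{\alpha-1}$ coefficients --- are converted into constants by the uniform-in-$\varepsilon$ bound $\|h_\varepsilon\|_{L^\infty(Q_{T_{loc}})}\le A$ of Lemma~\ref{MainAE}; the hypotheses $n>0$, $m\geqslant n/2$, $\alpha\in(-1/2,1)$ are exactly what keep every exponent admissible. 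The outcome is a differential inequality
$$
\frac{d}{dt}\int_\Omega \xi\,G_\varepsilon^{(\alpha)}(h_\varepsilon)\,dx \;\leqslant\; F_\varepsilon(t),
$$
with $\int_0^{T_{loc}^{(\alpha)}}F_\varepsilon(t)\,dt\leqslant C$ independent of $\varepsilon$, by the global bounds (\ref{D:a13''}), (\ref{BF_entropy}), (\ref{E:b12''}) and the trivial inequality $\xi\leqslant\|\xi\|_\infty$.

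Integrating in time and using that the weighted initial entropy $\int_\Omega\xi\,G_\varepsilon^{(\alpha)}(h_{0,\varepsilon})\,dx\leqslant\|\xi\|_\infty\int_\Omega G_\varepsilon^{(\alpha)}(h_{0,\varepsilon})\,dx$ is uniformly bounded --- because $h_0$ has finite $\alpha$--entropy (\ref{finite_alpha_ent}), $h_{0,\varepsilon}\to h_0$ in $H^1(\Omega)$, and the $\varepsilon$--correction $\varepsilon z^{\alpha-2}/((\alpha-3)(\alpha-2))$ in (\ref{E:reg4}) evaluated on $h_{0,\varepsilon}\geqslant\varepsilon^\theta$ is $O(\varepsilon^{\,1+\theta(\alpha-2)})\to0$ since $\alpha>-1/2$ and $\theta<2/5$ --- one obtains $\sup_{T\in[0,T_{loc}^{(\alpha)}]}\int_\Omega\xi(x)G_\varepsilon^{(\alpha)}(h_\varepsilon(x,T))\,dx\leqslant C<\infty$ uniformly in $\varepsilon$. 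Finally, along the subsequence $\varepsilon_k\to0$ for which $h_{\varepsilon_k}\to h$ uniformly on $\overline{Q}_{T_{loc}^{(\alpha)}}$, the integrands $\xi\,G_{\varepsilon_k}^{(\alpha)}(h_{\varepsilon_k}(\cdot,T))$ are nonnegative and converge a.e.\ to $\xi\,G_0^{(\alpha)}(h(\cdot,T))$ (pointwise where $h>0$; and where $h=0$, if $2-n+\alpha<0$ both sides are $+\infty$, otherwise harmless), so Fatou's lemma gives $\int_\Omega\xi(x)G_0^{(\alpha)}(h(x,T))\,dx\leqslant\liminf_k\int_\Omega\xi(x)G_{\varepsilon_k}^{(\alpha)}(h_{\varepsilon_k}(x,T))\,dx\leqslant C<\infty$ for every $T\in[0,T_{loc}^{(\alpha)}]$, which is (\ref{local_BF_finite}).

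The main obstacle is the estimate of $R_\varepsilon(t)$: bookkeeping the precise powers of $h_\varepsilon$ produced by the cutoff and by $G_\varepsilon^{(\alpha)\prime}$, handling the one term in which $h_{\varepsilon,xxx}$ is paired only with $\xi_x$, and verifying that every resulting exponent lies in the range where the $L^\infty$ bound of Lemma~\ref{MainAE}, the $H^1$ bound (\ref{D:a13''}), and the two dissipation integrals of (\ref{E:b12''}) together suffice to close the inequality. This is precisely where the structural constants $\beta,\gamma$ and the constraints on $(n,m,\alpha)$ are used.
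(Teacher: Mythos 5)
Your proof is correct and is essentially the same computation the paper carries out, just packaged differently: the paper's own proof of Lemma~\ref{F:local_BF} is a two-line citation of the local $\alpha$-entropy inequality (\ref{G:main1}) of Lemma~\ref{G:lem1}, applied with a time-independent cutoff $\zeta(x)$, followed by the observation that the two right-hand-side integrals $\iint h^{\alpha+2}(\zeta_x^4+\zeta^2\zeta_{xx}^2)$ and $\iint h^{2(m-n+1)+\alpha}\zeta^4$ are finite because $h\in L^\infty(0,T_{loc}^{(\alpha)};H^1(\Omega))\hookrightarrow L^\infty(Q_{T_{loc}^{(\alpha)}})$. What you re-derive---multiplying $(\mathrm{P}_{0,\varepsilon})$ by $\xi\,G^{(\alpha)\prime}_\varepsilon(h_\varepsilon)$, integrating by parts, estimating the commutator terms carrying $\xi_x,\xi_{xx}$ via the $L^\infty$ and dissipation bounds, controlling the $\varepsilon$-correction to $G^{(\alpha)}_\varepsilon$ on the lifted initial data using $\theta<2/5\leqslant 1/(2-\alpha)$, and closing with Fatou along the constructing subsequence $\varepsilon_k\to0$---is precisely the content of the proof of Lemma~\ref{G:lem1}. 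One small imprecision worth flagging: the display you call an ``identity'' only takes that form, with both constants $\beta,\gamma>0$, after the Cauchy/Young absorptions you describe in the next step and, for $\alpha<0$, after the algebraic inequality (\ref{E:b4}) that trades $\int\xi\,h^{\alpha}h_{xx}^2$ against $\int\xi\,h^{\alpha-2}h_x^4$; the raw integration by parts gives the coefficient $a_0\alpha(1-\alpha)/3$ on the latter, which is negative for $\alpha\in(-1/2,0)$. Since the repair is exactly what you propose to do next, the argument goes through.
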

The proof of Lemma \ref{F:local_BF} is given in Appendix
\ref{A_priori_proofs}. The proof of Proposition \ref{F:LemPos} is
essentially a combination of the proofs of Theorem 6.1 and Corollary 4.5
in \cite{B8} and is provided here for the reader's
convenience.
\begin{proof}[Proof of Proposition~\ref{F:LemPos}]
Choose the test function $\zeta(x)$ to satisfy the
hypotheses of Lemma \ref{F:local_BF}.  Hence,
(\ref{local_BF_finite}) holds for every $T \in
[0,T_{loc}^{(\alpha)}]$.

Proof of 1):
% {\bf Assume $n > 3/2$ \dots}
%First, we prove $h(x,T) > 0$ for almost every $x \in \omega$, for
%all $T \in [0,T_{loc}^{(\alpha)}]$.
Assume it is not true that $h(x,T)>0$ for almost every $x \in \omega$,
for all $T \in  [0,T_{loc}^{(\alpha)}]$.
%Assume not.
Then there is a
time $T \in [0,T_{loc}^{(\alpha)}]$ such that the set $\{ x \; :
\; h(x,T) = 0 \} \cap \omega$ has positive measure.  Then
because $\alpha - n + 2 < 0$,
$$
\infty > \int\limits_\Omega \xi(x)  h^{\alpha - n + 2}(x,T) \; dx
\geqslant \int\limits_{ \{h(\cdot,T) = 0 \} \cap \omega } \xi(x)
 h^{\alpha - n + 2}(x,T) \; dx = \infty.
$$
% if $\alpha - n + 2 < 0$ whence we find that $n > 3/2$.
This
contradiction implies there can be no time at which $h$ vanishes
on a set of positive measure in $\omega$, as desired.

Proof of 2):
% {\bf Assume $n > 2$ \dots}
%Now, we prove $h(x,T) > 0$ for all $x \in \omega$,  for almost
%every $T \in [0,T_{loc}^{(\alpha)}]$.
We start by noting that by
(\ref{Linf_H2-2}),
$(h^{\tfrac{\alpha + 2}{2}})_{xx}(\cdot,T) \in L^2(\Omega)$ for
almost all $T \in [0,T_{loc}^{(\alpha)}]$ hence $h^{\tfrac{\alpha +
2}{2}}(\cdot,T) \in C^{3/2}(\Omega)$ for almost all $T \in
[0,T_{loc}^{(\alpha)}]$.  To prove 2), it therefore
suffices to show that
if $T$ such that $h^{\tfrac{\alpha + 2}{2}}(\cdot,T) \in C^{3/2}(\Omega)$
then $h(x,T)>0$ on $\omega$.  Assume this is not true and
there is an $x_0 \in \omega$ and
$T_0$ such that $h(x_0,T_0) = 0$ and
$h^{\tfrac{\alpha + 2}{2}}(\cdot,T_0) \in C^{3/2}(\Omega)$.
Then there is a $L$
such that
$$
h^{\tfrac{\alpha + 2}{2}}(x,T_0) = |h^{\tfrac{\alpha +
2}{2}}(x,T_0)-h^{\tfrac{\alpha + 2}{2}}(x_0,T_0)| \leqslant L |
x-x_0|^{3/2}.
$$
Hence
$$
\infty > \int\limits_\Omega {\xi(x) h^{\alpha - n +2 }(x,T_0)} \;
dx \geqslant L^{\tfrac{2(\alpha - n +2)}{\alpha + 2}} \int\limits_\Omega
\xi(x) |x - x_0|^{- \tfrac{3(n
 - \alpha - 2)}{\alpha + 2}} \; dx = \infty.
$$
%if $\tfrac{3(n - \alpha - 2)}{\alpha + 2} \geqslant 1$, i.\,e. $n \geqslant \tfrac{4}{3}(\alpha + 2)$
% whence we find that $n > 2$.
This contradiction implies there can be no point $x_0$ such
that $h(x_0,T_0) = 0$, as desired.  Note that we used $\xi > 0$ on
$\omega$ and $x_0 \in \omega$ to conclude that the integral
diverges.

%%%%%%%%%%%%%%%%%%%%%%%%%%%%%%%%%%%%%%%%%%%%%%%%%%

Proof of 3):
Taking $\alpha = -\tfrac{1}{2}$ in (\ref{E:b5}), the approximate solution
$h_\eps$ satisfies
\begin{equation} \label{E:here}
\int\limits_{\Omega} {G^{(-1/2)}_{\varepsilon}(h_\eps(x,T))\,dx} \leqslant
\int\limits_{\Omega}
{G^{(-1/2)}_{\varepsilon}(h_{0\varepsilon})\,dx}
+\iint\limits_{Q_T} {h_\eps^{m - n - \tfrac{1}{2}} h^2_{\eps,x}\,dx dt}.
\end{equation}
Now we use the estimate
\begin{multline*}
\iint\limits_{Q_T} {h_\eps^{m -n- \tfrac{1}{2}} h_{\eps,x}^2 \,dx dt}= -
\tfrac{2}{2 m-2 n +1}\iint\limits_{Q_T} {h_\eps^{m -n + \tfrac{1}{2}}h_{\eps,xx} \,dx dt} \\
 \leqslant
\tfrac{2}{2 m-2 n + 1} \left(\iint\limits_{Q_T} { h^2_{\eps,xx}\,dx dt}
\right)^{\tfrac{1}{2}} \left(\iint\limits_{Q_T} { h_\eps^{2 m-2 n +1}\,dx
dt}\right)^{\tfrac{1}{2}}.
\end{multline*}
Using $m \geqslant n-1/2$ and the $L^\infty$ bound on $h_\eps$ from (\ref{D:a13''})
as well as
% where $m \geqslant n$. In view of (\ref{D:a13''}) and
(\ref{BF_entropy}), we obtain
\begin{equation}\label{E:ggg}
\iint\limits_{Q_T} {h_\eps^{m -n- \tfrac{1}{2}} h_{\eps,x}^2 \,dx dt}
\leqslant C \sqrt{T}, \text{ i.\,e. } \iint\limits_{Q_T} {\left(h_\eps^{\tfrac{2 m
-2 n+3}{4} } \right)_{x}^2 \,dx dt} \leqslant C \sqrt{T}.
\end{equation}
Returning to (\ref{E:here}) and taking $\eps_k \to 0$ along the subsequence that yielded
the strong solution $h$, we find that
% whence we find, letting $\varepsilon \to 0$, that
\begin{equation}\label{E:gg1}
\int\limits_{\Omega} {G^{(-1/2)}_{0}(h)\,dx} = \tfrac{1}{(n -
3/2)(n - 1/2)}\int\limits_{\Omega} {h^{3/2 - n}(x,T)\,dx}
\leqslant K < \infty.
\end{equation}
% {\bf Make sure I'm comfortable w/ how that limit was taken\dots}
We now prove $h(x,T) > 0$ for all $x \in \bar{\Omega}$, and  for
every $T \in [0,T_{loc}^{(\alpha)}]$. By (\ref{weak1}),
$h(\cdot,T) \in C^{1/2}(\bar{\Omega})$ for all $T \in
[0,T_{loc}^{(\alpha)}]$. Assume $T_0$ is such that
% $h(\cdot,T_0)\in C^{1/2}(\bar{\Omega})$ and
$h(x_0,T_0) = 0$ at some $x_0 \in
\Omega$. Then there is a $L$ such that
$$
h(x,T_0) = |h(x,T_0)-h(x_0,T_0)| \leqslant L | x-x_0|^{1/2}.
$$
Hence since $n \geqslant 7/2$
$$
\infty > \int\limits_\Omega {h^{\tfrac{3}{2} - n }(x,T_0)} \; dx \geq
L^{\tfrac{3 - 2n}{2}} \int\limits_\Omega |x - x_0|^{- \tfrac{2n
 - 3}{4}} \; dx = \infty.
$$
% if $\tfrac{2n- 3}{4} \geqslant 1$, i.\,e. $n \geqslant \tfrac{7}{2}$.
This contradiction implies there can be no point $x_0$ such
that $h(x_0,T_0) = 0$, as desired.
%Note that we used $\xi > 0$ on
%$\omega$ and $x_0 \in \omega$ to conclude that the integral
%diverges.
\end{proof}

\section{Finite speed of propagation}\label{G}

\subsection{Local entropy estimate}

\begin{lemma}\label{G:lem1}
Let $\zeta \in C^{1,2}_{t,x}(\bar{Q}_T)$ such that
$\text{supp}\,\zeta \subset \Omega$, $(\zeta^4)_{x} = 0$ on
$\partial\Omega$, and $\zeta^4(-a,t) = \zeta^4(a,t)$. Assume that $
- \tfrac{1}{2} < \alpha < 1$, and $\alpha \neq 0$. Then there exist a
weak solution $h(x,t)$ in the sense of the theorem
Theorem~\ref{C:Th2}, constants $C_i \ (i=1,2,3)$ dependent on
$n,\,m,\,\alpha,\, a_0$, and $a_1$, independent of $\Omega$, such
that for all $T \leqslant T_{loc}^{(\alpha)}$
\begin{multline}\label{G:main1}
\int\limits_{\Omega} {\zeta^4(x,T)G^{(\alpha)}_{0}(h(x,T))\,dx} -
\iint\limits_{Q_T} {(\zeta^4)_t G^{(\alpha)}_{0}(h)\,dx dt} + C_1
\iint\limits_{Q_T} { (h^{\tfrac{\alpha + 2}{2}})^2_{xx}\zeta^4 \,dx
dt} \leqslant \\
\int\limits_{\Omega} {\zeta^4(x,0)G^{(\alpha)}_{0}(h_{0})\,dx} +
C_2 \iint\limits_{Q_T} { h^{\alpha +2} (\zeta_{x}^4 + \zeta^2
\zeta_{xx}^2) \,dx dt} + C_3 \iint\limits_{Q_T} { h^{2(m - n + 1)+
\alpha}\zeta^4 \,dx dt}.
\end{multline}
\end{lemma}

\begin{proof}[Sketch of Proof of Lemma~\ref{G:lem1}]
In the following, we denote the positive, classical solution
$h_\varepsilon$ constructed in Lemma~\ref{MainAE2} by $h$
(whenever there is no chance of confusion).

Let $\phi(x,t) = \zeta^4(x,t)$. Recall the entropy function
$G^{(\alpha)}_{\varepsilon}(z)$ defined by (\ref{E:reg4}).
Multiplying (\ref{D:1r'}) by $\phi(x,t)
G'^{(\alpha)}_{\varepsilon}(h_{\varepsilon})$, and integrating
over $Q_T$ yields
\begin{multline}\label{G:in1}
\int\limits_{\Omega}
{\phi(x,T)G^{(\alpha)}_{\varepsilon}(h(x,T))\,dx} -
\int\limits_{\Omega}
{\phi(x,0)G^{(\alpha)}_{\varepsilon}(h_{0\varepsilon})\,dx} - \\
\iint\limits_{Q_T} {\phi_t(x,t)G^{(\alpha)}_{\varepsilon}(h)\,dx
dt} = \iint\limits_{Q_T} {\phi_x f_{\varepsilon}(h)
G'^{(\alpha)}_{\varepsilon}(h) (a_0 h_{xxx} + a_1
D''_{\varepsilon}(h)h_x)\,dx dt} + \\
 \iint\limits_{Q_T} {\phi h^{\alpha}(a_0  h_x h_{xxx} + a_1  D''_{\varepsilon}(h)h_x^2 ) \,dx dt} =: I_{1} +
 I_{2}.
\end{multline}
We now bound the terms $I_1$ and $I_2$.  First,
\begin{multline}\label{G:in2}
I_{1}  = - a_0 \iint\limits_{Q_T} {\phi_{xx}
f_{\varepsilon}(h)G'^{(\alpha)}_{\varepsilon}(h) h_{xx}\,dx dt} -
a_0 \iint\limits_{Q_T} {\phi_x (h^{\alpha} +
f'_{\varepsilon}(h)G'^{(\alpha)}_{\varepsilon}(h))h_x h_{xx}\,dx dt}-\\
a_1 \iint\limits_{Q_T} {\phi_{xx} F^{(\alpha)}_{\varepsilon}(h)
\,dx dt}  = - a_0 \iint\limits_{Q_T} {\phi_{xx}
f_{\varepsilon}(h)G'^{(\alpha)}_{\varepsilon}(h) h_{xx}\,dx dt} +
\tfrac{a_0}{2} \iint\limits_{Q_T} {\phi_{xx}h^{\alpha} h_x^2\,dx
dt} + \\
\tfrac{a_0 \alpha}{2} \iint\limits_{Q_T} {\phi_{x}h^{\alpha -1}
h_x^3\,dx dt} - a_0 \iint\limits_{Q_T} {\phi_x
f'_{\varepsilon}(h)G'^{(\alpha)}_{\varepsilon}(h)h_x h_{xx}\,dx
dt} - a_1 \iint\limits_{Q_T} {\phi_{xx}
F^{(\alpha)}_{\varepsilon}(h) \,dx dt} ,
\end{multline}
where $F^{(\alpha)}_{\varepsilon}(z) := \int \limits_0^z
{f_{\varepsilon}(s)G'^{(\alpha)}_{\varepsilon}(s)g'_{\varepsilon}(s)\,ds}$,
\begin{multline}\label{G:in3}
I_{2} =  - a_0 \iint\limits_{Q_T} {(\phi_x h^{\alpha}  h_x  h_{xx}
+ \alpha \phi h^{\alpha -1} h^2_{x} h_{xx} +  \phi h^{\alpha}
h^2_{xx}) \,dx dt} + \\
a_1 \iint\limits_{Q_T} {\phi h^{\alpha} D''_{\varepsilon}(h)h_x^2
\,dx dt}  = \tfrac{a_0}{2} \iint\limits_{Q_T} {\phi_{xx}
h^{\alpha} h_x^2 \,dx dt} +
\tfrac{5a_0 \alpha}{6} \iint\limits_{Q_T} {\phi_x h^{\alpha - 1} h_x^3 \,dx dt} +\\
\tfrac{a_0 \alpha(\alpha - 1)}{3} \iint\limits_{Q_T} {\phi
h^{\alpha - 2} h_x^4 \,dx dt}  - a_0 \iint\limits_{Q_T} {\phi
h^{\alpha} h^2_{xx} \,dx dt} + a_1\iint\limits_{Q_T} {\phi
h^{\alpha} D''_{\varepsilon}(h)h_x^2 \,dx dt}.
\end{multline}

One easily finds that for all $\varepsilon > 0$ and all $z \geqslant 0$
%|G'^{(\alpha)} _{\varepsilon}(z)| \leqslant  \tfrac{z^s +
%\varepsilon z^n}{|\alpha - n +1|z^{s+n -\alpha -1}}, \
%|f_{\varepsilon}'(z)| \leqslant \tfrac{s\,z^{s+n-1}}{z^s +
%\varepsilon z^n},\
$$
|f_{\varepsilon}(z)G'^{(\alpha)}_{\varepsilon}(z)| \leqslant
\tfrac{z^{\alpha + 1}}{|\alpha - n + 1|},\
|f_{\varepsilon}'(z)G'^{(\alpha)}_{\varepsilon}(z)| \leqslant
\tfrac{s z^{\alpha }}{|\alpha - n + 1|},\ |F_{\varepsilon}(z)|
\leqslant \tfrac{z^{m - n +  \alpha + 2}}{|\alpha - n + 1|(m - n +
\alpha + 2)} + o(\eps).
$$
Using these bounds, and the Cauchy inequality, we bound $I_1 +
I_2$:
\begin{multline}\label{G:in4}
I_1 + I_2 \leqslant \tfrac{a_0 \alpha(\alpha - 1)}{3}
\iint\limits_{Q_T} { h^{\alpha - 2} h_x^4 \phi\,dx dt}  - a_0
\iint\limits_{Q_T} { h^{\alpha} h^2_{xx} \phi \,dx dt} + \\
a_1\iint\limits_{Q_T} { h^{m - n + \alpha} h_x^2 \phi \,dx dt} +
a_0 \iint\limits_{Q_T} { h^{\alpha} h_x^2 |\phi_{xx}| \,dx dt} +
\\
\tfrac{a_0}{|\alpha - n + 1|} \iint\limits_{Q_T} { h^{\alpha+1}
|h_{xx}| |\phi_{xx}| \,dx dt} + \tfrac{a_0 s}{|\alpha - n + 1|}
\iint\limits_{Q_T} {h^{\alpha}|h_x| |h_{xx}| |\phi_{x}| \,dx dt} +
\\
+ \tfrac{4 a_0 \alpha}{3} \iint\limits_{Q_T} { h^{\alpha - 1}
h_{x}^3 \phi_{x} \,dx dt} + \tfrac{a_1}{|\alpha - n + 1|(m - n +
\alpha + 2)} \iint\limits_{Q_T} {h^{m - n + \alpha + 2}
|\phi_{xx}| \,dx dt} .
\end{multline}
Due to (\ref{G:in4}),  we deduce from (\ref{G:in1}) that
\begin{multline}\label{G:in5}
\int\limits_{\Omega}
{\phi(x,T)G^{(\alpha)}_{\varepsilon}(h(T))\,dx} -
\iint\limits_{Q_T} {\phi_t(x,t)G^{(\alpha)}_{\varepsilon}(h)\,dx
dt} +  a_0 \iint\limits_{Q_T} { h^{\alpha} h^2_{xx} \phi \,dx dt}
 + \\
\tfrac{a_0 \alpha(1 - \alpha)}{3} \iint\limits_{Q_T} { h^{\alpha -
2} h_x^4 \phi\,dx dt}  \leqslant \int\limits_{\Omega}
{\phi(x,0)G^{(\alpha)}_{\varepsilon}(h_{0\varepsilon})\,dx} +\\
a_1\iint\limits_{Q_T} { h^{m - n + \alpha} h_x^2 \phi \,dx dt} +
a_0 \iint\limits_{Q_T} { h^{\alpha} h_x^2 |\phi_{xx}| \,dx dt} +
\\
\tfrac{a_0}{|\alpha - n + 1|} \iint\limits_{Q_T} { h^{\alpha+1}
|h_{xx}| |\phi_{xx}| \,dx dt} + \tfrac{a_0 s}{|\alpha - n + 1|}
\iint\limits_{Q_T} {h^{\alpha}|h_x| |h_{xx}| |\phi_{x}| \,dx dt} +
\\
\tfrac{4 a_0 \alpha}{3} \iint\limits_{Q_T} { h^{\alpha - 1}
h_{x}^3 \phi_{x} \,dx dt} + \tfrac{a_1}{|\alpha - n + 1|(m - n +
\alpha + 2)} \iint\limits_{Q_T} {h^{m - n + \alpha + 2}
|\phi_{xx}| \,dx dt},
\end{multline}
where $\alpha \neq n - 1$. Recalling $\phi = \zeta^4$, and using
the Young's inequality (\ref{Young}) and simple transformations,
from (\ref{G:in5}) we find that
\begin{multline}\label{G:in6}
\int\limits_{\Omega}
{\zeta^4(x,T)G^{(\alpha)}_{\varepsilon}(h(x,T))\,dx} -
\iint\limits_{Q_T} {(\zeta^4)_t G^{(\alpha)}_{\varepsilon}(h)\,dx
dt} +  \\
C_1 \iint\limits_{Q_T} { (h^{\tfrac{\alpha + 2}{2}})^2_{xx}\zeta^4
\,dx dt} \leqslant \int\limits_{\Omega}
{\zeta^4(x,0)G^{(\alpha)}_{\varepsilon}(h_{0,\varepsilon})\,dx} +\\
C_2 \iint\limits_{Q_T} { h^{\alpha +2} (\zeta_{x}^4 + \zeta^2
\zeta_{xx}^2) \,dx dt} + C_3 \iint\limits_{Q_T} { h^{2(m - n + 1)+
\alpha}\zeta^4 \,dx dt}.
\end{multline}

We now argue that the $\varepsilon \to 0$ limit of the right-hand
side of (\ref{G:in6}) is finite and bounded by $K$, allowing us to
apply Fatou's lemma to the left-hand side of (\ref{G:in6}),
concluding
$$
\int\limits_\Omega \zeta^4(x,T) \; G^{(\alpha)}_0(h(x,T)) \; dx-
\iint\limits_{Q_T} {(\zeta^4)_t G^{(\alpha)}_{0}(h)\,dx dt} + C_1
\iint\limits_{Q_T} { (h^{\tfrac{\alpha + 2}{2}})^2_{xx}\zeta^4 \,dx
dt} \leqslant K < \infty
$$
for every $T \in [0,T_{loc}^{(\alpha)}]$, as desired.  (Note that
in taking $\varepsilon \to 0$ we will choose the exact same
sequence $\varepsilon_k$ that was used to construct the weak
solution $h$ of Theorem \ref{C:Th2}.  Also, in applying Fatou's
lemma we used the fact that $\{ h = 0 \}$ having measure zero in
$Q_{T_{loc}^{(\alpha)}}$ implies $\{ h(\cdot,T) \}$ has measure
zero in $\Omega$.)

It suffices to show that $\int \zeta^4(x,0)
G^{(\alpha)}_\varepsilon(h_{0, \varepsilon}) \to \int \zeta^4(x,0)
G^{(\alpha)}_0(h_0) < \infty$ as $\varepsilon \to 0$ (the rest of
items is bonded as $h \in L^{\infty}(0, T_{loc}^{(\alpha)};
H^1(\Omega))$). This uses the Lebesgue Dominated Convergence
Theorem. First, note that
$$
G^{(\alpha)}_\varepsilon(z) = \tfrac{z^{\alpha -n+ 2 }}{(\alpha -
n + 2)(\alpha - n + 1)} + \tfrac{\varepsilon z^{\alpha -
2}}{(\alpha - 2)(\alpha - 3)} = G^{(\alpha)}_0(z) +
\tfrac{\varepsilon z^{\alpha -2}}{(\alpha -  2)(\alpha - 3)},
$$
hence if $h_0(x) > 0$ then
$$
 G^{(\alpha)}_\varepsilon(h_{0, \varepsilon}(x)) =
G^{(\alpha)}_0(h_0(x) + \varepsilon^\theta) + \tfrac{\varepsilon
(h_0(x) + \varepsilon^\theta)^{\alpha -2}}{(\alpha -
2)(\alpha - 3)} \leqslant G^{(\alpha)}_0(h_0(x) +
\varepsilon^\theta) + \tfrac{\varepsilon^{1 - \theta(2 - \alpha)}}{|(\alpha - 2)(\alpha - 3)|}.
$$
Because $h_0$ has finite entropy ($\int G^{(\alpha)}_0(h_0) <
\infty$) it is positive almost everywhere in $\Omega$. Using this
and the fact that $\theta$ was chosen so that $\theta < 1/(2 -
\alpha) < 2/5$, we have $|\phi(x,0) \,
G^{(\alpha)}_\varepsilon(h_{0,\varepsilon}(x))| \leqslant \phi(x,0)
(G^{(\alpha)}_0(h_0(x)) + c) \leqslant C (G_0(h_0(x)) + c )$ almost
everywhere in $x$ and for all $\varepsilon < \varepsilon_0$. The
dominating function is in $L^1$, because $h_0$ has finite entropy.

It remains to show pointwise convergence $\phi(x,0)
G^{(\alpha)}_\varepsilon(h_{0, \varepsilon}(x)) \to \phi(x,0)
G_0(h_0(x))$ almost everywhere in $x$:
\begin{align*}
& \left| G^{(\alpha)}_\varepsilon(h_{0,\varepsilon}(x))  -
G^{(\alpha)}_0(h_0(x)) \right| \leq
\left| G^{(\alpha)}_\varepsilon(h_{0,\varepsilon}(x))  - G^{(\alpha)}_0(h_{0, \varepsilon}(x)) \right| \\
& \hspace{.4in} + \left| G^{(\alpha)}_0(h_{0, \varepsilon}(x))  -
G^{(\alpha)}_0(h_0(x)) \right| = \tfrac{\varepsilon
h_{0,\varepsilon}^{\alpha -2}(x)}{(\alpha - 2)(\alpha - 3)} +
\left| G^{(\alpha)}_0(h_{0, \varepsilon}(x))  - G^{(\alpha)}_0(h_0(x)) \right| \\
& \hspace{.4in} \leqslant \tfrac{\varepsilon^{1 - \theta(2 - \alpha)}}{|(\alpha - 2)(\alpha -3)|} +\left|
G^{(\alpha)}_0(h_{0,\varepsilon}(x))  - G^{(\alpha)}_0(h_0(x))
\right|
\end{align*}
As before, $\tfrac{\varepsilon^{1 - \theta(2 - \alpha)}}{|(\alpha - 2)(\alpha - 3)|}$ goes to zero by the
choice of $\theta$. The term $\left|
G^{(\alpha)}_0(h_{0,\varepsilon}(x)) - G^{(\alpha)}_0(h_0(x))
\right|$ goes to zero for almost every $x \in \Omega$ because
$G^{(\alpha)}_0(z)$ is continuous everywhere except at $z=0$.

The proof is similar for the case $\alpha = n - 1$ and $\alpha = n
- 2$.
\end{proof}

\subsection{Proof of Theorem~\ref{C:Th4} for the case $0 < n < 2$}

Let $0 < n < 2$, and let $\text{supp}\, h_0 \subseteq (-r_0,r_0)
\Subset \Omega$. For an arbitrary $s \in (0, a - r_0)$ and $\delta
> 0$ we consider the families of sets
\begin{equation}\label{G:om}
\Omega (s) = \Omega \setminus (-r_0 - s, r_0 + s),\ Q_T(s)= (0,T)
\times \Omega (s).
\end{equation}
We introduce a nonnegative cutoff function $\eta(\tau)$ from the
space $C^2(\mathbb{R}^1)$ with the following properties:
\begin{equation}\label{G:test1}
\eta(\tau) = \left\{
\begin{aligned}
\hfill 0 \  & \ \tau \leqslant 0,\\
\hfill \tau^2(3 -2\tau) \  & \ 0  < \tau < 1,\\
\hfill  1 \ & \ \tau \geqslant 1.
\end{aligned}\right.
\end{equation}
Next we introduce our main cut-off functions $\eta _{s,\delta }
(x) \in C^2 (\bar{\Omega})$ such that $0 \leqslant \eta_{s,\delta
} (x) \leqslant 1 \  \forall\,  x \in \bar{\Omega}$ and possess
the following properties:
\begin{equation}\label{G:test2}
\eta _{s,\delta } (x) = \eta \bigl( \tfrac{|x| - (r_0 +
s)}{\delta} \bigr) = \left\{
\begin{aligned} \hfill  1 \;
& , x \in \Omega(s + \delta),\\
\hfill  0\; & , x \in \Omega(s ), \\
\end{aligned}\right. \ \ \  | (\eta _{s,\delta })_x| \leqslant \tfrac{3} {\delta },\
| (\eta _{s,\delta })_{xx}| \leqslant \tfrac{6} {\delta^2}
\end{equation}
for all $s > 0,\ \delta >0 : r_0 + s + \delta < a$. Choosing
$\zeta^4(x,t) = \eta _{s,\delta } (x) e^{-\tfrac{t}{T}}$, from
(\ref{G:main1}) we arrive at
\begin{multline}\label{G:main2}
\int\limits_{\Omega (s + \delta)} {h^{\alpha - n + 2}(T)\,dx} +
\tfrac{1}{T} \iint\limits_{Q_T(s + \delta)} { h^{\alpha - n +
2}\,dx dt} + C\iint\limits_{Q_T(s + \delta)} { (h^{\tfrac{\alpha +
2}{2}})^2_{xx} \,dx dt} \leqslant \\
\tfrac{C}{\delta^4} \iint\limits_{Q_T(s)} { h^{\alpha +2}\,dx dt}
+ C \iint\limits_{Q_T(s)} { h^{2(m - n + 1)+ \alpha} \,dx dt} =: C
\sum \limits_{i = 1}^2 {\delta^{- \alpha_i}\iint\limits_{Q_T(s)}{
h^{\xi_i}}}
\end{multline}
for all $s > 0$, where we consider $(n - 1)_+ < \alpha < 1$. We apply the
Gagliardo-Nirenberg interpolation inequality (see Lemma~\ref{A.4})
in the region $\Omega(s+\delta)$ to a function $v := h^
{\tfrac{\alpha + 2}{2}}$ with $a = \tfrac{2\xi_i }{\alpha + 2},\ b =
\tfrac{2(\alpha - n +2)}{\alpha + 2},\ d = 2,\ i = 0,\ j = 2$, and
$\theta_i = \tfrac{(\alpha + 2)(\xi_i - \alpha + n -
2)}{\xi_i(4\alpha - 3n + 8)}$ under the conditions:
\begin{equation}\label{G:con1'}
\alpha - n + 2 < \xi_i   \text{ for } i = 1,2.
\end{equation}
Integrating the resulted inequalities with respect to time and
taking into account (\ref{G:main2}), we arrive at the following
relations:
\begin{equation}\label{G:in11'}
\iint\limits_{Q_T(s +\delta)}{ h^{\xi_i}} \leqslant C\, T^{1 -
\tfrac{\theta_{i}\xi_i}{\alpha + 2} } \Biggl( \sum \limits_{i =
1}^2 {\delta^{- \alpha_i}\iint\limits_{Q_T(s)}{
h^{\xi_i}}}\Biggr)^{1 + \nu_{i}} \!\!\!\!\! + C\, T \Biggl( \sum
\limits_{i = 1}^2 {\delta^{- \alpha_i}\iint\limits_{Q_T(s)}{
h^{\xi_i}}}\Biggr)^{\tfrac{\xi_i} {\alpha - n + 2}}\!\!\!\!,
\end{equation}
where $\nu_{i} = \tfrac{4(\xi_i - \alpha + n - 2)}{4\alpha - 3n +
8}$. These inequalities are true provided that
\begin{equation}\label{G:con2'}
\tfrac{\theta_{i}\xi_i}{\alpha + 2} < 1 \Leftrightarrow  \xi_i <
5\alpha - 4n + 10 \text{ for } i = 1,2.
\end{equation}
Simple calculations show that inequalities (\ref{G:con1'}) and
(\ref{G:con2'}) hold with some $(n - 1)_+ < \alpha < 1$ if and
only if
$$
0 < n < 2, \ \ \tfrac{n}{2} < m < 6 - n.
$$
Since all integrals on the right-hand sides of (\ref{G:in11'})
vanish as $T \to 0$, the finite speed of propagations  follows
from (\ref{G:in11'}) by applying Lemma~\ref{A.5} with $s_1 = 0$
and sufficiently small $T$. Hence,
\begin{equation}\label{G:sp1}
\textnormal{supp}\,h(T,.) \subset (-r_0 - \Gamma_0 (T), r_0 +
\Gamma_0 (T) ) \Subset \Omega \ \text{for all } T: 0 \leqslant T
\leqslant T_{speed}.
\end{equation}

%%%%%%%%%%%%%%%%%%%%%%%%%%%%%%%%%%%%%%%%%%%%%
Due to (\ref{G:sp1}), we can consider the solution $h(x,t)$ with
compact support in the whole space $\mathbb{R}^1$ and for $T
\leqslant T_{speed}$. In this case, we can repeat the previous
procedure for $\Omega(s) = \mathbb{R}^1 \setminus (-r_0 - s, r_0 +
s)$ and we obtain
\begin{equation}\label{G:int11-2}
G_i(s + \delta) := \iint\limits_{Q_T(s +\delta)}{ h^{\xi_i}}
\leqslant C\, T^{1 - \tfrac{\theta_{i}\xi_i}{\alpha + 2} } \Biggl(
\sum \limits_{i = 1}^2 {\delta^{- \alpha_i}\iint\limits_{Q_T(s)}{
h^{\xi_i}}}\Biggr)^{1 + \nu_{i}},
\end{equation}
instead of (\ref{G:in11'}), and
\begin{equation}\label{G:sp1-2}
\Gamma_0(T) = C \bigl(T^{(1 - \tfrac{\theta_{1}\xi_1}{\alpha + 2}
)(1+ \nu_2)}T^{(1 - \tfrac{\theta_{2}\xi_2}{\alpha + 2} )\nu_1(1+
\nu_1)} (G(0))^{\nu_1}\bigr)^{\tfrac{1}{4(1+ \nu_1)(1+ \nu_2)}} ,
\end{equation}
$$
H(s) = C T^{(1 - \tfrac{\theta_{2}\xi_2}{\alpha + 2} )(1+
\nu_1)}T^{(1 - \tfrac{\theta_{1}\xi_1}{\alpha + 2} )\nu_2(1+
\nu_2)} (G_2(0))^{\nu_2},
$$
where
\begin{equation}\label{G:sp1-3}
G(0) = C (T^{(1 - \tfrac{\theta_{1}\xi_1}{\alpha + 2} )(1+
\nu_2)}(G_2(0))^{1+ \nu_1} + T^{(1 - \tfrac{\theta_{2}\xi_2}{\alpha
+ 2} )(1+ \nu_1)} (G_1(0))^{1+ \nu_2}).
\end{equation}
Now we need to estimate $G(0)$. With that end in view, we obtain
the following estimates:
\begin{equation}\label{G:est11}
G_i(0) \leqslant C_1\,(C_2 + C_3 T)^{\tfrac{\xi_i -1}{\alpha + 5}}
T^{1 - \tfrac{\xi_i -1}{\alpha + 5} }, \ \ i=1,2.
\end{equation}
where $1 < \xi_i < \alpha + 6$, and $C_i$ depends on $h_0(x)$
only. Really, applying the Gagliardo-Nirenberg interpolation
inequality (see Lemma~\ref{A.4}) in $\Omega = \mathbb{R}^1$ to a
function $v := h^{\tfrac{\alpha + 2}{2}}$ with $a = \tfrac{2\xi_i
}{\alpha + 2},\ b = \tfrac{2}{\alpha + 2},\ d = 2,\ i = 0,\ j = 2$,
and $\widetilde{\theta}_i = \tfrac{(\alpha + 2)(\xi_i -
1)}{\xi_i(\alpha +5)}$ under the condition $\xi_i > 1$, we deduce
that
\begin{equation}\label{G:est12}
\int\limits_{\mathbb{R}^1}{ h^{\xi_i}dx} \leqslant
c\,\|h_0\|_1^{\tfrac{2(3\xi_i + \alpha + 2)}{(\alpha + 2)(\alpha +
5)}} \Bigl( \int\limits_{\mathbb{R}^1} { (h^{\tfrac{\alpha +
2}{2}})^2_{xx} \,dx} \Bigr)^{\tfrac{\xi_i - 1}{\alpha + 5}}.
\end{equation}
Integrating (\ref{G:est12}) with respect to time and taking into
account the H\"{o}lder inequality ($\tfrac{\xi_i - 1}{\alpha + 5}
\leqslant 1 \Rightarrow \xi_i < \alpha + 6 \Rightarrow m \leqslant
n + 2$), we arrive at the following relations:
\begin{equation}\label{G:est13}
G_i(0)\leqslant c\,\|h_0\|_1^{\tfrac{2(3\xi_i + \alpha +
2)}{(\alpha + 2)(\alpha + 5)}} T^{1 - \tfrac{\xi_i - 1}{\alpha +
5}} \Bigl( \iint\limits_{Q_T} { (h^{\tfrac{\alpha + 2}{2}})^2_{xx}
\,dx} \Bigr)^{\tfrac{\xi_i - 1}{\alpha + 5}}, \ m \leqslant n + 2.
\end{equation}
From (\ref{G:est13}), due to (\ref{E:b9a}) and (\ref{F:globest}),
we find (\ref{G:est11}).

Inserting (\ref{G:est11}) into (\ref{G:sp1-3}), we obtain after
straightforward computations that
\begin{equation}\label{G:est14}
\Gamma_0(T) \leqslant \Gamma(T) = C\,T^{\tfrac{1}{n + 4}}
\end{equation}
for $T \leqslant T_{speed}$, where $\tfrac{n}{2} < m \leqslant n +
2$, and $T_{speed}$ finds from the condition $H(0) < 1$.

In the case of $m > n + 2$, we have (\ref{G:est13}) for $i = 1$
only, i.\,e.
\begin{equation}\label{G:est13-3}
G_1(0)\leqslant c\,\|h_0\|_1^{\tfrac{2(3\xi_1 + \alpha +
2)}{(\alpha + 2)(\alpha + 5)}} T^{1 - \tfrac{\xi_1 - 1}{\alpha +
5}} \Bigl( \iint\limits_{Q_T} { (h^{\tfrac{\alpha + 2}{2}})^2_{xx}
\,dx} \Bigr)^{\tfrac{\xi_1 - 1}{\alpha + 5}}.
\end{equation}
Hence
\begin{multline}\label{G:hren-0}
G_1(0)\leqslant C ( C +   C ( 1 -  (1 - \tfrac{k_1(2m -
n)\|h_0\|_{H^1}^{2m - n}}{2} T)^{\tfrac{m + 2}{2m-n}}) T^{1 -
\tfrac{\xi_1 - 1}{\alpha + 5}}
 \leqslant C_0 T^{1 -
\tfrac{\xi_1 - 1}{\alpha + 5}}.
\end{multline}
Next we need estimate the term $G_2(0)$. Really, by using the
inequality (\ref{Lady_ineq}),
%due to (\ref{K:b1}),
we obtain
\begin{multline}\label{G:hren}
G_2(0)\leqslant b_2 \|h_0\|_1^{\tfrac{\xi_2+ 2}{3}}
\int\limits_{0}^T {\|h_x\|_2^{\tfrac{2}{3}(\xi_2 - 1)}dt} =
 b_2 \|h_0\|_1^{\tfrac{\xi_2+ 2}{3}} \|h_0\|_{H^1}^{\tfrac{2(\xi_2 -
1)}{3}} \times \\
\int\limits_{0}^T {(1 - \tfrac{k_1(2m - n)\|h_0\|_{H^1}^{2m -
n}}{2}  t)^{- \tfrac{2(\xi_2 - 1)}{3(2m - n)}}dt} =  b_2
\|h_0\|_1^{\tfrac{\xi_2+ 2}{3}} \|h_0\|_{H^1}^{- \tfrac{2m + n -
2(\alpha + 1))}{3}} \times \\
\tfrac{6}{k_1(2m + n - 2(\alpha + 1))}
 \bigl( 1 -  (1 - \tfrac{k_1(2m
- n)\|h_0\|_{H^1}^{2m - n}}{2} T)^{\tfrac{2m + n - 2(\alpha +
1)}{3(2m-n)}}\bigr) \leqslant C_0
\end{multline}
for all $T \leqslant T_{loc}$, where $C_0$ depends on $h_0(x)$.
Inserting (\ref{G:est11}) for $i = 1$ and (\ref{G:hren}) into
(\ref{G:sp1-3}), we obtain (\ref{G:est14}).

\subsection{Local energy estimate}

\begin{lemma}\label{G:lem2}
Let $n \in \bigl(\tfrac{1}{2}, 3 \bigr)$, and $m >
\tfrac{2(n-1)_+}{3}$, and $\beta
> \tfrac{1 -n}{3}$. Let $\zeta \in C^{2}(\bar{\Omega})$
such that $\text{supp}\,\zeta$ in $\Omega$, and
$(\zeta^6)^{\prime} = 0$ on $\partial\Omega$, and $\zeta(-a) =
\zeta(a)$. Then there exist constants $C_i \ (i=1,2,3)$ dependent
on $n,\,m,\,a_0$, and $a_1$, independent of $\Omega$ and
$\varepsilon$, such that for all $T \leqslant T_{loc}$
\begin{multline}\label{G:main3}
\int\limits_{\Omega} { \zeta^6 h^2_x(x,T) \,dx} +
\int\limits_{\Omega} { \zeta^4 h^{\beta+1}(T) \,dx} + C_1
\iint\limits_{Q_T} {\zeta^6 (h^{\tfrac{n + 2}{2}})_{xxx}^2 \,dx dt}
\leqslant \int\limits_{\Omega} {\zeta^6 h_{0}^2(x)\,dx} +\\
\int\limits_{\Omega} {\zeta^4 h_{0}^{\beta + 1}\,dx} +
C_2\iint\limits_{Q_T} {h^{n + 2} (\zeta_x^6 + \zeta^3
|\zeta_{xx}|^3)\,dx dt} +  C_3 \iint\limits_{Q_T} { h^{3m -2n +2}\zeta^6 \,dx dt} + \\
C_4 \iint\limits_{Q_T} { \{ h^{n + 2\beta} \zeta_x^2 +
\chi_{\{\zeta
> 0\}} h^{n + 3\beta -1} + h^{\tfrac{6m - n + 6\beta +
4}{5}}\zeta^{\tfrac{12}{5}} \zeta_x^{\tfrac{6}{5}} + h^{\tfrac{3m - n
+ 3\beta + 1}{2}}\zeta^{3} \}\,dx dt}.
\end{multline}
\end{lemma}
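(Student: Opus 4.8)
The plan is to adapt the local energy method for fourth-order degenerate parabolic equations (cf.\ \cite{BSS,HulShish,B6}), working first on the positive, classical, regularized solution $h=h_\varepsilon$ of $(\mathrm{P}_{0,\varepsilon})$ from Lemma~\ref{MainAE2} and then letting $\varepsilon\to0$ along the subsequence that produces the strong solution of Theorem~\ref{C:Th2}. Since $\zeta$ is independent of $t$, the natural move is to test (\ref{D:1r'}) (with $\delta=0$) against the \emph{blended} multiplier
$$
\psi := -(\zeta^6 h_x)_x + \lambda\,\zeta^4 h^\beta
$$
for a suitable constant $\lambda>0$, and integrate over $Q_T$. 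The $-(\zeta^6 h_x)_x$ part contributes $\tfrac{d}{dt}\int\tfrac12\zeta^6 h_x^2\,dx$ and the $\zeta^4 h^\beta$ part contributes $\tfrac{\lambda}{\beta+1}\tfrac{d}{dt}\int\zeta^4 h^{\beta+1}\,dx$, so after integrating in time one recovers the $t=T$ and $t=0$ boundary values on the two sides of (\ref{G:main3}). Note that $\beta>\tfrac{1-n}{3}$ is equivalent to $n+3\beta-1>0$ and $m>\tfrac{2(n-1)_+}{3}$ to $3m-2n+2>0$; together with $n<3$ these keep all the powers of $h$ appearing on the right of (\ref{G:main3}) strictly positive.

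Substituting $h_t=-\big(f_\varepsilon(h)(a_0 h_{xxx}+a_1 D''_\varepsilon(h)h_x)\big)_x$ and integrating by parts, the $-(\zeta^6 h_x)_x$ multiplier produces the leading dissipation $-a_0\iint_{Q_T}\zeta^6 f_\varepsilon(h)h_{xxx}^2$ together with terms carrying a derivative of $\zeta^6$ or the factor $a_1 h^{m-n}$; the $\zeta^4 h^\beta$ multiplier produces, after one integration by parts, terms of the type $\iint\zeta^4 h^{n+\beta-1}h_x h_{xxx}$, $\iint(\zeta^4)_x h^{n+\beta}h_{xxx}$ and their $a_1$-analogues, which I integrate by parts once more to trade $h_{xxx}$ for $h_{xx}^2$ (times lower powers of $h$ and $h_x$), producing in particular a term $-\lambda a_0\beta\iint\zeta^4 h^{n+\beta-1}h_{xx}^2$ that is dissipative when $\beta>0$. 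Using $f_\varepsilon(h)\sim h^n$ and the elementary identity rewriting $h^n h_{xxx}^2$ as $\tfrac{4}{(n+2)^2}\big((h^{(n+2)/2})_{xxx}\big)^2$ modulo the cross terms $h^{n/2-1}h_x h_{xx}(h^{(n+2)/2})_{xxx}$ and $h^{n/2-2}h_x^3(h^{(n+2)/2})_{xxx}$ and the pure lower-order terms $h^{n-2}h_x^2 h_{xx}^2$, $h^{n-3}h_x^4 h_{xx}$, $h^{n-4}h_x^6$, the leading term becomes $-C_1\iint_{Q_T}\zeta^6\big((h^{(n+2)/2})_{xxx}\big)^2$ plus a remainder that can still be controlled by Young's inequality.

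The heart of the argument is then the bookkeeping: every term that is not one of the three nonnegative quantities on the left of (\ref{G:main3}) must, after Young's and Cauchy--Schwarz inequalities with small parameters (absorbing on the left a fixed fraction of $\iint\zeta^6((h^{(n+2)/2})_{xxx})^2$ and of the dissipative $h_{xx}^2$-terms), be dominated by the integrals on the right of (\ref{G:main3}). The powers $6$ and $4$ on $\zeta$ are chosen precisely so that an integration by parts producing $\zeta^5\zeta_x$ or $\zeta^3\zeta_{xx}$ can be split by Young's inequality so that one factor recombines into $\zeta^6\cdot(\text{good quantity})$ while the other carries a genuine power of $\zeta_x$ or $\zeta_{xx}$; this forces the RHS weights $\zeta_x^6+\zeta^3|\zeta_{xx}|^3$, $\zeta_x^2$, $\zeta^{12/5}\zeta_x^{6/5}$, $\zeta^3$ and the unweighted $\chi_{\{\zeta>0\}}$-term, and the RHS exponents of $h$ (namely $n+2$, $3m-2n+2$, $n+2\beta$, $n+3\beta-1$, $\tfrac{6m-n+6\beta+4}{5}$, $\tfrac{3m-n+3\beta+1}{2}$) are exactly those forced by the splittings. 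The potentially troublesome positive contribution is the destabilizing $a_1\iint\zeta^6 h^m h_{xx}^2$-type term: it should be handled not by an $L^\infty$ bound (which would spoil $\varepsilon$- and $\Omega$-independence) but by a Gagliardo--Nirenberg interpolation of $\iint\zeta^6 h^m h_{xx}^2$ between the dissipation $\iint\zeta^6((h^{(n+2)/2})_{xxx})^2$ and the lower-order $\iint\zeta^6 h^{3m-2n+2}$, whose admissibility is exactly what requires $n\in(\tfrac12,3)$ and $m>\tfrac{2(n-1)_+}{3}$.

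I expect the two main obstacles to be: (i) checking that all the Young and Gagliardo--Nirenberg exponents used are admissible — i.e.\ that the interpolation stays ``subcritical'' — uniformly for $n\in(\tfrac12,3)$, $m>\tfrac{2(n-1)_+}{3}$, $\beta>\tfrac{1-n}{3}$, and keeping careful track of which lower-order integral on the right of (\ref{G:main3}) each remainder lands in; and (ii) ensuring every constant is independent of $\Omega$ and $\varepsilon$, which holds because the hypotheses $\mathrm{supp}\,\zeta\subset\Omega$, $(\zeta^6)'=0$ on $\partial\Omega$ and $\zeta(-a)=\zeta(a)$ annihilate every boundary term and because no Poincar\'e-type constant enters at this stage. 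Once the inequality is established for $h_\varepsilon$ and integrated in $t$, I would pass $\varepsilon\to0$ along the chosen subsequence, using weak lower semicontinuity on the left for $\int\zeta^6 h_x^2(T)$ and for $C_1\iint\zeta^6((h^{(n+2)/2})_{xxx})^2$, Fatou's lemma for $\int\zeta^4 h^{\beta+1}(T)$, and the uniform bounds of Lemma~\ref{MainAE2} together with dominated convergence (exactly as in the proof of Lemma~\ref{G:lem1}) on the right, to obtain (\ref{G:main3}) for the strong solution $h$.
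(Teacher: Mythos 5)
Your overall architecture matches the paper's: the paper multiplies (\ref{D:1r'}) by $-(\zeta^6 h_x)_x$ and, separately, by $\zeta^4(h+\gamma)^\beta$, integrates by parts, applies Young's inequality, adds the two resulting inequalities, and passes $\varepsilon\to 0$; your ``blended'' multiplier $-(\zeta^6 h_x)_x + \lambda\zeta^4 h^\beta$ is a cosmetic repackaging of the same move. The two choices of exponents on $\zeta$ and the list of right-hand side terms you derive are also the ones in (\ref{G:main3}).

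However, there is a genuine gap in how you propose to close the estimate. After integrating by parts you land with remainder integrals such as $\iint\zeta^6 h^{n-4}h_x^6$, $\iint\zeta^6 h^{n-2}h_x^2 h_{xx}^2$ and $\iint\zeta^6 h^{n-1}|h_{xx}|^3$, and you assert these ``can still be controlled by Young's inequality.'' They cannot: no pointwise Young-type splitting bounds $\iint\zeta^6 h^{n-4}h_x^6$ by $\iint\zeta^6 f_\varepsilon(h)h_{xxx}^2$ plus the admissible lower-order terms. What is needed, and what the paper invokes, is the Gr\"un/Bernis weighted functional inequality recorded as Lemma~\ref{A.3} in the appendix, namely
$\int\varphi^6\{v^{n-4}|\nabla v|^6 + v^{n-2}|D^2v|^2|\nabla v|^2\}
\leq c\{\int\varphi^6 v^n|\nabla\Delta v|^2 + \int_{\{\varphi>0\}}v^{n+2}|\nabla\varphi|^6\}$
together with its companion producing $\int\varphi^6|\nabla\Delta v^{(n+2)/2}|^2$. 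This is a nontrivial integration-by-parts lemma, not an ``elementary identity,'' and the hypothesis $n\in(\tfrac12,3)$ in Lemma~\ref{G:lem2} is precisely the hypothesis under which Lemma~\ref{A.3} holds in one space dimension. Without it your scheme does not close.

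Relatedly, you misidentify where the constraints come from. You propose a Gagliardo--Nirenberg interpolation between $\iint\zeta^6\big((h^{(n+2)/2})_{xxx}\big)^2$ and $\iint\zeta^6 h^{3m-2n+2}$ to tame the $a_1$-destabilizing contribution, and you attribute both $n\in(\tfrac12,3)$ and $m>\tfrac{2(n-1)_+}{3}$ to that interpolation. The paper uses no such interpolation at this stage. The terms carrying $a_1 f_\varepsilon(h)D''_\varepsilon(h)\sim a_1 h^m$ in $I_1,I_2,I_3$ are each split by a plain Young inequality so that one factor is reabsorbed into the dissipative combination $\iint\zeta^6\{f_\varepsilon(h)h_{xxx}^2 + h^{n-4}h_x^6 + \dots\}$ (subsequently handled via Lemma~\ref{A.3}) and the other lands on $\iint h^{3m-2n+2}\zeta^6$; the hypothesis $m>\tfrac{2(n-1)_+}{3}$ enters only to guarantee $3m-2n+2>0$ and to ensure the $\zeta^4 h^\beta$-step is admissible for all $\beta>\tfrac{1-n}{3}$. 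So the constants and hypotheses of the lemma are not forced by any interpolation inequality in the sense you describe, and introducing a Gagliardo--Nirenberg step would be both unnecessary and misleading as to where the parameter restrictions actually originate.
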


\begin{proof}[Sketch of Proof of Lemma~\ref{G:lem2}]
Let $\phi(x) = \zeta^6(x)$. Multiplying (\ref{D:1r'}) by $-
(\phi(x)h_x)_x$, and integrating on $Q_T$, yields
\begin{multline}\label{G:in7}
\tfrac{1}{2}\int\limits_{\Omega} { \phi(x)h^2_x(x,T) \,dx} -
\tfrac{1}{2} \int\limits_{\Omega} {\phi(x)
h_{0\varepsilon,x}^2(x)\,dx} = \\
- \iint\limits_{Q_T} {f_{\varepsilon}(h) (a_0 h_{xxx} + a_1
D''_{\varepsilon}(h)h_x) (\phi_{xx} h_x + 2 \phi_x h_{xx} + \phi
h_{xxx})\,dx dt} = \\
- \iint\limits_{Q_T} {f_{\varepsilon}(h) (a_0 h_{xxx} + a_1
D''_{\varepsilon}(h)h_x) \phi_{xx} h_x \,dx dt} - \\
2 \iint\limits_{Q_T} {f_{\varepsilon}(h) (a_0 h_{xxx} + a_1
D''_{\varepsilon}(h)h_x) \phi_x h_{xx}\,dx dt} -
\\
\iint\limits_{Q_T} {f_{\varepsilon}(h) (a_0 h_{xxx} + a_1
D''_{\varepsilon}(h)h_x) \phi h_{xxx}\,dx dt}  =: I_{1} + I_{2} +
I_3.
\end{multline}

We now bound the terms $I_1$, $I_2$ and $I_3$.  First,
\begin{multline}\label{G:in8}
I_{1}  = - a_0 \iint\limits_{Q_T} {\phi_{xx} f_{\varepsilon}(h)
h_{xxx}h_x\,dx dt} - a_1 \iint\limits_{Q_T} {\phi_{xx}
f_{\varepsilon}(h)D''_{\varepsilon}(h) h_x^2 \,dx dt} =\\
- 6 a_0 \iint\limits_{Q_T} {f_{\varepsilon}(h) h_{xxx}h_x
\zeta^4(5 \zeta_x^2 + \zeta\zeta_{xx})\,dx dt} - 6a_1
\iint\limits_{Q_T} {f_{\varepsilon}(h)D''_{\varepsilon}(h) h_x^2
\zeta^4(5 \zeta_x^2 + \zeta\zeta_{xx})\,dx dt}\leqslant \\
\epsilon_1 \iint\limits_{Q_T} {\zeta^6 \{f_{\varepsilon}(h)
h_{xxx}^2 + h^{n-4}h_x^6\}\,dx dt} + C(\epsilon_1)
\iint\limits_{Q_T} {h^{n + 2}
(\zeta_x^6 + \zeta^3 |\zeta_{xx}|^3)\,dx dt} + \\
C(\epsilon_1) \iint\limits_{Q_T} {h^{3m -2n +2}\zeta^6\,dx dt},
\end{multline}
\begin{multline}\label{G:in9}
I_{2}  = - 2a_0 \iint\limits_{Q_T} {\phi_{x} f_{\varepsilon}(h)
h_{xxx}h_{xx}\,dx dt} - 2 a_1 \iint\limits_{Q_T} {\phi_{x}
f_{\varepsilon}(h)D''_{\varepsilon}(h)h_{xx} h_x \,dx dt} =\\
- 12a_0 \iint\limits_{Q_T} { f_{\varepsilon}(h) h_{xxx}h_{xx}
\zeta^5 \zeta_x \,dx dt} - 12 a_1 \iint\limits_{Q_T} {
f_{\varepsilon}(h)D''_{\varepsilon}(h)h_{xx} h_x \zeta^5 \zeta_x
\,dx dt} \leqslant \\
\epsilon_2 \iint\limits_{Q_T} {\zeta^6 \{ f_{\varepsilon}(h)
h_{xxx}^2 + h^{n - 2} h_x^2 h^2_{xx} + h^{n - 1} |h_{xx}|^3\}\,dx
dt} + \\
C(\epsilon_2) \iint\limits_{Q_T} { h^{n + 2} \zeta_x^6 \,dx dt} +
C(\epsilon_2)\iint\limits_{Q_T} { h^{3m - 2n + 2}\zeta^6 \,dx dt}
,
\end{multline}
\begin{multline}\label{G:in10}
I_{3}  = - a_0 \iint\limits_{Q_T} {\phi f_{\varepsilon}(h)
h_{xxx}^2 \,dx dt} - a_1 \iint\limits_{Q_T} {\phi
f_{\varepsilon}(h)D''_{\varepsilon}(h) h_{xxx} h_x \,dx dt}
\leqslant \\
- a_0 \iint\limits_{Q_T} {\zeta^6 f_{\varepsilon}(h) h_{xxx}^2
\,dx dt} +  \epsilon_3 \iint\limits_{Q_T} {\zeta^6 (
f_{\varepsilon}(h)h^2_{xxx} + h^{n- 4}h_x^6) \,dx dt} +\\
C(\epsilon_3) \iint\limits_{Q_T} { h^{3m -2n +2}\zeta^6 \,dx dt}.
\end{multline}

Now, multiplying (\ref{D:1r'}) by $\zeta^4 (h + \gamma)^{\beta}$,
$\beta
> \tfrac{1 -n}{3}, \ \gamma > 0$ and integrating on $Q_T$, using
the Young's inequality (\ref{Young}), letting $\gamma \to 0$, we
obtain the following estimate
\begin{multline}\label{G:in11}
\int\limits_{\Omega} { \zeta^4 h^{\beta+1}(T) \,dx} \leqslant
\int\limits_{\Omega} {\zeta^4 h_{0\varepsilon}^{\beta + 1}\,dx} +
\epsilon_4 \iint\limits_{Q_T} {\zeta^6 \{ f_{\varepsilon}(h)
h_{xxx}^2 + h^{n
- 4} h_x^6 \}\,dx dt} + \\
C(\epsilon_4) \iint\limits_{Q_T} { \{ h^{n + 2\beta} \zeta_x^2 +
\chi_{\{\zeta
> 0\}} h^{n + 3\beta -1} + h^{\tfrac{6m - n + 6\beta +
4}{5}}\zeta^{\tfrac{12}{5}} \zeta_x^{\tfrac{6}{5}} + h^{\tfrac{3m - n
+ 3\beta + 1}{2}}\zeta^{3} \}\,dx dt},
\end{multline}
where $\beta > \max \{\tfrac{1 -n}{3}, - m + \tfrac{n - 1}{3}\} =
\tfrac{1 -n}{3}$ as $m > \tfrac{2(n-1)_+}{3}$.

If we now add inequalities (\ref{G:in7}) and (\ref{G:in11}), in
view of (\ref{G:in8})--(\ref{G:in10}), then, applying
Lemma~\ref{A.3}, choosing $\epsilon_i
>0$, and letting $\varepsilon \to 0$, we obtain (\ref{G:main3}).
\end{proof}

\subsection{Proof of Theorem~\ref{C:Th4} for the case $\tfrac{1}{2} <
n < 3$}

Let $\eta_{s,\delta}(x)$ denote by (\ref{G:test2}). Setting
$\zeta^6(x) = \eta_{s,\delta}(x)$ into (\ref{G:main3}), after
simple transformations, we obtain
\begin{multline}\label{G:main4}
\int\limits_{\Omega(s + \delta)} {h^2_x(x,T) \,dx}+
\int\limits_{\Omega(s + \delta)} {h^{\beta+1}(T) \,dx} + C
\iint\limits_{Q_T(s+ \delta)} {(h^{\tfrac{n + 2}{2}})_{xxx}^2 \,dx
dt} \leqslant \\
\tfrac{C}{\delta^6}\iint\limits_{Q_T(s)} {h^{n + 2} \,dx dt} +
\tfrac{C}{\delta^2}\iint\limits_{Q_T(s)} {h^{n + 2\beta} \,dx dt}
+ \tfrac{C}{\delta^{\tfrac{6}{5}}}\iint\limits_{Q_T(s)}
{h^{\tfrac{6m - n + 6\beta + 4}{5}} \,dx dt} + \\
C \iint\limits_{Q_T(s)} { \{ h^{3m -2n +2} + h^{n + 3\beta -1} +
h^{\tfrac{3m - n + 3\beta + 1}{2}}\} \,dx dt}=: C \sum \limits_{i =
1}^6 {\delta^{- \alpha_i}\iint\limits_{Q_T(s)}{ h^{\xi_i}}}
\end{multline}
for all for all $s > 0,\ \delta >0 : r_0 + s + \delta < a$. We
apply the Gagliardo-Nirenberg interpolation inequality
(Lemma~\ref{A.4}) in the region $\Omega(s+\delta)$ to a function
$v := h^{\tfrac{n+2}{2}}$ with $a = \tfrac{2\xi_i }{n + 2},\ b =
\tfrac{2(\beta + 1)}{n + 2},\ d = 2,\ i = 0,\ j = 3$, and $\theta_i
= \tfrac{(n + 2)(\xi_i - \beta - 1)}{\xi_i(n + 5\beta + 7)}$ under
the conditions:
\begin{equation}\label{G:con1}
\beta < \xi_i - 1  \text{ for } i = \overline{1,6}.
\end{equation}
Integrating the resulted inequalities with respect to time and
taking into account (\ref{G:main4}), we arrive at the following
relations:
\begin{equation}\label{G:in12}
\iint\limits_{Q_T(s +\delta)}{ h^{\xi_i}} \leqslant C\, T^{1 -
\tfrac{\theta_{i}\xi_i}{n + 2} } \Biggl( \sum \limits_{i = 1}^6
{\delta^{- \alpha_i}\iint\limits_{Q_T(s)}{ h^{\xi_i}}}\Biggr)^{1 +
\nu_{i}} \!\!\!\!\! + C\, T \Biggl( \sum \limits_{i = 1}^5
{\delta^{- \alpha_i}\iint\limits_{Q_T(s)}{
h^{\xi_i}}}\Biggr)^{\tfrac{\xi_i} {\beta + 1}}\!\!\!\!,
\end{equation}
where $\nu_{i} = \tfrac{6(\xi_i - \beta - 1)}{n + 5\beta + 7}$.
These inequalities are true provided that
\begin{equation}\label{G:con2}
\tfrac{\theta_{i}\xi_i}{n + 2} < 1 \Leftrightarrow \beta >
\tfrac{\xi_i - n - 8}{6} \text{ for } i = \overline{1,6}.
\end{equation}
Simple calculations show that inequalities (\ref{G:con1}) and
(\ref{G:con2}) hold with some $\beta
> \tfrac{1 - n}{3}$ if and only if
$$
\tfrac{1}{2} < n < 3, \ \ m > \tfrac{n}{2}.
$$
Since all integrals on the right-hand sides of (\ref{G:in12})
vanish as $T \to 0$, the finite speed of propagations  follows
from (\ref{G:in12}) by applying Lemma~\ref{A.5} with $s_1 = 0$ and
sufficiently small $T$. Hence,
\begin{equation}\label{G:sp}
\textnormal{supp}\,h(T,.) \subset (-r_0 - \Gamma (T), r_0 + \Gamma
(T) ) \Subset \Omega \ \text{for all } T: 0 \leqslant T \leqslant
T_{speed}.
\end{equation}

\section{Finite time blow up}

% {\bf In the introduction, we referred to the ``second moment of the
% entropy''.  In fact, it's not the BF entropy but something that's closely
% related to it.  We need to put some words in here to clarify this a bit
% for the reader\dots}

\begin{lemma}\label{F:lem0}
Let $0 < n < 2,\, m \geqslant  \max \{n + 2, 4 - n\}$. Then the weak
solution $h$ from Theorem~\ref{C:Th1} satisfies the
second-moment inequality:
\begin{multline}\label{F:d9-01}
e^{-\widetilde{B}(T)}
\int\limits_{\Omega} {x^2 \widetilde{G}_0 (h(x,T))\,dx} \leqslant
 \int \limits_{\Omega} {x^2 \widetilde{G}_{0}
(h_0)\,dx} + \\
\int \limits_{0}^T { \biggl(k_1 \mathcal{E}_{0}(0) + W'(t) + k_2
\int\limits_{\Omega} {x^2 h^2_{xx} \,dx} \biggr) e^{-\widetilde{B}(t)}
dt}
\end{multline}
for all $T \in [0, T_{loc}]$, where $k_1 = 2(4 -n),\ k_2 =
\tfrac{3a_0(n-1)}{2}$. Here
$$
\widetilde{G}_0(z) = \tfrac{1}{ 2 -n }z^{ 2 -n}, \
\widetilde{B}(T):= \tfrac{a_1^2|1 - n|( 2- n)}{2a_0(m -n+1)^2}
\int\limits_0^T {\|h(.,\tau)\|_{L^{\infty}(\Omega)}^{2m
-n}\,d\tau},
$$
\begin{equation}\label{F:W}
W(T) := \int\limits_{0}^T { x \bigl( 2a_0 h  h_{xx} - a_0(2  - n)
h_{x}^2 + 2a_1 m D_{0}(h)\bigr) \biggl |_{\partial\Omega}\,dt}.
\end{equation}
Moreover, if $h(\cdot,t)$ has a compact support in $\Omega$ then
$W'(t) \equiv 0$.
\end{lemma}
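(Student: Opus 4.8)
The plan is to derive \eqref{F:d9-01} first for the positive classical solutions $h_\varepsilon$ of the regularized problem $(\textup{P}_{0,\varepsilon})$ and then pass to the limit along the subsequence $\varepsilon_k\to0$ that yields the weak solution $h$ of Theorem~\ref{C:Th1}. Since $0<n<2$, the function $\widetilde G_0(z)=\tfrac1{2-n}z^{2-n}$ is nonnegative and smooth on $(0,\infty)$, so I may multiply \eqref{D:1r'} by $x^2\widetilde G_0'(h_\varepsilon)=x^2 h_\varepsilon^{1-n}$ and integrate over $Q_T$. Integrating by parts several times and using the periodic boundary conditions \eqref{D:2r'} — under which every boundary contribution carrying a factor $x^2$, and every $x$-independent boundary contribution, cancels — I expect to reach an identity of the form
\[
\tfrac{d}{dt}\!\int_\Omega x^2\widetilde G_0(h_\varepsilon)\,dx
= a_0(4-n)\!\int_\Omega h_{\varepsilon,x}^2\,dx + a_0(n-1)\!\int_\Omega x^2 h_{\varepsilon,xx}^2\,dx
-2a_1 m\!\int_\Omega D_0(h_\varepsilon)\,dx
-\tfrac{a_1(1-n)}{m-n+1}\!\int_\Omega x^2 h_{\varepsilon,xx}\,h_\varepsilon^{m-n+1}\,dx
+W_\varepsilon'(t)+o_\varepsilon(1),
\]
where $W_\varepsilon'$ is the boundary expression \eqref{F:W} evaluated on $h_\varepsilon$ and $o_\varepsilon(1)$ collects the errors from replacing $D_\varepsilon''(h_\varepsilon)$ by $h_\varepsilon^{m-n}$. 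Checking that the boundary terms reassemble into exactly $2a_0 h h_{xx}-a_0(2-n)h_x^2+2a_1 m D_0(h)$ on $\partial\Omega$ — in particular the coefficient $2a_1m$, which arises from an additional integration by parts of $\int x^2 h_\varepsilon^{m-n}h_{\varepsilon,x}^2$ — is the first routine but bookkeeping-heavy step.

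Next I would eliminate the two leading bulk terms. Since $m\ge 4-n$ and $D_0\ge 0$ (because $m-n+2>0$), one has $-2a_1 m\int_\Omega D_0(h_\varepsilon)\le-2a_1(4-n)\int_\Omega D_0(h_\varepsilon)$, whence
\[
a_0(4-n)\!\int_\Omega h_{\varepsilon,x}^2\,dx-2a_1 m\!\int_\Omega D_0(h_\varepsilon)\,dx
\le 2(4-n)\,\mathcal E_\varepsilon(t)\le 2(4-n)\,\mathcal E_\varepsilon(0)=k_1\,\mathcal E_\varepsilon(0),
\]
the last inequality being the energy identity \eqref{D:d2} with $\delta=0$. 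It then remains to absorb the cross term $-\tfrac{a_1(1-n)}{m-n+1}\int x^2 h_{\varepsilon,xx}h_\varepsilon^{m-n+1}$, and this is exactly what the weight $e^{-\widetilde B_\varepsilon(t)}$ is for. Writing $\tfrac{d}{dt}\big(e^{-\widetilde B_\varepsilon}\!\int x^2\widetilde G_0(h_\varepsilon)\big)=e^{-\widetilde B_\varepsilon}\big(\tfrac{d}{dt}\!\int x^2\widetilde G_0(h_\varepsilon)-\widetilde B_\varepsilon'\!\int x^2\widetilde G_0(h_\varepsilon)\big)$ and applying Young's inequality to the cross term in the form $|h_{\varepsilon,xx}|\,h_\varepsilon^{m-n+1}\le\mu\,h_{\varepsilon,xx}^2+\tfrac1{4\mu}h_\varepsilon^{2(m-n+1)}$ followed by $h_\varepsilon^{2(m-n+1)}=h_\varepsilon^{2m-n}h_\varepsilon^{2-n}\le\|h_\varepsilon(\cdot,t)\|_\infty^{2m-n}h_\varepsilon^{2-n}$, the choice $\mu=\tfrac{a_0|n-1|}{2}$ makes the resulting $h_\varepsilon^{2-n}$-term cancel against $\widetilde B_\varepsilon'\!\int x^2\widetilde G_0(h_\varepsilon)=\tfrac{a_1^2|1-n|}{2a_0(m-n+1)^2}\|h_\varepsilon\|_\infty^{2m-n}\!\int x^2 h_\varepsilon^{2-n}$, leaving the coefficient $a_0(n-1)+\mu$ in front of $\int x^2 h_{\varepsilon,xx}^2$, which equals $k_2=\tfrac{3a_0(n-1)}{2}$ when $n>1$. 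When $n<1$ the bulk term $a_0(n-1)\int x^2 h_{\varepsilon,xx}^2$ already has the favourable sign (and $k_2<0$), so the cross term should be absorbed by completing the square rather than by plain Young; reconciling the stated $k_2$ with this — i.e.\ the careful sign-management across the cases $n>1$ and $n<1$ — is the point where I expect the main difficulty.

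Having obtained the differential inequality $\tfrac{d}{dt}\big(e^{-\widetilde B_\varepsilon}\!\int x^2\widetilde G_0(h_\varepsilon)\big)\le e^{-\widetilde B_\varepsilon}\big(k_1\mathcal E_\varepsilon(0)+W_\varepsilon'+k_2\!\int x^2 h_{\varepsilon,xx}^2\big)+o_\varepsilon(1)$, I would integrate in $t$ from $0$ to $T$ (using $\widetilde B_\varepsilon(0)=0$ and $h_\varepsilon(\cdot,0)=h_{0,\varepsilon}$) and let $\varepsilon_k\to0$: $h_{\varepsilon_k}\to h$ uniformly on $\overline Q_{T_{loc}}$ gives $\int x^2\widetilde G_0(h_{\varepsilon_k}(\cdot,T))\to\int x^2\widetilde G_0(h(\cdot,T))$ and $\widetilde B_{\varepsilon_k}\to\widetilde B$; $\int x^2\widetilde G_0(h_{0,\varepsilon_k})\to\int x^2\widetilde G_0(h_0)$ and $\mathcal E_{\varepsilon_k}(0)\to\mathcal E_0(0)$ by the $H^1$-convergence of the initial data; $h_{\varepsilon_k,xx}\rightharpoonup h_{xx}$ in $L^2(Q_{T_{loc}})$ by \eqref{BF_entropy}; the $o_{\varepsilon_k}(1)$ term and the $D_0$-part of $W_{\varepsilon_k}'$ vanish by dominated convergence; and the $\int x^2 h_{xx}^2$-term is treated by the weak lower semicontinuity of the $L^2(Q_{T_{loc}})$-norm against the uniformly convergent multiplier $x\,e^{-\widetilde B_{\varepsilon_k}/2}$, with the sign of $k_2$ ensuring the semicontinuity is used in the admissible direction. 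This yields \eqref{F:d9-01} for every $T\in[0,T_{loc}]$. For the last assertion, $W'(t)$ is the sum over $\partial\Omega$ of $x\big(2a_0 h h_{xx}-a_0(2-n)h_x^2+2a_1 m D_0(h)\big)$; if $h(\cdot,t)$ has compact support in $\Omega$, then $h$ and all its $x$-derivatives vanish near $\partial\Omega$ and $D_0(h)=D_0(0)=0$ there (since $m-n+2>0$), so every summand vanishes and $W'(t)\equiv 0$.
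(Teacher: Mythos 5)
Your overall scheme — multiply the regularized equation by $x^2\cdot(\text{entropy derivative})$, integrate by parts, rewrite the resulting bulk terms using the energy and the identity $\iint x^2h^{m-n}h_x^2 = -2\int_0^T xD_0(h)\big|_{\partial\Omega} + 2\iint D_0(h) - \iint x^2D_0'(h)h_{xx}$, and then absorb the remaining $\widetilde{G}_0$-type term by the exponential weight — is exactly the paper's strategy (Appendix~B), and your intermediate identity, the use of $m\geqslant 4-n$ together with (\ref{D:d2}) to produce $k_1\mathcal{E}_0(0)$, and the reconstitution of $W$ are all correct. However there are two substantive gaps.

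First, you multiply by $x^2h^{1-n}_\varepsilon$ rather than by $x^2\widetilde G'_\varepsilon(h_\varepsilon)$ with $\widetilde G'_\varepsilon(z)=z/f_\varepsilon(z)$ as the paper does, and you fold the resulting discrepancies into an unexamined ``$o_\varepsilon(1)$.'' This is not innocuous. The paper's multiplier makes $f_\varepsilon(h)\widetilde G'_\varepsilon(h)=h$ exactly, so the entire $\varepsilon$-error collapses into the single term $I_3=-\varepsilon(s-n)\iint x^2h^{-s}f_\varepsilon(h)h_x\bigl(a_0h_{xxx}+a_1D''_\varepsilon h_x\bigr)$, which is estimated in (\ref{F:d6}) via H\"older, Young with carefully chosen $(p,q)$, and the $\alpha$-entropy bound (\ref{E:b13}); one only gets $\limsup_{\varepsilon\to0}I_3\leqslant0$, a one-sided statement. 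With your multiplier the error factors are $f_\varepsilon(h)h^{1-n}-h=-\tfrac{\varepsilon h^{n+1}}{h^4+\varepsilon h^n}$ and $f_\varepsilon(h)h^{-n}-1=-\tfrac{\varepsilon h^n}{h^4+\varepsilon h^n}$, and these are \emph{not} pointwise small where $h\lesssim\varepsilon^{1/(4-n)}$ — precisely the region that matters near the contact set — so dominated convergence alone does not dispose of them. You would need an estimate parallel to (\ref{F:d6}), which is where much of the actual work lies; your sketch omits it.

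Second, your final passage to the limit for the $\iint x^2h_{xx}^2$-term is wrong in sign. Weak lower semicontinuity of the $L^2$-norm gives $\iint x^2h_{xx}^2e^{-\widetilde B}\leqslant\liminf\iint x^2h_{\varepsilon_k,xx}^2e^{-\widetilde B_{\varepsilon_k}}$, which is the \emph{wrong} direction for a term appearing with a positive coefficient $k_2=\tfrac{3a_0(n-1)}{2}>0$ on the right-hand side of an inequality — and $n>1$ is the only case in which the $k_2$-term is actually kept. (For $0<n\leqslant1$ the paper completes the square as $-(1-n)\iint x^2h_{xx}(a_0h_{xx}+a_1D_0'(h))=-\tfrac{1-n}{a_0}\iint x^2(a_0h_{xx}+a_1D_0')^2+\tfrac{a_1(1-n)}{a_0}\iint x^2D_0'(a_0h_{xx}+a_1D_0')$ and never produces a $k_2$-term; this is the resolution of the $n\lessgtr1$ bookkeeping you flagged as uncertain.) So your remark that ``the sign of $k_2$ ensures the semicontinuity is used in the admissible direction'' is backwards for $1<n<2$; some other mechanism (e.g.\ strong $L^2$-convergence of $h_{\varepsilon_k,xx}$, or keeping the whole estimate at the $\varepsilon$-level through the Gr\"onwall step) is needed and is not supplied. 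A smaller point: with the Cauchy inequality written as $|h_{xx}|h^{m-n+1}\leqslant\mu h_{xx}^2+\tfrac1{4\mu}h^{2(m-n+1)}$ and the prefactor $\tfrac{a_1|1-n|}{m-n+1}$ carried along, the balancing choice is $\mu=\tfrac{a_0(m-n+1)}{2a_1}$, not $\tfrac{a_0|n-1|}{2}$; your $\mu$ is what you get if you instead apply Cauchy directly to $a_1|1-n|\,|h_{xx}|\,|D_0'(h)|$. Either bookkeeping yields $k_2=\tfrac{3a_0(n-1)}{2}$, but the one you wrote is internally inconsistent.
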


The proof of Lemma \ref{F:lem0} is given in Appendix \ref{Bapp}.

\begin{proof}[Proof of Theorem~\ref{C:Th3}]
Assume $0 < n < 2$ and $m$ satisfies the hypotheses of
Theorem \ref{C:Th4}. By Theorems~\ref{C:Th2}, \ref{C:Th4}, there is a
compactly supported strong solution $h$ for $t \in [0, T_0]$.
Hence, we can assume that $W'(t) = 0 $ in Lemma~\ref{F:lem0}.

First, we construct a sequence of times $T_0 < T_1 <...$ and
extend the strong solution $h$ from the time interval $[0, T_i]$ to
the time interval $[0, T_{i+1}]$. Taking it as an initial datum,
we obtain a time interval of existence
$$
T_1 - T_0 = \tfrac{1}{2 d_7(\gamma_3 - 1)}\biggl(
\int\limits_{\Omega} { \{\tfrac{1}{2}h_{x}^2(x, T_0) +
G_{0}^{(\alpha)}(h(x,T_0))\}\,dx}\biggr)^{-(\gamma_3 - 1)},
$$
by (\ref{Tloc_alpha_is}).
% {\bf What about the factor of $9/10$ and the $\min$ stuff?}
Applying Theorem
\ref{C:Th4} to the time interval $[T_0, T_1]$, we have a strong
solution with compact support that satisfies all a priori
estimates with the time interval $[0, T_0]$ replaced by $[0,
T_1]$.   In this way, we construct a nonnegative, compactly
supported, strong solution on $\R^1 \times [0,T^*)$ where
$$
T^* = \lim_{i \to \infty} T_i.
$$
If $T^* < \infty$ then
%$$
%\tfrac{1}{2\kappa_1(\gamma_4 - 1)}  \|h(x,T_i))\|_{H^1}^{1 -
%\gamma_4} = T_{i + 1} - T_i \to 0
%$$
%or
$$
\tfrac{1}{2e_7(\gamma_3 - 1)}\biggl( \int\limits_{\Omega} {
\{\tfrac{1}{2}h_{x}^2(x, T_i) +
G_{0}^{(\alpha)}(h(x,T_i))\}\,dx}\biggr)^{1- \gamma_3} = T_{i + 1}
- T_i \to 0.
$$
% {\bf Need to define all those constants and give a reference to the
% appendix\dots}
Hence, the $H^1$ norms at times $T_i$ must blow up.
%, and the solution
%$h$ cannot exist after the time $T^*$. It then follows that the
%$H^1$ norm of the solution must have blown up at or before the time
%$T^*$.
And, due to (\ref{C:a2-new2}), the $L^{\infty}$-norm of the
solution at times $T_i$ must also blow up.
Therefore, to finish the proof it suffices to prove that $T^* < \infty$.

Let
\begin{equation}\label{F:defg}
V(0) =
% \int\limits_{\Omega}
\Int {x^2
\widetilde{G}_0 (h_0)\,dx}
\quad \mbox{and} \quad
V(T_i) := e^{- \widetilde{B}(T_i)}
% \int\limits_{\Omega}
\Int {x^2
\widetilde{G}_0 (h(T_{i}))\,dx}.
\end{equation}
Using that $k_1 > 0$ and $\mathcal{E}_0(T_i) \leqslant \mathcal{E}_0(0)$, we
apply the inequality (\ref{F:d9-01}) iteratively to find
\begin{equation} \label{general_2nd_moment}
V(T_i) \leqslant V(0) + k_1 \mathcal{E}_0(0) \int\limits_0^{T_i} e^{-\widetilde{B}(t)} \; dt
+ k_2 \int\limits_0^{T_i} \Int x^2 h_{xx}^2(x,t) \; dx e^{-\widetilde{B}(t)} \; dt
\end{equation}

\textbf{Case $n=1$:} \cite{BP2}  In this case, $\widetilde{B}(t) = 0$ and $k_2 = 0$.  Hence
(\ref{general_2nd_moment}) becomes
$$
V(T_i) \leqslant V(0) + k_1 \mathcal{E}_0(0) \, T_i.
$$
If $T^* = \infty$ then the right-hand side will become negative, contradicting that
the left-hand side is nonnegative.  Hence $T^* < \infty$.

\textbf{Case $0< n < 1$:}
In this case, $k_2 < 0$ and (\ref{general_2nd_moment}) implies
\begin{equation} \label{F:sme3n}
V(T_i) \leqslant V(0) + k_1 \mathcal{E}_0(0) \int\limits_0^{T_i} e^{-\widetilde{B}(t)} \; dt
\end{equation}
We wish to argue that if $T^* = \infty$ then the right-hand side of (\ref{F:sme3n})
must become negative, leading to a contradition.
%Applying the second-moment
%inequality (\ref{F:d9-01}) twice,
%\begin{equation}\label{F:sme3}
%V(T_1)  \leqslant V(0) + k_1 \mathcal{E}_{0} (0)\int
%\limits_{0}^{T_1} {e^{-\widetilde{B}(t)} dt} .
%\end{equation}
%Continuing the construction inductively, we have a sequence of
%times $T_0 < T_1 < ... < T_i < ...$ and a positive weak solution
%in the sense of Theorem~\ref{C:Th1} on the time interval $[0,
%T^*)$, where
%$$
%T^* = \mathop {\lim} \limits_{i \to \infty} T_i .
%$$
%At each time $T_i$, we have
%\begin{equation}\label{F:sme3n}
%V(T_i) \leqslant V(0) + k_1 \mathcal{E}_{0} (0) \int
%\limits_{0}^{T_i} {e^{-\widetilde{B}(t)} dt}.
%\end{equation}
%The last step used bound $\mathcal{E}(T_0) \leqslant
%\mathcal{E}(0)$.

Let $g(t) = \int \limits_{0}^{t}
{e^{-\widetilde{B}(s)} ds} $.
Using (\ref{H1_control}), we have $\|h(x,t)\|_{\infty} \leqslant K_1
(T_i - t)^{-\tfrac{1}{2(2m - n -1)}} $ for all $t < T_i$.
% {\bf This is where the proof gets squishy.  We don't have that upper bound
% precisely.  I haven't had a chance to fully iron out the wrinkles in this argument.}
% Hence
% \begin{multline*}
$$
\widetilde{B}(s) \leqslant \tfrac{a_1^2( 1 - n)( 2- n)}{2a_0(m
-n+1)^2} K_1^{2m - n} \int\limits_0^s {(T_i - s)^{-\tfrac{2m -
n}{2(2m - n
-1)}} \,d\tau} =
% \\
K_2 \bigl( T_i^{\tfrac{2m-n-2}{2(2m - n -1)}} - (T_i -
s)^{\tfrac{2m-n-2}{2(2m - n -1)}} \bigr),
% \end{multline*}
$$
whence
\begin{multline}\label{F:gT2}
g(T_i) \geqslant e^{- K_2 T_i^{\tfrac{2m-n-2}{2(2m - n -1)}}} \int
\limits_0^{T_i}{ e^{K_2 (T_i - s)^{\tfrac{2m-n-2}{2(2m - n -1)}}}
ds} = \\
\tfrac{\int \limits_0^{T_i}{ e^{K_2 s^{\tfrac{2m-n-2}{2(2m - n
-1)}}} ds}}{ e^{K_2 T_i^{\tfrac{2m-n-2}{2(2m - n -1)}}}} \sim
T_i^{\tfrac{2m-n}{2(2m - n -1)}} \text{ as } T_i \to + \infty.
\end{multline}
Since $ \mathcal{E}_{0} (0) < 0$, if $T_i \to + \infty$ then for
large times the right-hand side of (\ref{F:sme3n}) would be
negative: an impossibility. Therefore $\mathop {\lim} \limits_{i
\to +\infty} T_i = T^* < \infty$.

\textbf{Case $1 < n < 2$:}
In this case, we write inequality (\ref{general_2nd_moment})
in the form:
\begin{equation} \label{BU_1n2}
V(T_i) \leqslant V(0) + k_1 \mathcal{E}_0(0) \, g(T_i)
+ k_2 \, f(T_i)
% \int\limits_0^{T_i} \Int x^2 h_{xx}^2(x,t) \; dx e^{-\widetilde{B}(t)} \; dt.
\end{equation}
where
$$
f(T_i) = \int\limits_0^{T_i}  e^{-\widetilde{B}(t)} \Int x^2 h_{xx}^2(x,t) \; dx \; dt.
$$
Assume $T^* = \infty$ and
that $f(T_i)$ grows slower than $g(T_i)$ as $T_i \to \infty$.  Then
the right--hand side of
(\ref{BU_1n2}) will become negative in finite time, which is an impossibility.
It therefore suffices to show that $f(T_i)$ grows more slowly than $g(T_i)$.

Due to (\ref{D:aa2}) and
(\ref{H1_control}) as $\varepsilon \to 0$, we have
\begin{multline}\label{F:aa2-2}
\tfrac{a_0}{2} \int\limits_0^{T_i} \Int {h^2_{xx}\,dx dt} \leqslant
\Int {G_{0}(h_{0})\,dx} + C \int\limits_0^{T_i} {
\| h_x \|_{2} ^{2(m - n + 1)} dt} \leqslant \\
\Int {G_{0}(h_{0})\,dx} + C \int\limits_0^{T_i} {
(T_i - t)^{- \tfrac{m - n + 1}{2m - n - 1}} dt} =
\Int {G_{0}(h_{0})\,dx} +  \tfrac{C(2m - n - 1)}{m
- 2} T_i ^{\tfrac{m - 2}{2m - n -1}}.
\end{multline}
In view of (\ref{G:est14}), we know that $|x| \leqslant C(h_0)$
for all $t \leqslant T_i$. Hence, due to (\ref{F:aa2-2}), we
deduce
$$
f(T_i)
% = \int\limits_{0}^{T_i} {e^{-\widetilde{B}(t)}\Int{x^2
% h^2_{xx} \,dx} dt}
\leqslant
%e^{-\widetilde{B}(T_i)}
%\int\limits_0^{T_i}
\Int
 {x^2
h^2_{xx}\,dx dt}
\leqslant C
% e^{-\widetilde{B}(T_i)}
\int\limits_0^{T_i} \Int {h^2_{xx}\,dx
dt} \leqslant C
%e^{-\widetilde{B}(T_i)}
( 1 + T_i ^{\tfrac{m - 2}{2m - n -1}}).
$$
Comparing with the exponent in (\ref{F:gT2}), we see
that $f(T_i)$ grows more slowly than $g(T_i)$, as desired.
%i.\,e.
%\begin{equation}\label{F:lap-2}
% f(T_i) \leqslant C ( 1 + T_i ^{\tfrac{m - 2}{2m - n -1}} ).
%\end{equation}
%As $\tfrac{m - 2}{2m - n -1} < \tfrac{2m-n}{2(2m - n -1)}$ then, due
%to (\ref{F:lap-2}) and (\ref{F:gT2}), for large times the
%right-hand side of (\ref{F:sme3-new}) would be negative: an
%impossibility. Therefore $\mathop {\lim} \limits_{i \to +\infty}
%T_i = T^* < \infty$. Hence, the $H^1$ norms at times $T_i$ must
%blow up, and the solution $h$ cannot exist after the time $T^*$.
%It then follows that the $H^1$ norm of the solution must have
%blown up at or before the time $T^*$. And, due to
%(\ref{C:a2-new}), the $L^{\infty}$-norm of the solution must blow
%up also.

\end{proof}

\appendix

\section{Proofs of A Priori Estimates}
\label{A_priori_proofs}

The first observation is that the periodic boundary conditions
imply that classical solutions of equation (\ref{D:1r'}) conserve
mass:
\begin{equation}\label{D:mass}
\int\limits_{\Omega} {h_{\delta \varepsilon}(x,t)\,dx} =
\int\limits_{\Omega} {h_{0,\delta \varepsilon}(x) \,dx} =
M_{\delta \varepsilon } < \infty \text{ for all } t
> 0.
\end{equation}
Further, (\ref{D:inreg}) implies $M_{\delta \varepsilon} \to M =
\int h_0$ as $\varepsilon, \delta \to 0$. The initial data in this
article have $M > 0$, hence $M_{\delta \varepsilon} > 0$ for
$\delta$ and $\varepsilon$ sufficiently small.

Also, we will relate the $L^p$ norm of $h$ to the $L^p$ norm of
its zero-mean part as follows:
$$
|h(x)| \leqslant \left| h(x)-\tfrac{M}{|\Omega|} \right| +
\tfrac{M}{|\Omega|} \Longrightarrow \| h \|_p^p \leqslant 2^{p-1} \; \|
v \|_p^p + \left(\tfrac{2}{  |\Omega| } \right)^{p-1} \; M^{p}
$$
where $v := h-M/|\Omega|$ and we have assumed that $M \geqslant
0$. We will use the Poincar\'{e} inequality which holds for  any
zero-mean function in $H^1(\Omega)$
\begin{equation} \label{Poincare}
\| v \|_p^p \leqslant b_1 \|v_x \|_p^p \qquad 1 \leqslant p < \infty
\end{equation}
with $b_1 = |\Omega|^p$.

Also used will be an interpolation inequality \cite[Theorem 2.2, p.
62]{Lady} for functions of zero mean in $H^1(\Omega)$:
\begin{equation} \label{Lady_ineq}
\| v \|_p^p \leqslant b_2 \, \| v_x \|_2^{ap} \; \| v \|_r^{(1-a)p}
\end{equation}
where $r \geqslant 1$, $p \geqslant r$,
$$
a = \tfrac{1/r - 1/p}{1/r + 1/2}, \qquad b_2 = \left(1 + r/2
\right)^{a p}.
$$
It follows that for any zero-mean function $v$ in $H^1(\Omega)$
\begin{equation} \label{LpH1}
\| v \|_p^p  \leqslant b_3 \| v_x \|_2^p, \quad \Longrightarrow \quad
\| h \|_p^p \leqslant b_4 \| h_x \|_2^p + b_5 M_{\delta \varepsilon}^p
\end{equation}
where
$$
b_3 =
\begin{cases}
b_1 \; | \Omega |^{(2-p)/p} & \mbox{if} \quad 1 \leqslant p \leqslant 2 \\
b_1^{(p+2)/2} \; b_2 & \mbox{if} \quad  2 < p < \infty
\end{cases}, \quad
b_4 = 2^{p-1} \, b_3, \quad b_5 = \left( \tfrac{2}{|\Omega|}
\right)^{p-1}
$$
To see that (\ref{LpH1}) holds, consider two cases.  If $1
\leqslant p < 2$, then by (\ref{Poincare}), $\| v \|_p$ is
controlled by $\| v_x \|_p$.  By the H\"older inequality, $\| v_x
\|_p$ is then controlled by $\| v_x \|_2$.  If $p > 2$ then by
(\ref{Lady_ineq}), $\|v \|_p$ is controlled by $\|v_x\|_2^a \|v
\|_2^{1-a}$ where $a = 1/2 - 1/p$.  By  the Poincar\'e inequality,
$\| v \|_2^{1-a}$ is controlled by $\| v_x \|_2^{1-a}$.

If $0 < p < 1$ then, instead of (\ref{LpH1}), we obtain
\begin{equation} \label{LpH1-2}
\| h \|_p^p \leqslant \widetilde{b}_4 \| h_x \|_2^p + \widetilde{b}_5
M_{\delta \varepsilon}^p
\end{equation}
where
$$
\widetilde{b}_4 = |\Omega|^{1-\tfrac{p}{2}} \, b_4^{\tfrac{p}{2}}, \quad
\widetilde{b}_5 = |\Omega|^{1-\tfrac{p}{2}} \, b_5^{\tfrac{p}{2}}.
$$

The Cauchy inequality $ab \leqslant \epsilon a^2 + b^2/(4
\epsilon)$ with $\epsilon > 0$ will be used often as will Young's
inequality
\begin{equation} \label{Young}
a b \leqslant \epsilon a^p + \tfrac{b^q}{q \, (\epsilon p)^{q/p}},
\qquad \tfrac{1}{p} + \tfrac{1}{q} = 1, \; \epsilon > 0.
\end{equation}

\begin{proof}[Sketch of Proof of Lemma~\ref{MainAE}]
In the following, we denote the classical solution $h_{\delta
\varepsilon}$ by $h$ whenever there is no chance of confusion.

To prove the bound (\ref{D:a13''}) one starts by multiplying
(\ref{D:1r'}) by $- h_{xx}$, integrating over $Q_T$, and using the
periodic boundary conditions (\ref{D:2r'}) yields
\begin{align}
\label{D:a1} & \tfrac{1}{2} \int\limits_{\Omega} {h_x^2(x,T) \,dx}
+ a_0
\iint\limits_{Q_T} {f_{\delta \varepsilon}(h) h^2_{xxx} \,dx dt} \\
\notag & \hspace{.2in} = \tfrac{1}{2}\int\limits_{\Omega}
{{h_{0,\delta \varepsilon,x}}^2(x) \,dx} - a_1 \iint\limits_{Q_T}
{f_{\delta \varepsilon}(h) D''_{\varepsilon}(h)h_x h_{xxx}\,dx
dt}.
\end{align}
The Cauchy inequality is used to bound some terms on the
right-hand of (\ref{D:a1}):
\begin{multline}\label{D:a2}
a_1\iint\limits_{Q_T} {f_{\delta \varepsilon}(h)
D''_{\varepsilon}(h) h_x h_{xxx}\,dx dt} \leqslant
\tfrac{a_0}{2}\iint\limits_{Q_T} {f_{\delta \varepsilon}(h)
h^2_{xxx} \,dx dt} + \\
\tfrac{a_1^2}{2a_0}\iint\limits_{Q_T} {f_{\delta
\varepsilon}(h)(D''_{\varepsilon}(h))^2 h_{x}^2 \,dx dt} .
\end{multline}
Using (\ref{D:a2}) in (\ref{D:a1}) yields
\begin{multline}\label{D:a5}
\tfrac{1}{2}\int\limits_{\Omega} {h_x^2(x,T) \,dx} +
\tfrac{a_0}{2} \iint\limits_{Q_T} {f_{\delta \varepsilon}(h)
h^2_{xxx}\,dx dt}  \leqslant
 \tfrac{1}{2}\int\limits_{\Omega} {
 {h_{0,\delta \varepsilon,x}}^2\,dx} + \\
 \tfrac{a_1^2}{2a_0}\iint\limits_{Q_T} {f_{\delta
\varepsilon}(h)(D''_{\varepsilon}(h))^2 h_{x}^2 \,dx dt} \leqslant
\tfrac{1}{2} \int\limits_{\Omega} { {h_{0,\delta
\varepsilon,x}}^2\,dx} + \\
\tfrac{a_1^2}{2a_0}\iint\limits_{Q_T} {|h|^{2m-n} h_{x}^2 \,dx dt}
+ \tfrac{\delta}{\varepsilon^2}
\tfrac{a_1^2}{2a_0}\iint\limits_{Q_T} {h_{x}^2 \,dx dt} .
\end{multline}
Above, we used the bounds $f_{\delta \varepsilon}(z) \leqslant |z|^n +
\delta$, $ D''_{\varepsilon}(z) \leqslant |z|^{m-n}$, and $
D''_{\varepsilon}(z) \leqslant \varepsilon^{-1}$.
By the Cauchy inequality, bound (\ref{LpH1}), (\ref{Lady_ineq})
and bound (\ref{LpH1-2}),
\begin{align}
&\notag  \iint\limits_{Q_T} |h|^{2m-n} h_x^2 \; dx dt \leqslant
\tfrac{1}{2} \iint\limits_{Q_T} h^{2(2m-n)} \; dx dt +
\tfrac{1}{2}  \iint\limits_{Q_T} h_x^4 \; dx dt \\
& \notag \leqslant \tfrac{b_4}{2} \int\limits_0^T  \left(
\int\limits_\Omega h_x^2 \; dx \right)^{2m-n} \; dt +
\tfrac{b_5}{2} \; M_{\delta \varepsilon}^{2(2m-n)} \; T +
\tfrac{b_2}{2} \int\limits_0^T \| h_{xx}(\cdot,t) \|_2 \; \| h_x(\cdot,t) \|_2^3 \; dt \\
& \label{D:a9} \hspace{.4in} \leqslant \tfrac{1}{2}
\iint\limits_{Q_T} h_{xx}^2 \; dx dt + \tfrac{b_2^2}{8}
\int\limits_0^T \left( \int\limits_\Omega h_x^2 \; dx \right)^3 \;
dt + \tfrac{b_4}{2} \int\limits_0^T \left( \int\limits_\Omega
h_x^2 \; dx \right)^{2m-n} + c_1 \; T,
\end{align}
where $c_1 = M_{\delta \varepsilon}^{2(2m-n)} \; b_5/2$,  $m
\geqslant \tfrac{n}{2} $.
From (\ref{D:a5}), due to (\ref{D:a9}), we arrive at
\begin{align}
& \notag \tfrac{1}{2} \int\limits_{\Omega} {h_x^2(x,T) \,dx} +
\tfrac{a_0}{2}
\iint\limits_{Q_T} {f_{\delta \varepsilon}(h) h^2_{xxx}\,dx dt} \\
& \notag \hspace{.2in} \leqslant \tfrac{1}{2}\int\limits_{\Omega}
{{h_{0,\delta \varepsilon,x}}^2 \,dx} + c_2 \iint\limits_{Q_T}
{h^2_{xx}\,dx dt}
+ c_3 \int\limits_0^T \left( \int\limits_\Omega h_x^2 \; dx \right)^3 \; dt \\
& \notag \hspace{.6in} + c_4  \int\limits_0^T \left(
\int\limits_\Omega h_x^2 \; dx \right)^{2m-n} \; dt + c_5
\iint\limits_{Q_T} {h_{x}^2 \,dx dt}  + c_6 \; T \\
& \hspace{.2in} \leqslant \label{D:a11}
\tfrac{1}{2}\int\limits_{\Omega} {{h_{0,\delta \varepsilon,x}}^2
\,dx} + c_2 \iint\limits_{Q_T} {h^2_{xx}\,dx dt} + c_7
\int\limits_0^T \max\left\{ 1 , \left( \int\limits_\Omega h_x^2 \;
dx \right)^{\gamma_1} \right\} \; dt
\end{align}
where $\gamma_1 = \max \{3, 2m - n \}$, $m \geqslant \tfrac{n}{2}
$,
\begin{align*}
& c_2 = \tfrac{a_1^2}{4 a_0}, \quad c_3 = \tfrac{a_1^2 b_2^2}{16
a_0}, \quad
c_4 = \tfrac{a_1^2 b_4}{4 a_0}, \  c_5= \tfrac{a_1^2}{2 a_0} \tfrac{\delta}{\varepsilon^2} \\
& c_6 = \tfrac{a_1^2}{2a_0}c_1, \quad c_7 = c_3 + c_4 + c_5 + c_6.
\end{align*}

Now, multiplying (\ref{D:1r'}) by $G'_{\delta \varepsilon}(h)$,
integrating over $Q_T$, and using the periodic boundary conditions
(\ref{D:2r'}), we obtain
\begin{multline}\label{D:aa1}
\int\limits_{\Omega} {G_{\delta \varepsilon}(h(x,T))\,dx} + a_0
\iint\limits_{Q_T} {h^2_{xx}\,dx dt} = \int\limits_{\Omega}
{G_{\delta \varepsilon}(h_{0, \delta \varepsilon})\,dx} +
a_1 \iint\limits_{Q_T} {D''_{\varepsilon}(h)h^2_{x}\,dx dt} \\
\leqslant \int\limits_{\Omega} {G_{\delta \varepsilon}(h_{0,
\delta \varepsilon})\,dx} + \tfrac{a_1}{\varepsilon}
\iint\limits_{Q_T} {h^2_{x}\,dx dt}
\end{multline}
for limiting process on $ \delta \to 0$, and
\begin{multline}\label{D:aa1-2}
\int\limits_{\Omega} {G_{\varepsilon}(h(x,T))\,dx} + a_0
\iint\limits_{Q_T} {h^2_{xx}\,dx dt} \\
\leqslant \int\limits_{\Omega} {G_{\varepsilon}(h_{0,
\varepsilon})\,dx} + a_1 \iint\limits_{Q_T} {h^{m-n}h^2_{x}\,dx
dt}, \ \ h_{\varepsilon} > 0
\end{multline}
for limiting process on $ \varepsilon \to 0$. In the case of
(\ref{D:aa1-2}), we estimate the right-hand using the following
equality
\begin{equation}\label{D:simp}
\int\limits_{\Omega} {v^{a}v^2_{x}\,dx} =
\tfrac{1}{a+1}\int\limits_{\Omega} {v^{a+ 1}v_{xx}\,dx}, \ \ v >
0, \ a
> -1.
\end{equation}
Really, for $h_{\varepsilon} > 0$ we deduce by the Cauchy
inequality and bound (\ref{LpH1}) that
\begin{multline}\label{D:aa1-10}
a_1 \iint\limits_{Q_T} {h^{m-n}h^2_{x}\,dx dt} = - \tfrac{a_1}{m -
n + 1} \iint\limits_{Q_T} {h^{m-n+1}h_{xx}\,dx dt} \\
\leqslant \tfrac{a_0}{2}\iint\limits_{Q_T} {h^2_{xx}\,dx dt} +
\tfrac{a_1^2}{2a_0(m -n + 1)^2} \iint\limits_{Q_T} {h^{2(m-n+
1)}\,dx dt} \\
\leqslant \tfrac{a_0}{2}\iint\limits_{Q_T} {h^2_{xx}\,dx dt} +
c_8\int\limits_0^T \Bigl( \int\limits_\Omega h_x^2 \, dx \Bigr)^{m
- n + 1}\,dt + c_9 T,
\end{multline}
where $c_8 = \tfrac{a_1^2}{2a_0(m -n +1)^2}b_4 $, $c_9 =
\tfrac{a_1^2}{2a_0(m -n +1)^2}b_5 M_{\varepsilon}^{2(m - n + 1)}$,
and $m \geqslant n - 1$. Thus, from (\ref{D:aa1}) due to
(\ref{D:aa1-10}), we deduce
\begin{multline}\label{D:aa2}
\int\limits_{\Omega} {G_{\varepsilon}(h(x,T))\,dx} +
\tfrac{a_0}{2} \iint\limits_{Q_T} {h^2_{xx}\,dx dt} \leqslant
\int\limits_{\Omega} {G_{\varepsilon}(h_{0, \varepsilon})\,dx}  \\
+ c_8\int\limits_0^T \Bigl( \int\limits_\Omega h_x^2 \, dx
\Bigr)^{m - n + 1}dt + c_9 T \leqslant \int\limits_{\Omega}
{G_{\varepsilon}(h_{0, \varepsilon})\,dx} \\
+ c_{10}
\int\limits_0^T \max\left\{ 1 , \left( \int\limits_\Omega h_x^2 \;
dx \right)^{\gamma_2} \right\} \; dt,
\end{multline}
where $\gamma_2 = \max \{3, m - n +1\}$, $m \geqslant n - 1$, $
c_{10} = c_8 + c_9$.

Further, from (\ref{D:a11}) and (\ref{D:aa1}) we find for limiting
process on $ \delta \to 0$
\begin{align} \notag
& \int\limits_{\Omega} {h_x^2 \,dx} + \tfrac{2 c_2}{a_0}
\int\limits_\Omega G_{\delta \varepsilon}(h(x,T)) \; dx + a_0
\iint\limits_{Q_T}
{f_{\delta \varepsilon}(h) h^2_{xxx}  \,dx dt} \\
& \hspace{.1in} \notag \leqslant \int\limits_\Omega {h_{0,\delta
\varepsilon,x}}^2 \; dx + \tfrac{2 c_2}{a_0} \left(
\int\limits_\Omega G_{\delta \varepsilon}(h(x,T)) \; dx +
\tfrac{a_0}{2} \iint\limits_{Q_T} h_{xx}^2 \; dx dt \right) \\
& \hspace{.1in} \notag + 2 c_7 \int\limits_0^T \max\left\{ 1,
\left( \int\limits_\Omega h_x^2(x,t) \; dx \right)^{\gamma_1}
\right\} \; dt
\leqslant \int\limits_\Omega {h_{0,\delta \varepsilon,x}}^2 \; dx \\
& \hspace{.1in} \notag + \tfrac{2 c_2}{a_0} \left(
\int\limits_\Omega G_{\delta \varepsilon}(h_{0,\delta
\varepsilon}) \; dx + \tfrac{a_1}{\varepsilon} \iint\limits_{Q_T}
{h^2_{x}\,dx dt} \right) \\
& \hspace{.1in} \label{D:aa3}+
 2 c_7 \int\limits_0^T \max\left\{ 1, \left( \int\limits_\Omega h_x^2(x,t) \; dx \right)^{\gamma_1} \right\} \; dt
\leqslant  \int\limits_\Omega {h_{0,\delta \varepsilon,x}}^2 \; dx \\
  & \notag  \hspace{.1in} + \tfrac{2 c_2}{a_0}
\int\limits_\Omega G_{\delta \varepsilon}(h_{0,\delta
\varepsilon}) \; dx + c_{11} \int\limits_0^T \max\left\{ 1, \left(
\int\limits_\Omega h_x^2(x,t) \; dx \right)^{\gamma_1} \right\} \;
dt
\end{align}
where $c_{11} = \tfrac{2 c_2 a_1}{\varepsilon a_0}  + 2 c_7$,
$\gamma_1 = \max \{3, 2m - n \}$, $m \geqslant \tfrac{n}{2} $.
Similarly, from (\ref{D:a11}) and (\ref{D:aa2}) we find for
limiting process on $ \varepsilon \to 0$
\begin{multline}\label{D:aa3-2}
\int\limits_{\Omega} {h_x^2 \,dx} + \tfrac{2 c_2}{a_0}
\int\limits_\Omega G_{\varepsilon}(h(x,T)) \; dx + a_0
\iint\limits_{Q_T} {f_{\varepsilon}(h) h^2_{xxx}  \,dx dt}
\leqslant \int\limits_\Omega {h_{0,\varepsilon,x}}^2 \; dx \\
 + \tfrac{2 c_2}{a_0} \left(
\int\limits_\Omega G_{\varepsilon}(h_{0,\varepsilon}) \; dx +
c_{10} \int\limits_0^T \max\left\{ 1 , \left( \int\limits_\Omega
h_x^2(x,t) \;
dx \right)^{\gamma_2} \right\} \; dt \right) \\
+  2 c_7 \int\limits_0^T \max\left\{ 1, \left( \int\limits_\Omega
h_x^2(x,t) \; dx \right)^{\gamma_1} \right\} \; dt
\leqslant  \int\limits_\Omega {h_{0,\varepsilon,x}}^2 \; dx \\
+ \tfrac{2 c_2}{a_0} \int\limits_\Omega
G_{\varepsilon}(h_{0,\varepsilon}) \; dx + c_{11} \int\limits_0^T
\max\left\{ 1, \left( \int\limits_\Omega h_x^2(x,t) \; dx
\right)^{\gamma_1} \right\} \; dt,
\end{multline}
where $c_{11} = \tfrac{2 c_2 c_{10}}{\varepsilon a_0}  + 2 c_7$,
$\gamma_1 = \max \{3, 2m - n, m -n + 1 \}$, $m \geqslant \max \{
\tfrac{n}{2}, n - 1 \} $.

Applying the nonlinear Gr\"onwall lemma \cite{Bihari} to
$$
v(T) \leqslant v(0) + c_{11} \int\limits_0^T \max\{1,v^{\gamma_1}(t)\}
\; dt
$$
with $v(t) = \int (h_x^2(x,t) + 2 c_2/a_0 \: G_{\delta
\varepsilon}(h(x,t)) )\; dx$ yields
$$
v(t) \leqslant
\begin{cases}
\begin{cases}
v(0) + c_{11} t & \mbox{if} \quad t < t_0 := \tfrac{1-v(0)}{c_{11}} \\
\left( 1 - c_{11}(\gamma_1-1) (t-t_0) \right)^{-1/(\gamma_1-1)} &
\mbox{if} \quad t \geqslant t_0
\end{cases}
& \mbox{if} \quad v(0) < 1 \\
\left( v(0)^{1- \gamma_1} - c_{11}(\gamma_1-1) t
\right)^{-1/(\gamma_1-1)} & \mbox{if} \quad v(0) \geqslant 1.
\end{cases}
$$
From this,
\begin{align} \label{H1_control}
& \int\limits_\Omega \{ h_x^2(x,t) +  \tfrac{2c_2}{a_0} G_{\delta \varepsilon}(h(x,t)) \}\; dx \\
& \hspace{.4in} \leqslant 2^{\tfrac{1}{\gamma_1-1}} \max\left\{ 1,
\int\limits_\Omega ({h_{0,\delta \varepsilon,x}}^2(x) +
\tfrac{2c_2}{a_0} G_{\delta \varepsilon}(h_{0,\delta
\varepsilon}(x))) \; dx \right\} = K_{\delta \varepsilon} < \infty
\notag
\end{align}
for all $t \in [0,T_{\delta \varepsilon,loc}]$ where
\begin{equation} \label{Tloc_eps_is}
% 0 \leqslant t \leq
T_{\delta \varepsilon,loc} := \tfrac{1}{2c_{11}(\gamma_1-1)}
\min\left\{ 1, \left( \int\limits_\Omega ({h_{0,\delta
\varepsilon,x}}^2(x) + \tfrac{2c_2}{a_0} G_{\delta
\varepsilon}(h_{0,\delta \varepsilon}(x)))\,dx
\right)^{-(\gamma_1-1)} \right\}.
\end{equation}
Using the $\delta \to 0, \varepsilon \to 0$ convergence of the
initial data and the choice of $\theta \in (0,2/5)$ (see
(\ref{D:inreg})) as well as the assumption that the initial data
$h_0$ has finite entropy (\ref{C:inval}), the times $T_{\delta
\varepsilon, loc}$ converge to a positive limit and the upper
bound $K$ in (\ref{H1_control}) can be taken finite and
independent of $\delta$ and $\varepsilon$ for $\delta$ and
$\varepsilon$ sufficiently small.  (We refer the reader to the end
of the proof of Lemma \ref{F:local_BF} in this Appendix for a
fuller explanation of a similar case.) Therefore there exists
$\delta_0>0$ and $\varepsilon_0>0$ and $K$ such that the bound
(\ref{H1_control}) holds for all $0 \leqslant \delta < \delta_0$ and $0
< \varepsilon < \varepsilon_0$ with $K$ replacing $K_{\delta
\varepsilon}$ and for all
\begin{equation} \label{Tloc_is}
0 \leqslant t \leqslant T_{loc} := \tfrac{9}{10} \lim_{\varepsilon
\to 0, \delta \to 0} T_{\delta \varepsilon,loc}.
\end{equation}

Using the uniform bound on $\int h_x^2$ that (\ref{H1_control})
provides, one can find a uniform-in-$\delta$-and-$\varepsilon$
bound for the right-hand-side of (\ref{D:aa3}) yielding the
desired a priori bound (\ref{D:a13''}).  Similarly, one can find a
uniform-in-$\delta$-and-$\varepsilon$ bound for the
right-hand-side of (\ref{D:aa2}) yielding the desired a priori
bound (\ref{BF_entropy}).

To prove the bound (\ref{D:d2}), multiply (\ref{D:1r'}) by $- a_0
h_{xx} - a_1 D'_{\varepsilon}(h) $, integrate over $Q_T$,
integrate by parts, use the periodic boundary conditions
(\ref{D:2r'}), to find
\begin{equation}\label{D:d0}
\mathcal{E}_{\delta \varepsilon}(T) + \iint\limits_{Q_T}
{f_{\delta \varepsilon}(h) (a_0 h_{xxx} + a_1
D''_{\varepsilon}(h)h_x)^2 \,dx dt} = \mathcal{E}_{\delta
\varepsilon}(0).
\end{equation}

The parameters $\delta_0$ and $\varepsilon_0$ are determined by
$a_0$, $a_1$,  $| \Omega |$, $\int h_0$, $\| h_{0x} \|_2$, $\int
h_0^{2-n}$, by how quickly $M_{\delta \varepsilon}$ converges to
$M$,
% and by the choice of regularization of the
% initial data (\ref{D:inreg})
and by how quickly the approximate initial data  (\ref{D:inreg}),
$h_{0,\delta \varepsilon}$, converge to $h_0$ in $H^1(\Omega)$.

The time $T_{loc}$ and the constants $K_1$, and $K_2$  are
determined by $\delta_0$, $\varepsilon_0$, $a_0$, $a_1$, $| \Omega
|$, $\int h_0$, $\| h_{0x} \|_2$, and $\int h_0^{2-n}$.
\end{proof}

\begin{proof}[Sketch of Proof of Lemma~\ref{lemLL}]
In the following, we denote the positive, classical solution
$h_{\varepsilon}$ by $h$ whenever there is no chance of confusion.

Taking $\delta \to 0$ in (\ref{D:a5}) yields
\begin{align} %\label{D:a5-n} \notag
&\notag \tfrac{1}{2}\int\limits_{\Omega} {h_x^2 \,dx} +
\tfrac{a_0}{2}
\iint\limits_{Q_T} {f_{\varepsilon}(h) h^2_{xxx}\,dx dt} \\
& \notag
% \hspace{.2in}
\leqslant \tfrac{1}{2}\int\limits_{\Omega} {h_{0\varepsilon,
x}^2\,dx} + \tfrac{a_1^2}{2a_0} \iint\limits_{Q_T} {|h|^{2m-n}
h^2_{x} \,dx dt} \\
&  \label{D:a5-n}
% \hspace{.2in}
\leqslant \tfrac{1}{2} \int\limits_{\Omega} { h_{0\varepsilon,
x}^2\,dx} + \tfrac{a_1^2}{2a_0} \int\limits_0^T \| h(\cdot,t)
\|_\infty^{2m-n} \; \int\limits_\Omega h_x^2(x,t) \; dx\,dt.
\end{align}
Applying the nonlinear Gr\"onwall lemma \cite{Bihari} to
$$
v(T) \leqslant v(0) + \int\limits_0^T A_1(t) \; v(t)\; dt
$$
with $v(t) = \int h_x^2(x,t) \; dx$, $A_1(t) = a_1^2/a_0 \|
h(\cdot,t) \|_\infty^{2m-n}$ yields
$$
v(T) \leqslant v(0)\; e^{B_1(T)}, \text{ where }B_1(t) =
\int\limits_{0}^t {A_1(s)\;ds} .
$$
%%%%%%%%%%%%%%%%%%%%%%%%%%%
Similarly, taking $\delta \to 0$ in (\ref{D:a5}) and
(\ref{D:aa1}), due to (\ref{D:simp}) and (\ref{Young}), yield
\begin{multline}\label{D:a5-nnn}
\int\limits_{\Omega} {h_x^2 \,dx} + a_0 \iint\limits_{Q_T}
{f_{\varepsilon}(h) h^2_{xxx}\,dx dt} \leqslant
\int\limits_{\Omega} {h_{0\varepsilon, x}^2\,dx} +
\tfrac{a_1^2}{a_0} \iint\limits_{Q_T} {h^{2m-n}h^2_{x} \,dx dt} \\
=  \int\limits_{\Omega} { h_{0\varepsilon, x}^2\,dx} -
\tfrac{a_1^2}{a_0(2m -n + 1)} \iint\limits_{Q_T} {h^{2m-n+
1}h_{xx}
\,dx dt} \leqslant \int\limits_{\Omega} { h_{0\varepsilon, x}^2\,dx} \\
+ \tfrac{a_0}{2}\iint\limits_{Q_T} {h^2_{xx}\,dx dt} +
\tfrac{a_1^4}{2a_0^3(2m -n + 1)^2} \iint\limits_{Q_T} {h^{2(2m-n+
1)}\,dx dt} \leqslant \int\limits_{\Omega} { h_{0\varepsilon, x}^2\,dx}\\
+ \tfrac{a_0}{2}\iint\limits_{Q_T} {h^2_{xx}\,dx dt} +
\tfrac{a_1^4}{2a_0^3(2m -n + 1)^2} \int\limits_0^T \| h(\cdot,t)
\|_\infty^{4m-n} \; \int\limits_\Omega G_{\varepsilon}(h(x,t)) \;
dx\,dt,
\end{multline}
\begin{multline}\label{D:aa1-0}
\int\limits_{\Omega} {G_{\varepsilon}(h(x,T))\,dx} + a_0
\iint\limits_{Q_T} {h^2_{xx}\,dx dt} \leqslant
\int\limits_{\Omega} {G_{ \varepsilon}(h_{0,\varepsilon})\,dx} +
a_1 \iint\limits_{Q_T} {h^{m-n}h^2_{x}\,dx dt} \\
\leqslant \int\limits_{\Omega} {G_{
\varepsilon}(h_{0,\varepsilon})\,dx} - \tfrac{a_1}{m - n + 1}
\iint\limits_{Q_T} {h^{m-n+1}h_{xx}\,dx dt}\leqslant
\int\limits_{\Omega} {G_{\varepsilon}(h_{0,\varepsilon})\,dx} \\
+ \tfrac{a_0}{2}\iint\limits_{Q_T} {h^2_{xx}\,dx dt} +
\tfrac{a_1^2}{2a_0(m -n + 1)^2} \iint\limits_{Q_T} {h^{2(m-n+
1)}\,dx dt} \leqslant  \int\limits_{\Omega}
{G_{\varepsilon}(h_{0,\varepsilon})\,dx} \\
+ \tfrac{a_0}{2}\iint\limits_{Q_T} {h^2_{xx}\,dx dt} +
\tfrac{a_1^2}{2a_0(m -n +1)^2} \int\limits_0^T \| h(\cdot,t)
\|_\infty^{2m-n} \; \int\limits_\Omega G_{\varepsilon}(h(x,t)) \;
dx\,dt .
\end{multline}
Summing (\ref{D:a5-nnn}) and (\ref{D:aa1-0}), we find that
\begin{multline}\label{D:sum0}
\int\limits_{\Omega} {\{ h_x^2(x,T) + G_{\varepsilon}(h(x,T))
\}\,dx} +
a_0 \iint\limits_{Q_T} {f_{\varepsilon}(h) h^2_{xxx}\,dx dt} \\
\leqslant \int\limits_{\Omega} { \{ h_{0\varepsilon, x}^2 +
G_{\varepsilon}(h_{0,\varepsilon}) \}\,dx} + \int\limits_0^T
A_2(t) \; \int\limits_\Omega \{ h_x^2(x,t) +
G_{\varepsilon}(h(x,t)) \} \; dx\,dt,
\end{multline}
where $A_2(t) = \tfrac{a_1^4}{2a_0^3(2m -n + 1)^2} \| h(\cdot,t)
\|_\infty^{4m-n} + \tfrac{a_1^2}{2a_0(m -n +1)^2} \| h(\cdot,t)
\|_\infty^{2m-n}$. Applying the nonlinear Gr\"onwall lemma
\cite{Bihari} to
$$
v(T) \leqslant v(0) + \int\limits_0^T A_2(t) \; v(t)\; dt
$$
with $v(t) = \int \{ h_x^2(x,t) + G_{\varepsilon}(h(x,t)) \} \;
dx$,  yields
$$
v(T) \leqslant v(0)\; e^{B_2(T)}, \text{ where }B_2(t) =
\int\limits_{0}^t {A_2(s)\;ds} .
$$
\end{proof}

\begin{proof}[Sketch of Proof of Lemma~\ref{MainAE2}]
In the following, we denote the positive, classical solution
$h_{\varepsilon}$ by $h$ whenever there is no chance of confusion.

Multiplying (\ref{D:1r'}) with $\delta = 0$ by
$G'^{(\alpha)}_{\varepsilon} (h)$, integrating over $Q_T$, taking
$\delta \to 0$, and using the periodic boundary conditions
(\ref{D:2r'}), yields
\begin{align}
& \label{E:b1} \int\limits_{\Omega}
{G^{(\alpha)}_{\varepsilon}(h(x,T))\,dx} + a_0 \iint\limits_{Q_T}
{h^{\alpha}h^2_{xx}\,dx dt}  + a_0 \tfrac{\alpha(1 - \alpha)}{3}
\iint\limits_{Q_T} {h^{\alpha -2
}h^4_{x}\,dx dt} \\
& = \int\limits_{\Omega}
{G^{(\alpha)}_{\varepsilon}(h_{0\varepsilon})\,dx} + a_1
\iint\limits_{Q_T} {h^{\alpha} D''_{\varepsilon}(h) h^2_{x}\,dx
dt} \leqslant \int\limits_{\Omega}
{G^{(\alpha)}_{\varepsilon}(h_{0\varepsilon})\,dx} + a_1
\iint\limits_{Q_T} {h^{\alpha + m - n} h^2_{x}\,dx dt} . \notag
\end{align}

\textbf{Case 1: $0 < \alpha < 1$.} The coefficient multiplying
$\iint h^{\alpha-2} h_x^4$ in (\ref{E:b1}) is positive and can
therefore be used to control the term $\iint h^{\alpha + m -n}
h_x^2$ on the right--hand side of (\ref{E:b1}).  Specifically,
using the Cauchy-Schwartz inequality and the Cauchy inequality,
\begin{multline}\label{E:b2}
a_1 \iint\limits_{Q_T} {h^{\alpha + m - n} h^2_{x}\,dx dt}
\leqslant a_1 \iint\limits_{Q_T} {h^{\alpha + m - n} h^2_{x}\,dx
dt} \leqslant \\
\tfrac{a_0\alpha(1 - \alpha)}{6}\iint\limits_{Q_T} {h^{\alpha-2}
h^4_{x}\,dx dt} + \tfrac{3a_1^2}{2a_0 \alpha(1 -
\alpha)}\iint\limits_{Q_T} {h^{\alpha + 2(m-n + 1)} \,dx dt}.
\end{multline}
Using the bound (\ref{E:b2}) in (\ref{E:b1}) yields
\begin{align}
& \label{E:b3} \int\limits_{\Omega}
{G^{(\alpha)}_{\varepsilon}(h(x,T))\,dx} + a_0 \iint\limits_{Q_T}
{h^{\alpha}h^2_{xx}\,dx dt}  + a_0 \tfrac{\alpha(1 - \alpha)}{6}
\iint\limits_{Q_T} {h^{\alpha -2
}h^4_{x}\,dx dt} \\
& \leqslant \int\limits_{\Omega}
{G^{(\alpha)}_{\varepsilon}(h_{0\varepsilon})\,dx}
+\tfrac{3a_1^2}{2a_0\alpha(1 - \alpha)}\iint\limits_{Q_T}
{h^{\alpha + 2(m-n+1)} \,dx dt}.
\end{align}
By  (\ref{LpH1}),
\begin{equation}\label{E:b9}
\iint\limits_{Q_T} {h^{\alpha + 2(m-n+1)} \,dx dt} \leqslant b_4
 \int \limits_0^T {\left(
\int\limits_{\Omega} {h^2_{x} \,dx} \right)^{\tfrac{\alpha}{2}+ m
- n + 1} dt} + b_5 M_\varepsilon^{\alpha+2(m-n+1)} \; T.
\end{equation}
Using (\ref{E:b9}) in (\ref{E:b3}) yields
\begin{align}
&\notag \int\limits_{\Omega}
{G^{(\alpha)}_{\varepsilon}(h(x,T))\,dx} + a_0 \iint\limits_{Q_T}
{h^{\alpha}h^2_{xx}\,dx dt}  + a_0 \tfrac{\alpha(1 - \alpha)}{6}
\iint\limits_{Q_T} {h^{\alpha -2 }h^4_{x}\,dx dt}  \leqslant \\
& \notag \hspace{.2in} \int\limits_{\Omega}
{G^{(\alpha)}_{\varepsilon}(h_{0\varepsilon})\,dx} + d_1
\int\limits_0^T \left( \int\limits_\Omega h_x^2 \; dx
\right)^{\tfrac{\alpha}{2} + m - n
+ 1} \; dt + d_2 \; T \\
& \hspace{.2in} \label{E:b9a} \leqslant \int\limits_{\Omega}
{G^{(\alpha)}_{\varepsilon}(h_{0\varepsilon})\,dx} + d_3
\int\limits_0^T \max\left\{ 1, \left( \int\limits_\Omega h_x^2 \;
dx \right)^{\tfrac{\alpha}{2}+ m - n + 1} \right\} \; dt
\end{align}
where
$$
d_1 = b_4 \; \tfrac{3 a_1^2}{2 a_0 \alpha (1-\alpha)}, \quad d_2 =
b_5 \; \tfrac{3 a_1^2}{2 a_0 \alpha (1-\alpha)} \;
M_\varepsilon^{\alpha+2(m - n + 1)}, \quad d_3 = d_1+d_2.
$$

Taking $\delta \to 0$ in (\ref{D:a5}) yields
\begin{equation}\label{E:b9b}
\int\limits_{\Omega} {h_x^2 \,dx} + a_0 \iint\limits_{Q_T}
{f_{\varepsilon}(h) h^2_{xxx}\,dx dt}  \leqslant
\int\limits_{\Omega} {h_{0\varepsilon, x}^2\,dx} +
\tfrac{a_1^2}{a_0} \iint\limits_{Q_T} { h^{2m - n} h^2_{x} \,dx
dt}.
\end{equation}
Applying the Cauchy inequality,
\begin{align} \label{E:b9c}
& \tfrac{a_1^2}{a_0} \iint\limits_{Q_T} h^{2m-n} h_x^2 \; dx dt \\
& \hspace{.2in}  \notag \leqslant \tfrac{a_0 \alpha(1-\alpha)}{6}
\iint\limits_{Q_T} h^{\alpha-2} h_x^4 \; dx dt + \tfrac{3 a_1^4}{
2a_0^3 \alpha(1-\alpha)} \iint\limits_{Q_T} h^{2(2m-n+1)-\alpha}
\; dx dt.
\end{align}
By  (\ref{LpH1}),
\begin{equation}\label{E:b8'}
\iint\limits_{Q_T} {h^{2(2m-n+1)-\alpha} \,dx dt} \leqslant b_4
\int \limits_0^T {\left( \int\limits_{\Omega} {h^2_{x} \,dx}
\right)^{2m -n + 1 -\tfrac{\alpha}{2}} dt} + b_5
M_\varepsilon^{2(2m-n+1)-\alpha} \; T.
\end{equation}
Using (\ref{E:b9c}) and (\ref{E:b8'}) in (\ref{E:b9b}) yields
\begin{align} \notag
& \int\limits_{\Omega} {h_x^2 \,dx} + a_0 \iint\limits_{Q_T}
{f_{\varepsilon}(h) h^2_{xxx}\,dx dt} \leqslant
\int\limits_{\Omega} {h_{0\varepsilon, x}^2\,dx} + \tfrac{a_0
\alpha(1-\alpha)}{6}
\iint\limits_{Q_T} { h^{\alpha-2} h^4_{x} \,dx dt}\\
&  \hspace{.2in} \label{E:b9d}  + d_4 \int\limits_0^T \left(
\int\limits_\Omega h_x^2 \; dx \right)^{2m -n + 1
-\tfrac{\alpha}{2}} \; dt+ d_5 \; T
\leqslant  \int\limits_{\Omega} {h_{0\varepsilon, x}^2\,dx} \\
& \hspace{.2in} \notag + \tfrac{a_0 \alpha(1-\alpha)}{6}
\iint\limits_{Q_T} { h^{\alpha-2} h^4_{x} \,dx dt} + d_6
\int\limits_0^T \max\left\{ 1, \left( \int\limits_\Omega h_x^2 \;
dx \right)^{2m-n + 1 -\tfrac{\alpha}{2}} \right\} \; dt
\end{align}
where
$$
d_4 = \tfrac{3 a_1^4}{2 a_0^3 \alpha(1-\alpha)}  \; b_4, \qquad
d_5 = b_5 \tfrac{3 a_1^4}{2a_0^3 \alpha(1-\alpha)} \;
M_\varepsilon^{2(2m-n+1) -\alpha} , \qquad d_6 = d_4 + d_5.
$$
Add
$$
\int\limits_{\Omega} {G^{(\alpha)}_{\varepsilon}(h(x,T))\,dx}
$$
to both sides of (\ref{E:b9d}) and add
$$
a_0 \iint\limits_{Q_T} h^\alpha h_{xx}^2 \; dx dt
$$
to the right--hand side of the resulting inequality.    Using
(\ref{E:b9a}) yields
\begin{align} \label{E:b9e}
& \int\limits_{\Omega} {h_x^2(x,T) \,dx} + \int\limits_{\Omega}
{G^{(\alpha)}_{\varepsilon}(h(x,T))\,dx} + a_0
\iint\limits_{Q_T} {f_{\varepsilon}(h) h^2_{xxx}\,dx dt} \\
& \hspace{.2in} \notag \leq
 \int\limits_{\Omega} {h_{0\varepsilon, x}^2\,dx}
+ \int\limits_\Omega G^{(\alpha)}_{\varepsilon}(h_{0\varepsilon})
\,dx
+ d_3 \int\limits_0^T \max\left\{ 1, \left( \int\limits_\Omega h_x^2 \; dx \right)^{\tfrac{\alpha}{2}+m-n +1} \right\} \\
& \hspace{.4in} \notag + d_6 \int\limits_0^T \max\left\{ 1, \left(
\int\limits_\Omega h_x^2 \; dx \right)^{2m- n +
1-\tfrac{\alpha}{2}}
\right\} \; dt \\
& \hspace{.2in} \leqslant \notag
 \int\limits_{\Omega} {h_{0\varepsilon, x}^2\,dx}
+ \int\limits_\Omega G^{(\alpha)}_{\varepsilon}(h_{0\varepsilon})
\,dx + d_7 \int\limits_0^T \max\left\{ 1, \left(
\int\limits_\Omega h_x^2 \; dx \right)^{\gamma_3} \right\}
\end{align}
where $d_7 = d_3 + d_6$, $\gamma_3 = \max \{\alpha/2 + m - n +1,
2m- n + 1 - \alpha/2 \}$.

Applying the nonlinear Gr\"onwall lemma \cite{Bihari} to
$$
v(T) \leqslant v(0) + d_7 \int\limits_0^T \max\{1,v^{\gamma_3}(t)\} \;
dt$$ with $v(T) = \int h_x^2(x,T) + \:
G_{\varepsilon}^{(\alpha)}(h(x,T)) \; dx$ yields
$$
v(t) \leq
\begin{cases}
\begin{cases}
v(0) + d_7 t & \mbox{if} \quad t < t_0 := \tfrac{1-v(0)}{d_9} \\
\left( 1 -  d_7(\gamma_3 - 1) (t-t_0) \right)^{-\tfrac{1}{\gamma_3
- 1}} & \mbox{if} \quad t \geqslant t_0
\end{cases}
& \mbox{if} \quad v(0) < 1 \\
\left( v(0)^{1 - \gamma_3} - d_7(\gamma_3 - 1) t
\right)^{-\tfrac{1}{\gamma_3 - 1}} & \mbox{if} \quad v(0) \geqslant 1
\end{cases}
$$
From this,
\begin{align} \label{bound1}
& \int\limits_\Omega (h_x^2(x,T) + G_{\varepsilon}^{(\alpha)}(h(x,T))) \; dx \\
& \hspace{.4in} \leqslant 2^{\tfrac{1}{\gamma_3 - 1}} \max\left\{ 1,
\int\limits_\Omega ({h_{0,\varepsilon}}_x^2(x) +
G_{\varepsilon}^{(\alpha)}(h_{0,\varepsilon}(x))) \; dx \right\} =
K_\varepsilon < \infty \notag
\end{align}
for all
$$
0 \leqslant T \leqslant T_{\varepsilon,loc}^{(\alpha)} :=
\tfrac{1}{2d_7(\gamma_3 - 1)} \min\left\{ 1,   \left(
\int\limits_\Omega ({h_{0,\varepsilon}}_x^2(x) +
G_{\varepsilon}^{(\alpha)}(h_{0,\varepsilon}(x))) \;
dx\right)^{-(\gamma_3 - 1)} \right\}.
$$
The bound (\ref{bound1}) holds for all $0 < \varepsilon <
\varepsilon_0$ where $\varepsilon_0$ is from Lemma \ref{MainAE}
and for all $t \leq
\min\{T_{loc},T_{\varepsilon,loc}^{(\alpha)}\}$ where $T_{loc}$ is
from Lemma \ref{MainAE}.

Using the $\delta \to 0, \varepsilon \to 0$ convergence of the
initial data and the choice of $\theta \in (0,2/5)$ (see
(\ref{D:inreg})) as well as the assumption that the initial data
$h_0$ has finite $\alpha$-entropy (\ref{finite_alpha_ent}), the
times $T_{\varepsilon, loc}^{(\alpha)}$ converge to a positive
limit and the upper bound $K_\varepsilon$ in (\ref{bound1}) can be
taken finite and independent of $\varepsilon$. (We refer the
reader to the end of the proof of Lemma \ref{F:local_BF} in this
Appendix for a fuller explanation of a similar case.) Therefore
there exists $\varepsilon_0^{(\alpha)}$ and $K$ such that the
bound (\ref{bound1}) holds for all $0 < \varepsilon <
\varepsilon_0^{(\alpha)}$ with $K$ replacing $K_\varepsilon$ and
for all
\begin{equation} \label{Tloc_alpha_is}
0 \leqslant t \leqslant T_{loc}^{(\alpha)} := \min\left\{T_{loc},
\tfrac{9}{10} \lim_{\varepsilon \to 0}
T_{\varepsilon,loc}^{(\alpha)} \right\}
\end{equation}
where $T_{loc}$ is the time from Lemma \ref{MainAE}.  Also,
without loss of generality, $\varepsilon_0^{(\alpha)}$ can be
taken to be less than or equal to the $\varepsilon_0$ from Lemma
\ref{MainAE}.

Using the uniform bound on $\int h_x^2$ that (\ref{bound1})
provides, one can find a uniform-in-$\varepsilon$ bound for the
right-hand-side of (\ref{E:b9a}) yielding the desired bound
\begin{equation} \label{alpha_bound1}
\int\limits_{\Omega} {G^{(\alpha)}_{\varepsilon}(h(x,T))\,dx} +
a_0 \iint\limits_{Q_T} {h^{\alpha}h^2_{xx}}\,dx dt
 +
a_0 \tfrac{\alpha(1 - \alpha)}{6} \iint\limits_{Q_T} {h^{\alpha -2
}h^4_{x}\,dx dt}  \leqslant K_1
\end{equation}
which holds for all $0 <\varepsilon < \varepsilon_0^{(\alpha)}$
and all $0 \leqslant T \leqslant T_{loc}^{(\alpha)}$.

It remains to argue that (\ref{alpha_bound1}) implies that for all
$0 < \varepsilon < \varepsilon_0^{(\alpha)}$ that
$h_\varepsilon^{\alpha/2 + 1}$ and $h_\varepsilon^{\alpha/4 +
1/2}$ are contained in balls in $L^{2}(0, T; H^2(\Omega))$ and
$L^{2}(0, T; W^1_4(\Omega))$ respectively. It suffices to show
that
$$
\iint\limits_{Q_T} \left( h_\varepsilon^{\alpha/2 +
1}\right)^2_{xx} \; dx dt \leqslant K, \qquad \iint\limits_{Q_T} \left(
h_\varepsilon^{\alpha/4 + 1/2}\right)^4_x \; dx dt \leqslant K
$$
for some $K$ that is independent of $\varepsilon$ and $T$. The
integral $\iint (h_\varepsilon^{\alpha/2+1})_{xx}^2$ is a linear
combination of $\iint h^{\alpha-2} h_x^4$, $\iint h^{\alpha-1}
h_x^2 h_{xx}$, and $\iint h^{\alpha} h_{xx}^2$. Integration by
parts and the periodic boundary conditions imply
\begin{equation} \label{calc_ident}
\tfrac{1 - \alpha}{3} \iint\limits_{Q_T} {h^{\alpha -2
}h^4_{x}\,dx dt} = \iint\limits_{Q_T} {h^{\alpha -1 }h^2_{x}
h_{xx}\,dx dt}
\end{equation}
Hence $\iint (h_\varepsilon^{\alpha/2+1})_{xx}^2$ is a linear
combination of $\iint h^{\alpha-2} h_x^4$, and $\iint h^{\alpha}
h_{xx}^2$. By (\ref{alpha_bound1}), the two integrals are
uniformly bounded independent of $\varepsilon$ and $T$ hence
$\iint (h_\varepsilon^{\alpha/2+1})_{xx}^2$ is as well, yielding
the first part of (\ref{E:b13}).

The uniform bound of $\iint (h_\varepsilon^{\alpha/4 + 1/2})_x^4$
follows immediately from the uniform bound of $\iint h^{\alpha-2}
h_x^4$, yielding the second part of (\ref{E:b13}).

\textbf{Case 2: $-\tfrac{1}{2} < \alpha < 0$.}  For $\alpha < 0$
the coefficient multiplying $\iint h^{\alpha-2} h_x^4$ in
(\ref{E:b1}) is negative.  However, we will show that if $\alpha >
-1/2$ then one can replace this coefficient with a positive
coefficient while also controlling the term $\iint h^\alpha h_x^2$
on the right-hand side of (\ref{E:b1}).

Applying the Cauchy-Schwartz inequality to the right--hand side of
(\ref{calc_ident}), dividing by $\sqrt{\iint h^{\alpha-2} h_x^4}$,
and squaring both sides of the resulting inequality yields
\begin{equation}\label{E:b4}
\iint\limits_{Q_T} {h^{\alpha -2 }h^4_{x}\,dx dt} \leq
\tfrac{9}{(1 - \alpha)^2} \iint\limits_{Q_T} {h^{\alpha}
h^2_{xx}\,dx dt} \qquad \forall \alpha < 1.
\end{equation}
Using (\ref{E:b4}) in (\ref{E:b1}) yields
\begin{align}
& \label{E:b5} \int\limits_{\Omega}
{G^{(\alpha)}_{\varepsilon}(h(x,T))\,dx} + a_0
\tfrac{1+2\alpha}{1-\alpha}\iint\limits_{Q_T} {h^{\alpha}h^2_{xx}\,dx dt}  \\
& \hspace{.2in} \notag \leqslant \int\limits_{\Omega}
{G^{(\alpha)}_{\varepsilon}(h_{0\varepsilon})\,dx} + a_1
\iint\limits_{Q_T} {h^{\alpha + m - n} h^2_{x}\,dx dt}. \notag
\end{align}
Note that if $\alpha > -1/2$ then all the terms on the left--hand
side of (\ref{E:b5}) are positive.  We now control the term $\iint
h^\alpha h_x^2$ on the right-hand side of (\ref{E:b5}).

By integration by parts and the periodic boundary conditions
\begin{equation}  \label{calc_ident2}
\iint\limits_{Q_T} h^{\alpha + m - n} h_x^2 \; dx dt = -
\tfrac{1}{\alpha + m - n + 1} \iint\limits_{Q_T} h^{\alpha+ m - n
+ 1} h_{xx} \; dx dt
\end{equation}
Applying the Cauchy-Schwartz inequality and the Cauchy inequality
to (\ref{calc_ident2}) yields
\begin{equation} \label{E:b6}
a_1 \iint\limits_{Q_T} h^{\alpha + m - n} h_x^2\; dx dt \leq
\tfrac{a_0 (1+2 \alpha)}{2(1-\alpha)}  \iint\limits_{Q_T} h^\alpha
h_{xx}^2 \ \; dx dt + \tfrac{a_1^2 (1-\alpha)}{2 a_0 (1+2 \alpha)
(\alpha + m - n + 1)^2} \iint\limits_{Q_T} h^{\alpha+2(m-n+1)}\;
dx dt
\end{equation}
Using inequality (\ref{E:b6}) in (\ref{E:b5}) yields
\begin{align}
& \label{E:b6a} \int\limits_{\Omega}
{G^{(\alpha)}_{\varepsilon}(h(x,T))\,dx} + a_0
\tfrac{1+2\alpha}{2(1-\alpha)}\iint\limits_{Q_T} {h^{\alpha}h^2_{xx}\,dx dt}  \\
& \notag \leqslant \int\limits_{\Omega}
{G^{(\alpha)}_{\varepsilon}(h_{0\varepsilon})\,dx} + \tfrac{a_1^2
(1-\alpha)}{2 a_0 (1+2 \alpha) (\alpha + m - n + 1)^2}
\iint\limits_{Q_T} h^{\alpha +2(m-n+1)}\; dx dt . \notag
\end{align}
Adding
$$
\tfrac{a_0 (1+2 \alpha) (1-\alpha)}{36}  \iint\limits_{Q_T}
h^{\alpha-2} h_x^4 \; dx dt
$$
to both sides of (\ref{E:b6a}) and using the inequality
(\ref{E:b4}) yields
\begin{align}
& \label{E:b6b} \int\limits_{\Omega}
{G^{(\alpha)}_{\varepsilon}(h(x,T))\,dx} + a_0
\tfrac{(1+2\alpha)}{4(1-\alpha)}\iint\limits_{Q_T} {h^{\alpha}h^2_{xx}\,dx dt}  \\
& \hspace{.2in} \notag + \tfrac{a_0(1+2\alpha)(1-\alpha)}{36}
\iint\limits_{Q_T} h^{\alpha-2} h_x^4 \; dx dt \leq
\int\limits_{\Omega}
{G^{(\alpha)}_{\varepsilon}(h_{0\varepsilon})\,dx} \\
& \hspace{.2in} \notag + \tfrac{a_1^2 (1-\alpha)}{2 a_0 (1+2
\alpha) (\alpha + m - n + 1)^2} \iint\limits_{Q_T} h^{\alpha
+2(m-n+1)}\; dx dt. \notag
\end{align}
Using (\ref{E:b9}) in (\ref{E:b6b}) yields
\begin{align}
&\notag \int\limits_{\Omega}
{G^{(\alpha)}_{\varepsilon}(h(x,T))\,dx} +
\tfrac{a_0(1+2\alpha)}{4(1-\alpha)}
\iint\limits_{Q_T} {h^{\alpha}h^2_{xx}\,dx dt}  \\
& \notag \hspace{.2in} + \tfrac{a_0(1+2\alpha)(1-\alpha)}{36}
\iint\limits_{Q_T} {h^{\alpha -2 }h^4_{x}\,dx dt}  \leq
\int\limits_{\Omega}
{G^{(\alpha)}_{\varepsilon}(h_{0\varepsilon})\,dx} \\
& \hspace{.2in} \notag + e_1 \int\limits_0^T \left(
\int\limits_\Omega h_x^2 \; dx \right)^{\tfrac{\alpha}{2} + m - n
+ 1} \; dt + e_2 \; T \\
& \hspace{.2in} \label{E:b9a'} \leqslant \int\limits_{\Omega}
{G^{(\alpha)}_{\varepsilon}(h_{0\varepsilon})\,dx} + e_3
\int\limits_0^T \max\left\{ 1, \left( \int\limits_\Omega h_x^2 \;
dx \right)^{\tfrac{\alpha}{2}+ m - n +1} \right\} \; dt
\end{align}
where
$$
e_1 = \tfrac{a_1^2(1-\alpha)}{2a_0(1+2\alpha)(\alpha + m - n +
1)^2}\; b_4, \ \ e_2 = b_5
\tfrac{a_1^2(1-\alpha)}{2a_0(1+2\alpha)(\alpha + m - n + 1)^2}\;
M_\varepsilon^{\alpha+2(m - n + 1)},
$$
and $e_3 = e_1 + e_2 $.

Recall the bound (\ref{E:b9b}):
\begin{equation}\label{E:b9b'}
\int\limits_{\Omega} {h_x^2 \,dx} + a_0 \iint\limits_{Q_T}
{f_{\varepsilon}(h) h^2_{xxx}\,dx dt} \leqslant \int\limits_{\Omega}
{h_{0\varepsilon, x}^2\,dx} + \tfrac{a_1^2}{a_0}
\iint\limits_{Q_T} { h^{2m-n} h^2_{x} \,dx dt} .
\end{equation}
As before, by the Cauchy-Schwartz inequality and the Cauchy
inequality,
\begin{align} \label{E:b9c'}
& \tfrac{a_1^2}{a_0} \iint\limits_{Q_T} h^{2m - n} h_x^2 \; dx dt
\leqslant \tfrac{a_0(1+2\alpha)(1-\alpha)}{36}
\iint\limits_{Q_T} h^{\alpha-2} h_x^4 \; dx dt\\
& \hspace{1.4in}  \notag + \tfrac{9 a_1^4}{a_0^3
(1+2\alpha)(1-\alpha)} \iint\limits_{Q_T} h^{2(2m -n + 1)-\alpha}
\; dx dt.
\end{align}
Using (\ref{E:b9c'}), and  (\ref{E:b8'}) in (\ref{E:b9b'}) yields
\begin{align} \notag
& \int\limits_{\Omega} {h_x^2 \,dx} + a_0 \iint\limits_{Q_T}
{f_{\varepsilon}(h) h^2_{xxx}\,dx dt} \leqslant
\int\limits_{\Omega} {h_{0\varepsilon, x}^2\,dx} + \tfrac{a_0
(1+2\alpha)(1-\alpha)}{36}
\iint\limits_{Q_T} { h^{\alpha-2} h^4_{x} \,dx dt}\\
&  \hspace{.2in} \label{E:b9d'}  + e_4 \int\limits_0^T \left(
\int\limits_\Omega h_x^2 \; dx \right)^{2m-n+1-\tfrac{\alpha}{2}}
\; dt + e_5 \; T
\leqslant  \int\limits_{\Omega} {h_{0\varepsilon, x}^2\,dx} \\
& \hspace{.2in} \notag + \tfrac{a_0 (1+2\alpha)(1-\alpha)}{36}
\iint\limits_{Q_T} { h^{\alpha-2} h^4_{x} \,dx dt} + e_6
\int\limits_0^T \max\left\{ 1, \left( \int\limits_\Omega h_x^2 \;
dx \right)^{2m-n+1-\tfrac{\alpha}{2}} \right\} \; dt
\end{align}
where
$$
e_4 = \tfrac{9 a_1^4}{a_0^3 (1+2\alpha)(1-\alpha)} \; b_4, \ e_5 =
b_5 \tfrac{9 a_1^4}{a_0^3 (1+2\alpha)(1-\alpha)} \;
M_\varepsilon^{2(2m -n + 1)-\alpha} ,\ e_6 = e_4 + e_5.
$$
Add
$$
\int\limits_{\Omega} {G^{(\alpha)}_{\varepsilon}(h(x,T))\,dx}
$$
to both sides of (\ref{E:b9d'}) and add
$$
\tfrac{a_0(1+2\alpha)}{4(1-\alpha)} \iint\limits_{Q_T} h^\alpha
h_{xx}^2 \; dx dt
$$
to the right--hand side of the resulting inequality. Just as
(\ref{E:b9a}) and (\ref{E:b9b}) yielded (\ref{E:b9e}),
(\ref{E:b9a'}) combined with the above inequality yields
\begin{align} \label{E:b9e'}
& \int\limits_{\Omega} {h_x^2(x,T) \,dx} + \int\limits_{\Omega}
{G^{(\alpha)}_{\varepsilon}(h(x,T))\,dx} + a_0
\iint\limits_{Q_T} {f_{\varepsilon}(h) h^2_{xxx}\,dx dt} \\
& \hspace{.2in} \notag \leq
 \int\limits_{\Omega} {h_{0\varepsilon, x}^2\,dx}
+ \int\limits_\Omega G^{(\alpha)}_{\varepsilon}(h_{0\varepsilon})
\,dx
+ e_3 \int\limits_0^T \max\left\{ 1, \left( \int\limits_\Omega h_x^2 \; dx \right)^{\tfrac{\alpha}{2}+ m - n +1} \right\} \\
& \hspace{.4in} \notag + e_6 \int\limits_0^T \max\left\{ 1, \left(
\int\limits_\Omega h_x^2 \; dx \right)^{2m-n+1-\tfrac{\alpha}{2}}
\right\} \; dt \\
& \hspace{.2in} \leqslant \notag
 \int\limits_{\Omega} {h_{0\varepsilon, x}^2\,dx}
+ \int\limits_\Omega G^{(\alpha)}_{\varepsilon}(h_{0\varepsilon})
\,dx + e_7 \int\limits_0^T \max\left\{ 1, \left(
\int\limits_\Omega h_x^2 \; dx \right)^{\gamma_3} \right\}
\end{align}
where $e_7 = e_3 + e_6$, $\gamma_3 = \max \{ \alpha /2+ m - n +1,
2m-n+1 - \alpha/2\} $.

The rest of the proof now continues as in the $0 < \alpha < 1$
case. Specifically, one finds a bound
\begin{align} \label{bound2}
& \int\limits_\Omega (h_x^2(x,T) + G_{\varepsilon}^{(\alpha)}(h(x,T))) \; dx \\
& \hspace{.4in} \leqslant 2^{\tfrac{1}{\gamma_3 - 1}} \max\left\{ 1,
\int\limits_\Omega ({h_{0,\varepsilon,x}}^2(x) +
G_{\varepsilon}^{(\alpha)}(h_{0,\varepsilon}(x))) \; dx \right\} =
K_\varepsilon < \infty \notag
\end{align}
for all
$$
0 \leqslant T \leqslant T_{\varepsilon,loc}^{(\alpha)} :=
\tfrac{1}{2e_7(\gamma_3 - 1)}  \min\left\{ 1,   \left(
\int\limits_\Omega ({h_{0,\varepsilon,x}}^2(x) +
G_{\varepsilon}^{(\alpha)}(h_{0,\varepsilon}(x))) \;
dx\right)^{-(\gamma_3 - 1)} \right\}.
$$
The time $T_{loc}^{(\alpha)}$ is defined as in
(\ref{Tloc_alpha_is}) and the uniform bound (\ref{bound2}) used to
bound the right hand side of (\ref{E:b9a'}) yields the desired
bound
\begin{align} \notag
& \int\limits_{\Omega} {G^{(\alpha)}_{\varepsilon}(h(x,T))\,dx} +
\tfrac{a_0(1+2\alpha)}{4(1-\alpha)}
\iint\limits_{Q_T} {h^{\alpha}h^2_{xx}\,dx dt}  \\
& \hspace{1.2in} + \tfrac{a_0(1+2\alpha)(1-\alpha)}{36}
\iint\limits_{Q_T} {h^{\alpha -2 }h^4_{x}\,dx dt}  \leqslant K_2.
\label{alpha_bound2}
\end{align}
\end{proof}

\begin{proof}[Sketch of Proof of Lemma~\ref{F:LemPos}]

In the following, we denote the positive, classical solution
$h_\varepsilon$ constructed in Lemma \ref{MainAE2} by $h$
(whenever there is no chance of confusion).

Recall the entropy function $G_{0}^{(\alpha)}(z)$ defined by
(\ref{E:reg4}). Using the local entropy inequality (\ref{G:main1})
with $\zeta = \zeta(x)$ from Lemma~\ref{G:lem1}, we obtain
\begin{multline}\label{F:main1-2}
\int\limits_{\Omega} {\zeta^4(x)G^{(\alpha)}_{0}(h(x,T))\,dx}  +
C_1 \iint\limits_{Q_T} { (h^{\tfrac{\alpha + 2}{2}})^2_{xx}\zeta^4
\,dx dt} \leqslant \int\limits_{\Omega}
{\zeta^4(x)G^{(\alpha)}_{0}(h_{0})\,dx} +\\
 C_2 \iint\limits_{Q_T}
{ h^{\alpha +2} (\zeta_{x}^4 + \zeta^2 \zeta_{xx}^2) \,dx dt} +
C_3 \iint\limits_{Q_T} { h^{2(m - n + 1)+ \alpha}\zeta^4 \,dx dt}.
\end{multline}
Due to $h \in L^{\infty}(0,T_{loc}^{(\alpha)}; H^1(\Omega))$, we
deduce from (\ref{F:main1-2}) that
\begin{equation}\label{F:pp7}
\int\limits_{\Omega} {\zeta^4(x) G^{(\alpha)}_{0}(h(x,T))\,dx}
\leqslant \int\limits_{\Omega} {\zeta^4(x) G_{0}(h_{0}(x))\,dx} +
C_4 T \leqslant K < \infty
\end{equation}
for any $T \in [0, T_{loc}^{(\alpha)}]$.
\end{proof}

\section*{B Proof of second moment estimate}
\label{Bapp}
\renewcommand{\thesection}{B}\setcounter{equation}{0}

\begin{proof}[Sketch of Proof of Lemma~\ref{F:lem0}]
Let
\begin{equation}\label{F:g-delta}
\widetilde{G}_{\varepsilon}(z) =  \tfrac{z^{ 2-n}}{  2 - n} -
\tfrac{\varepsilon \,z^{ 2-s}}{s - 2},\
\widetilde{G}'_{\varepsilon}(z) = \tfrac{z }{f_{\varepsilon}(z)},
\end{equation}
$$
\widetilde{G}''_{\varepsilon}(z) = \tfrac{1}{f_{\varepsilon}(z)} -
\tfrac{z f'_{\varepsilon}(z)}{f^2_{\varepsilon}(z)} = (1 - n)
\tfrac{1}{f_{\varepsilon}(z)} - \varepsilon (s - n)z^{ - s} .
$$
Here we use the equality
$$
f'_{\varepsilon}(z) = n z^{-1} f_{\varepsilon}(z) + \varepsilon (s
- n)z^{-(s+1)}f_{\varepsilon}^2(z).
$$
Multiplying (\ref{D:1r'}) for $\delta = 0$ by $x^2
\widetilde{G}'_{\varepsilon}(h)$, and integrating on $Q_T$, yields
\begin{multline}\label{F:d3}
\int\limits_{\Omega} {x^2 \widetilde{G}_{\varepsilon}(h)\,dx} -
\int\limits_{\Omega} {x^2
\widetilde{G}_{\varepsilon}(h_{0\varepsilon})\,dx} = \\
\iint\limits_{Q_T} {f_{\varepsilon}(h) (a_0 h_{xxx} + a_1
D''_{\varepsilon}(h)h_x ) \bigl( 2x \tfrac{h}{f_{\varepsilon}(h)}
+ x^2 \widetilde{G}''_{\varepsilon}(h)
h_x \bigr) \,dx dt} = \\
2 \iint\limits_{Q_T} {x h (a_0 h_{xxx} + a_1
D''_{\varepsilon}(h) h_x)\,dx dt} + \\
(1 - n ) \iint\limits_{Q_T} {x^2   h_x (a_0 h_{xxx} + a_1
D''_{\varepsilon}(h)h_x ) \,dx dt} - \\
\varepsilon (s - n) \iint\limits_{Q_T} {x^2 h^{ - s
}f_{\varepsilon}(h) h_x (a_0 h_{xxx} + a_1 D''_{\varepsilon}(h)h_x
) \,dx dt}  =: I_{1} + I_{2} + I_3.
\end{multline}
We now bound the terms $I_1$, $I_2$ and $I_3$.  First,
\begin{multline}\label{F:d4}
I_{1} = 2a_0\int\limits_{0}^T { x h  h_{xx} \biggl
|_{\partial\Omega}\,dx} - 2a_0\iint\limits_{Q_T} {\{h h_{xx} +
  x\,
h_x h_{xx} \}\,dx dt} + \\
2a_1 \iint\limits_{Q_T} {x h  D''_{\varepsilon}(h) h_x \,dx dt} =
\int\limits_{0}^T { x \bigl( 2a_0 h  h_{xx} - a_0 h_{x}^2 + 2a_1
L_{\varepsilon}(h)\bigr) \biggl |_{\partial\Omega}\,dt} + \\
3a_0 \iint\limits_{Q_T} { h_{x}^2 \,dx dt}
 - 2 a_1 \iint\limits_{Q_T} {L_{\varepsilon}(h) \,dx dt} , \
L_{\varepsilon}(z) := \int\limits_{0}^{z} {\tau \,
D''_{\varepsilon}(\tau)\,d\tau},
\end{multline}
and
\begin{multline}\label{F:d5}
I_{2} = a_0 (1 - n) \iint\limits_{Q_T} {x^2 h_x h_{xxx} \,dx dt} +
a_1( 1 - n) \iint\limits_{Q_T} {x^2 D''_{\varepsilon}(h)
 h_x^2 \,dx dt} = \\
-2a_0( 1-n) \iint\limits_{Q_T} {x  h_x h_{xx} \,dx dt} - a_0( 1 -
n) \iint\limits_{Q_T} {x^2
 h^2_{xx} \,dx dt} + \\
a_1( 1 - n) \iint\limits_{Q_T} {x^2 D''_{\varepsilon}(h) h_x^2
\,dx dt} = - a_0 (1-n)\int\limits_{0}^T { x h_{x}^2 \biggl
|_{\partial\Omega}\,dt} + a_0(1 - n) \iint\limits_{Q_T} {h_x^2
\,dx dt} - \\
a_0(1 - n) \iint\limits_{Q_T} {x^2 h^2_{xx} \,dx dt}
+ a_1(1 - n) \iint\limits_{Q_T} {x^2 D''_{\varepsilon}(h) h_x^2
\,dx dt}.
\end{multline}
Using the H\"{o}lder inequality and (\ref{D:d2}) with $\delta =0$,
we find
\begin{multline*}
I_{3} \leqslant \varepsilon (s - n) \Bigl( \iint\limits_{Q_T}
{f_{\varepsilon}(h)(a_0 h_{xxx} + a_1 D''_{\varepsilon}(h)h_x )^2
\,dx dt} \Bigr)^{\tfrac{1}{2}} \times \\
 \Bigl( \iint\limits_{Q_T}
{x^4 h^{-2s}f_{\varepsilon}(h) h_x^2 \,dx dt} \Bigr)^{\tfrac{1}{2}}
\leqslant \varepsilon \, C \Bigl( \iint\limits_{Q_T} {
\tfrac{h_x^2}{(h^{s-n} + \varepsilon )^2}  \,dx dt}
\Bigr)^{\tfrac{1}{2}} .
\end{multline*}
Using the Young's inequality $a b \leqslant \tfrac{a^p}{p} +
\tfrac{b^q}{q}, \ \tfrac{1}{p} + \tfrac{1}{q} =1$ ($ \Rightarrow p
\,a b \leqslant a^p + \tfrac{p}{q} b^q = a^p + (p - 1) b^q$) with
$a = z^{\tfrac{s - n}{p}}$ and $b = \bigl( \tfrac{\varepsilon}{p -
1} \bigr)^{\tfrac{1}{q}}$, we deduce
$$
I_{3} \leqslant \varepsilon^{\tfrac{q-1}{q} } \, C \Bigl(
\iint\limits_{Q_T} { h^{- \tfrac{2(s-n)}{p}} h_x^2 \,dx dt}
\Bigr)^{\tfrac{1}{2}} = \varepsilon^{\tfrac{q-1}{q} } \, \widetilde{C}
\Bigl( \iint\limits_{Q_T} { (h^{\tfrac{p - s + n}{p}} )_x^2 \,dx
dt} \Bigr)^{\tfrac{1}{2}},
$$
choosing $p = - \tfrac{s - n}{\alpha} > 1$ and $q = \tfrac{2(s - n)
}{2(s -n) + \alpha}  > 1$ ($\Rightarrow 0 > \alpha > - 2( s -
n)$), due to (\ref{E:b13}), we find
\begin{multline}\label{F:d6}
I_{3} \leqslant \varepsilon^{\tfrac{q-1}{q} } \, \widetilde{C} \Bigl(
\iint\limits_{Q_T} { (h^{\tfrac{p - s + n}{p}} )_x^2 \,dx dt}
\Bigr)^{\tfrac{1}{2}} = \\
\varepsilon^{- \tfrac{\alpha}{s-n} } \,
\widetilde{C} \Bigl( \iint\limits_{Q_T} { (h^{\tfrac{\alpha + 2}{2}}
)_x^2 \,dx dt} \Bigr)^{\tfrac{1}{2}} \leqslant C_1 \varepsilon^{-
\tfrac{\alpha}{s-n} },
\end{multline}
where the constant $C_1 > 0$ is independent of $\varepsilon$.
Hence,
\begin{equation}\label{F:d6'}
\lim_{\varepsilon \to 0}{I_3} \leqslant 0.
\end{equation}

From (\ref{F:d3})--(\ref{F:d5}), we deduce that
\begin{multline}\label{F:d7}
\int\limits_{\Omega} {x^2 \widetilde{G}_{\varepsilon}(h)\,dx} =
\int\limits_{\Omega} {x^2 \widetilde{G}_{\varepsilon}
(h_{0\varepsilon})\,dx} + \\
\int\limits_{0}^T { x \bigl( 2a_0 h h_{xx} - a_0(2  - n)  h_{x}^2
+ 2a_1 L_{\varepsilon}(h)\bigr) \biggl |_{\partial\Omega}\,dt} +
a_0(4  - n) \iint\limits_{Q_T} { h_x^2 \,dx dt}   -\\
a_0( 1 - n) \iint\limits_{Q_T} {x^2   h^2_{xx} \,dx dt}   + a_1( 1
- n) \iint\limits_{Q_T} {x^2 D''_{\varepsilon}(h)
 h_x^2 \,dx dt}  - \\
2 a_1 \iint\limits_{Q_T} {L_{\varepsilon}(h) \,dx dt} + I_3.
\end{multline}
Letting $\varepsilon \to 0$ in (\ref{F:d7}), due to (\ref{F:d6'}),
we find
\begin{multline}\label{F:d9}
\int\limits_{\Omega} {x^2 \widetilde{G}_{0} (h)\,dx} \leqslant
\int\limits_{\Omega} {x^2 \widetilde{G}_{0} (h_{0})\,dx} + a_0(4 - n)
\iint\limits_{Q_T} { h_x^2 \,dx
dt} - \\
2a_1 (  m - n + 1) \iint\limits_{Q_T} {D_{0}(h)\,dx dt} +
\int\limits_{0}^T { x \bigl( 2a_0 h  h_{xx} - a_0(2  - n) h_{x}^2
+ 2a_1 L_{0}(h)\bigr) \biggl |_{\partial\Omega}\,dt} + \\
a_0(n - 1 ) \iint\limits_{Q_T} {x^2 h^2_{xx} \,dx dt} + a_1(1 - n
) \iint\limits_{Q_T} {x^2
h^{ m - n} h_x^2 \,dx dt}   =\\
\int\limits_{\Omega} {x^2 \widetilde{G}_{0} (h_{0})\,dx} +  2(4
  -n ) \int\limits_{0}^T {\mathcal{E}_{0}(t)
\,dt} - 2\,a_1(m - 3 )
\iint\limits_{Q_T} {D_{0}(h)\,dx dt} + \\
\int\limits_{0}^T { x \bigl( 2a_0 h h_{xx} - a_0(2  - n) h_{x}^2 +
2a_1 L_{0}(h)\bigr) \biggl |_{\partial\Omega}\,dt} + \\
a_0(n - 1 ) \iint\limits_{Q_T} {x^2  h^2_{xx} \,dx dt} + a_1(1 - n
 ) \iint\limits_{Q_T} {x^2
h^{ m - n} h_x^2 \,dx dt}.
\end{multline}
Due to
\begin{multline*}
\iint\limits_{Q_T} {x^2 h^{ m -n} h_x^2 \,dx dt} = -2
\int\limits_{0}^T { x D_{0}(h) \biggl |_{\partial\Omega}\,dt} + 2
\iint\limits_{Q_T} {D_{0}(h)\,dx dt} - \\
\iint\limits_{Q_T} {x^2 D'_{0}(h) h_{xx} \,dx dt},
\end{multline*}
from (\ref{F:d9}), in view of (\ref{C:d2'}), we deduce that
\begin{multline}\label{F:d8}
\int\limits_{\Omega} {x^2 \widetilde{G}_{0} (h)\,dx} + 2a_1(m + n - 4)
\iint\limits_{Q_T} {D_{0}(h)\,dx dt} \leqslant
\int\limits_{\Omega} {x^2 \widetilde{G}_{0} (h_{0})\,dx} + \\
k_1 \mathcal{E}_{0}(0)T  - (1 - n) \iint\limits_{Q_T} {x^2 h_{xx}(
a_0
h_{xx} + a_1  D'_0(h)) \,dx dt} + W(T) = \\
 \int\limits_{\Omega} {x^2 \widetilde{G}_{0} (h_{0})\,dx} + k_1 \mathcal{E}_{0}(0)T  -
\tfrac{1 - n}{a_0} \iint\limits_{Q_T} {x^2  (a_0 h_{xx} + a_1
D'_0(h))^2 \,dx
dt} + \\
\tfrac{a_1(1 - n)}{a_0} \iint\limits_{Q_T} {x^2 D'_0(h)(a_0 h_{xx}
+ a_1 D'_0(h)) \,dx dt} + W(T),
\end{multline}
where $k_1 = 2(4 -n)$, and $W(T)$ is from (\ref{F:W}). From
(\ref{F:d8}) we find that
\begin{multline}\label{F:d8-1}
\int\limits_{\Omega} {x^2 \widetilde{G}_{0} (h)\,dx} + 2a_1(m + n - 4)
\iint\limits_{Q_T} {D_{0}(h)\,dx dt} + \\
\tfrac{ 1 - n}{2a_0} \iint\limits_{Q_T} {x^2  ( a_0 h_{xx} + a_1
D'_0(h))^2 \,dx dt} \leqslant \int\limits_{\Omega} {x^2
\widetilde{G}_{0} (h_{0})\,dx} + k_1 \mathcal{E}_{0}(0)T + \\
\tfrac{a_1^2( 1 - n)}{2a_0} \iint\limits_{Q_T} {x^2 (D'_0(h))^2
\,dx dt} + W(T) \leqslant \int\limits_{\Omega} {x^2 \widetilde{G}_{0}
(h_{0})\,dx} + k_1 \mathcal{E}_{0}(0)T + \\
 \int \limits_{0}^T {
\Bigl(\widetilde{A} (t) \int \limits_{\Omega} {x^2 \widetilde{G}_{0} (h)
\,dx} \Bigr) dt} + W(T),
\end{multline}
where $m \geqslant 4 - n$, and
$$
\widetilde{A} (t) := \tfrac{a_1^2 |1 - n|(  2- n)}{2a_0( m
-n+1)^2}\|h(.,t)\|_{L^{\infty}(\Omega)}^{ 2m -n}.
$$
Applying the nonlinear Gr\"onwall lemma \cite{Bihari} to
$$
v(T) \leqslant v(0) + k_1 \mathcal{E}_{0}(0)T + \int \limits_{0}^T
{ \widetilde{A}(s) v(s) \, dt} + W(T)
$$
with $v(T) = \int\limits_{\Omega} {x^2 \widetilde{G}_{0}
(h(x,T))\,dx}$ yields
$$
v(T) \leqslant e^{\widetilde{B}(T)}\Bigl( v(0) + \int \limits_{0}^T {
(k_1 \mathcal{E}_{0}(0) + W'(t)) e^{-\widetilde{B}(t)} dt}
 \Bigr)
$$
with $\widetilde{B}(T):= \int\limits_0^T {\widetilde{A}(\tau)\,d\tau}$.
From this we obtain (\ref{F:d9-01}) for $0 < n \leqslant 1$.

In the case of $1 < n < 2$, from (\ref{F:d8}) we find that
\begin{multline}\label{F:d8-3}
\int\limits_{\Omega} {x^2 \widetilde{G}_{0} (h)\,dx} + 2a_1(m + n - 4)
\iint\limits_{Q_T} {D_{0}(h)\,dx dt} \leqslant
\int\limits_{\Omega} {x^2 \widetilde{G}_{0} (h_{0})\,dx} + k_1
\mathcal{E}_{0}(0)T + \\
\tfrac{3a_0(n - 1)}{2} \iint\limits_{Q_T} {x^2 h_{xx}^2 \,dx dt} +
\tfrac{a_1^2(n - 1)}{2a_0} \iint\limits_{Q_T} {x^2 (D'_0(h))^2
\,dx dt} + W(T) \leqslant \int\limits_{\Omega} {x^2
\widetilde{G}_{0} (h_{0})\,dx} + \\
k_1 \mathcal{E}_{0}(0)T + k_2 \iint\limits_{Q_T} {x^2 h_{xx}^2
\,dx dt} + \int \limits_{0}^T { \Bigl( \widetilde{A}(t) \int
\limits_{\Omega} {x^2 \widetilde{G}_{0} (h) \,dx} \Bigr) dt}+ W(T) ,
\end{multline}
where $m \geqslant  4 - n$, $k_2 = \tfrac{3a_0(n - 1)}{2}$, and
$W(t)$ is from (\ref{F:W}). Applying the Gr\"{o}nwall lemma
\cite{Bihari} to (\ref{F:d8-3}), we obtain the estimate
(\ref{F:d9-01}) for $1 < n < 2$.

\end{proof}

\section*{Appendix C}

\renewcommand{\thesection}{D}\setcounter{lemma}{0}

\begin{lemma}\label{A.1}
(\cite{Lions}) Suppose that $X,\ Y,$  and  $Z$ are Banach spaces,
$X \!\Subset \!Y \subset \! Z$, and $X$ and $Z$ are reflexive.
Then the imbedding $ \{ u \in \! L^{p_0 } (0,T;$ $X):$ $\partial
_t u \in L^{p_1 } (0,T;Z),1 < p_i < \infty ,i = 0,1 \} \Subset
L^{p_0 } (0,T;Y)$ is compact.
\end{lemma}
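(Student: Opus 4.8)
The plan is the classical Aubin--Lions--Simon argument. Write $W := \{ u \in L^{p_0}(0,T;X) : \partial_t u \in L^{p_1}(0,T;Z)\}$ with its natural norm, let $\{u_k\}$ be a bounded sequence in $W$, and put $M := \sup_k \|u_k\|_{L^{p_0}(0,T;X)}<\infty$; I must extract a subsequence converging in $L^{p_0}(0,T;Y)$. The crucial reduction is Ehrling's inequality: for every $\eta>0$ there is $C_\eta$ with $\|v\|_Y \leqslant \eta\,\|v\|_X + C_\eta\,\|v\|_Z$ for all $v\in X$. Granting this and the relative compactness of $\{u_k\}$ in $L^{p_0}(0,T;Z)$ (proven below), I would first pass to a subsequence, still written $\{u_k\}$, convergent in $L^{p_0}(0,T;Z)$, and then use
$$ \|u_k-u_l\|_{L^{p_0}(0,T;Y)} \leqslant 2M\eta + C_\eta\,\|u_k-u_l\|_{L^{p_0}(0,T;Z)} $$
to conclude: given $\varepsilon>0$, fix $\eta$ with $2M\eta<\varepsilon/2$ and then take $k,l$ large, so that $\{u_k\}$ is Cauchy, hence convergent, in the Banach space $L^{p_0}(0,T;Y)$.

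Ehrling's inequality is proven by contradiction. If it failed for some $\eta_0>0$, there would be $v_k\in X$ with $\|v_k\|_Y=1$ and $\eta_0\|v_k\|_X + k\|v_k\|_Z < 1$; then $\{v_k\}$ is bounded in $X$ while $\|v_k\|_Z\to 0$, so by compactness of $X\Subset Y$ a subsequence converges in $Y$ to some $v$ with $\|v\|_Y=1$, whereas continuity and injectivity of $Y\hookrightarrow Z$ give $\|v\|_Z=0$ and hence $v=0$, a contradiction.

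For the relative compactness of $\{u_k\}$ in $L^{p_0}(0,T;Z)$ I would use the vector-valued Riesz--Fr\'echet--Kolmogorov compactness criterion: a bounded family $F\subset L^{p_0}(0,T;Z)$ is relatively compact provided that $\{\int_a^b u(t)\,dt : u\in F\}$ is relatively compact in $Z$ for all $0\leqslant a<b\leqslant T$, and that $\sup_{u\in F}\|u(\cdot+h)-u(\cdot)\|_{L^{p_0}(0,T-h;Z)}\to 0$ as $h\downarrow 0$. The first holds because H\"older gives $\|\int_a^b u_k(t)\,dt\|_X \leqslant (b-a)^{1-1/p_0}M$, and $X\Subset Y\subset Z$ implies $X\Subset Z$; the second follows from $u_k(t+h)-u_k(t)=\int_t^{t+h}\partial_s u_k(s)\,ds$, which yields $\|u_k(t+h)-u_k(t)\|_Z \leqslant h^{1-1/p_1}\big(\int_t^{t+h}\|\partial_s u_k\|_Z^{p_1}\,ds\big)^{1/p_1}$, and then, after taking the $p_0$-th power and integrating in $t$, a bound $C\,h^{\sigma}$ with $\sigma>0$ uniform in $k$ (directly when $p_0\leqslant p_1$, and via interpolation of the $L^{p_0}(0,T-h;Z)$ norm against the $L^{p_0}(0,T;X)$ bound when $p_0>p_1$).

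The \textbf{main obstacle} is organizational rather than conceptual: making the time-translation estimate uniform in $k$ over the full range $1<p_0,p_1<\infty$ (the exponent juggling and the boundary behaviour at $t=0,T$). Reflexivity of $X$ and $Z$ does not enter the Riesz--Fr\'echet--Kolmogorov route above; it is used in Lions' original proof, which instead extracts weak limits of $\{u_k\}$ in $L^{p_0}(0,T;X)$ and of $\{\partial_t u_k\}$ in $L^{p_1}(0,T;Z)$. An equivalent route uses the time-averages $u_k^h(t):=h^{-1}\int_t^{t+h}u_k(s)\,ds$: one shows $\sup_k\|u_k-u_k^h\|_{L^{p_0}(0,T;Z)}\to 0$ from the same translation estimate, while for each fixed $h$ the family $\{u_k^h\}$ is bounded in $X$ pointwise in $t$ and equicontinuous into $Z$, so Ehrling upgrades it to equicontinuity into $Y$ and Arzel\`a--Ascoli (using $X\Subset Y$) gives relative compactness in $C([0,T];Y)\subset L^{p_0}(0,T;Y)$.
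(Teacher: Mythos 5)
The paper does not prove Lemma~\ref{A.1}; it is quoted as a standard result from Lions' monograph \cite{Lions}, so there is no proof in the paper to compare against. Your sketch is a correct and essentially complete account of the Aubin--Lions--Simon compactness argument: the Ehrling interpolation step is proved correctly by contradiction (bounded in $X$, norm one in $Y$, norm tending to zero in $Z$, compactness of $X \Subset Y$ plus injectivity of $Y \hookrightarrow Z$ give the contradiction), and the reduction of compactness in $L^{p_0}(0,T;Y)$ to compactness in $L^{p_0}(0,T;Z)$ via Ehrling is the standard device. The verification of the Riesz--Fr\'echet--Kolmogorov conditions in $L^{p_0}(0,T;Z)$ is also fine; in fact the time--translation estimate is even simpler than you indicate, since $\|u_k(t+h)-u_k(t)\|_Z \leqslant h^{1-1/p_1}\|\partial_t u_k\|_{L^{p_1}(0,T;Z)}$ already holds pointwise in $t$ with the \emph{global} $L^{p_1}$ norm of the derivative on the right, so integrating the $p_0$-th power gives $\|u_k(\cdot+h)-u_k\|_{L^{p_0}(0,T-h;Z)} \leqslant C\,h^{1-1/p_1}$ uniformly in $k$ with no case split on $p_0$ versus $p_1$.

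Your observation that reflexivity of $X$ and $Z$ is not used in the RFK/Simon route is accurate and worth noting: the hypothesis is needed in Lions' original proof (which extracts weak limits of $u_k$ in $L^{p_0}(0,T;X)$ and of $\partial_t u_k$ in $L^{p_1}(0,T;Z)$) but was removed by Simon, whose paper \cite{Sim} the authors cite for the companion Lemma~\ref{A.2}. So your route in fact proves a slightly stronger statement than the one asserted, which is perfectly acceptable.
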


\begin{lemma}\label{A.2}
(\cite{Sim}) Suppose that $X,\ Y,$  and  $Z$ are Banach spaces and
$X \!\Subset \!Y\hspace{-0.2cm} \subset \! Z$. Then the imbedding
$ \{u \in L^\infty (0,T;X):\partial _t u \in L^p (0,T;Z)$, $p > 1
\} \Subset C(0,T;Y)$ is compact.
\end{lemma}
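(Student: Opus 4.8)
The plan is to recognize Lemma~\ref{A.2} as the Aubin--Lions--Simon compactness theorem in the form adapted to the target space $C([0,T];Y)$, and to prove it through the vector-valued Arzel\`a--Ascoli theorem. Let $F$ be a bounded subset of $\{u \in L^\infty(0,T;X) : \partial_t u \in L^p(0,T;Z)\}$. By Arzel\`a--Ascoli, $F$ is relatively compact in $C([0,T];Y)$ once one checks: (i) for each fixed $t \in [0,T]$ the section $F(t) := \{u(t) : u \in F\}$ is relatively compact in $Y$, and (ii) $F$ is uniformly equicontinuous, $\sup_{u \in F}\|u(t) - u(s)\|_Y \to 0$ as $|t-s| \to 0$. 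Thus the whole proof reduces to establishing (i) and (ii).

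For (ii), I would first observe that $\partial_t u \in L^p(0,T;Z)$ with $p > 1$ gives each $u \in F$ an absolutely continuous representative $[0,T] \to Z$, so $u \in C([0,T];Z)$, and for $0 \leqslant s \leqslant t \leqslant T$ the fundamental theorem of calculus and H\"older's inequality yield
$$
\|u(t) - u(s)\|_Z = \Bigl\| \int_s^t \partial_t u(\tau)\,d\tau \Bigr\|_Z \leqslant |t-s|^{1 - 1/p}\,\|\partial_t u\|_{L^p(0,T;Z)} \leqslant C\,|t-s|^{1 - 1/p},
$$
with $C$ uniform over $F$. The second ingredient is the Ehrling (Lions) interpolation inequality: since $X \Subset Y \subset Z$, for every $\eta > 0$ there is $C_\eta$ with $\|w\|_Y \leqslant \eta\|w\|_X + C_\eta\|w\|_Z$ for all $w \in X$; this follows from a routine contradiction argument exploiting the compactness of $X \hookrightarrow Y$. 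Applying it with $w = u(t) - u(s)$ and invoking the uniform $L^\infty(0,T;X)$ bound gives
$$
\|u(t) - u(s)\|_Y \leqslant 2\eta\sup_{u \in F}\|u\|_{L^\infty(0,T;X)} + C_\eta\, C\,|t-s|^{1 - 1/p};
$$
given $\epsilon > 0$, choose $\eta$ so the first term is below $\epsilon/2$, then $|t-s|$ small so the second is below $\epsilon/2$. This proves (ii).

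For (i), I would argue that the uniform bound of $F$ in $L^\infty(0,T;X)$ together with $u \in C([0,T];Z)$ forces $u(t) \in X$ with $\|u(t)\|_X \leqslant \sup_F\|u\|_{L^\infty(0,T;X)}$ for \emph{every} $t$: pick times $t_k \to t$ at which the $L^\infty$ bound holds, use reflexivity of $X$ --- which is the case in the application $X = H^2(\Omega)$, $Y = H^1(\Omega)$, $Z = (H^1(\Omega))'$ --- to extract a weak limit of $\{u(t_k)\}$ in $X$, and identify it with $u(t)$ via the strong convergence $u(t_k) \to u(t)$ in $Z$. Hence $F(t)$ is a bounded subset of $X$, which $X \Subset Y$ makes relatively compact in $Y$. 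With (i) and (ii) verified, the vector-valued Arzel\`a--Ascoli theorem yields the asserted compact embedding.

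I expect the genuine obstacle to be the everywhere-in-$t$ statement used in (i) --- upgrading the a.e.\ bound $\|u(t)\|_X \leqslant C$ (all that $L^\infty(0,T;X)$ directly provides) to a bound valid at \emph{all} $t$, which is where reflexivity of $X$ (or, in Simon's general formulation, a more delicate substitute) and the continuity of $u$ into $Z$ enter. Once that point is handled, the equicontinuity estimate and the Ehrling lemma are standard, and the conclusion is immediate.
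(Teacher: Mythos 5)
The paper supplies no proof of Lemma~\ref{A.2}: it is Corollary~4 of Simon's paper \cite{Sim}, cited as a black box, so there is no ``paper's own proof'' to compare against. Your Arzel\`a--Ascoli\,+\,Ehrling route is the classical Aubin--Lions style argument; it is \emph{not} how Simon proves it (he derives the $C([0,T];Y)$ version from his characterization of compact subsets of $L^p(0,T;B)$ through translation continuity and the compactness of the averaging operators $f \mapsto \int_{t_1}^{t_2} f$). Both are legitimate; yours is more elementary and self-contained, Simon's is more general.

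The one substantive caveat is the reflexivity of $X$ that you introduce in step~(i). It is an extra hypothesis: the lemma as stated does not assume it, and Simon's proof does not need it. Your pointwise-in-$t$ upgrade genuinely uses it, and without some substitute (weak-$*$ compactness, or Simon's translation argument) the step ``$u(t) \in X$ for every $t$ with the same bound'' can fail for non-reflexive $X$. In this paper's application ($X = H^1(\Omega)$, $Y = L^2(\Omega)$, $Z = (H^1(\Omega))'$, used at the end of the proof of Theorem~\ref{C:Th1}) the space $X$ is Hilbert, so the restriction is harmless here, but you should be explicit that you are proving a weaker statement than the cited one. You clearly recognize this yourself in the closing paragraph, which is to your credit. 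A minor ordering issue: your equicontinuity estimate in (ii) applies Ehrling to $w = u(t) - u(s)$ and ``invokes the uniform $L^\infty(0,T;X)$ bound,'' but at that point in your write-up you have not yet established that $u(t), u(s) \in X$ for \emph{all} $t,s$ --- that is the content of (i). Logically, (i) must come first; as written, (ii) quietly uses the conclusion of (i). This is a presentation defect rather than a gap, since you do supply (i), but the dependency should be made explicit. With the reflexivity hypothesis added and the order of (i) and (ii) fixed, the proof is correct.
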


\begin{lemma}\label{A.3}
(\cite{G4,BSS}) Let $\Omega \subset \mathbb{R}^N,\ N < 6$, be a
bounded convex domain with smooth boundary, and let $n \in \bigl(2
- \sqrt{1 -\tfrac{N}{N + 8}}, 3 \bigr)$ for $N > 1$, and
$\tfrac{1}{2} < n < 3$ for $N = 1$. Then the following estimates
hold for any strictly positive functions $v \in H^2(\Omega)$ such
that $\nabla v \cdot \vec{n} = 0$ on $\partial\Omega$ and
$\int\limits_{\Omega}{v^n |\nabla\Delta v|^2} < \infty$:
\begin{equation*}%\label{g1}
\int\limits_{\Omega} {\varphi ^6 \{ v^{n - 4}|\nabla v|^6 + v^{n -
2} |D^2 v|^2 |\nabla v|^2 \}} \leqslant c
\Bigl\{\int\limits_{\Omega} {\varphi ^6 v^n |\nabla \Delta v|^2} +
\int\limits_{\{ \varphi > 0\} } {v^{n + 2} |\nabla \varphi|^6}
\Bigr\},
\end{equation*}
\begin{multline*}%\label{g2}
\int\limits_{\Omega} {\varphi ^6 |\nabla \Delta v^{\tfrac{n +
2}{2}}|^2 } \leqslant c \Bigl\{\int\limits_{\Omega} {\varphi ^6 v^n |\nabla \Delta v|^2} + \\
+ \int\limits_{\{ \varphi
> 0\} } {v^{n + 2} \{|\nabla \varphi|^6 + \varphi^2 |D^2 \varphi|^2
|\nabla \varphi|^2 + \varphi^3 | \Delta \varphi|^3 \}} \Bigr\},
\end{multline*}
where $\varphi \in C^2 (\Omega)$ is an arbitrary nonnegative
function such that the tangential component of $\nabla \varphi$ is
equal to zero on $\partial\Omega$, and the constant $c > 0$ is
independent of $v$.
\end{lemma}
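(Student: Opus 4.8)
The plan is to prove both inequalities by the weighted ``entropy estimate'' technique of Bernis, as developed for the localized, multidimensional setting in \cite{G4,BSS}, organized around the nonnegative degenerate integral $I_0 := \int_\Omega \varphi^6 v^n |\nabla\Delta v|^2$, from which both estimates are extracted by integration by parts. Since this paper only invokes the lemma for $N=1$ --- where $D^2v = v_{xx} = \Delta v$ and $\nabla\Delta v = v_{xxx}$ --- I would carry out the one-dimensional case in full and indicate the tensorial modifications for $2 \le N < 6$.

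First I would reduce the second inequality to the first. Writing $w = v^{(n+2)/2}$, Leibniz's rule expands $\nabla\Delta w$ into an explicit, $n$-dependent linear combination of $v^{n/2}\nabla\Delta v$, $v^{n/2-1}(D^2v)\nabla v$, $v^{n/2-1}(\Delta v)\nabla v$ and $v^{n/2-2}|\nabla v|^2\nabla v$ (in one dimension: of $v^{n/2}v_{xxx}$, $v^{n/2-1}v_x v_{xx}$ and $v^{n/2-2}v_x^3$). Squaring, multiplying by $\varphi^6$ and integrating, the diagonal terms reproduce $I_0$ together with the quantities $X_1 := \int\varphi^6 v^{n-4}|\nabla v|^6$ and $X_2 := \int\varphi^6 v^{n-2}|D^2v|^2|\nabla v|^2$ bounded by the first inequality, while every cross term is integrated by parts once: the part with all derivatives on $v$ is reabsorbed by Cauchy--Schwarz, and the parts where a derivative falls on $\varphi^6$ generate, after a Young inequality balancing the power of $v$, precisely the weights $|\nabla\varphi|^6$, $\varphi^2|D^2\varphi|^2|\nabla\varphi|^2$ and $\varphi^3|\Delta\varphi|^3$ appearing on the right-hand side. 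Thus everything reduces to the first inequality.

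To prove the first inequality I would integrate $I_0$ by parts so as to lower the order of differentiation, discarding the boundary integrals (in one dimension they vanish since $v_x = (\varphi^6)_x = 0$ on $\partial\Omega$; for $N\ge 2$ one uses $\nabla v\cdot\vec n = 0$, the tangential-gradient condition on $\varphi$, and the convexity of $\Omega$, which makes the second fundamental form of $\partial\Omega$ nonnegative). Eliminating the mixed integrals (e.g.\ $\int\varphi^6 v^{n-3}|\nabla v|^4\,(D^2v:\nabla v\otimes\nabla v)$) through further integration by parts and the identity $\int\varphi^6 v^{n-3}v_x^4 v_{xx}\,dx = -\tfrac{n-3}{5}\int\varphi^6 v^{n-4}v_x^6\,dx - \tfrac{6}{5}\int\varphi^5\varphi_x v^{n-3}v_x^5\,dx$ (and its analogues), and estimating the rest by Cauchy--Schwarz, one rewrites $I_0$, modulo terms carrying at least one derivative of $\varphi$, as a quadratic form with explicit $n$- and $N$-dependent coefficients in $X_1, X_2$ (for $N\ge 2$ one must also retain related tensorial averages such as $\int\varphi^6 v^{n-2}|(D^2v)\nabla v|^2$ and $\int\varphi^6 v^{n-2}(\Delta v)^2|\nabla v|^2$, compared pointwise). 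The terms carrying a derivative of $\varphi$ --- schematically $\int\varphi^5\nabla\varphi\cdot v^{n-1}(\cdots)$, $\int\varphi^4|\nabla\varphi|^2 v^{n-2}(\cdots)$, $\int\varphi^5 D^2\varphi\, v^{n-1}(\cdots)$ --- are split by Young's inequality with a small parameter into a piece absorbed into the quadratic form (and into $I_0$) plus a remainder of the form $C(\epsilon)\int_{\{\varphi>0\}} v^{n+2}|\nabla\varphi|^6$, the exponents being forced by scaling. Choosing $\epsilon$ small and enlarging the constant then gives the first inequality, provided the quadratic form in $X_1,X_2$ is coercive.

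The main obstacle is exactly this coercivity: it is an algebraic positivity condition on the $n$- and $N$-dependent coefficients (after eliminating cross terms and estimating the mixed tensorial quantities by Cauchy--Schwarz), and the stated range --- $\tfrac{1}{2}<n<3$ for $N=1$ and $2-\sqrt{1-N/(N+8)}<n<3$ for $N\ge 2$ --- is precisely where the relevant discriminant is nonnegative. In one dimension this is an elementary check with polynomials in $n$ once all the integration-by-parts identities have been assembled; in higher dimensions one must carefully compare $\langle(D^2v)\nabla v,\nabla v\rangle$, $|(D^2v)\nabla v|^2$ and $(\Delta v)^2$ against $|D^2v|^2$ and $|\nabla v|^2$, and the dimensional threshold $\sqrt{1-N/(N+8)}$ comes out of this comparison. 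A fully detailed proof is given in \cite{G4,BSS}.
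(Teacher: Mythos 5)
This lemma is stated in the paper without proof: Appendix~C simply quotes it from \cite{G4,BSS}, so there is no ``paper's own proof'' to compare against. What you have written is a reconstruction of the argument from those references, and your correctly note at the end that the full details are given there; so the right question is whether your sketch accurately reflects the method of \cite{G4,BSS}, and it does. The reduction of the second inequality to the first by Leibniz-expanding $\nabla\Delta v^{(n+2)/2}$ into terms proportional to $v^{n/2}\nabla\Delta v$, $v^{n/2-1}(D^2v)\nabla v$, and $v^{n/2-2}|\nabla v|^2\nabla v$, squaring, and integrating the cross terms by parts is exactly Bernis's device; the derivation of the first inequality by repeated integration by parts of $\int\varphi^6 v^n|\nabla\Delta v|^2$ down to the two lower-order integrals $X_1,X_2$ and the verification of coercivity of the resulting quadratic form is the heart of \cite{BSS} (one dimension) and \cite{G4} (higher dimensions), with the admissible range of $n$ coming out of the positivity of the discriminant of that form; and the appearance of the weights $|\nabla\varphi|^6$, $\varphi^2|D^2\varphi|^2|\nabla\varphi|^2$, $\varphi^3|\Delta\varphi|^3$ via Young's inequality with the $v$-exponents dictated by scaling is also correct. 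Your explicit one-dimensional identity $\int\varphi^6 v^{n-3}v_x^4 v_{xx}\,dx = -\tfrac{n-3}{5}\int\varphi^6 v^{n-4}v_x^6 - \tfrac{6}{5}\int\varphi^5\varphi_x v^{n-3}v_x^5$ checks out.

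One small imprecision worth flagging: in the $N=1$ boundary-term discussion you invoke $(\varphi^6)_x=0$ on $\partial\Omega$, but that is not among the hypotheses of the lemma (in one dimension the ``tangential component of $\nabla\varphi$'' condition is vacuous). In fact the boundary terms that actually arise in the Bernis integration-by-parts chain all carry a factor of $v_x$ (or $v_x$ raised to a positive power), so they vanish from the hypothesis $\nabla v\cdot\vec n=0$ alone; the extra condition on $\varphi$ is not needed for the lemma, though it is indeed satisfied by the cut-off functions $\zeta$ used elsewhere in the paper. This is a cosmetic slip, not a gap in the argument.
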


\begin{lemma}\label{A.4}
(\cite{N1})  If $\Omega  \subset \mathbb{R}^N $ is a bounded
domain with piecewise-smooth boundary, $a > 1$, $b \in (0, a),\ d
> 1,$ and $0 \leqslant i < j,\ i,j \in \mathbb{N}$, then there
exist positive constants $d_1$ and $d_2$ $(d_2 = 0 \text{ if }
\Omega$ is unbounded$)$ depending only on $\Omega ,\ d,\ j,\ b,$
and $N$ such that the following inequality is valid for every
$v(x) \in W^{j,d} (\Omega ) \cap L^b (\Omega )$:
$$
\left\| {D^i v} \right\|_{L^a (\Omega )}  \leqslant d_1 \left\|
{D^j v} \right\|_{L^d (\Omega )}^\theta  \left\| v \right\|_{L^b
(\Omega )}^{1 - \theta }  + d_2 \left\| v \right\|_{L^b (\Omega )}
,\ \theta  = \tfrac{{\tfrac{1} {b} + \tfrac{i} {N} - \tfrac{1}
{a}}} {{\tfrac{1} {b} + \tfrac{j} {N} - \tfrac{1} {d}}} \in \left[
{\tfrac{i} {j},1} \right)\!\!.
$$
\end{lemma}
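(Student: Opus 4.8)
The statement is the classical Gagliardo--Nirenberg--Nirenberg interpolation inequality, and the plan is to follow Nirenberg's original two-stage scheme: first prove the homogeneous inequality on $\R^N$ (which is exactly the $d_2=0$ case and, by restriction, covers the unbounded-$\Omega$ situation), and then transfer it to a bounded domain with piecewise-smooth boundary by an extension argument whose non-homogeneity produces the additional lower-order term $d_2\|v\|_{L^b(\Omega)}$. One preliminary remark removes all freedom in $\theta$: on $\R^N$ the inequality must be invariant under the rescaling $v\mapsto v(\lambda\,\cdot)$, and matching the powers of $\lambda$ on the two sides forces $\theta$ to equal the stated ratio $(\tfrac1b+\tfrac iN-\tfrac1a)/(\tfrac1b+\tfrac jN-\tfrac1d)$; the same computation shows the whole-space inequality is \emph{false} for any other exponent, and the constraint $\theta\in[\tfrac ij,1)$ is exactly the admissibility range in which a finite left-hand side is obtained without an $L^\infty$ norm.

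For the whole-space inequality I would first treat $j=1$, $i=0$. At the Sobolev endpoint, $\theta=1$ and $\tfrac1a=\tfrac1d-\tfrac1N$, the estimate is the Gagliardo--Nirenberg--Sobolev inequality $\|v\|_{L^{d^*}(\R^N)}\leqslant C\|Dv\|_{L^d(\R^N)}$, which I would establish by the Loomis--Whitney / iterated fundamental-theorem-of-calculus argument in the case $d=1$ and then by applying that case to $|v|^\gamma$ for a suitable power $\gamma$ when $1<d<N$. For an intermediate target exponent $a$, the point $1/a$ lies on the segment joining $1/b$ and $1/d^*$, so the log-convexity of $L^p$ norms (Hölder) gives $\|v\|_{L^a}\leqslant\|v\|_{L^b}^{1-\mu}\|v\|_{L^{d^*}}^{\mu}$, and inserting the Sobolev bound and checking exponents shows $\mu=\theta$; the case $d\geqslant N$ is handled by the same scheme with $d^*$ replaced by an arbitrarily large finite exponent obtained by applying Sobolev to $|v|^\gamma$. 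The general case $j>1$ then follows by iteration: applying the $j=1$ result to $D^{j-1}v$, $D^{j-2}v$, $\dots$ yields, for each $0<i<j$, the intermediate-derivative estimates $\|D^i v\|_{L^{p_i}}\leqslant C\|D^j v\|_{L^d}^{i/j}\|v\|_{L^b}^{1-i/j}$ along the admissible exponents $p_i$ (equivalently, $i\mapsto\log\|D^i v\|_{L^{p_i}}$ is convex), and composing these with one further Hölder interpolation between two admissible exponents produces the full family of inequalities with the stated $\theta$.

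To pass to a bounded $\Omega$ with piecewise-smooth boundary, I would use a total extension operator $E:W^{j,d}(\Omega)\to W^{j,d}(\R^N)$ --- built from a finite partition of unity subordinate to a cover of $\overline{\Omega}$ by sets that become half-cubes after a bi-Lipschitz straightening of the boundary, followed by reflection --- satisfying $\|Ev\|_{W^{k,d}(\R^N)}\leqslant C\|v\|_{W^{k,d}(\Omega)}$ for all $k\leqslant j$ and $\|Ev\|_{L^b(\R^N)}\leqslant C\|v\|_{L^b(\Omega)}$, with $Ev$ supported in a fixed bounded neighbourhood of $\Omega$ (all constants depending only on $\Omega,d,j,b,N$). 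Applying the whole-space inequality to $Ev$ and restricting to $\Omega$ gives $\|D^i v\|_{L^a(\Omega)}\leqslant C\big(\|D^j v\|_{L^d(\Omega)}+\|v\|_{L^d(\Omega)}\big)^{\theta}\|v\|_{L^b(\Omega)}^{1-\theta}$, where the extra $\|v\|_{L^d(\Omega)}$ is the lower-order part of the extension bound. Using $(x+y)^\theta\leqslant x^\theta+y^\theta$ and then Young's inequality $ab\leqslant\varepsilon a^p+C_\varepsilon b^q$ (with exponents chosen so that the $\|D^j v\|_{L^d(\Omega)}$-power comes out with coefficient $\leqslant d_1$) to trade the residual $\|v\|_{L^d(\Omega)}^{\theta}$ contribution for $\|v\|_{L^b(\Omega)}$ up to an additive term, one arrives at the stated form $d_1\|D^j v\|_{L^d(\Omega)}^{\theta}\|v\|_{L^b(\Omega)}^{1-\theta}+d_2\|v\|_{L^b(\Omega)}$, with $d_2=0$ precisely when no extension is needed, i.e. when $\Omega=\R^N$.

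The main obstacle is not the scaling step (trivial) nor the Sobolev endpoint (standard), but two pieces of careful bookkeeping: in the whole-space stage, the iteration over derivative orders with the correct admissible exponents $p_i$ and the exclusion of the endpoint $\theta=1$ (where the naive inequality would require $L^\infty$ on the left --- the reason for the half-open interval $[\tfrac ij,1)$); and in the bounded-domain stage, constructing the extension operator for a merely piecewise-smooth boundary and absorbing \emph{all} of the inhomogeneous lower-order terms it produces into the single clean correction $d_2\|v\|_{L^b(\Omega)}$ while keeping $d_2$ dependent only on $\Omega,d,j,b,N$. I expect this last absorption --- getting exactly the $d_2\|v\|_{L^b}$ form and no more --- to be the most delicate point.
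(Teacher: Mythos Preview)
The paper does not prove this lemma at all: it is stated in Appendix~C as a known result with a citation to Nirenberg~\cite{N1} and no argument given. Your sketch is a reasonable outline of the classical proof of the Gagliardo--Nirenberg inequality, but there is no ``paper's own proof'' to compare it against --- the authors simply invoke the result. In the context of this paper your effort is unnecessary; a one-line citation suffices.
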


\begin{lemma}\label{A.5}
(\cite{G2}) Let $(\beta_1,\ldots,\beta_m) \in \mathbb{R}^m, m
\geqslant 1 $, and let $\beta = \prod \limits_{j=1}^m \beta_j,\
\overline{\beta_i} = \tfrac{\beta}{\beta_i}= \prod \limits_{j=1, j
\neq i}^m \beta_j$. Assume that $G_i(s)$ are nonnegative
nonincreasing functions satisfying the conditions
$$
G_i(s + \delta ) \leqslant c_i \Bigl({\sum \limits_{i=1}^m
\tfrac{G_i(s)}{\delta^{\alpha_i}} }\Bigr)^{\beta_i}\ \forall\, s >
0,\ \delta > 0,\ i= \overline{1,m}
$$
with real constants $c_i > 0,\ \beta_i >1$, and $\alpha_i
\geqslant 0$ for $i=\overline{1,m}$, and $\alpha_i > 0$ for $i =
\overline{1,\ell}$. Let $ G(s)\! = \!\sum \limits_{i=1}^m
\!{\overline{(c_i^{\overline{\beta}_i})}
\left({G_i(s)}\right)^{\overline{\beta}_i}}$, and let the function
$H(s) = m^{\beta} \sum \limits_{i=\ell + 1}^m \!{
c_i^{\overline{\beta}_i} \overline{(c_i^{\overline{\beta}_i})}^{1
- \beta_i} \!\!\!\!\!\!\!\!\! \left({G_i(s)}\right)^{\beta_i - 1}}
$ be such that $H(s_1) < 1$ at a some $s_1 \geqslant 0$. Then
there exists a positive constant $c > 1$ depending on $m,\
\alpha_i,\ \beta_i,\ \ell $, and $H(s_1)$ such that $G_i(s_0)
\equiv 0$ for all $i=\overline{1,\ell}$, where $ s_0 = s_1 + c
\sum \limits_{i= 1}^\ell {\bigl( {
c_i^{\overline{\beta}_i}{\overline{(c_i^{\overline{\beta}_i})}}^{1-\beta_i}
\left({G(s_1)}\right)^{\beta_i - 1}}\bigr) ^{\tfrac {1} {\alpha_i
\beta}}}$.
\end{lemma}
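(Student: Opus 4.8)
The plan is to prove this via the Stampacchia iteration technique (cf. \cite{HulShish,dPGG_2001} for the single-inequality version) adapted to a system, following \cite{G2}. The argument would proceed in three stages: collapse the $m$ coupled inequalities into one scalar inequality for the combined quantity $G(s)$; isolate the ``passive'' directions ($\alpha_j=0$, $j>\ell$) using the smallness hypothesis $H(s_1)<1$; and then run a classical single-variable Stampacchia iteration in each ``active'' direction ($\alpha_j>0$, $j\le\ell$) to force $G$, hence every $G_i$ with $i\le\ell$, to vanish beyond $s_0$.

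First I would homogenize the exponents: raising the $i$-th hypothesis to the power $\overline{\beta_i}=\beta/\beta_i$ gives $G_i(s+\delta)^{\overline{\beta_i}}\le c_i^{\overline{\beta_i}}\bigl(\sum_j G_j(s)\delta^{-\alpha_j}\bigr)^{\beta}$, in which the right-hand exponent is now the index-independent number $\beta$. Multiplying through by $\overline{(c_i^{\overline{\beta}_i})}=\prod_{j\ne i}c_j^{\overline{\beta_j}}$ turns the coefficient into $C:=\prod_j c_j^{\overline{\beta_j}}$, again independent of $i$, and summing over $i$ yields the scalar relation $G(s+\delta)\le mC\bigl(\sum_j G_j(s)\delta^{-\alpha_j}\bigr)^{\beta}$ for all $s>0$, $\delta>0$, where $G$ is nonincreasing because each $\overline{\beta_j}>0$. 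Since every summand of $G$ is nonnegative, $\overline{(c_j^{\overline{\beta_j}})}G_j(s)^{\overline{\beta_j}}\le G(s)$, whence $G_j(s)\le\bigl(G(s)/\overline{(c_j^{\overline{\beta_j}})}\bigr)^{1/\overline{\beta_j}}$ (a sublinear bound, as $\overline{\beta_j}>1$ when $m\ge2$; $m=1$ being the classical case). Feeding this back and splitting $\sum_j$ over $j\le\ell$ and $j>\ell$, the $j>\ell$ part (where $\delta^{-\alpha_j}=1$) is exactly the term whose $(\beta-1)$-th power form is $H(s)$; since $H$ is nonincreasing and $H(s_1)<1$, this passive contribution stays subcritical on $[s_1,\infty)$ and can be absorbed, leaving a genuine decaying inequality $G(s+\delta)\le\widetilde c\,\bigl(G(s)\,\delta^{-\alpha_i}\bigr)^{\gamma}$ in each active direction $i\le\ell$ with power $\gamma=\min_j\beta_j>1$.

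With a scalar inequality of the form $\phi(s+\delta)\le\widetilde c(\phi(s)/\delta^{\alpha})^{\gamma}$, $\gamma>1$, $\alpha>0$, $\phi$ nonnegative nonincreasing, in hand, the classical Stampacchia lemma applies: choosing a geometrically decreasing sequence of increments $\delta_k$ with $\sum_k\delta_k<\infty$ one shows $\phi$ vanishes at $s_1+c'(\widetilde c\,\phi(s_1)^{\gamma-1})^{1/(\alpha\gamma)}$. Carrying this out successively in the $\ell$ active directions and adding the resulting increments would produce $s_0=s_1+c\sum_{i=1}^{\ell}\bigl(c_i^{\overline{\beta}_i}\overline{(c_i^{\overline{\beta}_i})}^{1-\beta_i}(G(s_1))^{\beta_i-1}\bigr)^{1/(\alpha_i\beta)}$ with $c>1$ depending only on $m,\alpha_i,\beta_i,\ell,H(s_1)$. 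Finally, from $G(s_0)=0$ and $G_i(s_0)\le(G(s_0)/\overline{(c_i^{\overline{\beta_i}})})^{1/\overline{\beta_i}}$ one concludes $G_i(s_0)\equiv0$ for every $i\le\ell$, and monotonicity extends this to all $s\ge s_0$.

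The hard part will be the constant bookkeeping: homogenizing the exponents, combining the $m$ inequalities, absorbing the passive directions via $H(s_1)<1$, and iterating $\ell$ single-variable Stampacchia arguments all have to be tracked precisely enough to recover exactly the stated formula for $s_0$, including the mixed-power coefficients $c_i^{\overline{\beta}_i}\overline{(c_i^{\overline{\beta}_i})}^{1-\beta_i}$ and the exponent $1/(\alpha_i\beta)$. The qualitative conclusion (eventual vanishing of the $G_i$) is comparatively soft; it is the quantitative sharpness of the propagation estimate $s_0-s_1$ — the feature that ultimately yields the $t^{1/(n+4)}$ rate in Theorem~\ref{C:Th4} — where the real effort lies.
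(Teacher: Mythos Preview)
The paper does not prove Lemma~\ref{A.5}: it is stated in the appendix with a citation to \cite{G2} and used as a black box in the finite-speed-of-propagation arguments. So there is no ``paper's own proof'' to compare against here. What the paper does tell us (see the introduction) is that \cite{G2} proves ``an extension of Stampacchia's lemma to systems of inequalities,'' which is exactly the programme you outline, so your overall strategy is on target.

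That said, a couple of points in your sketch deserve care. First, the lemma does not assert $\alpha_j=0$ for $j>\ell$; it only asserts $\alpha_j>0$ for $j\le\ell$. Your treatment of the ``passive'' block implicitly sets $\delta^{-\alpha_j}=1$ there, which is not granted by the hypotheses. In practice one either reorders so that the $\alpha_j=0$ indices are exactly $j>\ell$, or one observes that for $j>\ell$ the term $\delta^{-\alpha_j}$ is harmless once $\delta$ is bounded below along the iteration; either way this needs to be said. Second, the reduction ``leaving a genuine decaying inequality $G(s+\delta)\le\widetilde c\,(G(s)\delta^{-\alpha_i})^{\gamma}$ in each active direction'' is too compressed: after substituting $G_j(s)\le(G(s)/\overline{(c_j^{\overline{\beta_j}})})^{1/\overline{\beta_j}}$ into $\sum_j G_j(s)\delta^{-\alpha_j}$ you still have a sum of several $\delta^{-\alpha_j}$ terms with different exponents, not a single power. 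The standard cure is to run the iteration with a specific geometric choice of increments $\delta_k$ and bound the worst term at each step, which is where the precise exponent $1/(\alpha_i\beta)$ and the mixed coefficients $c_i^{\overline{\beta}_i}\overline{(c_i^{\overline{\beta}_i})}^{1-\beta_i}$ emerge; your sketch gestures at this but does not carry it out. These are bookkeeping issues rather than conceptual gaps, and you correctly identify them as the hard part.
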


% \bibliographystyle{plain}
%% argument is your BibTeX string definitions and bibliography database(s)
%\bibliography{blow_up}

\def\cprime{$'$} \def\cprime{$'$} \def\cprime{$'$} \def\cprime{$'$}
  \def\cprime{$'$}

\end{document}